\def\nn{\nonumber}
\def\cC{\mathcal{C}} 
\def\cD{\mathcal{D}}
\def\cL{\mathcal{L}}
\def\cM{\mathcal{M}}
\def\cO{\mathcal{O}}
\def\cS{\mathcal{S}}
\def\cW{\mathcal{W}}
\def\cY{\mathcal{Y}}
\def\tr{{\rm tr}}
\def\And{\quad {\rm and} \quad}
\def\For{\quad {\rm for} \quad}
\def\dg{\dagger}
\def\pie{+i\varepsilon}
\def\too{\longrightarrow}
\newcommand{\fint}[1]{\int \!\! \frac{d^4 #1}{(2\pi)^4}\,}
\newcommand{\cn}[1]{\cancel{#1}}
\newcommand{\Sl}[1]{\slashed{#1}}
\newcommand{\fd}[2]{\parbox{#1}{\includegraphics[width=#1]{#2}}}
\newcommand{\LQCD}{\Lambda_{\text{QCD}}}
\def\be{\begin{equation}}
\def\ee{\end{equation}}
\def\l{\langle}
\def\r{\rangle}
\newcommand{\adj}{ {\mathrm{adj}}}
\newcommand{\Eq}[1]{Eq.~\eqref{#1}}
\def\LPeq{\cong}
\def\LPFeq{\cong_{\text{\tiny{IR}}}}
\def\soft{ { \red \text{soft-sens.}} }
\def\nsoft{ {\blue \text{not-soft-sens.} }}
\newcommand{\notparallel}{{\,\not{\!\parallel}\,}}
\newtheorem{lemma}{Lemma}
\newtheorem*{conjecture}{Conjecture}
\newcommand{\lr}[2]{ {(#1#2)} }
\definecolor{darkred}{rgb}{0.5,0.0,0.0}
\definecolor{darkblue}{rgb}{0.0,0.0,0.9}
\definecolor{darkerblue}{rgb}{0.0,0.0,0.5}
\definecolor{darkgreen}{rgb}{0.0,0.5,0.0}
\definecolor{black}{rgb}{0.0,0.0,0.0}
\definecolor{brown}{rgb}{0.6,0.4,0.2}
\newcommand{\red}{\color{darkred}}
\newcommand{\blue}{\color{darkblue}}
\newcommand{\black}{\color{black}}
\newcommand{\ccol}{\color{darkblue}}
\newcommand{\suncol}{\color{darkgreen}}
\newcommand{\spincol}{\color{brown}}
\newcommand{\softcol}{\color{darkred}}
\newcommand{\scs}{ {{\softcol s}} }
\newcommand{\scred}{ {{\softcol{\text{red}}}}}
\newcommand{\suO}{ {{\suncol 1}} }
\newcommand{\suT}{ {{\suncol 2}} }
\newcommand{\sua}{ {{\suncol a}} }
\newcommand{\sub}{ {{\suncol b}} }
\newcommand{\suc}{ {{\suncol c}} }
\newcommand{\suj}{ {{\suncol j}} }
\newcommand{\suI}{ {{\suncol I}} }
\newcommand{\suh}{ {{\suncol h}} }
\newcommand{\suhp}{ {{\suncol h'}} }
\newcommand{\suTT}{ {{\suncol T}} }
\newcommand{\gG}{ {\spincol{\Gamma} }}
\newcommand{\gGG}{ {\spincol{\Gamma}}}
\newcommand{\gpm}{ {\spincol{\pm}}}
\newcommand{\ccF}{ {{\ccol 4}} }
\newcommand{\ccTH}{ {{\ccol 3}} }
\newcommand{\ccO}{ {{\ccol 1}} }
\newcommand{\ccT}{ {{\ccol 2}} }
\newcommand{\cci}{ {{\ccol i}} }
\newcommand{\ccj}{ {{\ccol j}} }
\newcommand{\cck}{ {{\ccol k}} }
\newcommand{\rN}{ {{\ccol N}} }
\newcommand{\ccm}{ {{\ccol m}} }
\newcommand{\ccn}{ {{\ccol n}} }
\newcommand{\ccN}{ {{\ccol N}} }
\newcommand{\ccc}{ {{\ccol c}} }
\newcommand{\eir}{ \varepsilon_{\text{IR}} }
\newcommand{\euv}{ \varepsilon_{\text{UV}} }
\newcommand{\cdt}{ \hspace{-1mm}\cdot\hspace{-1mm}}
\newcommand{\GJS}{G_{J_\ccj\otimes S}}
\newcommand{\Eik}{S}
\newcommand{\SL}{ {\red {\cS}}}
\newcommand*{\hyperlinkcite}[1]{\hyper@link{cite}{cite.#1\@extra@b@citeb}}
\newcommand{\tree}{[\hyperlinkcite{Feige:2013zla}{FS1}]}
\newcommand{\ltag}[1]{ {\rm(}{\red\bf #1}{\rm )} }
\newcommand{\Sp}{ \mathbf{Sp} } 
\newcommand{\JJ}{ \mathbf{J} } 
\newcommand{\TT}{ \mathbf{T} }
\newcommand{\epsq}{\epsilon}
\newcommand{\Sij}{ S_{\cci \ccj}} 
\newcommand{\sij}{ s_{\cci \ccj}}
\newcommand{\IRZ}{\widehat{Z} }
\title{Hard-Soft-Collinear Factorization to All Orders}
\author{Ilya Feige\thanks{feige@physics.harvard.edu}  }
\author{Matthew D. Schwartz\thanks{schwartz@physics.harvard.edu} }
\affil{\emph{Center for the Fundamental Laws of Nature,
Harvard University, Cambridge, MA 02138, USA}}
\begin{document} 
\maketitle

\begin{abstract}
We provide a precise statement of hard-soft-collinear factorization of scattering amplitudes and prove it to all orders in perturbation theory.
Factorization is formulated as the equality at leading power of  scattering
amplitudes in QCD with other amplitudes in QCD computed from a  product of operator matrix elements. 
The equivalence is regulator independent and gauge independent.
As the formulation relates amplitudes to the same amplitudes with additional soft or collinear particles, it includes as special cases the factorization of soft currents and collinear splitting functions from generic matrix elements,
both of which are shown to be process independent to all orders.
We show that the overlapping soft-collinear region is naturally accounted for by vacuum matrix elements of kinked Wilson lines.
Although the proof is self-contained, it combines techniques developed for the study of pinch surfaces,
scattering amplitudes, and effective field theory.

\end{abstract}

\newpage
\tableofcontents

\newpage

\section{Introduction}
Factorization is at the heart of any quantitative prediction using quantum chromodynamics (QCD). 
Probably the most familiar type of factorization, which we call {\it hard factorization}, justifies the use of fixed-order
perturbation theory for sufficiently inclusive quantities.
It lets us use perturbative calculations involving partons (quarks and gluons)
to make precise predictions for experimentally measurable quantities involving color-neutral hadrons.
The intuition for hard factorization is that 
scattering has a component which freezes in at short distances and  can only incoherently influence the long-distance components. For many observables, the long-distance physics can be integrated over with essentially unit probability.
Somewhat less intuitive, but also logical after a little thought, is the factorization of infrared-sensitive physics
into soft and collinear components. 
This {\it soft-collinear factorization} can be anticipated classically, since very-long distances modes (soft physics) can only
probe the net (color) charge of a collection of particles traveling in nearly the same direction. Conversely, energetic
collinear particles cannot have their momentum changed much by low-energy soft modes.
Although the physical picture of hard-soft-collinear factorization is simple, rigorously establishing
exactly what it implies about scattering amplitudes in gauge theories is not.

Factorization has a long history, with an eclectic variety of approaches yielding a nuanced picture of when and where factorization should hold, and in what form.
In this paper, we eschew two serious complications: 1) we ignore non-perturbative effects associated with strong-coupling, discussing only power corrections associated with the kinematics of massless partons rather than corrections of order $\LQCD/Q$ and 2) we avoid configurations where final-state particles are collinear to initial state particles. 
Even within this limited scope, although much is known, a precise formulation of factorization in terms of QCD matrix elements has been lacking. It is the goal of this paper to provide such a formulation and proof.

As we will review and rederive, the essence of factorization is revealed by studying the infrared (IR) structure of gauge theories.
An obvious necessary condition for an IR divergence is that some propagators blow up. 
Sufficient conditions
are quite a bit more complicated. First, the poles associated with on-shell momenta must be {\it pinched}, so that one cannot just integrate over them~\cite{Landau:1959fi,Coleman:1965xm}. Second, the numerator structure of integrands, which is gauge-dependent, can make an integral more or less divergent than the propagator denominators alone imply. In certain gauges, such as lightcone gauge, the possible virtual momenta contributing to the IR singularities -- the so-called {\it pinch surface} -- turns out to be remarkably simple: all virtual momenta $q^\mu$ must either be exactly proportional to one of the external momenta $q^\mu = \alpha p^\mu_\cci$ with $\alpha \ge 0$ or exactly vanish, $q^\mu =0$.  A picture of such a surface is often drawn as a reduced diagram with hard, jet and soft regions~\cite{Sterman:1978bi,Libby:1978qf,Collins:1981ta}, similar for example to Eq.~\eqref{pinches2} below.

Unfortunately, understanding the singular pinch surface, that is, the topology of exactly zero momentum or exactly collinear lines, does not immediately translate to a precise statement of hard factorization or soft-collinear factorization. Indeed, descending from the pinch surface to a statement about finite amplitudes requires a whole new set of justifications. For example, one must relate the unphysical power-counting
of a pinch surface of finite phase-space volume to the physical power-counting of external momenta. 
In particular, infrared divergences associated with the soft pinch surface (where $k^\mu=0$) depend on whether that surface is approached from
a likelike (the soft region) or spacelike (the Glauber region) direction.  
Other subtleties include avoiding double-counting in the
soft-collinear region (the {\it zero bin}),
restricting the phase space for real and virtual integrations in the soft function without reintroducing dependence on the hard scale, and introducing Wilson lines
to restore gauge invariance without spoiling the leading-power factorization. 
Despite these challenges, factorization has been proven at the amplitude and amplitude-squared
level in a number of contexts~\cite{Sen:1981sd,Sen:1982bt,Collins:1981uk}. Factorization formulas for cross-sections of certain observables 
have been presented~\cite{Mukhi:1982bk,Kidonakis:1997gm, Kidonakis:1998bk,Laenen:1998qw,Berger:2003iw,Sterman:2002qn,Aybat:2006mz,Dixon:2008gr} allowing
for resummation of large logarithms associated with the pinch surface.

In  deep-inelastic lepton-hadron scattering (DIS), the pinch surface is particularly simple. In this case, factorization has been understood since the 1970s and
has  been used to compute phenomenologically important quantities, namely the DGLAP splitting functions~\cite{Gribov:1972ri,Lipatov:1974qm,Dokshitzer:1977sg,Altarelli:1977zs}.
These splitting functions
describe the leading-power behavior of certain amplitudes when an additional collinear parton is added; they also provide kernels for the renormalization group (RG) evolution of parton distribution functions (PDFs). In DIS, the splitting functions and PDF evolution can be rigorously defined through an operator product expansion (OPE)~\cite{Georgi:1951sr,Gross:1974cs}, which has led to their computation
at 2 loops~\cite{Curci:1980uw,Furmanski:1980cm} and 3 loops~\cite{Moch:2004pa}. The OPE for DIS is possible because it involves the matrix element of two currents whose analytic structure in the complex plane
is particularly simple. That the same splitting functions apply for PDF evolution in some other process, for example the Drell-Yan process,
can occasionally be shown by direct calculation~\cite{Altarelli:1979ub}. However, to show universality of the PDFs more generally requires a general
proof of hard-collinear factorization. Subtleties associated
with proton-proton scattering, where initial state partons can be collinear to final state particles, complicate factorization~\cite{Collins:1988ig,Catani:2011st,Bauer:2000yr}. 
Needless to say, showing that the same PDFs apply to any scattering process (if indeed they do) is an extremely important open question, beyond the scope of this paper.

\enlargethispage{5pt}

An alternative, more pragmatic, approach skips both the pinch surface and the OPE
and simply computes the diagrams relevant for factorization directly, usually in dimensionally regularized perturbation theory. 
Following this approach, universality of collinear splittings was
shown at 1-loop by Bern and Chalmers in 1995~\cite{Bern:1995ix} by studying collinear limits of 5-point amplitudes in QCD. Hard-collinear factorization can be written heuristically as
\be
\cM_n \overset{p_\ccO \, \parallel \cdots \parallel\, p_{\ccm}}{\LPeq} \Sp(p_\ccO, \ldots, p_{\ccm}) \cdot \cM_{n-m} \label{collspliteq} 
\ee
with $\cM_n$ an $n$-external-particle matrix-element, $p_\ccO^\mu \cdots p_{\ccm}^\mu$ the external momenta which become collinear, and $\LPeq$ indicating
the two sides agree at leading power. The important point in this formula is
that the splitting function $\Sp(p_\ccO, \ldots, p_\ccm)$ has no dependence on any of the non-collinear momenta in the process. 
Formulas like Eq.~\eqref{collspliteq} and the explicit formulas for 
$\Sp(p_\ccO, \ldots, p_\ccm)$ in $d$ dimensions are important for precision calculations in QCD. We will give more-precise operator definitions of the objects in this equation in Section~\ref{sec:SplitFunc}. In 1999, Kosower proved \Eq{collspliteq} at leading color (large $N_c$) to all orders in perturbation theory~\cite{Kosower:1999xi}. 
The factorization of IR (soft and collinear) tree-level amplitudes to all orders was shown in~\cite{Catani:1999ss}.
Ref.~\cite{Catani:2011st} has discussed difficulties with  Eq.~\eqref{collspliteq} when initial and final states are collinear. Avoiding such situations,
we will show that
Eq.~\eqref{collspliteq} holds to all orders in QCD, at finite $N_c$. Indeed, hard-collinear factorization is a corollary of the more general hard-soft-collinear factorization formula we prove
in this paper.
 
The factorization of soft emissions from generic matrix elements is also believed to satsify a formula similar to Eq.~\eqref{collspliteq}. For example, in the limit that
a single soft gluon of momentum $q^\mu$ becomes soft, tree-level amplitudes factorize as~\cite{Bassetto:1984ik}
\be
 \cM_n \overset{q~\text{soft}}{\LPeq}  \epsilon_\mu(q)\, \JJ^\mu_\sua \cdot \cM_{n-1} \label{softcurreq} 
\ee
The soft current  $\JJ_\sua^\mu$ is an operator acting in color space. 
In 2000, Catani and Grazzini proved this formula at 1-loop, with an explicit computation of  $\JJ_\sua^\mu$,
 and conjectured that the formula holds to all orders~\cite{Catani:2000pi}. In 2013, the soft current was computed at 2-loops in ~\cite{Duhr:2013msa,Li:2013lsa}.
These calculations were all done in dimensional regularization and have applications in perturbative QCD, such as to the N${}^3$LO Higgs-boson inclusive cross-section. 
As with Eq.~\eqref{collspliteq}, our general factorization formula contains the hard-soft factorization embodied in Eq.~\eqref{softcurreq} as a special case. We
prove this equation to all orders and provide regulator-independent and gauge-invariant operator definitions of the objects involved in Section~\ref{sec:SoftCurr}.

Remarkably, a factorization theorem valid at leading power to all orders in $\alpha_s$ is not strictly required for resummation to all orders in $\alpha_s$ of certain leading or next-to-leading logarithms. For example, by combining $\cO(\alpha_s)$ collinear splitting functions,   $\cO(\alpha_s)$ soft-coherence effects, and $\cO(\alpha_s^2)$ Sudakov effects (associated with the overlapping soft-collinear region), Catani, Marchesini and Webber derived a powerful coherent-branching algorithm~\cite{Catani:1990rr}. Coherent branching is the backbone of the Monte Carlo event generator approach to QCD. 
It has also been used for resummation of many observables at the next-to-leading logarithmic level~\cite{Catani:1989ne,Catani:1990rr, Catani:1992ua, Bonciani:2003nt}. A related observation is that QCD simplifies dramatically in the limit that  gluons are strongly ordered in energy~\cite{Bassetto:1982ma,Bassetto:1984ik,Fiorani:1988by}, particularly at large $N_c$. This approximation has led to the resummation of certain leading logarithms, such as non-global ones~\cite{Dasgupta:2001sh,Banfi:2002hw} which no other method has yet tamed.

A relatively recent approach to factorization is provided by Soft-Collinear Effective Theory (SCET)~\cite{Bauer:2000ew, Bauer:2000yr,Beneke:2002ph,Beneke:2002ni}.
The idea behind SCET is to hypothesize which IR modes contribute to QCD scattering processes and to write fields in QCD as sums of fields with soft or collinear quantum numbers corresponding to the hypothesized modes. Different components are assigned different scaling behavior
and the QCD Lagrangian is expanded to leading power (or beyond). 
The resulting effective theory has Feynman rules which are significantly more complicated than
those of QCD. These rules simplify somewhat after a field redefinition which moves the soft-collinear interactions from the Lagrangian into the operators.
Proofs using the effective Lagrangian are then carried out under the assumption that the only modes necessary for the proof are those in the effective theory. Therefore, proofs of factorization in SCET must be interpreted with some care.
An advantage of the SCET approach is that with operator definitions of the various objects, the hard-soft-collinear decoupling is completely transparent
and resummation of large logarithms can be done through the renormalization group. This has lead to precise predictions of jet observables at
 colliders~\cite{Schwartz:2007ib,Becher:2008cf,Becher:2009th,Chien:2012ur,Becher:2012xr,Feige:2012vc,Jouttenus:2013hs}. Another advantage is that the power counting
makes it straightforward, in principle, to go beyond leading power if desired. On the other hand, the derivation of SCET has been done in a gauge in which
the physics is quite unintuitive, for example with polarization vectors which are longitudinally polarized at leading power (see \cite{Feige:2013zla}). SCET removes the soft-collinear
double counting by simply not summing over the zero-momentum bin in the discrete sum over labels. A somewhat simpler formulation of SCET was presented recently by
Freedman and Luke in~\cite{Freedman:2011kj} and connects more directly to the current work, as discussed in Section~\ref{sec:SCET}.

In this paper, we present and prove a factorization formula for amplitudes in gauge theories, building upon
 insights from many of the approaches discussed above. All of the interesting features of this formula can be seen in the simpler case of factorization for matrix elements of the operator $\cO = \frac{1}{(N/2)!} |\phi|^N$ in scalar QED. There, our formula reads
\be
\boxed{
\bra{X}\cO\ket{0}
 \;\LPeq\; 
\cC(\Sij) \, \frac{\bra{X_\ccO} \phi^\star W_\ccO \ket{0}}{\bra{0} Y_\ccO^\dg W_\ccO \ket{0}} \,\cdots\, \frac{\bra{X_\rN} W_\rN^\dg\phi \ket{0}}{\bra{0} W_\rN^\dg Y_\rN \ket{0}}
\;\bra{X_\scs} Y_\ccO^\dg \cdots Y_\rN \ket{0}
}
\label{sQEDmain}
\ee
This formula applies to final states $\bra{X}$ which can be partitioned into $N$ regions of phase space 
such that the total momentum  $P_\ccj^\mu$ in each region has an invariant
mass which is small compared to its energy.
More explicitly, we demand $P_\ccj^2 < \lambda^2 (P_\ccj^0)^2$, where $P_\ccj^0= E_\ccj $ is the energy of the jet, for some number $\lambda \ll 1$ which is used as a power-counting parameter. 
For such states, the momentum $q^\mu$ of any particle has to be either collinear to one of $N$ lightlike
directions, $n_\ccj^\mu$, meaning $n_\ccj \cdot q < \lambda^2 q^0$, or soft, meaning $q^0 < \lambda^2 P_\ccj^0$. 
Thus we can write for the final state
$\bra{X} = \bra{X_\ccO\cdots X_\rN;X_\scs}$, where all the particles with momentum collinear to $n_\ccj$ are contained in the {\it jet} state
 $\bra{X_\ccj}$ and the particles that are soft are in $\bra{X_\scs}$. This explains the states in Eq.~\eqref{sQEDmain}. 
 The Wilson coefficient  $ \cC(\Sij)$ is a
function {\it only} of the Lorentz-invariant combinations $S_{\cci\ccj} \equiv (P_\cci + P_\ccj)^2 \LPeq 2 P_\cci \cdot P_\ccj$ of jet momenta $P_\ccj^\mu$ in each direction; it does not depend at all on the distribution of energy within the jet or on the soft momenta and, therefore, it does not depend on $\lambda$. 
 The objects
 $Y_\ccj$ are Wilson lines going from the origin to infinity in the directions of the jets, and the $W_\ccj$ are Wilson lines in directions $t_\ccj^\mu$ only restricted not to point in a direction close to that of the corresponding jet. We give more precise definitions of the Wilson lines in Section~\ref{sec:prelim}.
 The symbol $\LPeq$ in Eq.~\eqref{sQEDmain} indicates that any IR-regulated amplitude or IR-safe observable computed with the two sides will agree at leading power in $\lambda$.
 
\enlargethispage{5pt}

 Eq.~\eqref{sQEDmain} implies hard-collinear factorization (Eq.~\eqref{collspliteq})  and hard-soft factorization (Eq.~\eqref{softcurreq})
 as special cases. For example,  if a two-body final state $\bra{X}$ is modified by adding a soft photon of momentum $q^\mu$, then one can calculate the effect of this extra emission by taking the ratio of the right-hand side of  Eq.~\eqref{sQEDmain} with and without the emission. Most of the terms drop out of the product, leaving
\be
\JJ^\mu_\sua
= \frac{ 
\bra{ \epsilon^{ { \spincol \mu}} (p); \sua} 
 Y_\ccO^{\dg} Y_\ccT \ket{0}}
{ 
\bra{0} Y_\ccO^{\dg} Y_\ccT\ket{0} }
=g_s \suTT^{\sua} \left( \frac{p_\ccT^\mu}{p_\ccT \cdot q} - \frac{p_\ccO^\mu}{p_\ccO \cdot q} \right) + \cO(g_s^3)
\ee
 We will give general operator definitions for the splitting amplitude, $\Sp(p_\ccO,\cdots p_\rN)$, and the soft current, $\JJ$,
 and discuss their universality in Section~\ref{sec:applic}
 after we present the generalization of  Eq.~\eqref{sQEDmain} to QCD in Section~\ref{sec:QCD} (see \Eq{genfactQCD}). Beyond providing an all-orders proof of \Eq{sQEDmain}, as well as an operator definition and proof of universality of $\Sp$ and $\JJ$, we hope that our general method of proof will itself be useful in future discussions of formal questions on the structure of perturbative amplitudes. We also hope that our approach to factorization, and the ensuing discussion of SCET in Section~\ref{sec:SCET}, will help bridge the gap between the traditional factorization methods in the QCD literature and those of SCET, as well as provide further insight into the formulation of SCET by Freedman and Luke in ~\cite{Freedman:2011kj}.

 Eq.~\eqref{sQEDmain} was derived at tree-level in~\cite{Feige:2013zla}, a paper we will refer to often and hereafter as \tree.
At tree-level, the Wilson coefficient and the vacuum matrix elements in the denominators of \Eq{sQEDmain}
are all 1 and the factorization formula reduces to
\begin{equation}
\bra{X} \cO \ket{0} \;\overset{\text{tree}}{\LPeq}\;  	
\bra{X_\ccO} \phi^\star W_\ccO \ket{0} \cdots \bra{X_\rN} W_\rN^\dg \phi \ket{0}
	\bra{X_\scs} Y_\ccO^\dg \cdots Y_\rN \ket{0}
\label{treelevel}
\end{equation}
in agreement with the formula from \tree.

There are two differences between Eqs.~\eqref{sQEDmain} and~\eqref{treelevel}, both of which represent important physical effects. First, the nontrivial Wilson coefficient in the all-loop formula enables  the factorized expression to reproduce hard-virtual corrections. Using  Eq.~\eqref{sQEDmain}, one can isolate the Wilson coefficient using
a trivial soft sector $\bra{X_\scs} = \bra{0}$ and collinear sectors with a single particle in each $\bra{X_\ccj} = \bra{p_\ccj}$. Then $\lambda=0$ exactly, and
\begin{equation}
 \cC(\sij)  = 
 \frac{\bra{p_\ccO \cdots p_\rN}\cO\ket{0}}
 {  
 \dfrac{\bra{p_\ccO} \phi^\star W_\ccO \ket{0}}{\bra{0} Y_\ccO^\dg W_\ccO \ket{0}} \,\cdots\, 
 \dfrac{\bra{p_\rN} W_\rN^\dg\phi \ket{0}}{\bra{0} W_\rN^\dg Y_\rN \ket{0}} \; \bra{0} Y_\ccO^\dg \cdots Y_\rN \ket{0}
 }
 \label{Cdef}
\end{equation}
This is a statement of purely-virtual factorization.
 Note that, since $\lambda=0$ exactly, this is an equality, not just a leading-power equivalence. The nontrivial content in this definition is that the right-hand side is IR finite, which we shall prove. Moreover, we shall prove that the Wilson coefficient is independent of the states $\bra{X_\ccO\cdots X_\rN;X_\scs}$, so that \Eq{Cdef} unambiguously specifies $ \cC(\sij) $ at leading power.
 
 The second difference between tree-level factorization and all-orders factorization is the denominators in Eq.~\eqref{sQEDmain}. These represent a type of zero-bin subtraction for loops. 
Recall that for external states
which are both soft and collinear, one is free to put them in $\bra{X_\scs}$ or $\bra{X_\ccj}$ --- the factorization formula holds with either choice. However, since all integrals are taken over $\mathbb{R}^{1,3}$, the soft-collinear region of loop momenta 
is included in both the soft and collinear matrix elements in the factorized formula, thus their overlap must
be removed.
 The term {\it zero~bin} stems from effective theory language, where one (formally) chops up phase space into a discrete sum over soft and collinear sectors. The zero~bin is the soft-collinear overlap sector in the sum, which  must be subtracted not to double count~\cite{Manohar:2006nz}. 
 The equivalence between the zero-bin subtraction in SCET and dividing by a matrix element of Wilson lines has been shown in~\cite{Lee:2006nr}.\footnote{Conveniently (or misleadingly) when dimensional regularization is used to control both the UV and IR divergences, the vacuum matrix elements of Wilson lines are all scaleless and identically vanish. Thus, the zero-bin subtraction is easy to miss, as it was in many early SCET papers.}

Besides the salient differences between the tree-level and all-orders factorization formulas, there is an
important conceptual subtlety: starting at 1-loop, both sides of Eq.~\eqref{sQEDmain} are IR
divergent.  Declaring two infinite quantities equivalent at leading power is not as absurd as it first sounds.
With an IR regulator it is, of course,  perfectly well defined. Conceptually, one could
interpret the
 leading power equivalence $\LPeq$ in this equation as meaning that whenever an IR-safe observable is computed by
integrating over an appropriate collection of final states $\langle X|$, the two sides of  Eq.~\eqref{sQEDmain} produce the same cross section at leading power
in $\lambda$. For example, a typical IR-safe jet observable is $\tau =\frac{1}{Q^2}(\sum_i m_i^2 + QE_\text{out})$: the sum over the jet masses and the out-of-jet energy.
Then $\frac{d \sigma}{d \tau}$ will agree when computed with either side of  Eq.~\eqref{sQEDmain} up to corrections subleading in $\tau$. 
With this in mind, one can still work at the amplitude level without an explicit IR regulator.

To be clear, we do not require or expect the IR divergences on the two sides of Eq.~\eqref{sQEDmain} to exactly agree. Indeed, 
as soon as real-virtual diagrams contribute, the IR divergences will not exactly agree. 
 To see this note that the real-emission graphs computed with Eq.~\eqref{sQEDmain}  only agree at leading power and so an IR-divergent virtual graph with a subleading real emission tacked on will show up on the left-hand side of \Eq{sQEDmain} but not on the right-hand side. This implies that the IR divergences can only precisely agree when $\lambda=0$ (no emissions), as in Eq.~\eqref{Cdef}.\footnote{One can of course add subleading-power operators to the right-hand side of Eq.~\eqref{sQEDmain} so that subleading IR divergences cancel. To get all the IR divergences to cancel, one would need an infinite number of 
operators and the 
factorized expression would be identical to the full theory.}
However, subleading-power IR-divergences will contribute at subleading power to observables, so the disagreement of subleading-power IR-singularities does not invalidate the 
leading-power equivalence in Eq.~\eqref{sQEDmain}.

Regarding the power counting, our factorization theorem will be proven at leading power in $\lambda$, a small parameter that only depends on the external momentum in the state $\bra{X}$. We do not count powers of anything except the external momentum in the matrix element under consideration. When we discuss scaling of virtual momenta near IR sensitive regions, we will talk about scaling with $\kappa$ (see Section~\ref{sec:prelim}), but only to motivate dropping certain loop amplitudes completely.
Our proof actually holds at leading power in $N+1$ separate power counting parameters, $\lambda_\ccc^\cci$ and $\lambda_\scs$, one for each collinear sector and another for the soft. It will be clear that our proof does not require $\lambda_\ccc^\cci = \lambda_\scs$, and we can therefore derive the factorization theorem (at simultaneous leading power in all small parameters) for different types of soft and collinear momentum scalings. As we discuss in Section~\ref{sec:SCET} this implies that our factorization formula unifies what are considered to be two separate effective field theories in the literature, namely SCET$_\text{I}$ and SCET$_\text{II}$.

This paper attempts to give some intuition for the factorization formula rather than simply a proof. We therefore take our time with the presentation, including many examples. Section~\ref{sec:prelim} establishes
some of our notation and reviews some basic concepts. Sections~\ref{sec:1loopEx1} and~\ref{sec:1loopEx2} give examples. Although the proof does not rely on these two example sections, the special cases considered
illustrate many of the issues which come up in the proof and are useful for making some of the abstractions more concrete. Section~\ref{sec:proofoutline} outlines the proof but has no results. The proof begins
in earnest in Section~\ref{sec:coloring}. In this section we explain how Feynman diagrams can be written as sums of colored diagrams with red lines engendering soft-sensitivity and blue lines soft-insensitive. This section would
be quite short if not for the examples we include. Section~\ref{sec:Step2} proves a set of lemmas which establish the physical-gauge reduced-diagram picture manifesting hard factorization. The difference
between our reduced diagrams and reduced diagrams in the literature (see for example~\cite{Sterman:1978bi,Libby:1978qf,Collins:1981ta}) is that our diagrams correspond to specific functions of finite-external momenta computed through loop integrals over all of $\mathbb{R}^{1,3}$, 
while the traditional reduced diagrams describe only the pinch surface where all virtual momenta are either exactly zero or exactly proportional to an external momentum. To prove soft-collinear
factorization, we introduce a special gauge we call {\it factorization gauge} in Section~\ref{sec:factorizationgauge}. The soft-collinear decoupling proof is given in Section~\ref{sec:Step4}. The rest of the paper
discusses the generalization to QCD, some special cases, the QCD splitting functions and soft currents, the connection to SCET, and a brief look forward.

\section{Preliminaries}
\label{sec:prelim}
To begin, we establish in this section some of the basic features of amplitudes we will exploit for factorization. 
We first review the importance of soft and collinear momenta. We then discuss how soft and collinear regions of virtual momenta can be separated without
chopping up the loop momenta into sectors. 

Let us begin with some terminology. 
We will distinguish {\bf soft divergences} from {\bf collinear divergences}, both of which are defined in Section~\ref{sec:SCinteg}. We refer to {\bf IR divergences} as either soft or collinear. We use $\lambda$ to power-count external momenta, as discussed in Section~\ref{sec:lambdacount}. 
We use $\kappa$ to power-count loop momenta. The notation $p \parallel q$ is used to denote when
two momenta, either real or virtual, are nearly collinear according to the appropriate power counting.
The notation $p \propto q$ is reserved for when two momenta are exactly collinear, that is, when they are {\bf proportional} to each other.  
Following \tree, the symbol $\LPeq$ indicates that two expressions agree at leading power in the limit of external particles becoming soft or collinear in an amplitude. That is, it refers 
power counting in $\lambda$, not $\kappa$.
More precisely
\be
A \LPeq B \quad \Longleftrightarrow \quad 
\frac{A}{B}  = 1+ \cO(\lambda)
\ee
We also define
\be
A ~\LPFeq B \quad \Longleftrightarrow \quad \frac{A}{B} =  \cO(\lambda^0)
\ee
This less restrictive IR-equivalence will be used in Section~\ref{sec:Step4} to avoid keeping track of modifications of the hard-amplitude along the steps of soft-collinear factorization.

We are often interested not only  in whether a loop is IR divergent, but whether it would be IR divergent if two external particles were proportional, or if an external momentum were exactly zero.
 If this happens we say the loop is {\bf IR sensitive}. An IR-sensitive loop is IR divergent
when $\lambda=0$ (though it need not be for $\lambda>0$).  IR sensitivity is discussed more in Section~\ref{sec:SCinteg} with an example given in Section~\ref{sec:irsensitive}.

\subsection{Power counting for external momenta}
\label{sec:lambdacount}
A key observation which makes factorization important is that soft and collinear momenta dominate cross sections. At tree level, this is easy to see.
Consider a process with outgoing final-state momenta $p_\cci^\mu$ of zero mass. 
 At tree level, each intermediate momentum  $k^\mu$ must be a linear combination of external momenta $p_\cci^\mu$:
  $k^\mu= p_\ccO^\mu + \cdots +  p_{\ccn}^\mu$. Thus $k^2=\sum_{\cci,\ccj} p_\cci\cdot p_\ccj$. Since each $p_\cci\cdot p_\ccj$ is positive definite,
 $k^2$ can only vanish if $p_\cci^\mu$ is exactly proportional to $p_\ccj^\mu$ for each $\cci$ and $\ccj$ in the sum, or if a $p_\cci^\mu$ has zero energy. The dominant regions of phase space where the propagators are large are, therefore, the regions where momenta
are collinear: $p_\cci \parallel p_\ccj$, or soft: $E_i \ll Q$, with $Q$ the center-of-mass energy. This is discussed extensively in \tree.

We, therefore, focus on final states $\bra{X}$ partitioned into collinear sectors $\bra{X_\ccO} \cdots \bra{X_\ccN}$ and a single soft sector
 $\bra{X_\scs}$. Let $m_\cci$ and $E_\cci$ be the invariant mass and energy of the net momentum $P_\cci^\mu=\sum_{\text{sector}~\cci} p_\ccj^\mu$ in each sector,
and define $\lambda_\cci = m_\cci/E_\cci$ for the collinear sectors and $\lambda_s = E_s/Q$ for the soft sector. We assume $\lambda_\cci \ll 1$ for every sector,
so that the contribution of the state  $\bra{X}=\bra{X_\ccO} \cdots \bra{X_\ccN}\bra{X_\scs}$ to a cross section will scale like inverse powers of
all $\lambda_\cci$. It is for these states that hard-soft-collinear factorization holds.

\subsection{Power counting for virtual momenta}
\label{sec:SCinteg}

The soft and collinear regions of phase space are also important because they lead to IR divergences in loops. 
IR divergences come from virtual-particle momenta going on-shell. 
Let us call {\bf loop momenta} those being integrated over. That is, denoting the loop momenta as $k_i^\mu$,
the loop measure is $\prod_i \dfrac{d^4 k_i}{(2\pi)^4}$. Any {\bf virtual momentum} $l^\mu$ in a Feynman diagram is a linear combination of loop momenta and external momenta: $l^\mu(k_i, p_\cci)$. 
Thus, for a virtual propagator to blow up, the virtual momentum must go on-shell, which makes the loop momentum either soft or collinear to one of the jet
directions. 
Since we associate infrared divergences with virtual lines, it is convenient to route the momenta so that the virtual momentum in question is one of the loop momenta, $k^\mu$.
We say a given diagram has a {\bf soft divergence}  associated with $k^\mu$ if it is still divergent when
each component of $k^\mu$ is restricted to be smaller than some arbitrarily small scale, $\kappa^2 Q$, for any $\kappa>0$. A {\bf collinear divergence} requires the specification of a finite, non-zero lightlike momentum, $p^\mu$; the singularity is then present in any integration region containing $p^\mu$. We take {\bf infrared divergence} to mean either soft or collinear.

A shortcut to determining whether a given integral is IR divergent is through its scaling behavior, which can
be understood in lightcone coordinates.
%
 Given two distinct lightlike directions $n_a^\mu$ and $n_b^\mu$, we can uniquely decompose any 4-vector $k^\mu$ as
\be
k^\mu = k_ b \, n_a^\mu +k_a\, n_b^\mu + k^\mu_\perp
\label{lcnanb}
\ee
with $k^\mu_\perp$ defined by this equation and
\be
k_a = \frac{n_a \cdot k}{n_a \cdot n_b}, 
\qquad 
k_b = \frac{n_b \cdot k}{n_a \cdot n_b}
\ee
We can then consider rescaling the components by factors of $0<\kappa<1$ raised to various powers 
\be
k^\mu \to \kappa^b k_ b \, n_a^\mu + \kappa^a k_a\, n_b^\mu + \kappa^c k^\mu_\perp
\quad\text{with}\quad
 a,b\geq0,\;a+b>0,\; c>0
\label{resc}
\ee
We require $a,b\geq0$, $c>0$ and $a+b>0$, so that as $\kappa \to 0$ these rescalings zoom in on a possibly singular region.
For example, $a,b,c>0$ scales $k^\mu \to 0$ (the soft region), whereas $b=0$ and $a,c>0$ scales $k^\mu \to k_b \,n_a^\mu$ (the $a$-collinear region).
We say an integral is {\bf power-counting finite} if, including the measure, it scales like $\kappa$ to a positive power under a given rescaling of this form.

The purpose of these rescalings is that they are related to whether or not a diagram is infrared divergent:
\begin{conjecture}\ltag{Power-Counting Finiteness Conjecture}
\label{PCfinite}
A Feynman integral is infrared finite 
if and only if
it scales as a positive power of $\kappa$ under all  possible rescalings in Eq. \eqref{resc}.
\end{conjecture}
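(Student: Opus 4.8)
The plan is to prove the two implications by quite different means. The ``only if'' direction --- IR finiteness forces a positive power of $\kappa$ under every rescaling of the form~\eqref{resc} --- is the contrapositive of the elementary fact that a region of loop momenta which fails to power-suppress produces a divergence. Concretely, suppose that for some reference directions $n_a^\mu,n_b^\mu$ and exponents $(a,b,c)$ obeying the constraints in~\eqref{resc} the integrand and measure together scale as $\kappa^{-p}$ with $p\ge0$ as $\kappa\to0$. Restricting the integration to the neighbourhood in which the rescaled components lie in fixed compact sets and comparing dyadic shells $\kappa=2^{-n}$, the leading term of the $\kappa\to0$ expansion of the integrand is a nonzero rational function of the unscaled variables, so each shell contributes at least $c_0\,2^{np}$ (times a power of $n$ when $p=0$) and the sum over $n$ diverges. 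The only delicate point is the marginal case $p=0$, where one must verify that the leading coefficient does not integrate to zero over the transverse and intrinsic directions; the cancellations that gauge invariance can enforce live on a measure-zero set of the remaining parameters and do not affect this.

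For the ``if'' direction --- the substantive one --- I would combine the classical pinch-surface analysis with a sector-decomposition argument. First, the Landau conditions, refined in a physical gauge as in~\cite{Sterman:1978bi,Libby:1978qf,Collins:1981ta}, localize every IR singularity of the integral on a pinch surface $S$ on which each virtual momentum is either soft or exactly proportional to an external direction: the hard--jet--soft reduced-diagram picture. Second, near such an $S$ one introduces coordinates adapted to $S$ --- ``intrinsic'' coordinates along $S$ and ``normal'' coordinates transverse to it --- and notes that, applied to each loop momentum separately with suitably chosen reference directions, approach to $S$ is precisely a rescaling of the normal coordinates of the type~\eqref{resc}, including the Glauber-type approaches with $a=b>c$; the hypothesis then says the integrand times measure carries a strictly positive power of $\kappa$ on approach to every $S$. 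Third --- and this is the heart of the argument --- one must turn these surface-by-surface estimates into a bound on the whole integral. Since pinch surfaces are nested and overlapping (a soft subregion sits inside a collinear one, and so on), a single naive power count per surface is insufficient; instead one decomposes the integration domain into Hepp-type sectors, each attached to a maximal chain $S_1\subset S_2\subset\cdots$, so that within a sector the integral factorizes into iterated radial integrals whose convergence at each stage is controlled by the IR degree of divergence of the corresponding $S_i$. Positivity of every such degree --- which the hypothesis supplies once one checks stability under passing to sub-surfaces --- then gives absolute convergence sector by sector, hence finiteness.

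The main obstacle I expect is precisely the third step: promoting power counting on each pinch surface to a genuine convergence proof in the presence of overlapping singular loci, the same difficulty that makes Weinberg's theorem and the BPHZ analysis nontrivial in the ultraviolet, now with the richer IR lattice (soft within collinear, Glauber directions, unequal $a$ and $b$). Three further subtleties must be managed. (i) The numerator algebra is gauge dependent: in a covariant gauge the naive propagator count can be upgraded by polarization sums, and in lightcone or factorization gauge by $1/(n\cdot k)$ denominators, so one should carry out the power counting for a single diagram in Feynman gauge --- where the numerator is polynomial and every enhancement is kinematic and already captured by the $\kappa$-count --- and invoke separately the gauge independence of whether the integral is finite. (ii) The literal ``iff'' must be reconciled with Glauber regions, where a naive $\kappa^0$ scaling can coexist with finiteness because the would-be singularity is not pinched; the fix is to read the scaling statement together with the Landau pinch condition, so that only \emph{pinched} non-positive scalings are counted as signalling a divergence. (iii) One must verify the implicit claim that the rescalings~\eqref{resc}, applied independently to each loop momentum, already exhaust every singular direction --- that no correlated multi-loop scaling falls outside this class --- which is itself essentially the statement that the Landau pinch surfaces are of the stated soft/collinear form.
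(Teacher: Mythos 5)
The statement you are trying to prove is, in the paper, explicitly a \emph{conjecture}: the authors prove only the easy direction (``an infrared-finite integral must scale as a positive power of $\kappa$, since a convergent integral must have a convergent Riemann sum''), state that they know of no rigorous proof of the converse, and simply assume it, pointing to p.~428 of Sterman's book for a discussion of a stronger version. Your ``only if'' argument is essentially the paper's Riemann-sum remark dressed in dyadic shells, so that part is fine and matches the paper. The substantive ``if'' direction, however, is where your proposal has a genuine gap --- and you name it yourself: your third step, promoting surface-by-surface power counting on the nested, overlapping family of pinch surfaces to an absolute-convergence statement via a Hepp-type sector decomposition, is precisely the missing ingredient, and it is exactly the step that has never been carried out rigorously for the infrared problem in gauge theories. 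As written, your proposal is a plausible research program (and indeed the program people have in mind when they assume this conjecture), not a proof; the claim that positivity of the degree on every pinch surface is ``stable under passing to sub-surfaces'' and that the sector integrals factorize into iterated radial integrals with controlled degrees is asserted, not established, and the marginal $p=0$ case in your easy direction is also waved through (cancellations of the leading angular integrand are not automatically confined to measure-zero sets).

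Two further points deserve flagging. First, your fix (ii) --- reading the scaling criterion ``together with the Landau pinch condition'' so that only pinched non-positive scalings count --- does not repair the proof so much as replace the conjecture by a different statement; the paper instead sidesteps the Glauber issue at the level of its use of the conjecture, by restricting to final-state-only collinear directions and arguing via unitarity that Glauber scalings cannot produce divergences there. Second, your point (iii), that the single-loop-momentum rescalings of Eq.~\eqref{resc} exhaust all singular directions, is again essentially equivalent to the pinch-surface classification you are invoking, so the argument risks circularity unless that classification is itself proven at the needed level of rigor. In short: your attempt correctly reproduces the direction the paper proves, but the converse remains, both in your write-up and in the paper, an assumption rather than a theorem.
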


That an infrared-finite Feynman integral scales as a positive power of $\kappa$ for any rescaling is easy to prove: a convergent integral must have a convergent Riemann sum. The
converse, that scaling implies infrared finiteness, is also quite logical. We are certainly not aware of any counterexamples. Nor do we know of a rigorous proof. This conjecture is assumed to hold in practically every factorization proof, and we assume it too. For a discussion of a slightly stronger version of this conjecture, see  page 428 of \cite{Sterman:1994ce}.

A convenient simplification is that it is not necessary to consider all possible values of $a,b,c\ge0$. In determining the leading power of $\kappa$ with a given scaling, all that matters is which terms can be dropped with respect to which other terms -- any scaling that drops the same terms gives the same integrand with the same singularities. 
Between two power-counting regions that allow two different terms to be dropped lies a boundary where both terms must be kept.
Because more terms must be kept on the boundary, if a boundary region 
is power-counting finite then the regions it bounds 
must also be power-counting finite. This simplifies the types of power-counting we need to consider.

\begin{figure}[t]
	\centering
  	\includegraphics[width=.6\textwidth]{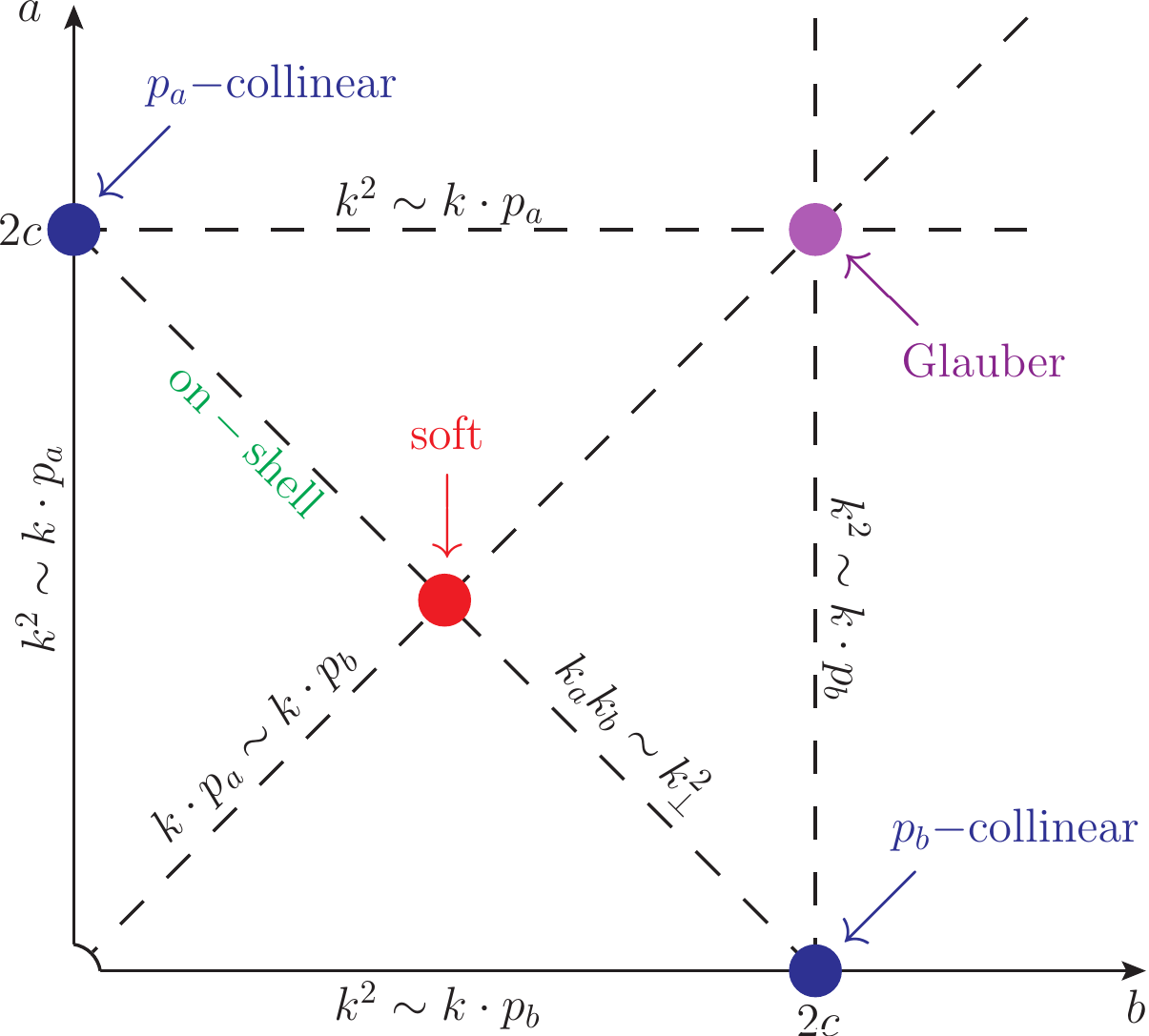}
	\caption{Scalings, $k \sim (\kappa^b, \kappa^a,\kappa^c)$, that could give power-counting IR divergences.}
\label{fig:boundaries}
\end{figure}

 In a given Feynman loop diagram, we always have one propagator whose denominator is $k^2$ (by our choice of momentum routing).
Under the rescaling in Eq. \eqref{resc}, 
\be
k^2 = 2n_a\cdot n_bk_a k_b + k_\perp^2 \to \kappa^{a+b} \,2n_a\cdot n_bk_a k_b + \kappa^{2c}\, k_\perp^2
\ee
So, if $a+b>2c$, we may drop $k_a k_b$ in place of $k_\perp^2$, and if
$a+ b < 2c$, $k_\perp^2$ can be dropped with respect to $k_a k_b$. We might also have denominators $(k-p_a)^2$ for some $p_a^\mu$. If $p_a^\mu$ is not lightlike, then $(k-p_a)^2 \sim p_a^2 \sim \kappa^0$.
A more relevant case is when $p_a^\mu$ is lightlike. Then it makes sense to choose one of our basis vectors $n_a^\mu$ to point along $p_a^\mu$. In this case, a term $k\cdot p_a \to \kappa^a\, k\cdot p_a$ may appear in a denominator. Similarly, $k \cdot p_b \to \kappa^b\, k\cdot p_b$ may appear.
Thus there are four relevant scaling behaviors:
\be
k_a k_b \sim \kappa^{a+b}, \quad  k_\perp^2 \sim \kappa^{2c}, \quad 
k\cdot p_a \sim \kappa^a \quad k\cdot p_b \sim \kappa^b
\label{allscalings}
\ee
In expanding for small $\kappa$, all we do is drop some of these when they are smaller than others.  If an integral is power-counting finite when two terms are of comparable size, it is necessarily power-counting finite when one of them is dropped. So we can restrict our considerations to scalings where two (or more) of these terms are comparable.

There are six regions where two of the scalings in \Eq{allscalings} are equal. These form the lines  in Figure~\ref{fig:boundaries}. 
For example, one of the diagonal lines has  $a+b = 2c$ so that $k_a k_b \sim k_\perp^2$ and  $k^2 \to \kappa^{2c}\, k^2$.  This scaling is special as it keeps on-shell momenta on-shell. In particular, this line shows the only relevant scalings for external momenta. 
The scalings where two lines intersect are the four solid dots. If an integral is infrared finite at all of these points, it is automatically infrared finite under any scaling. The points in the corners come from
three scalings being equal and the center point, at $a=b=c$ has $k \cdot p_a \sim k \cdot p_b$ and $k_a k_b \sim k_\perp^2$. The most overlapping region, where all four scalings are equal requires $a=b=c=0$.
This is hard scaling which does not tell us about infrared divergences since it does not zoom in on a possibly singular region. The point at the origin in Figure~\ref{fig:boundaries}, where $a=b=0$
but $c\ne0$ also cannot produce infrared divergences since for $\kappa=0$, $k^\mu$ is offshell. We are also free to choose one of $a,b,c$ arbitrarily if it is not zero; for
example, we can set $c=1$ by replacing $\kappa$ by $\kappa' = \kappa^{1/c}$.

Thus, we can restrict the discussion to the scalings listed in Table~\ref{tab:scalings}.
\begin{table}[t]
\begin{tabular}{|lclc|}
\hline
Exponents & Conditions & Momenta scaling & Name \\[1mm]
\hline
$(a,b,c) = (0,0,0)$:& $k_a k_b \sim k_\perp^2 \sim k\cdot p_a\sim k\cdot p_b $ & $\quad k^\mu \sim  (1,1,1) \qquad$  &hard \\
$(a,b,c) = (2,0,1)$:& $k_a k_b \sim k_\perp^2 \sim k\cdot p_a $ & $\quad k^\mu \sim  (1,\kappa^2,\kappa) \qquad$  &$p_a$-collinear \\
$(a,b,c) = (0,2,1)$:& $k_a k_b \sim k_\perp^2 \sim k \cdot p_b$ & $\quad k^\mu \sim (\kappa^2,1,\kappa) \qquad$  &$p_b$-collinear \\
$(a,b,c) = (2,2,2)$:& $(k_a k_b \sim k_\perp^2) ~\&~ (k\cdot p_a \sim k \cdot p_b)$ &$\quad k^\mu \sim (\kappa^2,\kappa^2,\kappa^2) \qquad$  &soft \\
$(a,b,c) = (2,2,1)$:& $k_\perp^2 \sim k\cdot p_a \sim k \cdot p_b$ &$\quad k^\mu \sim (\kappa^2,\kappa^2,\kappa) \qquad$ &Glauber \\
\hline
\end{tabular}
\caption{Scalings relevant for factorization.
\label{tab:scalings}}
\end{table}
Of these, hard scaling does not produce infrared divergences. Soft and collinear scaling both imply $k^2 \to \kappa^2 k^2$. In particular, 
timelike, spacelike and lightlike momenta stay timelike, spacelike and lightlike, respectively.
Glauber scaling, on the other hand, turns timelike and lightlike momenta into spacelike momenta as $\kappa\to 0$, preserving only the spacelike nature.

The set of scalings we need to consider is even smaller for the processes that have no collinear directions in the initial state. When there are only final state particles, for example in a decay, we know the infrared divergences must cancel among real and virtual corrections
at each order in $\alpha_s$. The reason infrared finiteness can be proven in this case is because, by unitarity, a decay is the imaginary part of a $1\to1$ total cross section
whose analytic structure is particularly simple. Not only does infrared finiteness hold, but there is a one-to-one correspondence between the momenta producing
infrared divergences in real emission contributions and the virtual contributions. This is easiest to see using old-fashioned perturbation theory (see Chapter 13 of \cite{Sterman:1994ce}).
In a real emission graph with only final state particles, all the virtual lines without loop momenta flowing through them are timelike. As we take $\lambda \to 0$ these timelike momenta approach the lightcone from within, and give rise to soft and collinear real-emission phase-space singularities. Because these phase-space divergences come from timelike momenta becoming lightlike, there cannot be any phase-space singularities with Glauber scaling, which as $\kappa \to 0$ makes timelike momenta spacelike. Then, by infrared finiteness of the total decay rate, there cannot be Glauber singularities in loop integrals either.
We conclude that, when considering only final-state collinear directions, only soft and collinear scalings can possibly produce infrared divergences. 

When there are collinear particles in the initial state, we expect that unitarity-based arguments should still hold, even if they have not yet been rigorously proven. 
The complication is that with collinear particles in the initial state, the virtual momenta in real-emission graphs can be spacelike. In particular, a virtual
particle with momentum $k = p^\mu - p'^\mu$ connecting an initial state particle of momenta $p^\mu$ to a final state particle of momentum $p'^\mu$ can be spacelike
 and have Glauber scaling if $p^\mu$ is collinear to $p'^\mu$. 
Thus Glauber scaling is important for forward scattering. In this paper, we will only have final state collinear directions, so we can ignore Glauber scaling.
A technical pinch-analysis proof of the irrelevance of Glauber scaling for decay processes can be found in Chapter 5 of~\cite{Collins:2011zzd}.

We conclude that we only need to consider soft scaling, and collinear scaling in each relevant direction.
If upon $k^\mu \to \kappa^2 k^\mu$, an integral
scales like $\kappa$ to a positive power, the integral is not soft divergent. If it scales like $\kappa^0$ (it cannot scale like $\kappa$ to a negative power, see \cite{Sterman:1978bi} or Lemma~\ref{lem:kappa0}), there might
be a soft divergence.
Collinear divergences are determined by rescaling $k^\mu$ as
\be
k^\mu \to \frac{n_b \cdot k}{n_a \cdot n_b}\, n_a^\mu + \kappa^2 \frac{n_b \cdot k}{n_a \cdot n_b}\, n_b^\mu + \kappa \,k^\mu_\perp
\label{kcollp1rescale}
\ee
If the integral scales like $\kappa$ to a non-positive power, there is a potential collinear divergence. Otherwise, the integral is collinear finite in the $n_a^\mu$
direction.

In practice, Eq.~\eqref{kcollp1rescale} implies that to find a collinear divergence associated with the direction $p^\mu$ of an external momentum,
we rescale
\begin{align}
d^4k ~~&\to~~ \kappa^4\, d^4k \notag\\
k^2 ~~&\to ~~ \kappa^2 \, k^2\\
k\cdot p ~~&\to ~~ \kappa^2 \, k\cdot p \notag
\end{align}
If $q$ is another loop momenta, then the scaling depends on whether $q$ is being consider
collinear to $p$ or not:
\begin{equation}
k\cdot q ~ \to 
k\cdot q
\times
  \begin{cases}
\kappa^2 , \quad q \parallel p\\[2mm]
1,\quad q \notparallel p
\end{cases}
\end{equation}
For collinear-sensitive power counting (see below), the same scaling rules apply (depending on whether $q\parallel p$ or not) if $q$ is a sum of external momenta.

As an example, consider the 1-loop scalar integral:
\begin{equation}
I = \fint{k} \frac{1}{\big(k^2\pie\big) \big((p_\ccO+k)^2\pie\big) \big((p_\ccT+k)^2\pie\big)}
\end{equation}
with $p_\ccO^2 = p_\ccT^2 = 0$. In the soft limit, 
\begin{align}
 \frac{d^4k}{k^2(p_\ccO+k)^2(p_\ccT+k)^2} &\overset{k~\text{soft}
 }\too
 \frac{\kappa^8 d^4k}{\kappa^4 k^2(\kappa^2 2p_\ccO\cdot k + \kappa^4 k^2)(\kappa^2 2p_\ccT\cdot k + \kappa^4 k^2)}
\nn
\\&
\hspace{15mm}=  \frac{d^4k}{k^2 (2p_\ccO\cdot k)(2p_\ccT\cdot k)} \kappa^0 + \cO(\kappa^2)
\label{softintegrand}
\end{align}
Thus there is a potential logarithmic soft divergence in this integral. In the limit where $k \parallel p_\ccO$, we choose $n_a^\mu=p_\ccO^\mu$.
Then
\begin{align}
 \frac{d^4k}{k^2(p_\ccO+k)^2(p_\ccT+k)^2} &\overset{k\, \parallel\, p_\ccO}
 \too
 \frac{\kappa^4 d^4k}{\kappa^2 k^2(\kappa^2 2p_\ccO\cdot k + \kappa^2 k^2)(2p_\ccT\cdot k + \kappa^2k^2)}
\nn
\\&
\hspace{15mm}= \frac{d^4k}{k^2(p_\ccO+k)^2 \,2p_\ccT\cdot k} \kappa^0 + \cO(\kappa^2)
\label{colintegrand}
\end{align}
Thus, there is a potential collinear divergence in the $p_\ccO^\mu$ direction. By the symmetry of the integral, there is a potential collinear divergence in the $p_\ccT^\mu$ direction as well.

In some cases, an integral does not have a divergence associated with a specific power counting despite the integrand scaling like $\kappa^0$ (for example, the Glauber scaling in decay processes).
Indeed, one can often deform the integration
contour away from the singularity. If this deformation cannot be done, the singularity is said to be pinched. 
While there is a close connection between our approach and the results of a pinch analysis, we can conveniently avoid the discussion of contour deformation all together.
Although we will use strongly that some diagrams with on-shell internal lines are not soft sensitive, we will not directly use the Landau equations~\cite{Landau:1959fi}
or their interpretation by Coleman and Norton~\cite{Coleman:1965xm} in our proof.
Instead, we will show that two
expressions agree at leading power in $\lambda$, including both infrared divergent and infrared finite contributions.
The connection between infared divergences
and the leading power in $\lambda$ is through the notion of infrared sensitivity which we discuss next.

\subsection{Infrared sensitivity}

We are often interested not in actually divergent integrals, but in integrals which would be divergent if $\lambda=0$. That is, they would scale like $\kappa$ to a non-positive power if two external collinear particles were exactly proportional, or if a soft external particle had exactly zero momenta. We generalize the concept of an IR divergence to encompass such situations by saying that a loop is {\bf IR sensitive} if it is IR divergent when $\lambda=0$. Of course,  a loop that is IR divergent 
(for any $\lambda$) is also IR sensitive. 
For a loop to be infinite at $\lambda=0$ but finite for $\lambda \ne 0$, we know $\lambda$ must be acting like an IR regulator.  For example,
\be
\int_0^1 d \kappa \frac{1}{\kappa + \lambda} = \ln \frac{\lambda+1}{\lambda} \LPeq -\ln\lambda
\ee
The equivalent in a real diagram with $p \parallel q$ might be $\ln \lambda = \ln\frac{(p+q)^2}{Q^2}$.

When computing probabilities of IR-safe physical observables we square the amplitude and integrate over phase space of the external particles. The integration over phase space encloses the region where $\lambda=0$; in fact, it is this region that cancels the IR divergences in virtual loops. Thus, to preserve IR finiteness of physical observables, we must treat loops that are IR divergent when $\lambda=0$ the same as we do loops that are IR divergent for any $\lambda$. Therefore, IR sensitivity is the appropriate concept to use when discussing loops and emissions together, rather than IR divergence.
 
When power counting IR-sensitive loops, instead of setting $\lambda = 0$ and counting powers of $\kappa$, we can simply count powers of $\kappa$ and $\lambda$ together. 
By power counting  $\lambda$ and $\kappa$ as of the same order, we ensure that all the terms are kept that are necessary for the
cancellation of IR divergences between real and virtual particles at leading power of a physical IR-safe observable.

For the power counting, we only count \emph{powers}. 
This means that we treat $\ln\lambda$ as being the same order as $\lambda^0$.
Therefore, a logarithmically divergent integral can be of the same order as a finite integral. Examples are given in Section~\ref{sec:irsensitive}, where we see that
we must treat
\be
\frac{1}{\lambda} \;\LPeq\; \frac{\ln\lambda}{\lambda} \;\LPeq\; \frac{\text{log divergent}}{\lambda}
\ee
The point is that power-suppression really requires an extra power of $\lambda$. This is consistent with the leading power of an IR-safe cumulant reproducing both the constant
term and the terms which are powers of logarithms:
\be
R(\alpha_s, \lambda) = f(\alpha_s)  + f_1(\alpha_s) \ln \lambda + f_2(\alpha_s) \ln^2 \lambda + \cdots
\ee
In a perturbative fixed-order or resummed calculation, certain terms in this expansion are reproduced, but the leading power factorization formula is {\it capable} of reproducing every term in such an expansion.

\subsection{Lightcone gauge}
\label{sec:lightconegauge}

Traditionally, lightcone gauge has been particularly useful for studying soft-collinear factorization. 
In lightcone gauge, the gluon Feynman propagator is
\be
D_F^{ab;\mu\nu}(k)
= \delta^{ab} \,
\frac{  i\Pi^{\mu\nu}(k) }{k^2+i \varepsilon}
\ee
with
\be
\Pi^{\mu\nu}(k) = -g^{\mu\nu} + \frac{r^\mu k^\nu + r^\nu k^\mu}{r\cdot k}
\label{Pimndef}
\ee
where $r^\mu$ is lightlike and its overall scale does not matter. The propagator numerator, $\Pi^{\mu\nu}(k)$, satisfies
\be
r_\mu \Pi^{\mu\nu}(k)  = 0
\ee
and
\be
k_\mu \Pi^{\mu\nu}(k) =\frac{k^2}{r\cdot k} r^\nu \label{numsupp1}
\ee
which vanishes as $k^2 \to 0$.

Eq.~\eqref{numsupp1} produces a crucial feature of lightcone gauge: if $k \propto p$ where $p^\mu$ is some lightlike direction, then $p_\mu \Pi^{\mu\nu}(k)=0$. In particular, near a collinear singularity, a numerator $p\cdot \Pi(k)$ gives a suppression factor of $\kappa$. 
To be more explicit, we will often find numerator structures from virtual gluons
of the form $p\cdot \Pi(k) \cdot q$ for some momenta $p$ and $q$. To study the limit when $k \parallel p$, 
we use Eq.~\eqref{kcollp1rescale} with $n^\mu = p^\mu$ and $r^\mu$ generic. Then
\begin{align}
p\cdot \Pi(k)\cdot q &\;=\; - p\cdot q + \frac{r\cdot p\, k\cdot q + r\cdot q\, k\cdot p}{r\cdot k}
	\notag\\
&\,\to\, - p\cdot q + \frac{ p\cdot q\, r\cdot k + \kappa^2 p\cdot k\, r\cdot q
 + \kappa r\cdot p\, k_\perp \cdot q +\kappa^2 r\cdot q \, k\cdot p}{r\cdot k}
 	\notag\\
&\;=\; \kappa \left[ \frac{r\cdot p \, k_\perp \cdot q}{{r\cdot k}}+ 2\kappa \frac{ p\cdot k \, r\cdot q}{r\cdot k}\right]
\label{numsupp2}
\end{align}
This extra factor of $\kappa$ strongly restricts the type of diagrams which are collinear sensitive in lightcone gauge; it makes many graphs finite (or collinear insensitive) which would be divergent if the numerator structure scaled like $\kappa^0$.

Lightcone gauges are sometimes called {\it physical gauges}, as the ghosts decouple and the propagator numerator is a sum over physical polarizations when the gluon goes on-shell:
\be
\Pi^{\mu\nu}(k) = -g^{\mu\nu} + \frac{r^\mu k^\nu + r^\nu k^\mu}{r\cdot k}
\quad\overset{k^2 = 0}\longrightarrow\quad
\sum_{h=\pm} \,\epsilon_h^\mu(k;r) \; \epsilon_h^{\nu\,*}(k;r)
\label{PiPolSum}
\ee
Recall that the basis of gluon polarizations $\epsilon_\pm^\mu(k;r)$ is uniquely specified by a reference vector $r^\mu$ to which the polarizations are orthogonal,
and that the polarizations satisfy $r_\mu\epsilon_\pm^\mu(k;r) = k_\mu \epsilon_\pm^\mu(k;r)=0$.  The factor of $\kappa$ coming from the numerator of the lightcone gauge propagator in Eq.~\eqref{numsupp2} is similar to the extra factor of $\lambda$ suppression of collinear-emission diagrams in generic-$r$ compared to say, their scalar field theory counterparts \tree. That is, $p\cdot \Pi(k) \sim \kappa$ when $k\parallel p$ can be thought of, via \Eq{PiPolSum}, as a consequence of the transversality of the polarization vectors, which implies that $p\cdot\epsilon(q) \sim \lambda$ when $p\parallel q$.

In \tree, the freedom to choose reference vectors for the gluon polarizations
was used extensively to prove factorization at tree level. There, it was shown that two important choices of $r$ were
\be
 \text{ \bf generic-}r: \qquad r \;\cn{\parallel}\; p_\ccj \quad  \text{for any}\; \ccj
\ee
and 
\be
\hspace{2.5mm}
\text{ \bf collinear-}r: \qquad r \parallel p_\ccj \quad \text{for some}~
\ccj
\ee
For example, choosing collinear-$r$ for the polarizations of the soft gluons and generic-$r$ for the polarizations of the collinear gluons simplified the disentangling of soft and collinear radiation.

For loops, we can of course choose $r$ generic (not parallel to any $p_\ccj$), which we call a {\bf generic-lightcone gauge}, or we can choose $r \parallel p_\ccj$ for
some $p_\ccj$, which we call {\bf collinear-lightcone gauge}.  To prove factorization at loop level, however, it will be helpful to be able to choose lightcone gauges for the soft-virtual gluons and collinear-virtual gluons separately. We introduce a gauge called {\bf factorization gauge} in Section~\ref{sec:factorizationgauge} which provides this flexibility. We will refer
to either lightcone gauge with generic choice of $r$ or factorization gauge with generic choice of $r_\ccc$ as {\bf physical gauges}. This is not quite a standard usage since 1) all lightcone gauges
are usually considered physical and 2) ghosts do not completely decouple
in factorization gauge (see Section~\ref{sec:ghosts}). Since our definition is morally equivalent to the usual definition, we do not feel a new term is needed.

\subsection{Wilson Lines}
\label{sec:WilsonLines}

Wilson lines describe the radiation produced by a charged particle moving along a given path in the semi-classical limit. The semi-classical limit applies when the back reaction of the radiation on the particle can be neglected, so that the particle behaves like a source of charge. In particular, this limit holds when the particle is much more energetic than any of the radiation, that is, when the radiation is soft. The physical picture of how Wilson lines arise in the soft and collinear limits of Yang-Mills theories is discussed in \tree.

We define a soft Wilson line in the $n_\ccj^\mu$ by
\begin{equation}
  Y_\ccj^\dag(x) = P \left\{ \exp \left[ i g 
  \int_0^{\infty} d s \,n_\ccj\cdot A ( x^{\nu} + s n_\ccj^{\nu} )\, e^{-
  \varepsilon s} \right] \right\} 
\label{Ydef}
\end{equation}
where $P$ denotes path-ordering and $A_\mu = A_\mu^\sua \suTT^\sua$ is the gauge field in the fundamental representation (Wilson lines in other representations are a straightforward generalization).
This Wilson line is outgoing because the position where the gauge field $A_\mu(x)$ is evaluated goes from
$x$ to $\infty$ along the $n_\ccj^\mu$ direction. We write $Y_\ccj^\dag$ for Wilson lines for outgoing particles,
and $Y_\ccj$ for outgoing antiparticles (as $\bar{\psi}$ creates outgoing quarks and $\psi$ creates outgoing antiquarks). 
Explicitly,
\begin{equation}
  Y_\ccj(x) = \overline{P} \left\{ \exp \left[- i g
  \int_0^{\infty} d s\, n_\ccj\cdot A ( x^{\nu} + s n_\ccj^{\nu} ) \, e^{-
  \varepsilon s} \right] \right\}
\end{equation}
where $\overline{P}$ denotes anti-path ordering. We will not bother to discuss incoming Wilson lines in this paper; they are defined in \tree.

Wilson lines can be in any representation. For example, an adjoint Wilson line can be written as
\be
\cY_\ccj^\dg(x) = P\bigg\{ \exp\bigg[ ig \int_0^\infty ds\, n_\ccj\cdot A_\mu^a (x + s\, n_\ccj) \suTT^\sua_\adj \, e^{-\epsilon s} \bigg] \bigg\}
\label{Yadjdef}
\ee
where $(\suTT^\sua_\adj)^{\sub \suc} = if^{\sub \sua \suc}$ are the adjoint-representation group generators. 
Since
\be
(\suTT_\adj^\suc)^{\sua\sub}\suTT^\sub = [\suTT^\sua,\suTT^\suc] 
\, ,
\ee
fundamental and adjoint Wilson lines are related as
\be
 Y_\ccj^\dg \suTT^\sua \, Y_\ccj = \,\cY_\ccj^{\sua \sub}\, \suTT^\sub
\label{YYrel}
\ee
This identity is occasionally useful to write all of the Wilson lines for QCD in terms of fundamental and antifundamental Wilson lines.

From a practical perspective, the most important facts about Wilson lines for this paper are their Feynman rules and their gauge-transformation properties. Their Feynman rules are exactly the eikonal rules, coming from the soft limit of a QCD interaction:
\be
\fd{3.5cm}{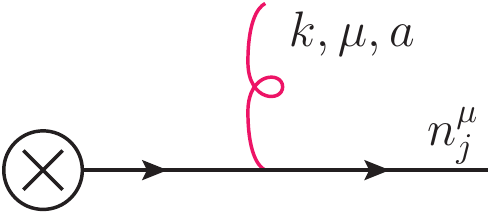} \;\;\overset{k\to \text{soft}}= \; -g
\suTT^\sua\, \frac{n_\ccj^\mu}{n_\ccj\cdot k + i\varepsilon}
\;=\; \bra{k,\mu;\sua} Y_\ccj^\dg \ket{0}
\ee
with the correct $i \epsilon$ prescription. Here $\bra{k,\mu;\sua} Y_\ccj^\dg \ket{0}$ means the off-shell matrix element for a gluon with polarization  $\epsilon^\mu(k)$ and color $\sua$ with the polarization vector
stripped off.  
That $Y_n^\dg$ gives the eikonal Feynman rules persist at any order \tree.
The $e^{\pm \varepsilon s}$ factors in the Wilson lines are required to produce the correct $i \varepsilon$ prescription for the Feynman rules (see \tree).

We denote collinear Wilson lines as $W_\ccj^\dg$. They are mathematically identical to soft Wilson lines but the path is different. While soft Wilson lines point in the direction of the particle they represent, collinear Wilson lines point in some other direction $t_\ccj^\mu$:
\begin{equation}
  W_\ccj^\dag(x) = P \left\{ \exp \left[ i g 
  \int_0^{\infty} d s \,t_\ccj\cdot A ( x^{\nu} + s t_\ccj^{\nu} )\, e^{-
  \varepsilon s} \right] \right\} 
\label{Wdef}
\end{equation}
We always take $t_\ccj^\mu$ to \emph{not} be collinear to $n_\ccj^\mu$, that is,  $t_\ccj \;\cn\parallel\; n_\ccj$. 
As discussed in \tree~and as we will see here, while soft Wilson lines account for the soft radiation of a particle, collinear Wilson lines account for the collinear radiation from all the {\it other} particles.

\section{Example 1:  one-loop Wilson coefficient}
\label{sec:1loopEx1}

The general proof of factorization will be presented starting in Section~\ref{sec:proofoutline}. To understand this proof, we first provide two examples.
 For the first example,  in this section we discuss factorization for $\bra{p_\ccO,p_\ccT} \phi^\star  \phi \ket{0}$  
at 1-loop order.  This  is perhaps the simplest 1-loop amplitude for which factorization holds. What we will show here at 1-loop order is that
\be
\bra{p_\ccO, p_\ccT} \phi^\star  \phi \ket{0} 
=  \cC(s_{\ccO\ccT})\, 
\frac{\bra{p_\ccO} \phi^\star  W_\ccO \ket{0}}{ \bra{0}  Y_\ccO^\dg W_\ccO\ket{0}}\,
\frac{\bra{p_\ccT} W_\ccT^\dg \phi \ket{0}}{ \bra{0} W_\ccT^\dg Y_\ccT \ket{0}}\, \bra{0} Y_\ccO^\dg Y_\ccT \ket{0}
\label{factEx1}
\ee
where $s_{\ccO\ccT} = (p_\ccO+p_\ccT)^2$. Note that \Eq{factEx1} is an exact equality, not a leading power equivalence, because there are no particles collinear to each other and no soft particles, so $\lambda=0$. It is also somewhat trivial: it is just
a definition of $\cC(s_{\ccO\ccT})$. The nontrivial part is showing that  $\cC(s_{\ccO\ccT})$ is IR finite. The next example,  in 
Section~\ref{sec:1loopEx2}, discusses what happens when one of the sectors has two collinear particles and provides a nontrivial check on the universality of $\cC(s_{\ccO\ccT})$.

\subsection{Overview of graphs}
There are five graphs contributing to the left-hand side of Eq.~\eqref{factEx1} at 1-loop order. Four of them involve only one leg
\be
G^\lr{1}{1}_a = \fd{1.6cm}{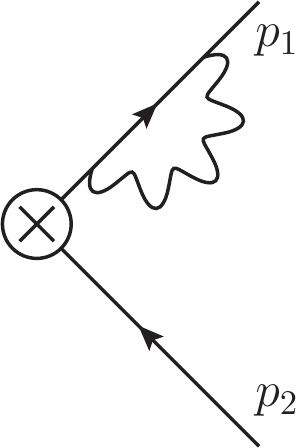} \;,\quad\;
G^\lr{1}{1}_b = \fd{1.6cm}{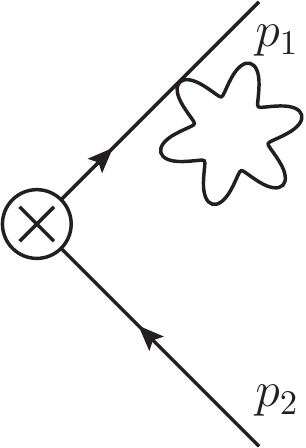} \;,\quad\;
G^\lr{2}{2}_a = \fd{1.6cm}{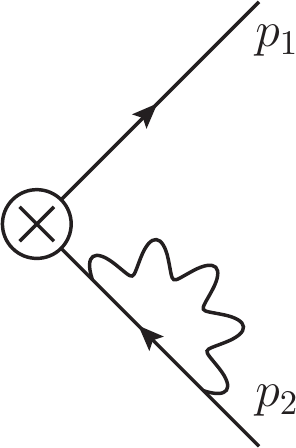} \;,\quad\;
G^\lr{2}{2}_b = \fd{1.6cm}{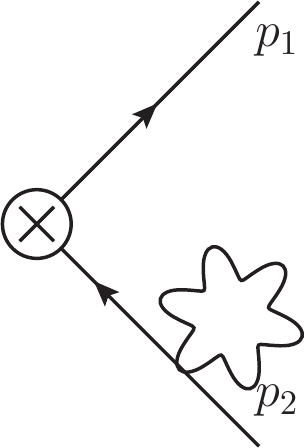}
\label{otherfourEx1}
\ee
and the final diagram connects both legs.
\be
G^\lr{1}{2}=\fd{1.6cm}{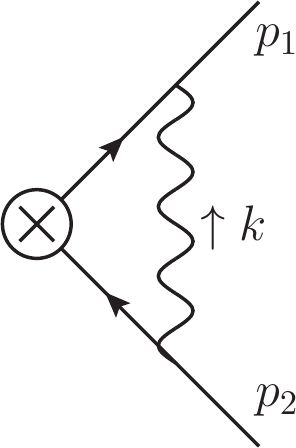}
\label{G12Glaub}
\ee

For the right-hand side of Eq.~\eqref{factEx1}, there are a number of graphs involving emissions from the collinear Wilson lines $W_i$. Recall from Eq.~\eqref{Wdef}
that the Wilson lines are defined with a certain direction $t_\cci^\mu$. 
For simplicity, let us choose $t_\ccO = t_\ccT = r$ to be some random direction not collinear to either $p_\ccO$ or $p_\ccT$. Then, if we work in a generic-lightcone gauge with the same reference vector, $r^\mu$, all of the graphs involving $W_i$
precisely vanish. The remaining non-vanishing diagrams are
\be
\langle p_\ccO | \phi^\star |0\rangle \;=\;  \fd{1.6cm}{SQEDcoll1pinch1.pdf}
\;+\;\;
\fd{1.6cm}{SQEDcoll1pinch2.pdf} \;=\; G^\lr{1}{1}_a + G^\lr{1}{1}_b 
\label{g221}
\ee
and
\be
\langle p_\ccT | \phi |0\rangle \;=\; \fd{1.6cm}{SQEDcoll2pinch1.pdf}
\;+\;\;
\fd{1.6cm}{SQEDcoll2pinch2.pdf} \;=\; G^\lr{2}{2}_a + G^\lr{2}{2}_b
\label{g222}
\ee
and those involving soft Wilson lines $Y_\cci$. 
The diagrams in Eqs.~\eqref{g221} and~\eqref{g222}  precisely agree with those in Eq.~\eqref{otherfourEx1}. 
Let us denote the diagrams coming from soft Wilson lines with the subscript {\it $\soft$}
So the remaining terms are
\begin{align}
\cC(s_{\ccO\ccT})\,\frac{ \bra{0} Y_\ccO^\dg Y_\ccT \ket{0}} { \bra{0} Y_\ccO^\dg \ket{0}\bra{0 } Y_\ccT  \ket{0}}
 &= \cC(s_{\ccO\ccT}) \,\frac{1 + G^\lr{1}{1}_\soft + G^\lr{2}{2}_\soft + G^\lr{1}{2}_\soft  }
{ (1+ G^\lr{1}{1}_\soft)(1 + G^\lr{2}{2}_\soft)  } + \cO(\alpha^2) \nn\\
&=\cC(s_{\ccO\ccT})\,[ 1 + G^\lr{1}{2}_{\soft} + \cO(\alpha^2)]
\label{Gsoftonly}
\end{align}
where $G^\lr{i}{j}_{\soft}$ is the graph found by contracting $Y_\cci$ with $Y_\ccj$.
Note that the Feynman rules from the soft Wilson line are eikonal, so there are no 4-point vertices, and therefore, no $G_{b}$-type graphs.
Solving for $\cC(s_{\ccO\ccT})$ we find
\be
\cC(s_{\ccO\ccT})=1 + G^\lr{1}{2}_\nsoft  + \cO(\alpha^2)
\ee
where
\be
G^\lr{1}{2}_\nsoft  \equiv G^\lr{1}{2} - G^\lr{1}{2}_\soft \label{Gnss}
\ee
Thus, to verify Eq.~\eqref{factEx1} at 1-loop order all we need to show is that $G^\lr{1}{2}_\nsoft$   is IR finite.

\subsection{IR finiteness}
\label{sec:G12}
The graph of interest is
\be
G^\lr{1}{2}=\fd{1.6cm}{SQEDloop1.pdf}  = -g^2\fint{k} \frac{(2p_\ccO-k)^\alpha}{(p_\ccO-k)^2\pie}
\,\frac{i\Pi_{\alpha\beta}(k)}{k^2\pie} \,
\frac{(2p_\ccT+k)^\beta}{(p_\ccT+k)^2\pie}
\label{G0full}
\ee
where $\Pi^{\mu\nu}$ is given in Eq.~\eqref{Pimndef} in lightcone gauge. The soft graph, from the matrix element of Wilson lines is
\be
G^\lr{1}{2}_\soft = \fint{k} \frac{-ig^2\,p_\ccO\cdot \Pi(k)\cdot p_\ccT}{\big(-p_\ccO\cdot k\pie\big) \big(k^2\pie\big) \big(p_\ccT\cdot k\pie\big)}
\label{G0soft}
\ee
Note that Eq.~\eqref{G0soft} can be obtained from Eq.~\eqref{G0full} with the eikonal approximation. More precisely, we can use the identity
\be
\frac{1}{(p+k)^2\pie} = \frac{1}{2p\cdot k\pie} \bigg(1 - \frac{k^2}{(p+k)^2\pie} \bigg)
\label{softid}
\ee
which holds at $p^2=0$. This identity lets us
replace propagators in the full graph with a sum of eikonal propagators, plus a correction proportional to $k^2$. 
It is similar to the Grammar-Yennie decomposition~\cite{Grammer:1973db} used in many factorization proofs in QCD~\cite{Libby:1978qf,Collins:1981ta,Collins:1989gx}.
Since the original graph was
logarithmically divergent in the soft limit ($k\to 0$), the $k^2$ factors will make the remainder soft finite. That is
$G^\lr{1}{2}_\nsoft =
 G^\lr{1}{2} -G^\lr{1}{2}_\soft$ 
is soft finite.

To see collinear finiteness, we will show that in a generic-lightcone gauge, both $G^\lr{1}{2}$ and $G^\lr{1}{2}_\soft$ are separately collinear finite.
Consider the case $k^\mu \parallel p_\ccO^\mu$.  Then under collinear rescaling $k^2 \to \kappa^2 k^2$ and $k\cdot p_\ccO \to \kappa^2 k\cdot p_\ccO$. If
we ignore the numerator in Eq.~\eqref{G0full}, the diagram would scale like $\kappa^0$ and be logarithmically divergent. For the scaling of
the numerator, we note that we are exactly in the situation where Eq.~\eqref{numsupp2} applies. That is,
\be
p_\ccO\cdot \Pi(k)\cdot p_\ccT 
= \kappa  \frac{r\cdot p_\ccO k_\perp \cdot p_\ccT}{{r\cdot k}}+ \cO(\kappa^2)
\ee
for a generic choice of lightcone gauge reference vector $r^\mu$.
This extra factor of $\kappa$ makes the $G^\lr{1}{2}$ convergent when $k \parallel p_\ccO$. A similar analysis for $k \parallel p_\ccT$ shows that $G^\lr{1}{2}$ is completely
collinear finite. The same argument shows that $G^\lr{1}{2}_\soft$ is collinear finite, and therefore $G^\lr{1}{2}_\nsoft$ has no IR singularities and \Eq{factEx1} is verified at 1-loop order.

For the IR-finite contribution from $G^\lr{1}{2}_\nsoft$, which contributes to the Wilson coefficient, we introduce the diagrammatic notation 
\be
\fd{1.2cm}{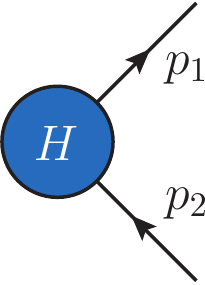} \;=\; 1 + G^{\lr{1}{2}}_\nsoft + \cO(\alpha^2)
\label{Wils1}
\ee
This is a type of reduced diagram we call {\bf hard}.  A hard diagram is IR finite, but relevant at leading power.

\subsection{Explicit result and ${\mathbf t_\ccj^\mu}$-independence}
To calculate the Wilson coefficient, rather than scalar QED, we consider the  more phenomenologically relevant case of a vector current decaying to a $q \bar{q}$ pair, where $\cO = \bar{\psi} \gamma^\mu \psi$. 
For this case, the factorization formula states
\be
\bra{p_\ccO;p_\ccT} \bar{\psi}\gamma^\mu \psi \ket{0} 
\LPeq  \cC(s_{\ccO\ccT})\, \gamma^\mu_{{\spincol \alpha}{\spincol \beta}}
\frac{\bra{p_\ccO}\bar\psi W_\ccO \ket{0}^{{\spincol \alpha}}}{ \bra{0}  Y_\ccO^\dg W_\ccO \ket{0}}\,
\frac{\bra{p_\ccT} W_\ccT^\dg \psi \ket{0}^{{\spincol \beta}}}{ \bra{0} W_\ccT^\dg Y_\ccT \ket{0}}\,
\bra{0} Y_\ccO^\dg  Y_\ccT \ket{0}
\label{C2fact}
\ee
where ${\spincol \alpha}$ and ${\spincol \beta}$ are Dirac spin indices.
To calculate the Wilson coefficient, it is easiest to use Feynman gauge rather than lightcone gauge, where all of the Wilson-line self-interactions vanish. In pure dimensional regularization, all of the diagrams from the factorized expression are scaleless and exactly vanish. The Wilson coefficient is therefore given by $G^\lr{1}{2}$ with the $\frac{1}{\varepsilon}$ and $\frac{1}{\varepsilon^2}$ terms dropped (the UV divergences are removed with $\overline{\text{MS}}$ counterterms and the IR cancel in the matching). 
The Wilson coefficient then comes out to~\cite{Manohar:2003vb,Bauer:2006mk,Bauer:2006qp,Schwartz:2007ib}
\begin{align}
\cC
(s_{\ccO\ccT}) 
&= 1 - \frac{\alpha}{4\pi} \left(8 - \frac{\pi^2}{6} + \ln^2 \frac{-\mu^2}{s_{\ccO\ccT}} + 3 \ln \frac{-\mu^2}{s_{\ccO\ccT}}\right)+ \cO(\alpha^2)
\label{dimregC2}
\end{align}
The Wilson coefficient result is independent of both the IR regulator and the collinear Wilson line directions $t_\ccO^\mu$ and $t_\ccT^\mu$.

To see the $t_\ccO^\mu$ and $t_\ccT^\mu$ independence more nontrivially and the importance of the zero-bin subtraction, one must use an IR regulator other the dimensional regularization. Following~\cite{Manohar:2006nz} on the zero-bin subtraction in SCET (where more details are given) we consider adding an off-shellness regulator.
 The differences between our approach and SCET  are that 1) we use an operator definition of the zero-bin subtraction; 2) we do not have separate soft and collinear modes: all interactions are those in full QCD; and 3) we allow for the collinear Wilson lines to point in arbitrary directions, $t_\ccj^\mu$. These differences are all minor, and the results can essentially be drawn from Eqs. (65)-(70) of~\cite{Manohar:2006nz} with small modifications.

 We can decompose any momentum into lightcone coordinates using the directions
in the soft and collinear Wilson lines, $n_\ccO^\mu$ and $t_\ccO^\mu$:
\be
p^\mu = \frac{ p \cdot t_\ccO}{n_\ccO \cdt t_\ccO} n_\ccO^\mu +  \frac{ p\cdot n_\ccO }{n_\ccO \cdt t_\ccO} t_\ccO^\mu + p_\perp^\mu
\ee
The off-shellness regulator keeps $n_\ccO \cdt p_\ccO >0$ even if $p_\ccO \propto n_\ccO$ as in the external state. Thus
\be
p_\ccO^2 = \frac{2}{n_\ccO \cdt\, t_\ccO} (n_\ccO \cdt p_\ccO)(t_\ccO \cdt p_\ccO) > 0 
\label{psquared}
\ee
We could also have decomposed with respect to $n_\ccT^\mu$ and $t_\ccT^\mu$. If we perform the calculation in 
$4-2\varepsilon$ dimensions, $\varepsilon$ will regulate the UV and soft divergences, with the collinear divergences cut off by the off-shellness.

First, consider the self-energy graphs on the external legs. These are trivially identical on both sides of Eq.\eqref{C2fact} (with any regulator) thus they can be ignored in the matching. Although this is also true in label SCET, it is not trivially true, since the Feynman rules for collinear fields are different from full theory fields.

For the remaining graphs, we present  only the double-logarithmic terms for simplicity, since these manifest all the interesting cancellation.
On the left-hand side of \Eq{C2fact}, the only full-theory graph needed is
\be
G^\lr{1}{2} = \fd{1.5cm}{SQEDloop1.pdf} 
\; \overset{\text{DL}}=\; -\bar{v} \gamma^\mu u \, C_F \frac{\alpha_s}{2\pi}  \ln \frac{p_\ccO^2}{s_{\ccO\ccT}}\ln \frac{p_\ccT^2}{s_{\ccO\ccT}} 
\label{FullDoubleLogs}
\ee
where $\overset{\text{DL}}=$ means equal at double-logarithmic order. 

The graphs needed in the factorized expression are the soft Wilson line graph:
\be
\bra{0}Y_\ccO Y_\ccT^\dg\ket{0} \overset{\text{DL}}= - C_F\frac{\alpha_s}{4\pi}
 \left\{ \frac{2}{\euv^2} 
+ \frac{2}{\euv} \ln\frac{- \mu^2 \, s_{\ccO\ccT} }{ p_\ccO^2 p_\ccT^2} 
+ \ln^2 \frac{-\mu^2\, s_{\ccO\ccT}}{p_\ccT^2 \, p_\ccO^2}
\right\}
\ee
the collinear graphs, without the leg corrections:
\begin{align}
\bra{p_\ccO} \bar \psi W_\ccO \ket{0}
 &\overset{\text{DL}}= - \bar u\, C_F\frac{\alpha_s}{4\pi} \left\{ -\frac{2}{\euv \eir} - \frac{2}{\eir} \ln\frac{\mu^2}{- p_\ccO^2} - \ln^2 \frac{\mu^2}{-p_\ccO^2} + \left( \frac{2}{\eir} - \frac{2}{\euv}\right) \ln \frac{\mu}{t_\ccO \cdt p_\ccO}  
\right\}
\\
\bra{p_\ccT} W_\ccT^\dg  \psi \ket{0}
 &\overset{\text{DL}}= -C_F\frac{\alpha_s}{4\pi} \left\{ -\frac{2}{\euv \eir} - \frac{2}{\eir} \ln\frac{\mu^2}{- p_\ccT^2} - \ln^2 \frac{\mu^2}{-p_\ccT^2} + \left( \frac{2}{\eir} - \frac{2}{\euv}\right) \ln \frac{\mu}{t_\ccT \cdt p_\ccT}  
\right\} v
\end{align}
and the zero-bin subtractions:
\begin{align}
\IRZ_\ccO=\frac{1}{N_c} \tr \bra{0} Y_\ccO^\dg W_\ccO \ket{0}
 &\overset{\text{DL}}= C_F\frac{\alpha_s}{4\pi} 
\left( \frac{2}{\eir} - \frac{2}{\euv}\right) \left(\frac{1}{\euv} + \ln \frac{\mu^2}{-p_\ccO^2} - \ln\frac{\mu}{t_\ccO\cdt p_\ccO} \right)
\\
\IRZ_\ccT=\frac{1}{N_c} \tr\bra{0} W_\ccT^\dg Y_\ccT \ket{0}
 &\overset{\text{DL}}= C_F\frac{\alpha_s}{4\pi} 
\left( \frac{2}{\eir} - \frac{2}{\euv}\right) \left(\frac{1}{\euv} + \ln \frac{\mu^2}{-p_\ccT^2} - \ln\frac{\mu}{t_\ccT\cdt p_\ccT} \right)
\end{align}
This notation and normalization for the zero bin subtraction will be explained in Sections~\ref{sec:QCD} and~\ref{sec:SCET}.
Note that the appearance of the hard scales $t_\ccO \cdt p_\ccO$ and $t_\ccT \cdt p_\ccT$ is illusory --- using \Eq{psquared}, one can express $\IRZ_\ccO$ and $\IRZ_\ccT$ in terms
of the off-shellnesses $n_\ccO \cdt p_\ccO$ and $n_\ccT \cdt p_\ccT$ alone.

Therefore,
\begin{align}
\frac{\bra{p_\ccO} \bar \psi W_\ccO \ket{0}}{\IRZ_\ccO}
 &\overset{\text{DL}}= -\bar u\, C_F\frac{\alpha_s}{4\pi} \left\{ -\frac{2}{\euv^2} - \frac{2}{\euv} \ln\frac{\mu^2}{- p_\ccO^2} - \ln^2 \frac{\mu^2}{-p_\ccO^2} 
\right\}
\\
\frac{\bra{p_\ccT} W_\ccT^\dg  \psi \ket{0}}{\IRZ_\ccT}
 &\overset{\text{DL}}= -C_F\frac{\alpha_s}{4\pi} \left\{ -\frac{2}{\euv^2} - \frac{2}{\euv} \ln\frac{\mu^2}{- p_\ccT^2} - \ln^2 \frac{\mu^2}{-p_\ccT^2} 
\right\}\, v
\end{align}
These equations show that each collinear sector is independent of the Wilson-line directions, $t^\mu_\ccj$, and is only $p_\ccj$-collinear sensitive as evidenced by the cancellation of the 
$\eir$ poles.

Putting everything together up to 1-loop we find:
\begin{multline}
\gamma^\mu_{ {\spincol \alpha\beta}}
\frac{\bra{p_\ccO}{W_\ccO\bar\psi} \ket{0}^{{\spincol \alpha}}}{\IRZ_\ccO}
\frac{\bra{p_\ccT} {W_\ccT^\dg \psi} \ket{0}^{ {\spincol \beta}}}{\IRZ_\ccT}
\bra{0} Y_\ccO^\dg  Y_\ccT \ket{0}\\
\overset{\text{DL}}= -\bar{v} \gamma^\mu u \,
 C_F\frac{\alpha_s}{4\pi}\bigg\{
-\frac{2}{\euv^2} 
- \frac{2}{\euv} \ln \frac{ \mu^2}{-s_{\ccO\ccT}}  
+  \ln^2\frac{ \mu^2}{-s_{\ccO\ccT}}  
+ 2\ln \frac{-p_\ccT^2}{\mu^2} \ln\frac{-p_\ccO^2}{\mu^2}
\bigg\}
\end{multline}
Comparing to  the full-QCD matrix element shown in \Eq{FullDoubleLogs}, we see that, to double-logarithmic order, the IR-divergences in the full theory and factorized expression exactly agree.

\section{Example 2: two collinear particles}
\label{sec:1loopEx2}

As the next illustrative example, we consider a state with two particles in one jet.
That is we consider
$\bra{p_\ccO, q; p_\ccT} \phi^\star \phi \ket{0}$, for which the factorization formula reads
\be
\bra{p_\ccO, q;p_\ccT} \phi^\star  \phi \ket{0} 
\;\LPeq\;  \cC(S_{\ccO\ccT})\, 
\frac{\bra{p_\ccO,q} \phi^\star  W_\ccO\ket{0}}{ \bra{0}Y_\ccO^\dg W_\ccO  \ket{0}}\,
\frac{\bra{p_\ccT} W_\ccT^\dg \phi \ket{0}}{ \bra{0} W_\ccT^\dg Y_\ccT \ket{0}}\, \bra{0} Y_\ccO^\dg Y_\ccT \ket{0}
\label{factEx2}
\ee
where $P_\ccO^\mu = p_\ccO^\mu+q^\mu$,  $P_\ccT^\mu = p_\ccT^\mu$ and $S_{\ccO\ccT} \LPeq (P_\ccO+P_\ccT)^2 \equiv Q^2$.
In this case, the two sides are not equal, but equal at leading power in $\lambda$, where
$\lambda = P_\ccO^2/Q^2$. We also must show that the Wilson coefficient
$\cC(S_{\ccO\ccT})$ is the same function computed with minimal collinear sectors, as in the previous section.
 This example will illustrate the role played by real-emission and IR-sensitive graphs in  factorization.

\subsection{Overview of graphs}

In this example, since we have an external photon, we must choose a reference vector for its polarization.
It is natural to choose the same generic-$r$ reference vector as in the lightcone-gauge photon propagator. 
So 
$r_\mu \epsilon^\mu(q)=q_\mu \epsilon^\mu(q)=0$.
 These constraints define the polarization vectors that are consistent with generic-lightcone gauge completely:
\be
\epsilon^-(q;r) = \sqrt{2}\, \frac{q \r [r}{[qr]} 
\quad\And\quad
\epsilon^+(q;r) = \sqrt{2}\, \frac{r \r [q}{\l rq \r}
\ee 
where we use the spinor-helicity formalism to ease the discussion of the dependence on the reference vector, $r$, of amplitudes. Our conventions for the spinor-helicity formalism are given in \tree, however, we will not need any details of the spinor-helicity formalism in this paper as everything we need concerning polarization vectors will be taken from \tree.
We also choose $t_\ccO=t_\ccT=r$ for the collinear Wilson lines to decouple them completely. Thus we can set $W_\ccO=W_\ccT=1$ in this example.

As in the previous example, many graphs contribute to both the left-hand side and right-hand side of 
Eq.~\eqref{factEx2}. In particular, all graphs involving one leg only in the full theory matrix element, such as
\be
\fd{1.4cm}{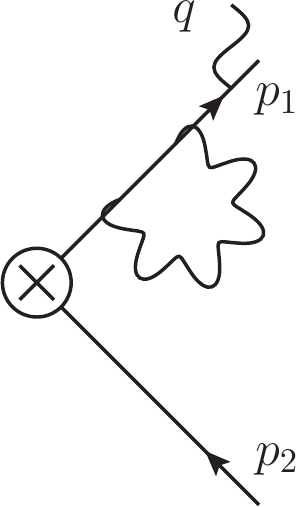} \;,\quad
\fd{1.4cm}{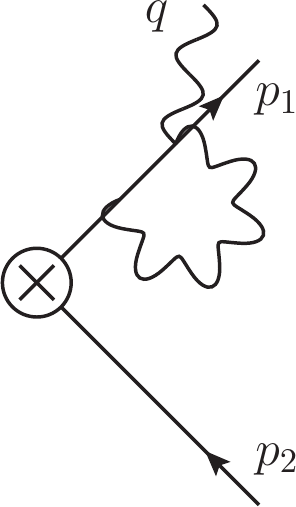} \;,\quad
\fd{1.4cm}{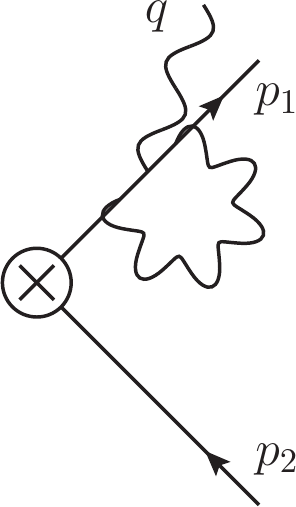} \;,\quad
\fd{1.4cm}{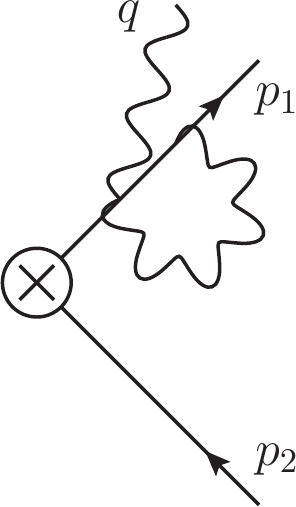} \;,\quad
\fd{1.4cm}{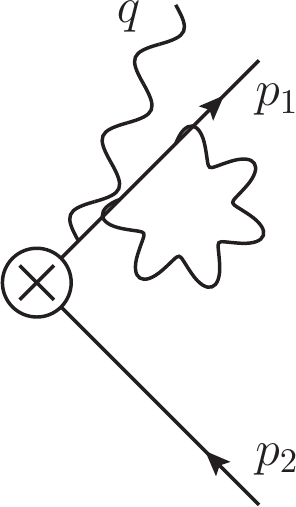} \;,\quad
\fd{1.4cm}{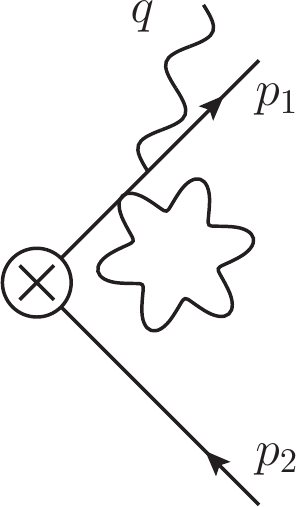} \;,\quad
\fd{1.5cm}{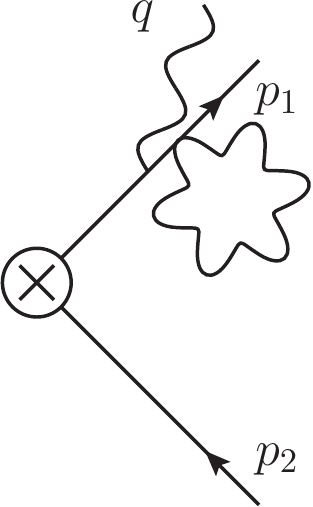} 
\label{SQEDselfenergies}
\ee
contribute to the right-hand side through $\bra{p_\ccO,q}\phi^\star \ket{0}$. Also trivially-factorizing cross terms,
such as 
\be
\bra{p_\ccO,q}\phi^\star \ket{0}_\text{tree}
\bra{p_\ccT}\phi\ket{0}_\text{1-loop}
\;=\;
\fd{1.4cm}{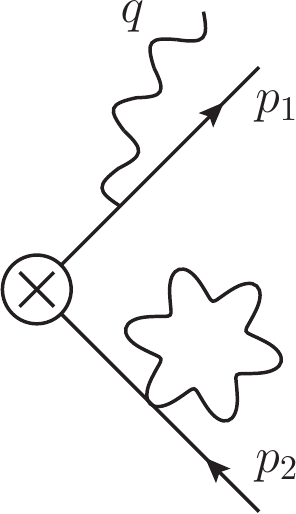} \;+\;\; \fd{1.4cm}{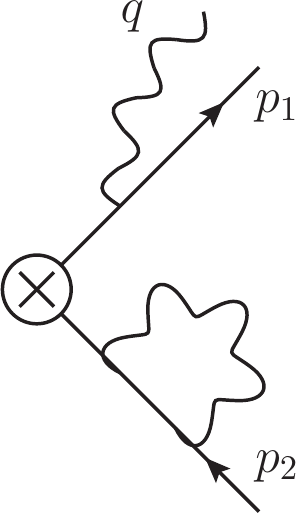}
\label{crossdiagrams}
\ee
contribute identically on both sides of  Eq.~\eqref{factEx2}.

The remaining graphs from the left-hand side of \Eq{factEx2} either have a loop connecting the two legs and the emission coming off either the $p_\ccO$ leg:
\be
G^{\lr{1}{2},a} \;\equiv\; \fd{1.5cm}{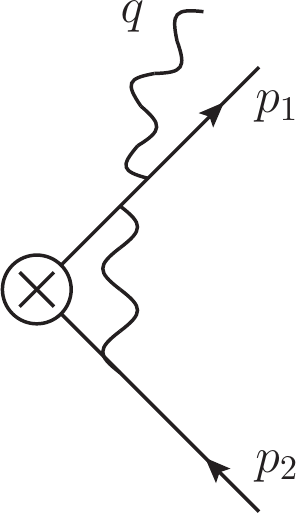} 
\;,\qquad
G^{\lr{1}{2},b} \;\equiv\; \fd{1.5cm}{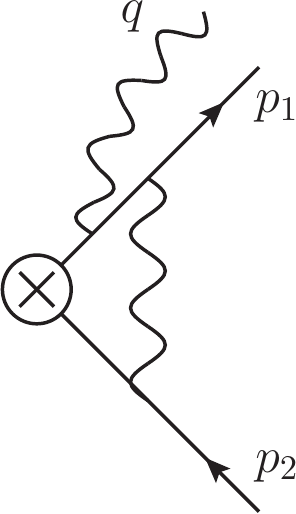} 
\;,\qquad 
G^{\lr{1}{2},c}\;\equiv\; \fd{1.5cm}{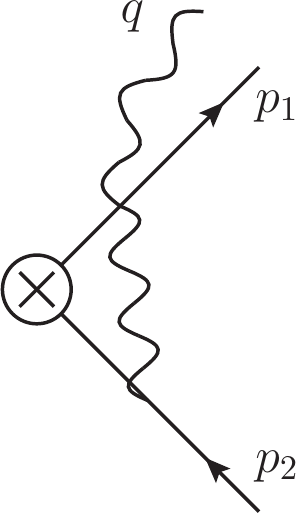}
\label{1loopEx2CollCoupGraphs}
\ee
or they have the emission coming off of the $p_\ccT$ leg with the loop anywhere:
\be
\fd{1.8cm}{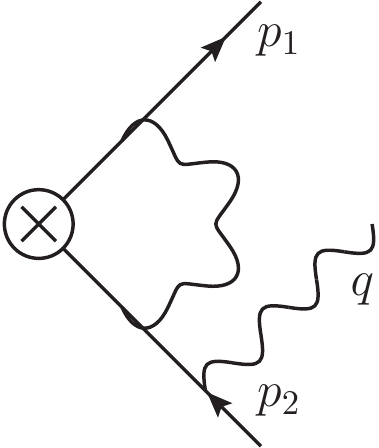}
\;,\qquad
\fd{1.7cm}{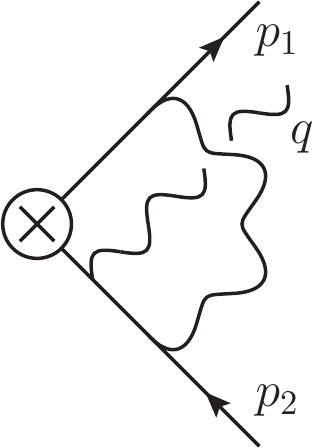}
\;,\qquad
\fd{1.7cm}{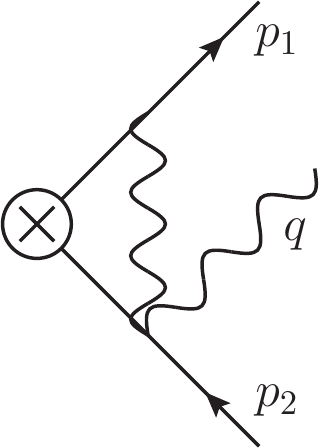} 
\;,\qquad
\fd{1.7cm}{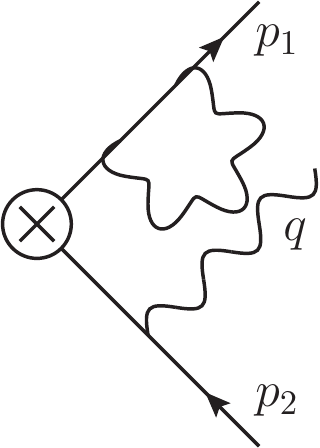} 
\;,\;\ldots
\label{RedSoft3}
\ee

With generic reference vectors, the twelve graphs in \Eq{RedSoft3} are power suppressed compared to the graphs where the emission comes off of the $p_\ccO$ leg. Indeed, graphs which contribute at leading power must have a factor of $\frac{1}{q\cdot p_\ccO} \sim \lambda^{-2}$, as does $G^{\lr{1}{2},a}$. The graphs with the emission coming from the $p_\ccT$ leg have instead  
$\frac{1}{q\cdot p_\ccT}\sim \lambda^0$ factors which are subleading power. The fact that non-self-collinear emissions are power suppressed in generic-lightcone gauge was discussed elaborately in \tree. This result holds at loop level as well, simply because in generic-lightcone gauge a non-self-collinear emission can never have an enhanced propagator. We will come back to the general discussion in the next section and focus, for now, on the 1-loop example at hand. The result is that we do not need to consider the graphs in Eq.~\eqref{RedSoft3} at leading power.

 Note that the power suppression in $\lambda$ holds whether or not the graphs are IR finite. Although power counting something infinite may seem bizarre, one should keep in mind that the IR divergences in loops are always ultimately canceled by phase-space integrals in computing IR-safe observables. Thus, power-suppressed IR divergences translate to power-suppressed finite contributions, which is why we can drop them.

The remaining graphs contributing to the right-hand side of Eq.~\eqref{factEx2} come from the tree-level real emission multiplied by the Wilson coefficient and soft-Wilson-line terms at 1-loop order:
\be
\bra{p_\ccO,q}\phi^\star \ket{0}_\text{tree}
\times
\bigg\{
\cC(S_{\ccO\ccT})\,
\frac{\bra{0} Y_\ccO^\dg Y_\ccT \ket{0}}{ \bra{0} Y_\ccO^\dg \ket{0}\bra{0} Y_\ccT \ket{0} }
\;\overset{\text{1-loop}}=\;
\contraction[.7ex]{\bra{0}}{Y_\ccO^\dg}{}{Y_\ccT}\bra{0} Y_\ccO^\dg Y_\ccT \ket{0} + G^\lr{1}{2}_\nsoft \bigg\}
\label{softWilsfracEx2}
\ee
where $G^\lr{1}{2}_\nsoft$, defined in Eq.~\eqref{Gnss}, comes from the calculation of the 1-loop Wilson coefficient in the previous section.

 What we will now show is that the $\contraction[.7ex]{\bra{0}}{Y_\ccO^\dg}{}{Y_\ccT}\bra{0} Y_\ccO^\dg Y_\ccT \ket{0}$ term  in Eq.~\eqref{softWilsfracEx2} reproduces the
 sum of the soft limits of  $G^{\lr{1}{2},a}$ or $G^{\lr{1}{2},b}$ at leading power, the
  $G^\lr{1}{2}_\nsoft$ term reproduces the
  non-soft part of $G^{\lr{1}{2},a}$ at leading power, and both $G^{\lr{1}{2},c}$ and the non-soft part of $G^{\lr{1}{2},b}$ are power suppressed, hence proving \Eq{factEx2} at 1-loop order.

\subsection{The graph $G^{\lr{1}{2},a}$}
\label{sec:irsensitive}
Writing out the Feynman rules, we find
\be
G^{\lr{1}{2},a} =\; g\,\frac{p_\ccO\cdot\epsq}{p_\ccO\cdot q}
\times ig^2 \fint{k} \frac{(2p_\ccO+2q-k)\cdot\Pi(k)\cdot(2p_\ccT+k)}{k^2\,(p_\ccT+k)^2\,(p_\ccO+q-k)^2}
\ee
As in the previous example, we will write this graph as
\be
G^{\lr{1}{2},a} =G^{\lr{1}{2},a}_\soft+ G^{\lr{1}{2},a}_\nsoft 
\ee
where the soft-sensitive part is found by dropping terms which are subleading in $\kappa$ after the rescaling $k^\mu\to \kappa^2 k^\mu$. We draw the soft limit with the soft photon colored red and with a long wavelength. That is,
\be
\fd{1.8cm}{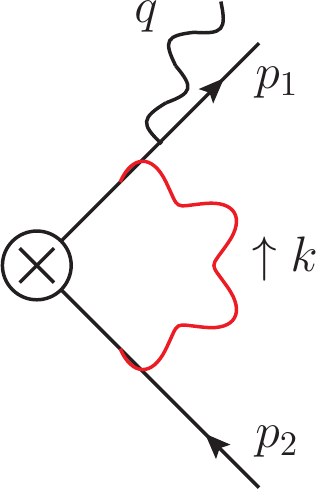} 
\; = G^{\lr{1}{2},a}_\soft= g\,\frac{p_\ccO\cdot\epsq}{p_\ccO\cdot q}
\times \fint{k} \frac{2ig^2\, (p_\ccO+q)\cdot\Pi(k)\cdot p_\ccT}{k^2\,\big(p_\ccT\cdot k \pie \big)\,\big((p_\ccO+q)^2-2(p_\ccO+q)\cdot k \pie)}
\label{RedSoft1}
\ee
This graph is not IR divergent, but it is IR sensitive.
Because $(p_\ccO+q)^2\sim \lambda^2$, in taking the soft limit, we did not drop $2(p_\ccO+q)\cdot k$ in favor of $(p_\ccO+q)^2$. Doing so would have assumed a certain order of limits, essentially $\kappa \ll \lambda$, which would lead to inconsistent results. More precisely, if we were to integrate over the phase space of $q$ to produce an IR-safe cross section, the region where $q\cdot p_\ccO \approx 0$ must be treated independently of the region of $k^\mu\approx0$ in the loop integral. That is, the only way for the order of integration of the loop and phase-space integrals to not matter is if we keep both terms.

Now, since we keep $(p_\ccO+q)^2 > 0$ the loop integral is not soft-divergent. This is clear from counting powers of $\kappa$ as $k^\mu\to\kappa^2 k^\mu$, which gives $G^{\lr{1}{2},a}_\soft \to \kappa G^{\lr{1}{2},a}_\soft$. However, if $(p_\ccO+q)^2=0$, the loop scales like $\kappa^0$ and is logarithmically soft divergent.
Thus, for $(p_\ccO+q)^2 \sim \lambda^2$ with $\lambda$ small, $\lambda$ acts like an IR cutoff. We, therefore, have that 
\be
G^{\lr{1}{2},a}_\soft \,\sim\, g\,\frac{p_\ccO\cdot\epsq}{p_\ccO\cdot q} \, g^2 \ln \lambda 
\ee
This singular-$\lambda$ dependence must be reproduced by the factorized expression, as the Wilson coefficient
is $\lambda$ independent.
On the other hand, the non-soft part of the loop, $G^{\lr{1}{2},a}_\nsoft =G^{\lr{1}{2},a}-G^{\lr{1}{2},a}_\soft$ is
free of soft divergences, even at $\lambda=0$ (except for the prefactor, of course).  This follows from the eikonal substitution in Eq.~\eqref{softid} which adds
additional powers of $k^2$ to the non-soft part. 

Both the soft and non-soft parts of the loop are also
collinear finite in generic-lightcone gauge.
 This holds for the exact same reason that $G^{\lr{1}{2}}_\nsoft$ was collinear-finite in the previous section: in generic--lightcone gauge, the numerator of $G^{\lr{1}{2},a}$ is suppressed when $k$ becomes collinear to $p_\ccO$ or $p_\ccT$ as in \Eq{numsupp1} and \Eq{numsupp2}. Thus, $G^{\lr{1}{2},a}$ is collinear-finite (even when $(p_\ccO+q)^2=0$), implying that $G^{\lr{1}{2},a}_\nsoft$ is IR-insensitive (collinear and soft insensitive) since $G^{\lr{1}{2},a}_\nsoft$ has the soft sensitivity subtracted off. 
 
Because the loop integral in $G^{\lr{1}{2},a}_\nsoft$ is IR-finite even when $(p_\ccO+q)^2=0$, we can expand it in powers of $\lambda$ in the integrand, and only keep the leading term. The leading term in this 
expansion corresponds to treating $P_\ccO^\mu = p_\ccO^\mu+q^\mu$ as being lightlike. Performing this
expansion on $G^{\lr{1}{2},a}$ and $G^{\lr{1}{2},a}_\soft$ shows that they reduce to the integrals in $G^{\lr{1}{2}}$ and 
$G^{\lr{1}{2}}_\soft$, respectively, from the previous section. Since both loops are the same, so is their difference, $G^{\lr{1}{2},a}_\nsoft$. That is,
\be
G^{\lr{1}{2},a}_\nsoft \big(P_\ccO,p_\ccT\big) \;\LPeq\; -g\,\frac{p_\ccO\cdot\epsq}{p_\ccO\cdot q}
\times  G^{\lr{1}{2}}_\nsoft \big(P_\ccO, p_\ccT\big) 
 \;\overset{\text{1-loop}}\LPeq\;
 \fd{1.7cm}{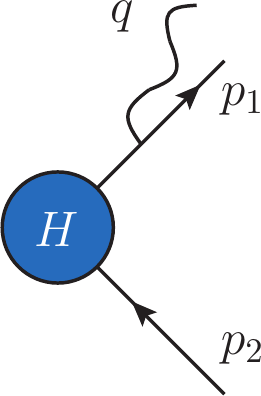} 
\label{Wils2}
\ee
where $G^{\lr{1}{2}}_\nsoft(p_\ccO,p_\ccT)$ was the IR-finite and $\lambda$-independent 1-loop contribution to the Wilson coefficient found in the previous section. 

Therefore, the graph $G^{\lr{1}{2},a}_\nsoft$ from the left-hand side of \Eq{factEx2} is reproduced by the factorized expression in last term in brackets in Eq.~\eqref{softWilsfracEx2}.

\subsection{The graph $G^{\lr{1}{2},b}$}

We now analyze the second diagram that seems to break collinear factorization in \Eq{1loopEx2CollCoupGraphs}, namely
\be
G^{\lr{1}{2},b} = 2ig^3 \fint{k} \frac{(2p_\ccO-k)\cdot\Pi(k)\cdot(2p_\ccT+k) \;(p_\ccO-k)\cdot\epsq}{k^2(p_\ccT+k)^2(p_\ccO-k)^2(p_\ccO+q-k)^2}
\label{G12bdef}
\ee
The soft limit of this graph, again keeping the IR-sensitive parts, is
\be
\fd{1.9cm}{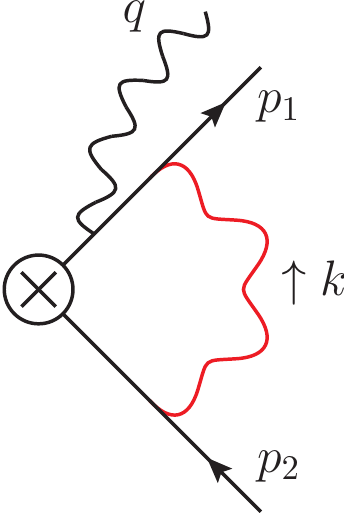}
\; \equiv G^{\lr{1}{2},b}_\soft = -2g\,p_\ccO\cdot\epsq \fint{k} \frac{ig^2\; p_\ccO\cdot\Pi(k)\cdot p_\ccT}
	{k^2\, p_\ccT\cdot k\, p_\ccO\cdot k\, \big((p_\ccO+q)^2 - 2(p_\ccO+q)\cdot k \big)}
\ee
This graph is soft divergent, scaling as $\kappa^0$ even with $(p_\ccO+q)^2\ne0$, thus it must be reproduced in the factorized expression.

Next, we will show that $G^{\lr{1}{2},b}_\nsoft$ is collinear sensitive, but power suppressed compared to
$G^{\lr{1}{2},a}_\nsoft$. First, to see that  $G^{\lr{1}{2},b}$ is collinear finite at finite $(p_\ccO+q)^2$,
we note that 
for $(p_\ccO+q)^2$ positive and fixed, the $(p_\ccO+q-k)^2$ propagator cannot go on-shell when other propagators do, so the loop is not more singular than $G^{\lr{1}{2},a}$. As with  $G^{\lr{1}{2},a}$, it would be collinear divergent for $k\parallel p_\ccO$ or $k\parallel p_\ccT$ but for the fact that the numerator vanishes by \Eq{numsupp1} and \Eq{numsupp2} which causes the integral to be collinear finite for $(p_\ccO+q)^2\ne 0$.

Now, if $(p_\ccO+q)^2=0$, then the integral would be $p_\ccO$-collinear divergent (though it remains $p_\ccT$-collinear finite). This can be seen by 
taking $p_\ccO\propto q$ in which case $k^\mu$ scales like
\be
k^\mu \sim \kappa^0 \, p_\ccO^\mu + \kappa^2 \, p_\ccT^\mu + \kappa\, k^\mu_{\perp} 
\ee
and so $G^{\lr{1}{2},b}$ in \Eq{G12bdef} scales like
\be
G^{\lr{1}{2},b} 
\sim \int d^4k  \kappa^4 \; \frac{\kappa \, \kappa}{ \kappa^2 \, \kappa^0 \, \kappa^2 \, \kappa^2}
\sim \kappa^0
\ee
where we used that $d^4k \sim \kappa^4$, $(2p_\ccO-k)\cdot\Pi(k)\cdot(2p_\ccT+k) \sim \kappa$,  $k\cdot\epsq\sim\kappa$, and $(p_\ccT+k)^2 \sim \kappa$.
We thus see that $G^{\lr{1}{2},b}$ is logarithmically $p_\ccO$-collinear divergent. We have made all of these arguments for $G^{\lr{1}{2},b}$, but they apply also to  $G^{\lr{1}{2},b}_\soft$ and hence to $G^{\lr{1}{2},b}_\nsoft$. Then, given that $G^{\lr{1}{2},b}_\nsoft$ is completely IR-finite when $(p_\ccO+q)^2\ne 0$ but logarithmically $p_\ccO$-collinear divergent when $(p_\ccO+q)^2=0$, we must have that it scales like
\be
G^{\lr{1}{2},b}_\nsoft \,\sim\, g^3\ln\left[(p_\ccO+q)^2\right] \,\sim\, g^3\ln\lambda
\label{loglambda}
\ee
for small $\lambda$. This is power suppressed compared to say Eq.~\eqref{Wils2} which scales like $\lambda^{-1}$. Thus, we can drop $G^{\lr{1}{2},b}_\nsoft$ at leading power.

\subsection{The graph $G^{\lr{1}{2},c}$}

Finally, we have the graph with the scalar-QED 4-point vertex
\be
G^{\lr{1}{2},c}\;\equiv\; \fd{1.2cm}{RedDiagEx2.pdf}
= -2ig^3\fint{k} \frac{\epsq\cdot\Pi(k)\cdot(2p_\ccT+k)}{k^2(p_\ccT+k)^2(p_\ccO+q-k)^2}
\ee
We will show that this graph is completely power suppressed. 

To see if there are soft divergences, we look at the soft limit of $G^{\lr{1}{2},c}$.
First, note that if $(p_\ccO+q)^2 \ne 0$ then $G^{\lr{1}{2},c}$ would be finite in the soft limit, as can be seen by counting powers of the soft momentum in the integrand which gives $d^4k \big/ k^3$. On the other hand, for $(p_\ccO+q)^2 = 0$, the integrand of $G^{\lr{1}{2},c}$ becomes $d^4k \big/ k^4$ signaling a logarithmic divergence. Thus, we must have that, in the soft region of the integral,
\be
G^{\lr{1}{2},c} \,\overset{\text{soft}}\sim\, g^3 \ln (p_\ccO+q)^2 \,\sim\, g^3 \ln \lambda^2 \,\ll\, \frac{g^3}{\lambda}
\ee
Hence, in the soft limit, $G^{\lr{1}{2},c}$ is power suppressed.

We have seen that $G^{\lr{1}{2},c}$ is power suppressed in the soft limit.  Next, we will now show that the same is true for the collinear limits of the integral, meaning that the entire graph $G^{\lr{1}{2},c}$ is a power correction in our factorization formula. We start by showing that $G^{\lr{1}{2},c}$ is $p_\ccT$-collinear finite in generic-lightcone gauge. This holds for the same reason as for the other
collinear-finite graphs: were it not for the numerator, $G^{\lr{1}{2},c}$ would be logarithmically $p_\ccT$-collinear divergent. However, when $k$ becomes collinear to $p_\ccT$, $\Pi(k)$ becomes the polarization sum of photons in the $p_\ccT$ direction which is transverse to $p_\ccT$. Hence $\Pi(k)\cdot(2p_\ccT+k) \to 0$ when $k\parallel p_\ccT$. These are the words that describe \Eq{numsupp1} and \Eq{numsupp2}. Hence, $G^{\lr{1}{2},c}$ is $p_\ccT$-collinear finite. 

$G^{\lr{1}{2},c}$ is also $p_\ccO$-collinear finite, but only when $(p_\ccO+q)^2\ne0$. This can be seen by power counting the denominator, as $k$ 
becomes collinear to $p_\ccO$. For $(p_\ccO+q)^2 = 0$, the denominator of $G^{\lr{1}{2},c}$ causes it to be logarithmically divergent, but in this case the numerator does not vanish as $k\parallel p_\ccO$ since $\Pi(k)$ is not transverse to $\epsq$. That is,
\be
\epsq^\mu \, \Pi_{\mu\nu}(k) \;=\; -\epsq_\nu + \frac{k\cdot\epsq\, r_\nu}{r\cdot k}
\;\too\; -\epsq_\nu  
\qquad \text{for } k\parallel p_\ccO \parallel q 
\ee
where we used that $r\cdot\epsq = 0$. Thus, when $k\parallel p_\ccO$ the numerator of $G^{\lr{1}{2},c}$ looks like $p_\ccT\cdot \epsq$ which does not vanish. Since $G^{\lr{1}{2},c}$ is collinear finite for $(p_\ccO+q)^2\ne0$ and has a logarithmic divergence for $k\parallel p_\ccO$ when $(p_\ccO+q)^2=0$, we conclude that in the $k\parallel p_\ccO$ region of the integral
\be
G^{\lr{1}{2},c} \,\overset{p_\ccO\text{-coll}}\sim\, g^3 \ln (p_\ccO+q)^2 \,\sim\, g^3 \ln \lambda^2 \,\ll\, \frac{g^3}{\lambda}
\ee
Thus, the entire integral in $G^{\lr{1}{2},c}$ is power suppressed compared to the leading-power matrix element, $\dfrac{p_\ccO\cdot\epsq }{ p_\ccO\cdot q} \sim \lambda^{-1}$.

\subsection{Putting it together}
We have shown that most of the contributions to Eq.~\eqref{factEx2} agree identically on both sides. The ones that do not are $G^{\lr{1}{2},a}$, $G^{\lr{1}{2},b}$ and $G^{\lr{1}{2},c}$ in Eq.~\eqref{1loopEx2CollCoupGraphs} for the left hand side and Eq.~\eqref{softWilsfracEx2} for the right-hand side. Of these,
$G^{\lr{1}{2},c}$ is power suppressed, as is the non-soft part of $G^{\lr{1}{2},b}$. Thus the nontrivial
leading-power diagrams are
\be
G^{\lr{1}{2},a}_\nsoft  \LPeq\; \fd{1.6cm}{RedDiagEx3red.pdf} \;,
\qquad
G^{\lr{1}{2},a}_\soft  \LPeq\;  \fd{2cm}{RedDiagEx3s.pdf} \;,
\qquad
G^{\lr{1}{2},b}_\soft  \LPeq\;  \fd{2cm}{RedDiagEx1s.pdf} 
\label{Ex2RedDiags1}
\ee
We also showed that $G^{\lr{1}{2},a}_\nsoft$ reproduces the contribution from the Wilson coefficient
in  Eq.~\eqref{softWilsfracEx2}. Thus what remains is to show that the contribution connecting the two soft Wilson lines in the factorized expression agrees with $G^{\lr{1}{2},a}_\soft+ G^{\lr{1}{2},b}_\soft$ at leading power. We do this by direct calculation.

Let us define a lightlike directions $n_\cci^\mu=(1,\vec{n}_\cci)$, such that $p_\cci^\mu = \frac12 \bar n_\cci \cdot p_\cci \, n_\cci^\mu$, then
\begin{align}
& G^{\lr{1}{2},a}_\soft + G^{\lr{1}{2},b}_\soft 
= \fint{k} \frac{2ig^3\,p_\ccO\cdot\epsq}{k^2\,p_\ccT\cdot k\,\big((p_\ccO+q)^2-2(p_\ccO+q)\cdot k)} 
\notag
\\&\hspace{55mm}\times \bigg[
\frac{(p_\ccO+q)\cdot\Pi(k)\cdot p_\ccT}{p_\ccO\cdot q}
- \frac{p_\ccO\cdot\Pi(k)\cdot p_\ccT }{p_\ccO\cdot k} \bigg] 
\notag
\\
&\LPeq \fint{k} \frac{\frac12 ig^3\,n_\ccO\cdot\Pi(k)\cdot p_\ccT\;p_\ccO\cdot\epsq}{k^2\,p_\ccT\cdot k\,\big(p_\ccO\cdot q-(p_\ccO+q)\cdot k)} 
\bigg[
\frac{\bar n_\ccO\cdot(p_\ccO+q)}{p_\ccO\cdot q} \frac{\frac12 \bar n_\ccO\cdot p_\ccO \,n_\ccO\cdot k}{p_\ccO\cdot k}
- \frac{\bar n_\ccO\cdot p_\ccO}{p_\ccO\cdot k} \frac{p_\ccO\cdot q}{p_\ccO\cdot q} \bigg]
\notag
\\&\LPeq \fint{k} \frac{ ig^3\, p_\ccO\cdot\Pi(k)\cdot p_\ccT\;p_\ccO\cdot\epsq}{k^2\,p_\ccT\cdot k\,\big(p_\ccO\cdot q-(p_\ccO+q)\cdot k)} 
\bigg[
\frac{(p_\ccO+q)\cdot k}{p_\ccO\cdot q\, p_\ccO\cdot k} - \frac{p_\ccO\cdot q}{p_\ccO\cdot q\,p_\ccO\cdot k} \bigg]
\notag
\\&= -g \frac{p_\ccO\cdot\epsq}{p_\ccO\cdot q} \times ig^2\fint{k} \frac{n_\ccO\cdot\Pi(k)\cdot n_\ccT}{k^2\,n_\ccO\cdot k\,n_\ccT\cdot k} 
\label{SCfactEx2}
\end{align}
The first term is the tree-level term in $\bra{p_\ccO,q}\phi^\star \ket{0}$ and the second term is the loop integral, $\contraction[.7ex]{\bra{0}}{Y_\ccO^\dg}{}{Y_\ccT}\bra{0} Y_\ccO^\dg Y_\ccT \ket{0}$, where the photon propagates between the Wilson lines. This is exactly equal to the rest of the factorized expression by \Eq{softWilsfracEx2}.

This completes the check that the sum of the 1-loop diagrams on both sides of \Eq{factEx2} agree at leading power and that the Wilson coefficients are the same and IR insensitive.

\section{Outline of all-orders proof}
\label{sec:proofoutline}

In the previous two sections, we checked special cases of the factorization formula at 1-loop order by matching diagrams. This approach is not sustainable for an all-orders proof. Moreover, even when two diagrams are identical on both sides, dropping them from consideration somewhat obscures 
the physics of factorization. For example, the loops in \Eq{SQEDselfenergies} have both soft and non-soft parts, but it was easier not to separate them when matching them loop-for-loop with those in $\bra{p_\ccO,q}\phi^\star \ket{0}\bra{p_\ccT}\phi\ket{0}$. If we had separated the soft and non-soft parts, we would have found that the sum of the non-soft parts of the graphs in \Eq{SQEDselfenergies} is exactly $\bra{p_\ccO,q}\phi^\star \ket{0} \big/ \bra{0} Y_\ccO^\dg \ket{0} $ and the soft parts are exactly $\bra{p_\ccO,q}\phi^\star \ket{0}_\text{tree}\contraction[.5ex]{\bra{0}}{\hspace{-4mm}Y_\ccO^\dg}{\hspace{0mm}}{Y_\ccT}\bra{0} Y_\ccO^\dg Y_\ccT \ket{0}$, where the contraction indicates the the photon connects only to $Y_\ccO^\dg$. Both these approaches are equivalent, but in the latter we see that all of the soft physics is contained in $\bra{0} Y_\ccO^\dg Y_\ccT \ket{0}$; $\bra{p_\ccO,q}\phi^\star \ket{0} \big/ \bra{0} Y_\ccO^\dg \ket{0} $ is soft-inensitive.

Proving soft-collinear factorization in general, will involve 4 steps
\begin{enumerate}

\item Write each diagram contributing to the matrix element in the full theory as a sum of colored diagrams where each virtual gluon can either contribute to a soft singularity, in which case we call it soft sensitive (and draw it with a long-wavelength red line), or it cannot, in which case we call it soft insensitive (and draw it with a blue line).

\item Drop diagrams which cannot contribute at leading power and identify finite diagrams. Doing this in physical gauges lets us write the full-theory matrix element as the sum of
colored diagrams with a restricted topology in the following way
\be
\bra{X_\ccO\cdots X_\rN;X_\scs} \cO \ket{0}
 \;\overset{\substack{\text{physical} \\[0.5mm] \text{gauges} }}\LPeq\; 
 \sum_\text{diagrams}
\fd{6cm}{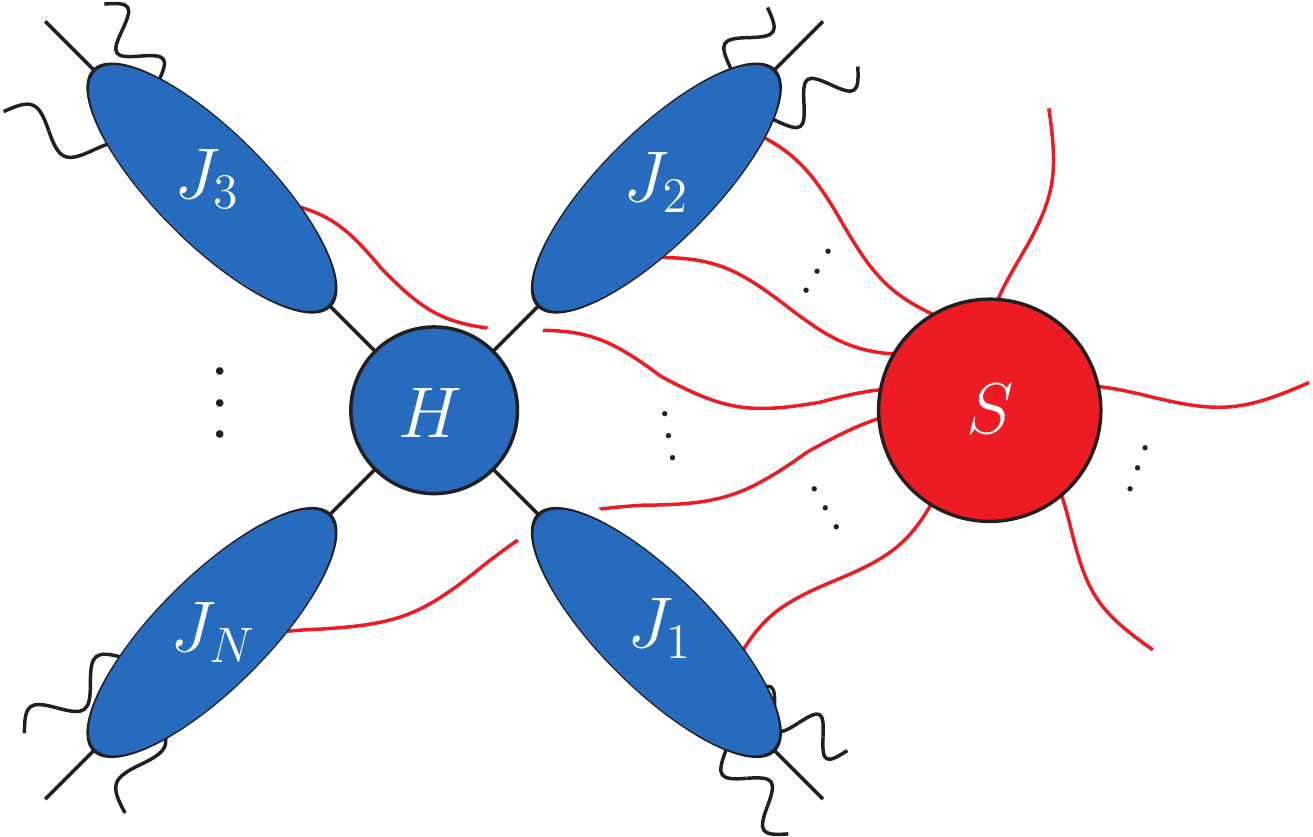} 
\label{pinches1}
\ee
We call the toplogy indicated on the right-hand side the {\bf reduced diagram}. It has the following properties:

\begin{itemize}

\item Each colored diagram in the sum  corresponds to a precise Feynman integral, with loop momenta integrated over all of $\mathbb{R}^{1,3}$. Note that our reduced diagrams are different from those used 
in~\cite{Sterman:1978bi,Libby:1978qf,Collins:1981ta}, which are pictures representing the pinch surface, not computable functions.

\item The ``jet'' amplitudes, labeled $J_\ccj$ are soft insensitive and collinear sensitive only in their own, $p_\ccj$ directions. That is, there are no $p_\ccj$-collinear sensitivities in the $J_\cci$ jet amplitudes for $\cci\neq \ccj$.

\item All soft sensitivity comes from virtual gluons in (or connecting to) the ``soft'' amplitude.

\item The  blue ball in the center is called the ``hard'' amplitude.  It is infrared insensitive (IR finite for any $\lambda$, and hence, independent of $\lambda$ at leading power). It only depends on the net collinear momenta coming in from each direction and no soft particles or red lines  connect to it. This property will establish that the Wilson coefficient in the factorization theorem is independent of the external state, as is expected in an operator product expansion.
\end{itemize}

\item Examine factorization gauge, which gives the flexibility needed for an efficient proof of soft-collinear decoupling. Although ghosts do not decouple completely, we show that they do not contribute new IR sensitivities and  do not affect the reduced diagram in \Eq{pinches1}.

\item Using factorization gauge, show that the soft gluons can be disentangled from the non-soft gluons. This step follows quite naturally from the proof of tree-level disentangling in~\tree. In the process, show that the factorized reduced diagrams are exactly reproduced by gauge-invariant matrix elements in the factorization formula.

\end{enumerate}

As with the 1-loop examples above, we will prove these steps in a more-or-less gauge-theory independent way, using QCD and scalar QED for examples. In this approach, technical details specific to QCD, such as color structures, become mostly notational. These are discussed in Section~\ref{sec:QCD}.

\section{Step 1: Coloring (separating soft sensitivities) }
\label{sec:coloring}

The first step is to separate the soft-sensitive physics from that which is soft-insensitive. As in the examples, we define soft-sensitive to mean either that a loop has a power-counting soft-divergence or that it would
have one for kinematic configurations corresponding to $\lambda=0$. 

Soft sensitivity is a property that
each virtual particle may have. We want to write each Feynman diagram as the sum of what we call {\bf colored diagrams} where the color of each virtual line in a colored diagram indicates if it is
soft sensitive or not. 
We have already seen examples of this separation at 1-loop: in Section~\ref{sec:1loopEx1} the soft-sensitive version of the graph $G^{(12)}$ in \Eq{G0full} was explicitly given as $G^{(12)}_\soft$ in \Eq{G0soft}, and it was shown that the not-soft-singular part, $G^{(12)}_\nsoft = G^{(12)} - G^{(12)}_\soft$, was  soft finite. The same was done with $G^{(12)a,b}$ in Section~\ref{sec:1loopEx2}. 

Beyond 1-loop,  it is not possible to split each diagram into one soft-sensitive and one soft-insensitive piece, since all of the loops are tangled up in a generic graph. More generally, we would like to expand in each virtual momenta. The only complication is that all the virtual momenta are not independent and
so the expansion has to be done iteratively. These iterations can be done algorithmically, starting from the
most soft-sensitive graphs, as we now explain. Section~\ref{sec:genalg} gives the algorithm,
which is perhaps easiest to understand through the examples in Sections~\ref{sec:algex1},~\ref{sec:algex2} and~\ref{sec:algex3}.

\subsection{Decomposition into colored diagrams}
\label{sec:genalg}
Consider sets $\Omega=\{\ell_1^\mu,\ell_2^\mu, \ldots\}$ of virtual momenta in a particular Feynman diagram $G$ which can all go to $\ell_i^\mu= 0$ simultaneously. 
For a given set $\Omega$, we can expand the integrand to leading order around $\ell_i^\mu=0$ for all the $\ell_i^\mu \in \Omega$ simultaneously. We want to
do this very carefully, dropping only terms which {\it must} be small when  $\ell_i^\mu=0$. For example, if $p^\mu$ is an external collinear momentum, then we can drop $l_i^2$ compared
to $l_i\cdot p$. We do not want to drop $l_i^\mu$ compared to any external soft momentum, or to any other virtual momentum $\ell_k^\mu$
which go soft simultaneously with $\ell_i^\mu$. 
We also drop $l_i \cdot p_\ccj$ compared to $(p_\ccO + p_\ccT)^2$ for
two collinear momenta $p_\ccO^\mu$ and $p_\ccT^\mu$ if and only if $p_\ccO^\mu$ and $p_\ccT^\mu$ are in different collinear sectors.
If they are in the same sector then we allow that $(p_\ccO + p_\ccT)^2\sim \lambda^2$ can be 
arbitrarily small.

 Let us call the leading term in the expansion according to this procedure the {\bf soft limit} of the set $\Omega$ in $G$ and
denote it by $G_{\SL(\Omega)}$.  The soft limit defined in this way allows us to see if a set $\Omega$ is soft-sensitive simply by looking at the scaling of $G_{\SL(\Omega)}$ (or equivalently of $G$)  under $\ell_i^\mu \to \kappa^2 \ell_i^\mu$ for all $\ell_i^\mu \in \Omega$. By not dropping soft momenta compared to terms which could possibly vanish for certain external momenta, we are effectively taking the leading power of $\kappa$ at $\lambda=0$.
Taking the soft limit in this way implies that 
\be
\lim_{\Omega \to \text{soft}} \, G \;=\;  \lim_{\Omega \to \text{soft}} \, G_{\SL(\Omega)}
\label{LimOmegaSoft}
\ee
so that $G-G_{\SL(\Omega)}$ is automatically less-singular than $G$ in the limit that all the $\ell_i^\mu \in \Omega$ go soft. The limit in \Eq{LimOmegaSoft} means restricting the integration regions to balls around the point where each momenta in $\Omega$ vanish and taking the limit where those balls have vanishing size. The point of taking the soft limit $\SL(\Omega)$ is that, since infrared divergences in gauge theories are at most logarithmic (at least in physical gauges, as we will show in the \emph{Log Lemma} (Lemma~\ref{lem:kappa0})), the difference $G-G_{\SL(\Omega)}$ cannot be soft sensitive in this $\Omega\to\,$soft limit.

That all the momenta in a set $\Omega$ can go soft together does not imply that $G$ is soft sensitive in this limit.
Let $\{\Omega_i\}$ enumerate all the possible sets $\Omega$ which {\it do} have a soft sensitivity in their simultaneous soft limit. Note that which sets are in
$\{\Omega_i\}$ is gauge-dependent, and we will be concerned primarily with $\Omega_i$ in generic-lightcone gauge.
Consider first the largest sets $\{\Omega^i_\text{max}\}$, defined as those sets, $\Omega_i$, which are not proper subsets of any other $\Omega_i$'s.
Now take the soft limit and define
\be
G_{\Omega^i_\text{max}}
\equiv 
G_{\SL(\Omega^i_\text{max})}
\ee
Here, $G_{\Omega^i_\text{max}}$ refers to a particular integral, for each $i$, derived form an expansion of the integrand of the original Feynman diagram integral, $G$. We represent it as a diagram with the same topology as $G$ in which we color all the lines in $\Omega^i_\text{max}$ red and color blue all the lines not in $\Omega^i_\text{max}$. The blue lines cannot give rise to a soft singularity because we have already taken the maximal soft limit in $G_{\Omega^i_\text{max}}$ by construction (this will be shown in Lemma~\ref{SoftInsens} below).

Next, take the sets, $\{\Omega^j_\text{next}\}$, defined as being the next largest proper
 subsets of any of the $\Omega^i_\text{max}$'s  whose simultaneous soft limit engenders a soft sensitivity. Each $\Omega^j_\text{next}$ may be a subset of multiple $\Omega^i_\text{max}$.  
Then subtract off from the soft limit of $\Omega^j_\text{next}$ all
 of the $G_{\Omega^i_\text{max}}$ for which it is a subset:
\be
G_{\Omega^j_\text{next}} \equiv \bigg( G - 
\sum_{\{ i\,;\, \Omega^i_\text{max} \supsetneq \Omega^j_\text{next} \}} 
G_{\Omega^i_\text{max}} 
\bigg)_{\SL({\Omega_\text{next}^j})}
\label{SoftSubtract}
\ee
As before, we represent $G_{\Omega^j_\text{next}} $ as a diagram with the lines in $\Omega^j_\text{next}$ colored red, and all other lines colored
blue to show that they cannot give rise to a soft sensitivity due to the subtraction.

This procedure can be iterated, with subsets of $\Omega^j_\text{next}$ and so on. In each step, we take subsets, $\Omega^j_\text{step}$, of the $\Omega^i_\text{max}$'s of a given size and subtract off $G_\Omega$ for every subset, $\Omega$, of the $\Omega^i_\text{max}$'s for which $\Omega^j_\text{step}$ is a subset:
\be
G_{\Omega^j_\text{step}} \equiv \bigg(G - \sum_{ \Omega \supsetneq \Omega^j_\text{step}} G_\Omega \bigg)_{\SL(\Omega^j_\text{step})}
\ee
Eventually, all of the possible sets of soft-singular lines are exhausted. In particular, in the last step, $\Omega_\text{last}$ is the empty set. This is a subset of all the other sets, so we have
\be
G = G_{\text{last}} + \sum_{\Omega} G_{\Omega} 
\ee
At every stage $G_\Omega$ is drawn as the graph $G$ but with the lines in $\Omega$ colored red and those not in $\Omega$ 
 colored blue. 
Thus the full graph becomes the sum of colored graphs.

After this procedure, each colored graph represents a particular integral which can have a soft singularity or soft-sensitivity \emph{only} when any of the red lines become soft, but never when any of the blue 
lines become soft. In other words:

\vspace{2mm}
\begin{lemma} \ltag{Soft-insensitivity Lemma}
Soft sensitivities cannot come from the soft region of any set of blue lines.
\label{SoftInsens}
\end{lemma}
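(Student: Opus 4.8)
The plan is to prove the Lemma by downward induction on the size of the red set $\Omega$, following exactly the order in which the coloring algorithm of Section~\ref{sec:genalg} builds the colored diagrams: first the maximal soft-sensitive sets $\{\Omega^i_\text{max}\}$, then their soft-sensitive proper subsets in decreasing order of size, ending with $\Omega_\text{last}=\varnothing$. The inductive claim I would carry is the Lemma in the following sharp form: \emph{for every colored diagram $G_\Omega$ produced by the algorithm, any set $\Sigma$ of its lines whose simultaneous soft limit engenders a soft sensitivity of $G_\Omega$ must satisfy $\Sigma\subseteq\Omega$} --- only red lines can be responsible. The one structural input the whole argument rests on is a compatibility (``nesting'') property of the soft-limit operation $\SL$ of Section~\ref{sec:genalg}: applying $\SL(\Omega)$ and then rescaling a further \emph{disjoint} set $\Sigma'$ of momenta by $\kappa^2$ detects the same leading soft singularities as $\SL(\Omega\cup\Sigma')$ would on the original integrand. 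This holds because each $\SL$ step discards only terms that are strictly power-suppressed in the relevant $\kappa\to0$ limit, which is precisely what the \emph{Log Lemma} (Lemma~\ref{lem:kappa0}) secures: in physical gauges IR divergences are at most logarithmic, so $G-G_{\SL(\Omega)}$ is a genuine power down in the $\Omega\to\text{soft}$ limit --- the content of \eqref{LimOmegaSoft} --- rather than merely equally singular.

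For the base of the induction, take a maximal set $\Omega^i_\text{max}$, for which $G_{\Omega^i_\text{max}}=G_{\SL(\Omega^i_\text{max})}$ with no subtraction. I would argue by contradiction: suppose some set $\Sigma$ of lines produced a soft sensitivity of $G_{\Omega^i_\text{max}}$ with $\Sigma\not\subseteq\Omega^i_\text{max}$, and set $\Sigma'=\Sigma\setminus\Omega^i_\text{max}\neq\varnothing$. By the nesting property, a soft sensitivity of $G_{\SL(\Omega^i_\text{max})}$ under $\Sigma'\to\text{soft}$ is a genuine soft sensitivity of $G$ under $(\Omega^i_\text{max}\cup\Sigma')\to\text{soft}$; hence $\Omega^i_\text{max}\cup\Sigma'$ is a soft-sensitive set of $G$ strictly containing $\Omega^i_\text{max}$, contradicting the definition of $\Omega^i_\text{max}$ as a maximal element of $\{\Omega_i\}$. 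So $\Sigma\subseteq\Omega^i_\text{max}$.

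For the inductive step, take a non-maximal soft-sensitive set $\Omega^j_\text{step}$, for which \eqref{SoftSubtract} gives $G_{\Omega^j_\text{step}}=\bigl(G-\sum_{\Omega\supsetneq\Omega^j_\text{step}}G_\Omega\bigr)_{\SL(\Omega^j_\text{step})}$, the sum over previously processed soft-sensitive supersets. Suppose again a set $\Sigma$ with $\Sigma\not\subseteq\Omega^j_\text{step}$ gives a soft sensitivity, and put $\Sigma'=\Sigma\setminus\Omega^j_\text{step}\neq\varnothing$. By nesting, $\Omega'\equiv\Omega^j_\text{step}\cup\Sigma'$ is then a soft-sensitive set of $G$ strictly larger than $\Omega^j_\text{step}$; as such it lies in $\{\Omega_i\}$, sits inside some $\Omega^i_\text{max}$, and (being larger) was processed earlier, so $G_{\Omega'}$ is one of the subtracted terms above. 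By its defining equation, $G_{\Omega'}$ reproduces the leading $\Omega'\to\text{soft}$ behavior of $\bigl(G-\sum_{\Omega\supsetneq\Omega'}G_\Omega\bigr)$, while every other subtracted term $G_\Omega$ with $\Omega\supsetneq\Omega^j_\text{step}$ but $\Omega\not\supseteq\Omega'$ carries no soft sensitivity along $\Sigma'$, since by the inductive hypothesis $G_\Omega$ is soft-sensitive only in subsets of $\Omega$ and $\Sigma'\not\subseteq\Omega$. Hence $-G_{\Omega'}$ exactly cancels the leading $\Sigma'\to\text{soft}$ singularity of $\bigl(G-\sum_{\Omega\supsetneq\Omega^j_\text{step}}G_\Omega\bigr)$, leaving $G_{\Omega^j_\text{step}}$ power-suppressed --- hence not soft-sensitive --- in that limit, a contradiction. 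Running the induction down to $\Omega_\text{last}=\varnothing$ then establishes the Lemma for every colored diagram in the decomposition.

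The main obstacle I anticipate is making the nesting property of $\SL$ airtight: that iterated soft expansions on nested sets of momenta track exactly the same leading soft singularities as a single simultaneous expansion on their union, with neither spurious sensitivities created nor genuine ones dropped. Pinning this down cleanly calls for a careful inventory of which numerator and denominator factors are discarded at each $\SL$ step --- small external invariants $p_\cci\cdot p_\ccj$ between distinct collinear sectors, small virtualities $\sim\lambda^2$, soft external momenta, and other momenta going soft at the same time, exactly as itemized in Section~\ref{sec:genalg} --- and checking that discarding them commutes, up to strictly higher powers of $\kappa$, with a subsequent $\kappa^2$-rescaling of a disjoint set of momenta. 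The Log Lemma (Lemma~\ref{lem:kappa0}) is what furnishes the ``strictly higher'': without it a subtracted leading term could be as singular as the remainder, and the colored decomposition would fail to isolate all soft sensitivities onto the red lines.
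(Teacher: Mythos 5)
Your overall strategy is the same as the paper's: an induction over the colored-diagram decomposition (you run it downward on $|\Omega|$, the paper upward on the number of blue lines, which is the same ordering), with the would-be soft sensitivity along blue lines cancelled by a previously subtracted term and the inductive hypothesis controlling the remaining subtractions; your isolation of the terms that are finite by the inductive hypothesis matches \Eq{blueproof1} together with \eqref{CombiningLimits}. The gap is in the inductive step, at the sentence ``By nesting, $\Omega'\equiv\Omega^j_\text{step}\cup\Sigma'$ is then a soft-sensitive set of $G$ \ldots so $G_{\Omega'}$ is one of the subtracted terms.'' Neither half of this is guaranteed. A soft sensitivity of $G_{\Omega^j_\text{step}}$ under $\Sigma'\to\text{soft}$ does not imply that the simultaneous soft limit of exactly the set $\Omega'$ (all other lines held hard) is a soft-sensitive configuration of $G$: momentum conservation can make that limit kinematically empty, or can force further lines to go soft along with $\Sigma'$, so that the genuinely singular configuration corresponds to a strictly larger set. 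In that case $\Omega'\notin\{\Omega_i\}$, no term $G_{\Omega'}$ was ever produced by \eqref{SoftSubtract}, and your cancellation has nothing to cancel against. This is not a corner case: in an example like \Eq{eggbeater}, taking $\Omega^j_\text{step}=\{\ell_1\}$ and $\Sigma'=\{\ell_2\}$ gives a union that is not among the soft-sensitive sets, and your argument stalls exactly there.

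This missing case is where the paper's proof spends most of its effort. It first disposes of the situation $\Omega\cup\omega\not\subseteq\Omega^i_\text{max}$ for all $i$ (finite by momentum conservation, the same observation as in your base case, which for that reason should also not lean on ``nesting'' alone); otherwise it introduces the smallest soft-sensitive set $\Gamma\supseteq\Omega\cup\omega$ with $\SL(\Omega\cup\omega)=\SL(\Gamma)$, cancels against $G_\Gamma$ rather than against a (possibly nonexistent) $G_{\Omega'}$, and then kills the leftover subtracted terms $G_\Upsilon$ with $\Upsilon\supseteq\Omega\cup\omega$, $\Upsilon\not\supseteq\Gamma$ by noting that the $\omega\to\text{soft}$ limit drags the lines of $\Gamma\setminus(\Omega\cup\omega)$ --- blue in those $G_\Upsilon$ --- soft, so they are finite by the inductive hypothesis. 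To repair your write-up you need this case analysis (or an explicit argument that the union can always be promoted to such a $\Gamma$ without disturbing your bookkeeping); the ``nesting'' property you flag as the main obstacle is indeed needed, but it is the easier half --- the paper encodes it in the definition of $\SL$ and \eqref{CombiningLimits} --- while the set-theoretic mismatch between $\Omega\cup\Sigma'$ and the actual soft-sensitive sets is the step your proposal currently fails on.
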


\begin{proof}

We prove this by induction on the number of blue lines in a colored graph, $G_\Omega$. 
The first step is to show the result for graphs with the fewest number of blue lines, namely $G_{\Omega_\text{max}}$. Indeed, the only way for a line, $\ell_\text{blue} \notin \Omega_{\text{max}}$, to be able to give a soft sensitivity in $G$ but not in the simultaneous limit $\Omega_{\text{max}}\cup\ell_\text{blue} \to \text{soft}$ is if the limit is forbidden by momentum conservation. But then $\lim_{\ell_\text{blue} \to \text{soft}} G_{\Omega_{\text{max}}}$ will vanish since the limit where $\Omega_{\text{max}} \to \text{soft}$ has already been taken. So the lemma holds for graphs with the least number of blue lines, $G_{\Omega_{\text{max}}}$.

Now, suppose it is true for any colored graph with $n$ or fewer blue lines and consider a colored graph with $n+1$ blue lines, $G_\Omega$. Now consider the most general limit where some subset, $\omega$, of blue lines goes soft. We must show that $\lim_{\omega \to \text{soft}} G_\Omega$ is finite.

By definition
\be
G_\Omega = \bigg(G - \sum_{\Upsilon \supsetneq \Omega} G_{\Upsilon} \bigg)_{\SL(\Omega)}
= G_{\SL(\Omega)} 
- \sum_{\Upsilon \supsetneq \Omega, \; \omega\subseteq\Upsilon} 
		\big(G_{\Upsilon}\big)_{\SL(\Omega)}
- \sum_{\Upsilon \supsetneq \Omega, \; \omega\Sl\subseteq\Upsilon} 
		\big(G_{\Upsilon}\big)_{\SL(\Omega)}
\ee
where the sets $\Upsilon$ are soft-sensitive sets.
In the $\omega \to \text{soft}$ limit, the last term would involve the soft limit of at least one blue line in a colored graph with $n$ or fewer blue lines which must be finite by the induction hypothesis combined with the fact that
\be
\lim_{\omega \to \text{soft}}\big(G_{\Upsilon}\big)_{\SL(\Omega)}
=
\Big(\big(G_{\Upsilon}\big)_{\SL(\Omega)}\Big)_{\SL(\omega)}
=
\Big(\big(G_{\Upsilon}\big)_{\SL(\Omega\cup\omega)}\Big)_{\SL(\omega)}
\label{CombiningLimits}
\ee
Therefore, the soft limit we are interested in simplifies to
\be
\lim_{\omega \to \text{soft}} G_\Omega = 
\lim_{\omega \to \text{soft}} \bigg[
G_{\SL(\Omega)}
- \sum_{\Upsilon \supseteq \Omega\cup\omega} \big(G_{\Upsilon}\big)_{\SL(\Omega)}
\bigg]
+ \text{finite}
\label{blueproof1}
\ee

Now, if $\Omega\cup\omega \,\Sl{\subseteq}\, \Omega^i_\text{max}$ for some $i$, the term in square brackets in \Eq{blueproof1} is finite because, in that case, the sum is empty and the soft limit of $\Omega$ followed by $\omega$ does not give rise to a soft sensitivity in the first term by momentum conservation (the same argument given in the first-induction step). If \Eq{blueproof1} is finite, we are done the proof, so assume $\Omega\cup\omega \subseteq \Omega^i_\text{max}$ for some $i$. Consequently, there exists a soft-sensitive set $\Gamma$ that is the next smallest set containing $\Omega\cup\omega$ for which $\SL(\Omega\cup\omega) = \SL(\Gamma)$. 
Therefore, using \Eq{CombiningLimits}, we have
\begin{align}
\lim_{\omega \to \text{soft}} G_\Omega &= \lim_{\omega \to \text{soft}}
\bigg(
G
 - G_{\Gamma}
- \sum_{\Upsilon \supseteq \Omega\cup\omega, \; \Upsilon \neq \Gamma} G_{\Upsilon}
\bigg)_{\SL(\Gamma)}
+ \text{finite} \\
&\overset{\text{def}}= \lim_{\omega \to \text{soft}} \bigg[
G_{\SL(\Gamma)} 
- G_{\SL(\Gamma)} + \sum_{\Upsilon\supsetneq\Gamma} \big(G_\Upsilon\big)_{\SL(\Gamma)}
- \sum_{\Upsilon \supseteq \Omega\cup\omega, \; \Upsilon \neq \Gamma} \big(G_{\Upsilon}\big)_{\SL(\Gamma)}\bigg]
+ \text{finite}
\end{align}
Now we can split the last sum into
\be
\sum_{\Upsilon \supseteq \Omega\cup\omega, \; \Upsilon \neq \Gamma} G_{\Upsilon}
=
\sum_{\Upsilon \supsetneq \Gamma} G_{\Upsilon}
+
\sum_{\Upsilon \supseteq \Omega\cup\omega, \; \Upsilon \Sl\supseteq \Gamma} G_{\Upsilon}
\ee
Then, canceling the first four terms we are left with
\be
\lim_{\omega \to \text{soft}}  G_\Omega =
-\lim_{\omega \to \text{soft}}  \sum_{\Upsilon \supseteq \Omega\cup\omega, \; \Upsilon \Sl\supseteq \Gamma} \big(G_{\Upsilon}\big)_{\SL(\Gamma)}
+ \text{finite}
\ee

Finally, either $\Gamma = \Omega\cup\omega$ in which case the above sum is empty and $\lim_{\omega \to \text{soft}}  G_\Omega $ is finite, or the $\omega\to\text{soft}$ limit  forces other lines in $\Gamma\setminus (\Omega\cup\omega)$ to go soft along with those in $\omega$. The latter case means that for every term in the above sum, $\lim_{\omega \to \text{soft}}G_{\Upsilon}$ involves taking a blue line soft which gives a finite result by the induction hypothesis. Thus, $\lim_{\omega \to \text{soft}}G_\Omega$ is always finite.
\end{proof}

This algorithm may make more sense after a few explicit examples. We have already seen how to separate the soft-sensitive and soft-insensitive parts of graphs at 1-loop order in Sections~\ref{sec:1loopEx1} and \ref{sec:1loopEx2}, so we move directly to the more complicated 2-loop examples. The first two examples in Sections~\ref{sec:algex1} and \ref{sec:algex2} outline the basics of the coloring algorithm, having only a single maximal soft-sensitive set. The example in Section~\ref{sec:algex3} has multiple $\Omega^i_\text{max}$'s as well as a discussion about symmetry factors of the colored graphs.

It is also worth pointing out that this separation into red and blue lines is similar to the zero-bin subtraction discussed in \cite{Manohar:2006nz}. Our blue lines correspond to the propagation of degrees of freedom that can be collinear sensitive but cannot be soft sensitive. This is implemented by recursively subtracting off the soft-sensitive limits from the full-theory graphs. In SCET, collinear fields are defined by summing over discrete labels on momentum space with the label pointing to zero momentum -- known as the zero bin -- removed. In practice the discrete sum is always turned into an integral and the zero bin is subtracted off. This procedure calls for a soft subtraction for every single collinear line, irrespective of whether or not the line is soft sensitive, but otherwise is similar to our subtraction for the blue lines. Therefore, the SCET-familiar reader could think of our blue lines as a cleaner version of the collinear lines of SCET. In any case, our blue lines are still too complicated 
to use in practice; by the end, our factorization theorem will be formulated entirely in terms of full-theory Feynman rules with the subtraction procedure implemented by dividing by simple matrix elements of Wilson lines.

In a colored diagram, every line is either soft sensitive (red) or soft insensitive (blue). We sometimes draw soft-insensitive lines as black lines if no expansion is done (for example with external lines).
All black lines in the following should technically be drawn blue.

\subsection{Example one: Tangled 2-loop}
\label{sec:algex1}

Consider the following graph in scalar QED:
\be
\fd{1.9cm}{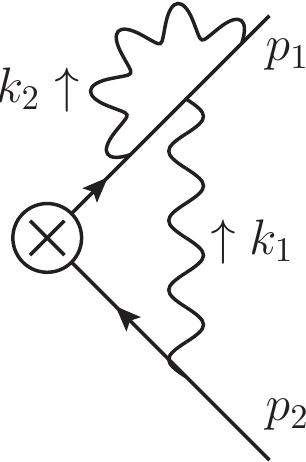} \equiv G= \int 
\frac{(2p_\ccT+k_1)\cdot\Pi(k_1)\cdot (2p_\ccO-2k_2-k_1) \, (2p_\ccO-2k_1-k_2)\cdot\Pi(k_2)\cdot(2p_\ccO-k_2)}
	{k_1^2(p_\ccT+k_1)^2(p_\ccO-k_1)^2\;k_2^2(p_\ccO-k_2)^2(p_\ccO-k_1-k_2)^2}
\ee
where we have dropped constant prefactors and the integration measure, $d^4 k_1 d^4 k_2$ is left implicit.
In Feynman gauge (or other covariant gauges), the gauge-dependent $\Pi(k_i)$ factors count as order 1. Then, this graph has a soft singularity when both photons go soft, or when either one goes soft and the other goes collinear. Note that the virtual scalars can never give rise to a soft sensitivity by helicity conservation, which can easily be checked by power counting, say, the $(p_\ccO-k_2) \to $ soft limit.

Our first step is to write down the soft-singular graph with the most soft lines. This is done by expanding the integrand as if both virtual-photon momenta $k_1$ and $k_2$ were soft, giving: 
\be
G_{\Omega_\text{max}}  = \fd{1.5cm}{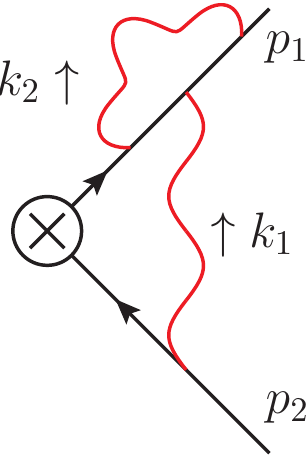}= \int 
\frac{p_\ccT\cdot\Pi(k_1)\cdot p_\ccO}{k_1^2 p_\ccT\cdot k_1 (-p_\ccO\cdot k_1)}
\times
\frac{ p_\ccO\cdot\Pi(k_2)\cdot p_\ccO}{ k_2^2 (-p_\ccO\cdot k_2) (-p_\ccO\cdot (k_1+k_2))}
\label{GSmax}
\ee
Note that we have not dropped either soft momentum with respect to the other. Also, $G_{\Omega_\text{max}}$ is clearly soft divergent when both $k_1$ and $k_2$ vanish.

Now we would like to write down the part of $G$ that is soft divergent when only one of the photons goes soft  (and the other goes collinear). To do this, we expand one of the virtual momentum as if it were soft and leave the other one general. That is, for $k_1$ soft we have
\be
G_{\Omega_\text{next}^1} = \int 
\frac{p_\ccT\cdot\Pi(k_1)\cdot (p_\ccO-k_2)}{k_1^2 p_\ccT\cdot k_1 (- p_\ccO\cdot k_1)}
\times
\frac{ (2p_\ccO-k_2)\cdot\Pi(k_2)\cdot(2p_\ccO-k_2)}{ k_2^2(p_\ccO-k_2)^2(p_\ccO-k_2)^2}
	- \big(G_{\Omega_\text{max}}\big)_{\SL(k_1)}
\label{GS1}
\ee
With this definition, $G_{\Omega_\text{next}^1}$ is clearly finite when $k_2$ goes soft because we have subtracted that limit off in the form of $\big(G_{\Omega_\text{max}}\big)_{\SL(k_1)}$. Similarly, we define the $k_2$-soft-singular graph as
\be	
G_{\Omega_\text{next}^2} = \int 
\frac{(2p_\ccT+k_1)\cdot\Pi(k_1)\cdot (2p_\ccO-k_1)}{k_1^2(p_\ccT+k_1)^2(p_\ccO-k_1)^2}
\times
\frac{(p_\ccO-k_1)\cdot\Pi(k_2)\cdot 2p_\ccO}
	{k_2^2(-p_\ccO\cdot k_2)  (p_\ccO-k_1)^2}
	- \big(G_{\Omega_\text{max}}\big)_{\SL(k_2)}
\label{GS2}
\ee
which is, again, finite in the limit where $k_1$ goes soft because of the subtraction.

Finally, we have the remainder of the graph, given by 
\be
G_\text{last} = G - G_{\Omega_\text{max}} - G_{\Omega_\text{next}^1} - G_{\Omega_\text{next}^2}
\label{Gsoftreg}
\ee
It is easy to see that $G_\text{last}$ is finite in any limit $\omega\to$ soft for $\omega \subseteq \{k_1,k_2\}$, for example
\be
\lim_{k_1\to \text{ soft}} G_\text{last} = \big(G - G_{\Omega_\text{max}}\big)_{\SL(k_1)} - \big(G_{\Omega_\text{next}^1}\big)_{\SL(k_1)} + \text{finite}
= G_{\Omega_\text{next}^1} - G_{\Omega_\text{next}^1} + \text{finite}= \text{finite}
\ee
where we used the definition of $G_{\Omega_\text{next}^1}$, that $(G_{\Omega_\text{next}^1})_{\SL(k_1)} = G_{\Omega_\text{next}^1}$ and that $(G_{\Omega_\text{next}^2})_{\SL(k_1)}$ is finite.

We can now draw these four integrals as separate graphs by denoting which internal lines are taken soft by a longer-wavelength red line and the other lines that are made soft-insensitive by the subtraction are drawn blue. That is, 
\be
G_{\Omega_\text{max}} = \fd{1.6cm}{EikEx1_1.pdf}
\quad
G_{\Omega_\text{next}^1} = \fd{1.6cm}{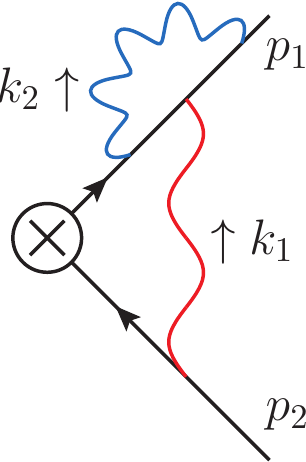}
\quad
G_{\Omega_\text{next}^2} = \fd{1.6cm}{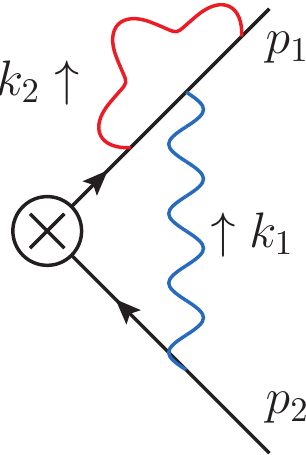}
\quad
G_\text{last} = \fd{1.6cm}{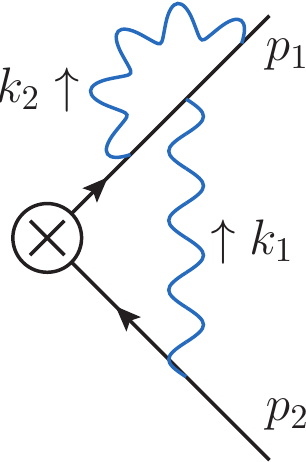} 
\quad
\binom{ \text{covariant} }{ \text{gauges} }
\label{covdec1}
\ee
and the sum of these four graphs is trivially equal to the original graph, $G$.

We reiterate that in these modified graphs, only the red, long-wavelength lines can have soft singularities. Each blue line is made soft insensitive by subtracting from the original graph all of the graphs with that line red. In our example, $G_{\Omega_\text{max}}$ was subtracted off in \Eq{GS1} and \Eq{GS2} to ensure that the blue line in both $G_{\Omega_1}$ and $G_{\Omega_2}$ is soft insensitive and all three of $G_{\Omega_\text{max}}$, $G_{\Omega_1}$ and $G_{\Omega_2}$ were subtracted off in \Eq{Gsoftreg} in order to make both of the blue lines in $G_\text{last}$ soft insensitive.

In deriving the decomposition in \Eq{covdec1}, no scaling of the numerators was used. Thus this decomposition holds in covariant gauges, such as Feynman gauge, where there is no extra numerator suppression.
In physical gauges, such as generic-lightcone gauge, the set of colored graphs is different. As will be discussed in detail in Section \ref{sec:Step2} in a physical gauge, there is no singularity when $k_2$ goes soft and $k_1$ does not, so $\Omega_2$ is not a possible set with a soft sensitivity. Thus, in a physical gauge, $G_{\Omega_\text{max}}$ and $G_{\Omega_1}$ are defined as above and $G_\text{last} = G - G_{\Omega_\text{max}} - G_{\Omega_1}$. So, the colored-graph decomposition of $G$ in a physical gauge is given by the sum of only three graphs:
\be
G_{\Omega_\text{max}} = \fd{1.6cm}{EikEx1_1.pdf}
\qquad
G_{\Omega_\text{next}^1} = \fd{1.6cm}{EikEx1_2.pdf}
\qquad
G_\text{last} = \fd{1.6cm}{EikEx1_4.pdf} 
\qquad
\binom{ \text{physical} }{ \text{gauges} }
\ee

\subsection{Example two: 2 loops, 3 gluons}
\label{sec:algex2}

Consider now a slightly more complicated example, the QCD graph:
\be
\fd{2cm}{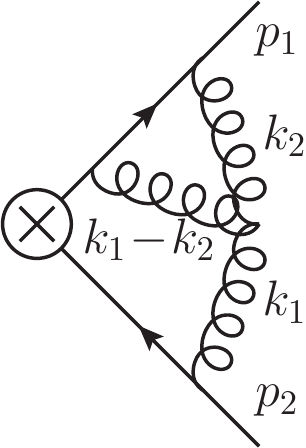} \equiv H
\ee
For this graph, 
 when all three gluons go soft, there are 9 powers of soft momenta in the denominator from the propagators, 1 in the numerator from the 3-point vertex, and 8 from the $d^4 k_1 d^4 k_2$  integration measure. The result is an overall logarithmic divergence (in covariant or physical gauges). This is the soft singularity with the highest number propagators that are simultaneously going soft.

Thus the soft-singular graph with the largest number of soft propagators in it is
\be
H_{\Omega_\text{max}} =H_{{\SL (\{k_1,k_2\} ) }}=\; \fd{1.6cm}{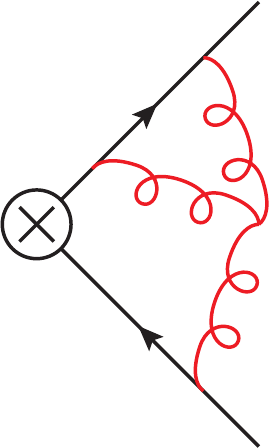}
\ee
The algebraic expression for $H_{\Omega_\text{max}}$ is found by taking the integrand of $H$ and expanding as if $k_1$ and $k_2$ were soft but of the same order, as was done in \Eq{GSmax}.

There are no singularities with only two gluons going soft since momentum conservation will not allow two of the gluons to go soft without the third being soft as well. Thus, the soft-singular configurations with the next largest number of soft internal lines are those with one of the gluons going soft.
In covariant gauges there is a singularity when any of the gluons go soft
\be
H_{\Omega_\text{next}^1} =\; \fd{1.6cm}{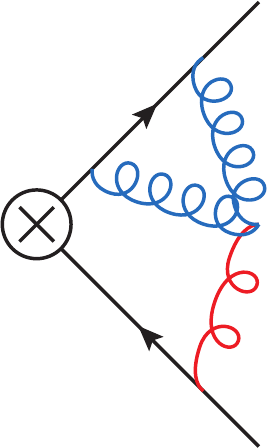}
\qquad
H_{\Omega_\text{next}^2} =\; \fd{1.6cm}{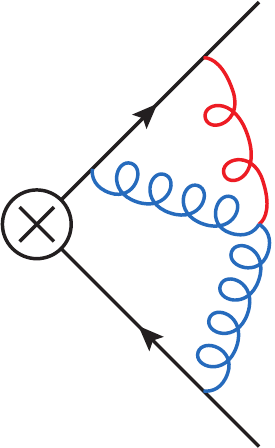}
\qquad
H_{\Omega_\text{next}^3} =\; \fd{1.6cm}{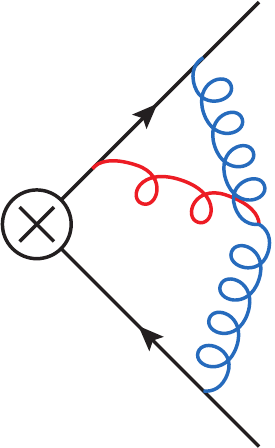}
\quad
\quad
\binom{ \text{covariant} }{ \text{gauges} }
\ee
and their algebraic expressions are given by taking the soft limit of one of the gluons and subtracting off $(H_{\Omega_\text{max}})_{\SL(k_i)}$ to ensure that the other gluons cannot be soft singular. That is
\be
H_{\Omega_\text{next}^i} = \big(H - H_{\Omega_\text{max}}\big) _{\SL(k_i)}
, \quad i=1,2
\quad\And\quad
H_{\Omega_\text{next}^3} = \big(H - H_{\Omega_\text{max}}\big)_{\SL(k_1-k_2)}
\ee
Finally, the soft-insensitive graph is given by
\be
\fd{1.6cm}{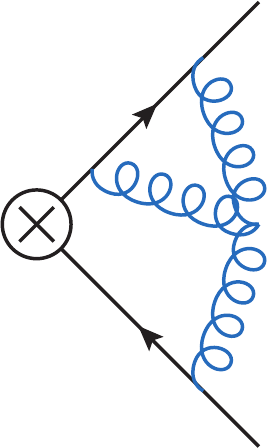} \;= H_\text{last} = H - H_{\Omega_\text{max}} - H_{\Omega_\text{next}^1} - H_{\Omega_\text{next}^2} - H_{\Omega_\text{next}^3}
\quad
\binom{ \text{covariant} }{ \text{gauges} }
\ee
Thus we have the decomposition
\be
\fd{1.8cm}{EikEx2.pdf} \;=\; \fd{1.5cm}{EikEx2_4.pdf} \;+\; \fd{1.5cm}{EikEx2_1.pdf} \;+\; \fd{1.5cm}{EikEx2_2.pdf} \;+\; \fd{1.5cm}{EikEx2_3.pdf} \;+\; \fd{1.5cm}{EikEx2_5.pdf}
\qquad
\binom{ \text{covariant} }{ \text{gauges} }
\ee
Every graph has its soft sensitivities manifest, since none of the blue lines admit a soft sensitivity by construction.

We will see in Section \ref{sec:Step2} that in physical gauges
$\Omega_\text{next}^2$ and $\Omega_\text{next}^3$ are soft insensitive. Thus, $H_{\Omega_\text{max}}$ and $H_{\Omega_\text{next}^1}$ are defined as above, but $H_{\Omega_\text{next}^2}$ and $H_{\Omega_\text{next}^3}$ do not exist, thereby modifying the definition of $H_\text{last}$ to $H_\text{last} = H - H_{\Omega_\text{max}} - H_{\Omega_\text{next}^1}$. The colored diagram expansion in physical gauges is then:
\be
\fd{1.8cm}{EikEx2.pdf} \;=\; \fd{1.5cm}{EikEx2_4.pdf} \;+\; \fd{1.5cm}{EikEx2_1.pdf} \;+\; \fd{1.5cm}{EikEx2_5.pdf}
\qquad
\binom{ \text{physical} }{ \text{gauges} }
\ee

\subsection{Example three: soft-gluon decoherence}
\label{sec:algex3}

For our final example, we consider a graph that does not have a unique maximal set of soft lines that contribute to a soft sensitivity:
\be
\fd{4cm}{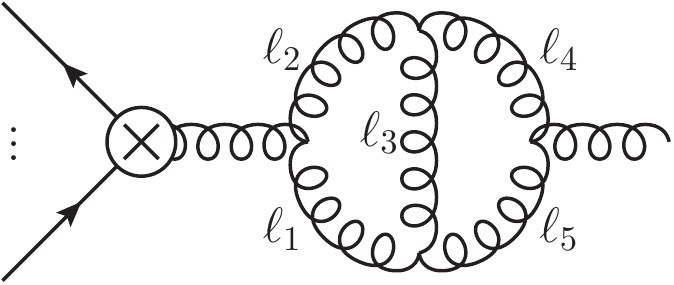} \; \equiv I
\label{eggbeater}
\ee
Due to momentum conservation, there is no way for all the gluons to go soft in the loops; at least a single continuous line of non-soft momentum must flow through the graph. This means that there are multiple maximally soft-sensitive sets of different sizes.

First we define the soft graphs with the maximal sets of soft-sensitive lines:
\begin{align}
I_{\Omega_\text{max}^1} &=\; \fd{3.4cm}{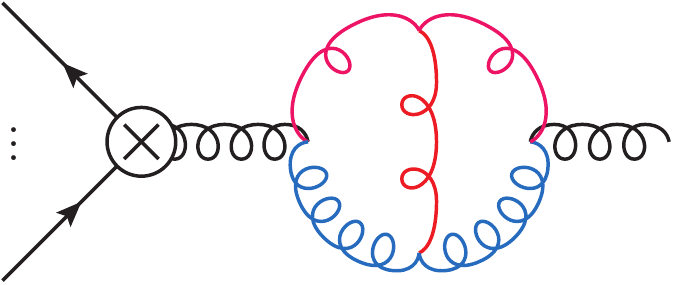} \; = I_{\SL( \{\ell_2,\ell_3,\ell_4\}) }\\
I_{\Omega_\text{max}^2} &=\; \fd{3.4cm}{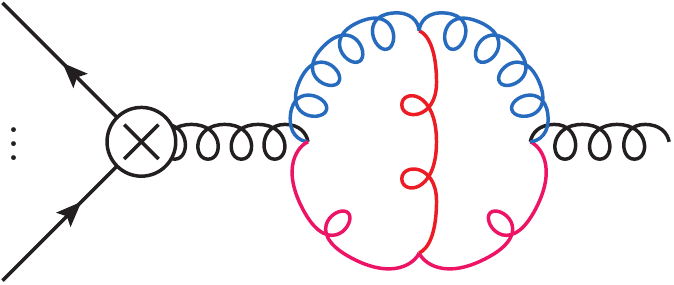} \; = I_{\SL( \{\ell_1,\ell_3,\ell_5\}) }\\
I_{\Omega_\text{max}^3} &=\; \fd{3.4cm}{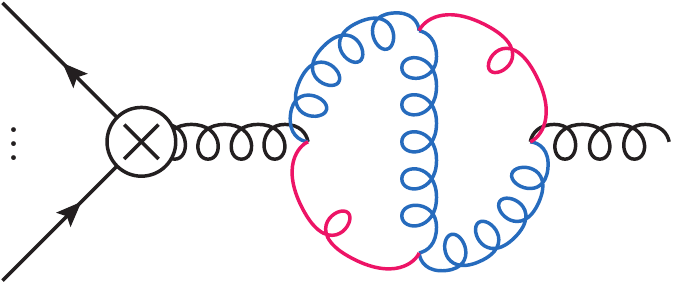} \; = I_{\SL( \{\ell_1,\ell_4\}) } \\
I_{\Omega_\text{max}^4} &=\; \fd{3.4cm}{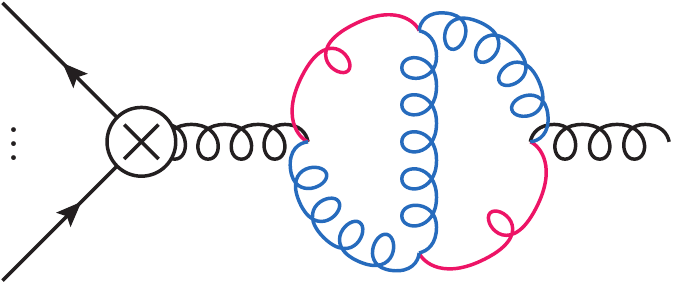} \; = I_{\SL( \{\ell_2,\ell_5\}) }	
\end{align}
The algebraic expressions for these graphs are found by taking the soft limit of the relevant virtual momenta in $I$.
Note that although no subtraction is performed,
none of the blue lines can give rise to soft sensitivities due to momentum conservation.
Although $I_{\Omega_\text{max}^1} = I_{\Omega_\text{max}^2}$ and $I_{\Omega_\text{max}^3} = I_{\Omega_\text{max}^4}$, these graphs are generated by
expanding in different non-overlapping regions of the virtual momentum phase space in the original integral, $I$. Thus they correspond to separate colored graphs.
This separation foreshadows the separation of QCD gluons into soft (red) and collinear (blue) gluons in the factorized expression.

Now, take the next largest subsets that admit a soft sensitivity, $\Omega_\text{next}^j$, and define the corresponding colored graph via the subtraction procedure. In every case, the sets $\Omega_\text{next}^j = \{\ell_j\}$ have a single soft line: 
\be
I_{\Omega^j_\text{next}} = \bigg(I - \sum_{i \,;\, \ell_j\in\Omega_\text{max}^i } I_{\Omega_\text{max}^i}\bigg)_{\SL(\ell_j)}
\ee
and define the last graph as
\be
I_{\text{last}} = I - \sum_{j=1}^5I_{\Omega^j_\text{next}}  - \sum_{i=1}^4I_{\Omega_\text{max}^i}
\ee
We draw these graphs by coloring every line that has a soft limit taken red and the other lines 
blue:
\begin{align}
I_{\Omega^1_\text{next}} &= \fd{3.4cm}{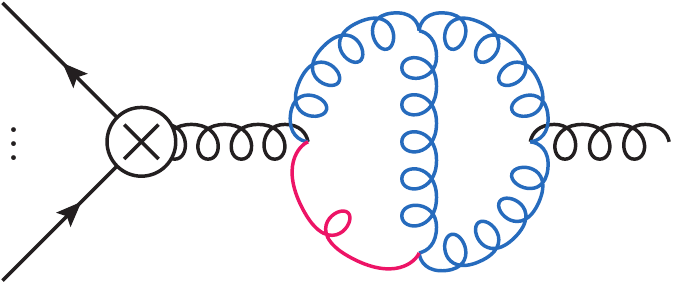} 
&I_{\Omega^2_\text{next}} = \fd{3.4cm}{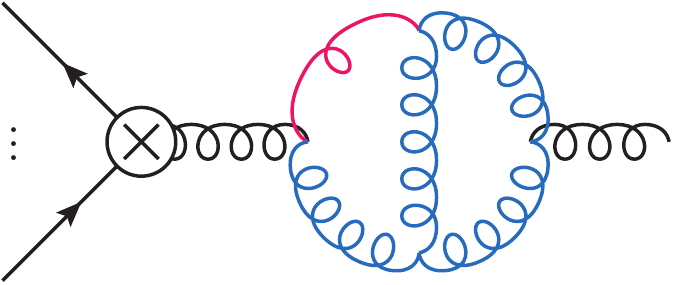} \\
I_{\Omega^5_\text{next}} &= \fd{3.4cm}{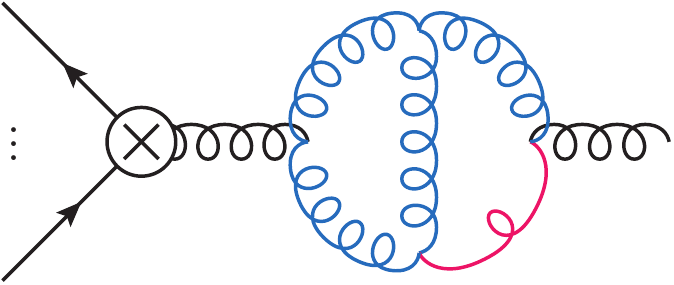} 
&I_{\Omega^4_\text{next}} = \fd{3.4cm}{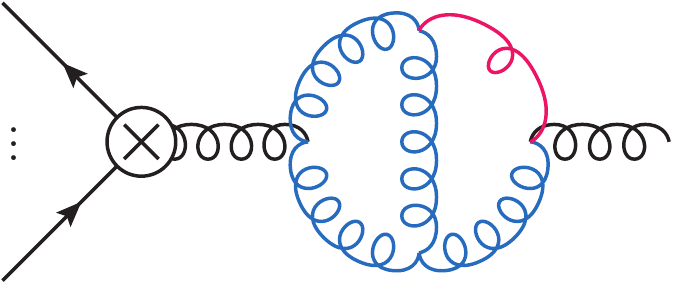} \\
I_{\Omega^3_\text{next}} &= \fd{3.4cm}{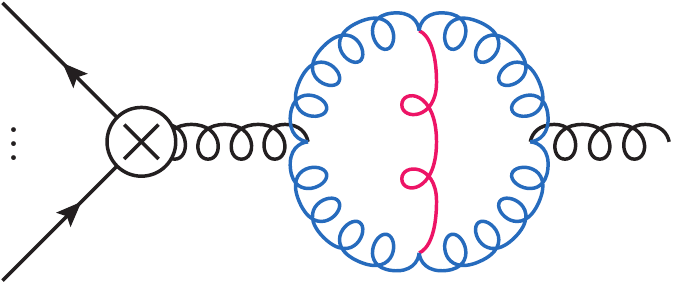} 
&I_{\text{last}} = \fd{3.4cm}{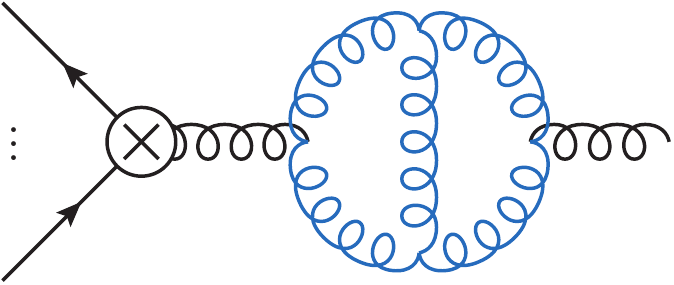}
\end{align}
The blue lines either have a soft subtraction or are soft finite by momentum conservation.

It is easy to check that no blue lines can give rise to a soft sensitivity. To be explicit, we check that this is the case for $I_\text{last}$ in the limit where $\ell_1$ goes soft. First note that only $I_{\Omega^1_\text{next}}$, $ I_{\Omega_\text{max}^2} $ and $I_{\Omega_\text{max}^3} $ can have a soft singularity in the $\ell_1\to0$ limit because only these graphs have a red $\ell_1$-line. Thus,
\be
\begin{aligned}
\lim_{\ell_1\to0} I_\text{last} &= \lim_{\ell_1\to0} \Big[  
I - I_{\Omega^1_\text{next}} - I_{\Omega_\text{max}^2} - I_{\Omega_\text{max}^3}
\Big] + \text{finite}
\nn
\\&=  
I_{\SL(\ell_1)} - \big(I - I_{\Omega_\text{max}^2} - I_{\Omega_\text{max}^3} \big)_{\SL(\ell_1)} - \big(I_{\Omega_\text{max}^2}\big)_{\SL(\ell_1)} - \big(I_{\Omega_\text{max}^3}\big)_{\SL(\ell_1)}
 + \text{finite}
\nn
\\&=\text{finite} 
\end{aligned}
\ee

Finally, note that all of the colored graphs in the decomposition of $I$ are equal to another colored graph except for $I_{\Omega^3_\text{next}}$ and $I_\text{last}$. That is,
\begin{align}
I &= \sum_{i=1}^4I_{\Omega_\text{max}^i} + \sum_{j=1}^5I_{\Omega^j_\text{next}} + I_\text{last} 
\nn
\\
&= \,2\times 
\fd{3.2cm}{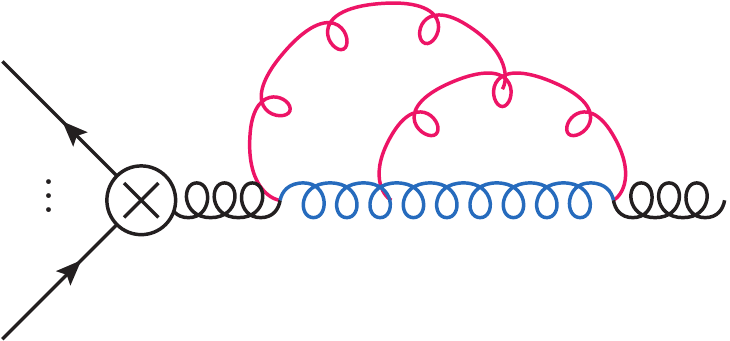}
\;+\;
2\times \fd{3.2cm}{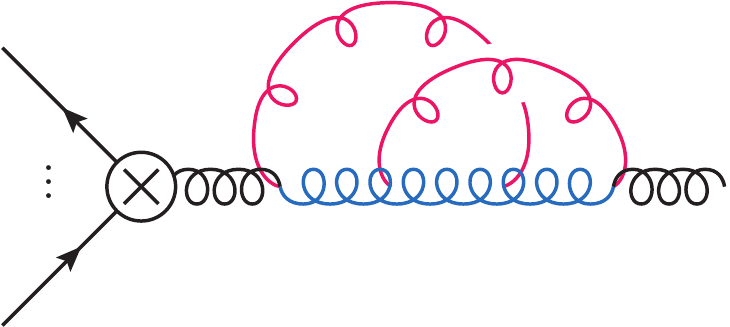}
\;+\;
2\times \fd{3.2cm}{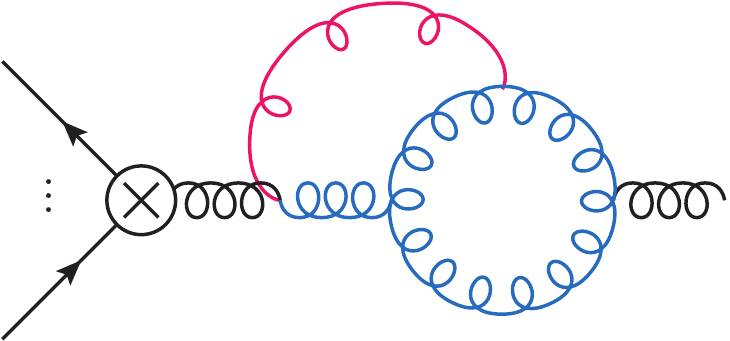}
\nn
\\
&\;+\;
2\times \fd{3.2cm}{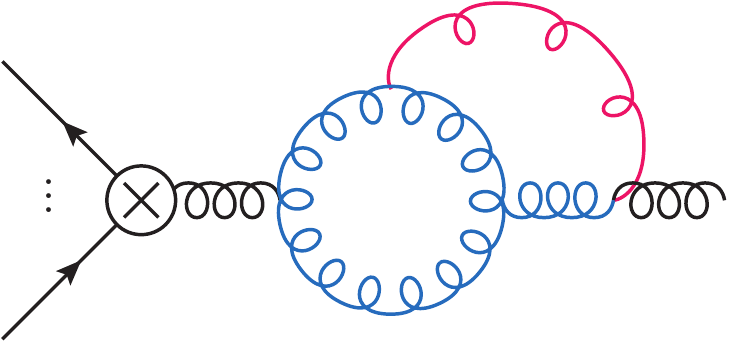} 
\;+\;
\fd{3.1cm}{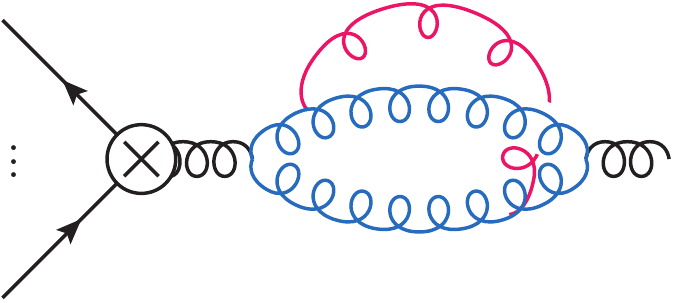} 
\;+\; 
\fd{3.1cm}{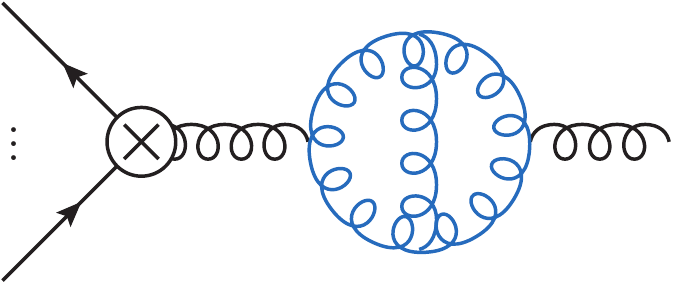}
\end{align}
In the graphs that are doubled, the coloring breaks the $\mathbb{Z}_2$ symmetry of the original graph, $I$. Because of this symmetry
$I$ gets a symmetry factor of $1/2$. In the graphs where the coloring breaks the symmetry, the factors of 2 directly cancel this factor
of $1/2$. In the graphs where the coloring preserves the symmetry, no factor of $2$ results and the original symmetry factor of $I$ is preserved. 
Thus, the final integrals have exactly the symmetry factor corresponding to the symmetries of the colored graphs. It is easy to see that this happens quite generally,
as expected in an effective theory where the red and blue lines are distinguishable particles.

\section{Step 2:  Reduced diagrams}
\label{sec:Step2}
At this point, we have a procedure for writing any Feynman graph as a sum of graphs each of which has all its lines marked as either soft-sensitive (red) or soft-insensitive (blue). As discussed in some of the
examples, the coloring is gauge-dependent. The coloring also does not indicate if a graph is collinear-sensitive. 
In this section we prove a set of lemmas that determine which graphs can be soft or collinear sensitive.
The lemmas in Section~\ref{sec:lemmas} are very general. They apply to QCD Feynman diagrams, independent of the coloring. Conclusions about collinear sensitivity, for example, apply equally well to soft-sensitive and soft-insensitive lines. The lemmas in Sections~\ref{sec:colorlemmas} and~\ref{sec:powersupress} are more specific to the colored diagrams. 
 Taken together, the lemmas imply a simplified reduced-diagram structure which encapsulates hard factorization and facilitates soft-collinear factorization. 

Our reduced diagrams are very similar to the reduced diagrams describing the pinch surfaces~\cite{Sterman:1978bi,Libby:1978qf,Collins:1981ta}.
Indeed, our reduced diagrams include the singular momenta  defining this surface ($k^\mu =0$ or $k^\mu = \alpha p^\mu$ for some external $p^\mu$), 
but also have a precise expression as integrals (with singular and nonsingular parts) derived from the full Feynman diagrams as described in the previous section.

Recall that we define {\bf physical gauges} as either lightcone gauge, with a generic choice of reference vector, or factorization gauge (see Section~\ref{sec:factorizationgauge}) with generic $r_\ccc$. Our physical gauges also have generic reference vectors for the
polarizations of external collinear particles.
 In the literature, {\it physical gauges} often refers more generally to any gauge whose propagator-numerator corresponds to a sum over physical polarizations, including axial gauges. We
will not need to consider such a generality.

To be clear, although we do not say so explicitly in the formulation of each lemma, all the lemmas in this section are only proven to hold in physical gauges. Most of them in fact do not hold in Feynman gauge, which plays no role in our proof.

\subsection{Finding the IR sensitivities}
\label{sec:lemmas}

We now discuss how to locate the IR sensitivities in graphs. IR sensitivity is a delicate thing. One IR-insensitive line can contaminate a whole subdiagram, removing its IR sensitivity. 
This fact formalized in the {\it Zombie Lemma} (Lemma~\ref{lem:zombie}). However, Lemma~\ref{lem:zombie} requires the proof of
the {\it Log Lemma} (Lemma~\ref{lem:kappa0}), which states that IR sensitivities in graphs are at most logarithmic. Other facts that will be necessary to determine where IR sensitivities lie in QCD graphs are also proven in the process of showing Lemma~\ref{lem:kappa0}.

Our first step is to prove that in physical gauges, IR sensitivities are at most logarithmic:

\begin{lemma}\ltag{Log Lemma}
\label{lem:kappa0}
According to the power counting discussed in Section~\ref{sec:prelim}, in physical gauges any Feynman diagram in QCD (or any other renormalizable theory with only gauge interactions) scales at worst like $\kappa^a$ with $a\geq0$. 
Thus IR divergences are at most logarithmic.
\end{lemma}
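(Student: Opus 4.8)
The plan is to prove the \emph{Log Lemma} by an inductive power-counting argument on the loop momenta, carried out region by region. Recall from Section~\ref{sec:prelim} (Table~\ref{tab:scalings}) that it suffices to consider a finite list of scalings for each loop momentum $k^\mu$: the soft scaling $k^\mu \to \kappa^2 k^\mu$ and the collinear scalings in each lightlike direction $p_\ccj^\mu$, in which $k^2 \to \kappa^2 k^2$, $k\cdot p_\ccj \to \kappa^2 k\cdot p_\ccj$, but $k\cdot p_\cci \sim \kappa^0$ for $\cci\neq\ccj$ (Glauber scaling having been excluded since we have only final-state collinear directions). For a given rescaling, the claim is that after including the measure $\prod_i d^4 k_i \to \kappa^{4 N_L} \prod_i d^4 k_i$ (with $N_L$ the number of loop momenta being scaled), the full QCD integrand scales like $\kappa^a$ with $a\geq 0$.

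First I would set up the counting for a single loop. The denominators contribute negative powers: a scalar or gluon propagator $(k+\ell)^2$ contributes $\kappa^{-2c}$ where $c$ is the minimal scaling weight of $(k+\ell)^2$ under the rescaling, a fermion propagator $\slashed k + \slashed\ell$ contributes at worst $\kappa^{-c}$ in the denominator. The measure gives $\kappa^{+4}$ (soft) or $\kappa^{+4}$ (collinear — four directions scale as $\kappa^2,\kappa^2,\kappa,\kappa$ giving $d^4k\to\kappa^4 d^4k$ as in Eq.~\eqref{kcollp1rescale}). The numerators are the crucial gauge-dependent piece: each triple-gluon vertex contributes one power of momentum, each fermion-gluon or scalar-gluon vertex at worst $\kappa^0$ (or one power of momentum for the scalar seagull / scalar cubic), and the physical-gauge propagator numerators $\Pi^{\mu\nu}(k)$ carry the key suppression of Eq.~\eqref{numsupp1}: when contracted into a collinear momentum $p^\mu$ with $k\parallel p$, one gets the extra factor of $\kappa$ exhibited in Eq.~\eqref{numsupp2}. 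The heart of the argument is a bookkeeping step: going around any loop, the number of numerator momenta one can bring down is bounded by the number of vertices, and one shows — by the renormalizability/dimensional-analysis constraint that each vertex has at most one derivative and each line enters two vertices — that the positive powers from the measure plus numerator always outweigh the negative powers from the denominators, with equality (the logarithmic case) achieved precisely on the soft and collinear "reduced" configurations. For the collinear case one uses crucially that a collinear gluon attached to an off-collinear line has its $\Pi$ contracted against a generic (non-collinear) momentum, so it does \emph{not} get the Eq.~\eqref{numsupp2} suppression, but then that line's propagator is off-shell ($\kappa^0$) and contributes no enhancement either — the two effects are balanced by power counting.

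The extension to many loops I would handle by the standard nested/sector argument: given an arbitrary rescaling of a set of loop momenta, order the momenta and apply the one-loop bound to the innermost loop holding the others fixed, then peel off and iterate; subadditivity of the minimal-scaling-weight function under this nesting gives $a\geq 0$ for the whole diagram. The main obstacle — and the part that needs genuine care rather than routine counting — is establishing that the physical-gauge numerator suppression of Eq.~\eqref{numsupp1}–\eqref{numsupp2} is never \emph{undone} in a multi-loop graph: one must check that when several collinear gluons in the same direction attach to a hard or jet subgraph, the Ward-identity-type cancellations (which in covariant gauge would be needed and would make the counting fail) are already built into each individual $\Pi^{\mu\nu}$, so that the naive term-by-term counting is already tight. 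Concretely I expect to need a sub-lemma stating that for a connected subgraph all of whose external momenta are collinear to a single direction $p$, attaching a physical-gauge gluon $k\parallel p$ costs a net factor of $\kappa$ (numerator $\kappa$ from $p\cdot\Pi$, denominator $\kappa^{-2}$, measure/loop $\kappa^{+?}$ depending on whether $k$ closes a loop) that is never better than logarithmic — and the bulk of the write-up is verifying this accounting for the triple-gluon vertex, where the single numerator momentum could in principle be the dangerous $k\cdot p$ but is controlled because it is multiplied by a $\Pi$ that then hits a collinear line. With that sub-lemma in hand, the theorem follows; in the write-up I would present the one-loop count in detail, state the nesting reduction, and relegate the triple-gluon-vertex case analysis to the proof body, noting it is the same mechanism already seen explicitly in Eq.~\eqref{numsupp2} and in the $G^{\lr{1}{2}}$ analysis of Section~\ref{sec:G12}.
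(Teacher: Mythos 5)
Your plan is essentially the paper's own proof: an iterative, loop-by-loop power count in physical gauges whose only non-trivial ingredient is the per-propagator numerator suppression $p\cdot\Pi(k)\sim\kappa$ of Eq.~\eqref{numsupp2} together with the eikonal form of soft attachments, with the marginal (logarithmic) cases being exactly the soft-eikonal and all-collinear configurations, and no cross-diagram Ward cancellations needed. The paper implements your ``one-loop bound plus nesting/sub-lemma'' step as an explicit induction on the number of loops, exhaustively checking how a newly added line can attach through 3- and 4-point vertices to soft, collinear or off-shell lines (obtaining the 4-point and Collinear Lemmas as byproducts), which is precisely the case analysis your bookkeeping would reduce to once fleshed out.
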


\noindent This fact has been known for decades~\cite{Sterman:1978bi}. We reproduce the proof here for completeness and to facilitate
 the proofs of Lemmas~\ref{lem:NoSC4pts} through~\ref{lem:OnlySelfColl}.

 Although we will not discuss covariant
gauges much, it is also known that in Feynman gauge, individual diagrams can have divergences more severe than logarithmic~\cite{Collins:1989gx}. These power divergences provide an obstruction to using
reduced diagrams for a transparent picture of hard factorization. Of course, the power divergences cancel in a gauge-invariant sum over diagrams, but this cancellation is of little use in
a diagram-by-diagram analysis. Lightcone gauge with non-generic choices of reference vectors also do not lead to the same simple reduced-diagram picture.

The two lemmas that will be proven during the proof of Lemma~\ref{lem:kappa0} are:
\begin{lemma}\ltag{Collinear Lemma}
\label{lem:Collinear}
Consider two lines of a given diagram. If the lines cannot become collinear due to momentum conservation or if they give rise to a $\kappa$ suppression when they do become collinear,
then a virtual particle connecting between them cannot be collinear sensitive.
\end{lemma}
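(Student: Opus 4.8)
The plan is to establish the lemma by power counting in a collinear direction, using the lightcone-gauge numerator suppression to kill the would-be divergence. Label the two lines $a$ and $b$, and let the connecting virtual gluon carry momentum $q^\mu$, attaching to line $a$ at a vertex where that line carries momenta $\ell_a$ and $\ell_a - q$, and similarly for line $b$. By the Log Lemma (Lemma~\ref{lem:kappa0}), for $q$ to be collinear sensitive there must be an external lightlike direction $p^\mu$ and a region in which $q \parallel p$ and the colored diagram scales exactly as $\kappa^0$ under the collinear rescaling \eqref{kcollp1rescale} with $n_a^\mu = p^\mu$. I would first note that the measure $d^4q$ and the $1/q^2$ propagator contribute $\kappa^{4}$ and $\kappa^{-2}$ respectively, so the remaining propagators must supply at least $\kappa^{-2}$; a propagator is enhanced in this limit only when its momentum is collinear to $p$, so at least one of lines $a$ and $b$ — say line $a$ — must carry momentum collinear to $p$ at the vertex where $q$ attaches.

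With the adjacent momentum along line $a$ collinear to $p$, the vertex factor at that vertex, sandwiched between the nearly on-shell adjacent momenta and contracted into the gluon-propagator numerator $\Pi_{\mu\nu}(q)$, reduces at leading power to a multiple of $p^\mu\Pi_{\mu\nu}(q)$ (for scalar QED the vertex is literally $(2\ell_a - q)^\mu \propto p^\mu$; for the quark and three-gluon vertices one reduces the Dirac/Lorentz structure the same way). Then by the transversality identities \eqref{numsupp1}--\eqref{numsupp2} — equivalently, by the polarization-sum form \eqref{PiPolSum} and the transversality $p\cdot\epsilon(q)\sim\lambda$ — the contraction $p^\mu\Pi_{\mu\nu}(q)$ carries an explicit factor of $\kappa$. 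This extra $\kappa$ pushes the scaling of the colored diagram to $\kappa^{\ge 1}$, below the logarithmic threshold, so $q$ is not collinear sensitive. This is precisely what the two hypotheses guarantee is available: if lines $a$ and $b$ cannot be made collinear, then in any direction $p$ to which $q$ could align, $q$ attaches to a line whose momentum is forced along $p$ and the suppression applies; and if the lines can become collinear but ``give rise to a $\kappa$ suppression'' in that configuration, that suppression is exactly the \eqref{numsupp1}--\eqref{numsupp2} factor just described. The abelian, scalar case of this count is essentially the analysis of $G^{\lr{1}{2}}$ in Section~\ref{sec:G12}.

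I expect the main obstacle to be the two bookkeeping points glossed above: (i) making rigorous, within the $\kappa$-power-counting framework rather than a Landau-equation/contour argument, the claim that the connecting gluon can only realize an enhancement in a direction to which a line it touches is already collinear — including the marginal possibilities that $q$ is simultaneously soft and collinear, or that only its transverse component survives; and (ii) checking that for the genuinely non-abelian vertices (triple and quartic gluon) and for fermion numerators, every surviving tensor contraction into $\Pi_{\mu\nu}(q)$ really does inherit the advertised factor of $\kappa$, and that the spurious $1/(r\cdot q)$ denominators of the physical-gauge propagator do not spoil the estimate. This is exactly where the restriction to physical gauges does the work, and the argument parallels — and would naturally be folded into — the proof of Lemma~\ref{lem:kappa0}.
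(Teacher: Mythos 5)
Your overall strategy---physical-gauge power counting in the collinear region, with the transversality suppression $p_\mu\Pi^{\mu\nu}(q)\sim\kappa$ of \eqref{numsupp1}--\eqref{numsupp2}, folded into the induction that proves the \emph{Log Lemma} (Lemma~\ref{lem:kappa0})---is the same route the paper takes, and your treatment of the first branch (the two lines cannot become collinear, so at most one end of the new line attaches to a momentum along $p$) reproduces the paper's counting in \eqref{kappa0_Eq6}. One caveat you flag but misdiagnose: for a quartic-vertex attachment the suppression does \emph{not} come from a contraction into $\Pi_{\mu\nu}(q)$ (the 4-gluon vertex carries no momentum factor); in the paper it comes from the bookkeeping that attaching at an existing all-collinear 3-point vertex adds only the single enhanced propagator $1/q^2$ while removing that vertex's $\kappa$, which still nets $\kappa^{+1}$, Eq.~\eqref{kappa0_Eq7}.

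The genuine gap is in your handling of the second hypothesis, ``or if they give rise to a $\kappa$ suppression when they do become collinear.'' You identify that suppression with the numerator factor of \eqref{numsupp1}--\eqref{numsupp2}. It cannot be: that factor is already part of the standard all-collinear counting, which with three enhanced propagators and \emph{two} transversality factors gives exactly $\kappa^0$, Eqs.~\eqref{kappa0_Eq8}--\eqref{kappa0_Eq10}---i.e.\ a logarithmic collinear sensitivity, not insensitivity. (If your reading were right, a gluon exchanged between two lines of the same collinear sector would be declared collinear insensitive, and there would be no collinear-sensitive jet amplitudes at all.) Relatedly, your claim that the single factor of $\kappa$ ``pushes the scaling to $\kappa^{\ge 1}$'' presupposes that only one propagator beyond $1/q^2$ is enhanced, which holds only in the first branch; when both ends are collinear to $p$ the count is $\kappa^4\cdot\kappa^{-6}\cdot\kappa\cdot\kappa\sim\kappa^0$. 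The intended content of the second hypothesis---and the way it is used later, e.g.\ in the \emph{Zombie Lemma} (Lemma~\ref{lem:zombie})---is that the kinematic configuration in which the two lines become collinear already carries an extra positive power of $\kappa$ from elsewhere in the diagram (for instance from an IR-insensitive subdiagram to which the lines belong); the new line by itself then contributes at worst $\kappa^0$ by the Log-Lemma counting, so the product scales as a positive power of $\kappa$ and the connecting line is collinear insensitive. That multiplicative step is what your argument is missing.
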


\begin{lemma}\ltag{4-point Lemma}
\label{lem:NoSC4pts}
There are no diagrams with soft-sensitive gluons attaching to soft-insensitive lines through a 4-point vertex.
\end{lemma}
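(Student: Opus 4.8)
The plan is to prove this lemma as part of (and immediately following) the soft power-counting argument for the \emph{Log Lemma} (Lemma~\ref{lem:kappa0}), since it is really a refinement of that same counting. First I would recall the setup: in a physical gauge one assigns to every line, vertex and loop of a graph its scaling under the soft rescaling $k^\mu\to\kappa^2 k^\mu$ of the would-be-soft momenta (equivalently, after setting $\lambda=0$), with bosonic propagators counting as $\kappa^{-4}$, fermionic ones as $\kappa^{-2}$, each soft loop measure as $\kappa^{+8}$, and — this being the virtue of a physical gauge — propagator numerators as $\kappa^0$. The Log Lemma is the statement that the worst total scaling any QCD graph can reach in a soft region is $\kappa^0$, i.e.\ logarithmic.

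Against that baseline I would then isolate the local structure at a 4-point vertex $v$ where a soft-sensitive (red) gluon of momentum $k$ meets only soft-insensitive (blue) lines. In QCD the only such vertex is the 4-gluon vertex; for scalar QED one must also treat the $\phi^\star\phi AA$ seagull. In both cases the vertex Feynman rule carries no momenta, so it counts as $\kappa^0$, exactly like the ordinary 3-point vertex one would write if $k$ were instead emitted from a blue line. The decisive point is what the 4-point attachment \emph{lacks}: a 3-point emission of a soft gluon from a blue line splits that line and produces an adjacent eikonal propagator $(P\pm k)^2\sim 2P\cdot k\sim\kappa^2$, i.e.\ a factor $\kappa^{-2}$, whereas at the 4-point vertex no blue line is split and the blue propagators on the two sides of $v$ differ only by the non-soft momenta flowing through $v$, so they stay $\kappa^0$. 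Hence routing $k$ into a 4-point vertex rather than a 3-point one costs $\kappa^{+2}$ and recoups nothing.

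Combining this with the Log Lemma, any graph containing such a configuration scales, in the soft region of $k$ (even at $\lambda=0$), at best like $\kappa^{2}$, hence is soft finite there, so $k$ cannot be soft sensitive. But the coloring procedure of Section~\ref{sec:coloring} colors a gluon red only when it lies in some set whose simultaneous soft limit engenders a soft sensitivity; therefore no colored diagram it produces can have a red gluon meeting blue lines through a 4-point vertex, which is the claim. I would stress that a physical gauge is genuinely needed here: it is precisely the $\kappa^0$ baseline from the Log Lemma that converts the $\kappa^{+2}$ penalty into outright finiteness, whereas the stronger-than-logarithmic divergences that occur in covariant gauges would survive it.

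The main obstacle is bookkeeping rather than conceptual: I must make the $\kappa^{+2}$ count watertight in the awkward cases — several soft lines meeting the same 4-point vertex (where momentum conservation already forces the remaining legs to carry non-soft momenta), blue legs that are themselves collinear rather than genuinely hard (their propagators scale with $\lambda$, not with the soft parameter, but still not with $\kappa$), and $k$ threading further loops — and I must check that no compensating enhancement can arise from distant parts of the diagram, which is again guaranteed once the Log Lemma has been established first.
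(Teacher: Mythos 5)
Your overall strategy---treating the lemma as a refinement of the soft power counting that underlies the \emph{Log Lemma} (Lemma~\ref{lem:kappa0}), carried out in a physical gauge---is exactly the paper's route (the paper proves Lemmas~\ref{lem:kappa0}--\ref{lem:NoSC4pts} inside one induction). The problem is the central step, ``routing $k$ into a 4-point vertex rather than a 3-point one costs $\kappa^{+2}$ and recoups nothing.'' That claim fails precisely in the generic configuration near an IR sensitivity, which you defer to the ``awkward cases'': the blue legs at the vertex are then collinear, and in a physical gauge an all-collinear 3-point vertex is proportional to the collinear momentum and therefore carries a $\kappa$ numerator suppression when contracted with the adjacent propagator numerator or polarization, \Eq{CollVert} and \Eq{numsupp2}. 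In your comparison graph that 3-point vertex is present; in the 4-point graph it has been merged with the soft attachment into a momentum-independent 4-point vertex, so the suppression is gone. The 4-point routing therefore \emph{does} recoup a factor $\kappa^{-1}$, and the honest penalty relative to the 3-point routing is $\kappa^{+1}$, not $\kappa^{+2}$---this is the paper's count, where adding the soft loop through a mixed 4-point vertex scales as $\kappa^{1}$ or $\kappa^{2}$ on top of the $\kappa^{0}$ baseline. Your conclusion survives, because $\kappa^{+1}$ is still a positive power and soft finiteness (hence no red line at such a vertex after the coloring) follows, but only once the vertex numerators are restored to the local bookkeeping.

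That the omission is not cosmetic is shown by the all-soft 4-point vertex: your ``recoups nothing'' logic would assign it the same suppression, whereas there the killed 3-point numerator is $\mathcal{O}(\kappa^{2})$ (an eikonal/soft vertex is proportional to a soft momentum), the recouping exactly cancels the lost propagator, and all-soft 4-point vertices contribute at leading power, \Eq{kappa0_Eq4} and \Eq{kappa0_Eq5}. So the counting must distinguish mixed red--blue vertices (net $\kappa^{\geq 1}$, which is the lemma) from all-red vertices (net $\kappa^{0}$), and that distinction lives entirely in the numerator factors you dropped. Once you repair the local comparison to include them, your relative-routing argument and the paper's add-a-loop induction become the same computation.
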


\begin{proof}[Proof of Lemmas~\ref{lem:kappa0}, \ref{lem:Collinear} and \ref{lem:NoSC4pts}]
We will focus on proving the Log Lemma (Lemma~\ref{lem:kappa0}), and mention the other two lemmas as they come up. 

Before getting into the proof, we will need to establish the form of the various vertices in the theory in the limit where all of the particles involved are soft or collinear. First, as discussed in \tree, the 3-point vertex involving a soft gauge boson has the following limiting behavior:
\be
\fd{2.5cm}{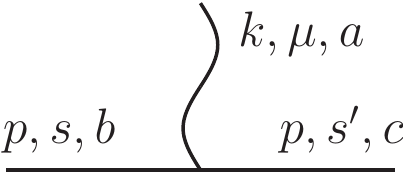} \;\;\LPeq\; -i2g_s \, \suTT_{\sub \suc}^\sua\, \delta_{ {\spincol s} {\spincol s'}}\; p^\mu, 
	\qquad {\rm for} \; k \; {\rm soft \; and }\; p \;{\text{on-shell and not soft}}
\label{SoftVert}
\ee
where 
$\sua,\sub$ and $\suc$ are color indices and ${\spincol s}$ and ${\spincol s'}$ are helicities (the wave functions of the non-soft particles are included). This result holds if the non-soft lines represent particles of any spin~\cite{Weinberg:1964ew}, in particular, these lines can be gluons. Similarly, the all-collinear vertex with at least one gauge boson is proportional to the momentum flowing through the vertex, by Lorentz invariance:
\be
\fd{2.5cm}{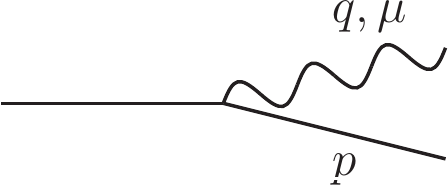} \;\;\propto\; p^\mu \propto q^\mu ,\qquad {\rm for}\; p\parallel q%
\label{CollVert}
\ee
Again, this is true irrespective of the spin of the particles in the straight lines and only when the lines are on-shell.

Now for the proof; we prove the Log Lemma (Lemma~\ref{lem:kappa0}) by induction on the number of loops. Tree-level diagrams trivially scale like $\kappa^0$, so Lemma~\ref{lem:kappa0} holds
for $n=0$. Then suppose it holds for $n-1$ loops and consider adding another loop. We will consider all possible ways to add a loop using 3- and 4-point vertices. 

For massless particles, propagators blow up when virtual lines are either soft or collinear. 
Let us begin with the soft case. According to the power-counting rules in Section~\ref{sec:prelim}, when the new line goes soft the measure associated with a soft line power counts as $d^4k \sim \kappa^8$  and the denominator of the propagator of the soft line counts as $k^{2} \sim \kappa^{4}$. If the soft line connects via 3-point vertices to two lines of momentum $p_\ccO^\mu$ and $p_\ccT^\mu$, then the new loop adds two more propagators with denominators $(p_\cci\pm k)^2$ for $\cci=1,2$. If $p_\cci^\mu$ is off-shell, this scales like $\kappa^0$; if $p_\cci^\mu$ is on-shell and not soft, it scales like $p_\cci \cdot k\sim \kappa^2$; and if
$p_\cci^\mu$ is soft it scales like $\kappa^4$.
The numerator of the propagators, combined with the 3-point vertices, power count the same as $p_\cci + k$. If $p_\cci^\mu$ is not soft, then $p_\cci + k \sim \kappa^0$; if 
$p^\mu_i$ is soft, then $p_\cci + k  \sim \kappa^2$. Thus when $p_\cci^\mu$ is off-shell, the numerator and denominator combine to $\kappa^0$; if $p_\cci^\mu$ is on-shell but not-soft, they combine
to $\kappa^0/\kappa^2 \sim \kappa^{-2}$ and if $p_\cci^\mu$ is soft, they combine to $\kappa^2/\kappa^4 \sim \kappa^{-2}$. The worst scaling is therefore when $p_\cci^\mu$ is on-shell, and then,
\be
\frac{p_\cci + k}{(p_\cci+k)^2} \sim \kappa^{-2} 
\quad\For p_\cci^2 = 0 \quad(\text{either soft or not-soft})
\ee
Thus, adding a soft loop with 3-point vertices only gives an enhancement if both lines it connects to are on-shell, in which case, the new loop power counts as
\be
	\fd{2.5cm}{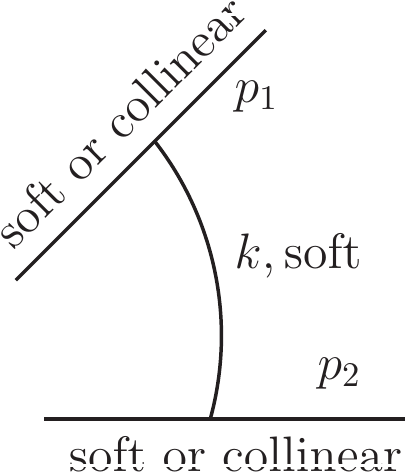}:   \qquad
d^4k \, \frac{1}{k^2}\, \frac{p_\ccO + k}{(p_\ccO+k)^2}\, \frac{p_\ccT +k}{(p_\ccT-k)^2} 
	\;\sim\; \kappa^8\, \frac{1}{\kappa^4}\, \frac{1}{\kappa^2}\, \frac{1}{\kappa^2} 
	\;\sim\; \kappa^0
\label{kappa0_Eq1}
\ee
on top of the original loop's power counting.

To be more precise, the lines with momenta $p_\ccO$ and $p_\ccT$ which connect to the soft momenta $k$ and go on-shell do not have to {\it directly} connect to $k$.
Even if there are some loops in the graph, as long as there are lines which go on-shell and connect to $k$ there will still be an enhancement.
We can simply think of these loops as producing a composite vertex:
\be
	\fd{2.5cm}{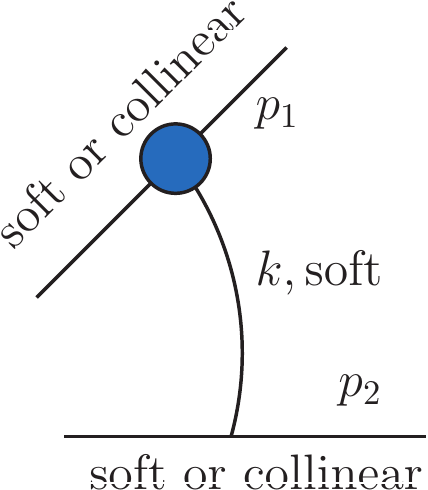}:   \qquad
d^4k \, \frac{1}{k^2}\, \frac{p_\ccO + k}{(p_\ccO+k)^2}\, \frac{p_\ccT +k}{(p_\ccT-k)^2} 
	\;\sim\; \kappa^8\, \frac{1}{\kappa^4}\, \frac{1}{\kappa^2}\, \frac{1}{\kappa^2} 
	\;\sim\; \kappa^0
\label{kappa0_Eq1}
\ee
Since there are no extra complications with such composite vertices, we will leave the composite case implicit in this proof.

Next suppose the new loop with the soft momentum connects via at least one 4-point vertex. This happens by the new gluon connecting to a 3-point vertex in the $n-1$ loop graph.
Again, the only way to get an enhancement is if the lines it connects to are on-shell. 
Due to the 4-point vertex,  the additional loop adds only two propagators rather than three. The new propagator denominators are $k^2$ and $(p_\cci + k)^2$. 
 The $n-1$-loop graph had a 3-point vertex, with either all three momenta collinear or one of them soft. Using \Eq{SoftVert} and \Eq{CollVert}, we see that the original 3-point vertex gave a contribution to the numerator of the graph of the form:
\be
\fd{2cm}{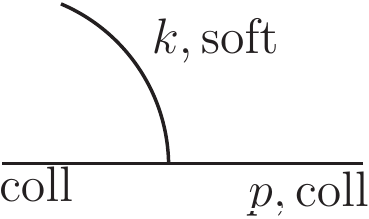} \;\;\propto\; p_\mu \Pi^{\mu\nu}(k) \sim \kappa^0 
\qquad\text{or} \qquad
\fd{2cm}{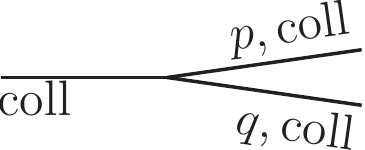} \;\;\propto\; p_\mu \Pi^{\mu\nu}(q) \sim \kappa, \text{ for } p\parallel q
\ee
Whereas, when we add the loop with the 4-point vertex, this becomes
\be
\fd{2cm}{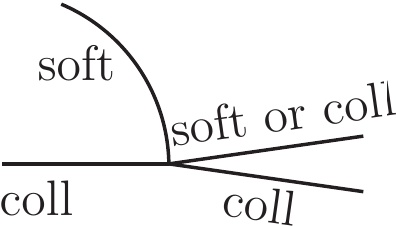} \;\;\propto\; g^{\mu\nu}g^{\rho\sigma} \sim \kappa^0
\ee
Thus, 
there is a  possible additional $\kappa^{-1}$ from killing the numerator suppression if the original graph had an all-collinear 3-point vertex. So, connecting a soft loop to a collinear line via a 4-point vertex adds a loop that power counts either as
\be
	\fd{2.5cm}{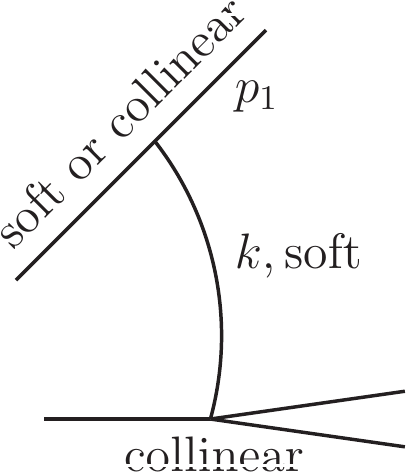}:   \qquad
d^4k \, \frac{1}{k^2}\, \frac{p_\ccO + k}{(p_\ccO+k)^2}\, \kappa^{-1}
	\;\sim\; \kappa^8\, \frac{1}{\kappa^4}\, \frac{1}{\kappa^2} \, \kappa^{-1}
	\;\sim\; \kappa
\label{kappa0_Eq2}
\ee
or as
\be
	\fd{2.7cm}{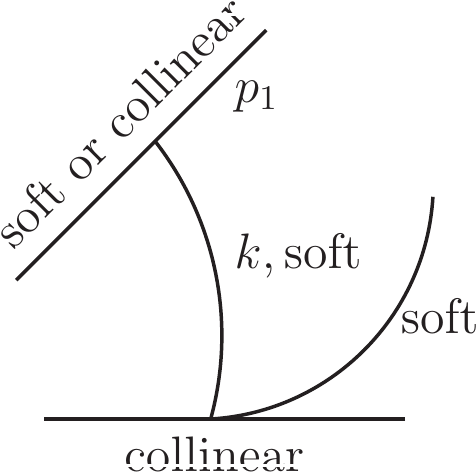}:   \qquad
d^4k \, \frac{1}{k^2}\, \frac{p_\ccO + k}{(p_\ccO+k)^2}
	\;\sim\; \kappa^8\, \frac{1}{\kappa^4}\, \frac{1}{\kappa^2}  
	\;\sim\; \kappa^2
\label{kappa0_Eq3}
\ee
In both cases, the new graph scales like a higher power of $\kappa$ than the graph it modified. By the same argument, adding a soft loop that connects to a collinear line on each end via a 4-point vertex will be (even more) IR finite. By the induction hypothesis, the rest of the graph scales at worst like $\kappa^0$, so any time we add a 4-point vertex with both soft and collinear momentum flowing through it, we get a $\kappa^{\geq1}$ scaling. Thus, we see that there cannot be a soft sensitivity when a soft gluon attaches to non-soft gluons through a 4-point vertex. This proves the \emph{4-point Lemma} (Lemma~\ref{lem:NoSC4pts}).

When {\it all} the relevant lines go soft, the 4-point vertices {\it can} contribute at leading-power.
To see this, consider the case where the soft loop connects to all-soft lines through a 4-point vertex and assume for now that the other end connects via a 3-point vertex. This case is just like the previous discussion in that the new loop adds only two new propagators of the form $k^{-2}$ and $(p_\cci+k)^{-2}$ and kills some of the suppression coming from the original 3-point vertex that became a 4-point vertex. However, in the all-soft case, the 3-point vertex suppression is a power of the soft momenta, which goes like $\kappa^2$ instead of $\kappa$ from the collinear case, so the new loop power counts as
\be
	\fd{2.5cm}{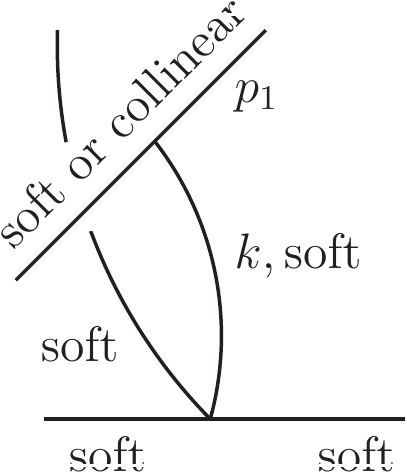}:   \qquad
d^4k \, \frac{1}{k^2}\, \frac{p_\ccO+ k}{(p_\ccO+k)^2}\, \kappa^{-2}
	\;\sim\; \kappa^8\, \frac{1}{\kappa^4}\, \frac{1}{\kappa^2} \, \kappa^{-2}
	\;\sim\; \kappa^0
\label{kappa0_Eq4}
\ee
Similarly, if the new soft loop connects to all-soft lines via a 4-point vertex on both ends, we only add one propagator, but we kill two $\kappa^2$-suppressed numerators, giving
\be
	\fd{2.5cm}{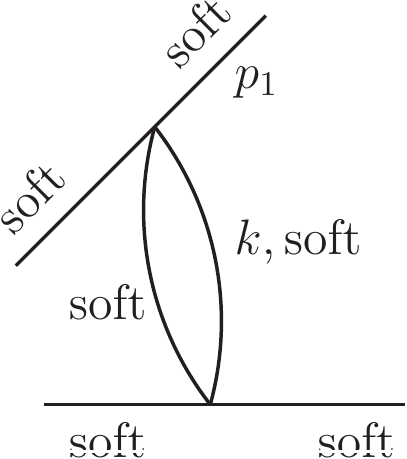}:   \qquad
d^4k \, \frac{1}{k^2}\, \kappa^{-2}\, \kappa^{-2}
	\;\sim\; \kappa^8\, \frac{1}{\kappa^4}\, \kappa^{-2}\, \kappa^{-2}
	\;\sim\; \kappa^0
\label{kappa0_Eq5}
\ee
Thus 4-point vertices involving all soft lines must be included. 
We have now exhausted all possible ways of adding a loop that can go soft and we have found that they all add a power counting of $\kappa^a$ for $a\geq0$ to the original graph.
This proves the Log Lemma as far as soft-scaling alone is concerned. 

 Now consider adding a line that can have a collinear sensitivity. As in the soft case, there are a number of ways that this can take place and we will systematically consider each possibility.
For the diagram to possibly be IR divergent the momentum in the line must be going collinear to the momenta of the lines it connects to on at least one end. Let us suppose first
that it is not also collinear to the line it connects to on the other end.
 Adding a line like this introduces two new on-shell propagators if it connects to the line to which it is collinear with a 3-point vertex, and only a single on-shell propagator if it connects with a 4-point vertex. In the first case, the all-collinear 3-point vertex will be proportional to the momentum flowing through it, as in \Eq{CollVert}, and this will give a suppression when contracted with any of the propagators (or external
polarization vectors) it connects to. This is because, in physical gauges, the propagator numerators are equal to the polarization-vector sum when the momentum in a propagator goes on-shell. Thus, $p_\mu \Pi^{\mu\nu}(q) \sim \kappa$ for $p\parallel q$ and we have
\be
	\fd{3cm}{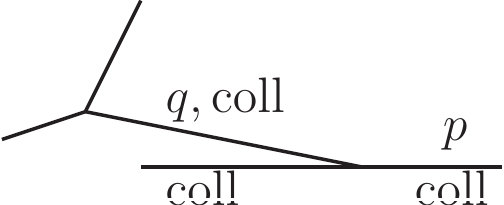}:   \qquad
	d^4q\, \frac{1}{q^2}\, \frac{1}{(p+q)^2}\, p_\mu \Pi^{\mu\nu}(q)
	\;\sim\; \kappa^4\,\frac{1}{\kappa^2}\,\frac{1}{\kappa^2}\, \kappa \;\sim\; \kappa
\label{kappa0_Eq6}
\ee
If the all-collinear vertex is a 4-point vertex, then we only get one new collinear propagator. However, going from an all-collinear 3-point vertex to a 4-point vertex kills the suppression that we just discussed, so we have
\be
	\fd{3cm}{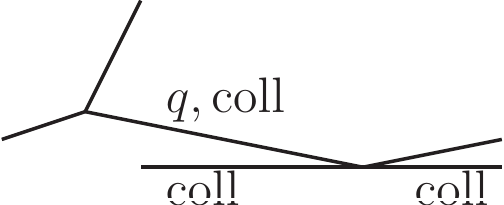}:   \qquad
	d^4q\, \frac{1}{q^2}\, \kappa^{-1}
	\;\sim\; \kappa^4\,\frac{1}{\kappa^2}\, \kappa^{-1} \;\sim\; \kappa
\label{kappa0_Eq7}
\ee
Finally, if the 4-point vertex has a soft line connecting to it, it will give a finite loop due to \Eq{kappa0_Eq2}. 
We conclude that unless the new line is collinear to the momenta on both ends, and in particular that all the relevant lines are on-shell,
the new diagram will have additional $\kappa$ suppression compared to the $n-1$ loop graph.

Combining \Eq{kappa0_Eq6} and \Eq{kappa0_Eq7}, we conclude that whenever a particle travels between two lines that could not originally go collinear, or that is
$\kappa$-suppressed if they do become collinear,
the resulting loop is $\kappa$-suppressed, and therefore, collinear insensitive. This proves the \emph{Collinear Lemma} (Lemma~\ref{lem:Collinear}).

It remains to show that when the momenta are all on-shell, the overall scaling is at worst $\kappa^0$. We have shown this already for soft singularities. So consider the 
remaining case when the new line goes collinear to all of the lines to which it connects. If both vertices are 3-point, we get three collinear propagators and two $\kappa$-suppressed products in the numerator:
\be
	\fd{2.5cm}{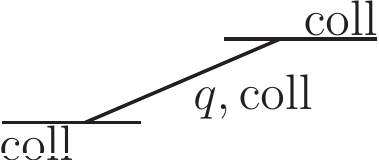}:   \qquad
	d^4q\, \frac{1}{q^2}\, \frac{1}{(p+q)^2}\, \frac{1}{p^2}\, p_\mu \Pi^{\mu\nu}(q) p_\nu
	\;\sim\; \kappa^4\,\frac{1}{\kappa^2}\,\frac{1}{\kappa^2}\,\frac{1}{\kappa^2}\, \kappa\, \kappa \;\sim\; \kappa^0
\label{kappa0_Eq8}
\ee
If only one of the vertices is a 3-point vertex, then adding the loop adds two propagators, one $\kappa$-suppressed product in the numerator due to the all-collinear 3-point vertex, and one $\kappa$ enhancement due to the removal of one of the original all-collinear 3-point vertices. Thus, graphs with one 3-point and one 4-point vertex power count as:
\be
	\fd{2.8cm}{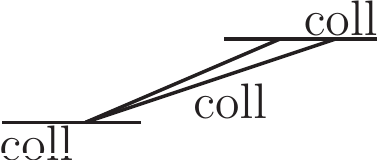}:   \qquad
	d^4q\, \frac{1}{q^2}\, \frac{1}{(p+q)^2}\, p_\mu \Pi^{\mu\nu}(q) \, \kappa^{-1}
	\;\sim\; \kappa^4\,\frac{1}{\kappa^2}\,\frac{1}{\kappa^2}\, \kappa\, \kappa^{-1} \;\sim\; \kappa^0
\label{kappa0_Eq9}
\ee
Finally, if the added loop connects on both ends to all-collinear 4-point vertices, then only one collinear propagator is added, but two 3-point vertices are removed causing two additional $\kappa^{-1}$ enhancements:
\be
	\fd{2.5cm}{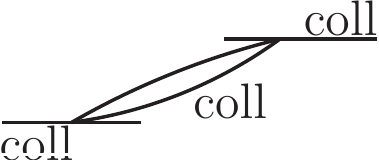}:   \qquad
	d^4q\, \frac{1}{q^2}\, \kappa^{-1}\, \kappa^{-1} \;\sim\; \kappa^4\,\frac{1}{\kappa^2}\, \kappa^{-1}\, \kappa^{-1} \;\sim\; \kappa^0
\label{kappa0_Eq10}
\ee
So, all possible additional loops that involve all-collinear vertices power count as $\kappa^0$ and are logarithmically collinear singular.

We have shown that any possible addition of a loop power counts as $\kappa^a$ for $a\geq0$.
 Therefore, by induction, every graph in physical gauges power counts like $\kappa^a$ for $a\geq0$ and is at most logarithmically divergent. This proves Lemma~\ref{lem:kappa0}.
\end{proof}

 Now, let us  define the term {\bf subdiagram} to mean a part of a larger diagram that could be cut out with an arbitrarily shaped (possibly 3D) cookie cutter. A subdiagram is considered as a function of the generic (not necessarily on-shell) momenta of the lines that the cookie cutter cut. These lines are considered to be external lines of the subdiagram, though they may have been internal in the original graph. Internal lines in a subdiagram are the complement of external lines. 
 
 With this definition, we can now make a useful observation about how IR-insensitive lines scale
 with $\kappa$ to establish how IR-insensitive graphs can {\it infect} any line they come in contact
 with, making it also IR insensitive. This observation is encapsulated by the following lemma:
\begin{lemma} 
 \ltag{Zombie Lemma}
\label{lem:zombie}
Consider adding a new internal line $L$ to a subdiagram with no IR-sensitive lines.  If at least one end of $L$ attaches to an internal line of the original subdiagram, then $L$ is IR insensitive.
\end{lemma}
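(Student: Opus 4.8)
The plan is to reduce the claim to the threshold power counting already carried out in the proof of the \emph{Log Lemma} (Lemma~\ref{lem:kappa0}). Call the original subdiagram $\Sigma$, label the two endpoint vertices of the new line $L$ by $v_1$ and $v_2$, and take $v_1$ to be the one attached to an internal line $M$ of $\Sigma$. Adding $L$ turns $\Sigma$ into a diagram with one extra loop. By Lemma~\ref{lem:kappa0}, under any soft or collinear rescaling of the $L$-momentum $k^\mu$ this extra loop contributes $\kappa^a$ with $a\ge0$, and $a=0$ --- the only value that could engender an IR sensitivity --- is reached only if every line meeting $v_1$ and $v_2$ is on shell in that rescaling (and, in the collinear case, collinear to the direction of $k^\mu$); this is precisely the content of Eqs.~\eqref{kappa0_Eq1} and~\eqref{kappa0_Eq8} and the surrounding case analysis. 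If $L$ attaches through a $4$-point vertex, the \emph{4-point Lemma} (Lemma~\ref{lem:NoSC4pts}) and the counting of Eqs.~\eqref{kappa0_Eq2}--\eqref{kappa0_Eq7} already give $a\ge1$ whenever soft and collinear momenta meet there, so I may assume both $v_1,v_2$ are cubic vertices of the relevant type.

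First I would do the soft case, $k^\mu\to\kappa^2 k^\mu$. Inserting $v_1$ on $M$ replaces the single propagator of $M$, of momentum $p^\mu$, by two propagators, of momenta $p^\mu$ and $p'^\mu = p^\mu - k^\mu$, where $p^\mu$ is built only from the loop momenta and external momenta of $\Sigma$ and hence is independent of $k^\mu$. If the loop momenta of $\Sigma$ are left unscaled, then $p^2$ does not scale and $(p')^2=(p-k)^2\to p^2+\cO(\kappa^2)$ does not scale either, so the \emph{extra} propagator introduced at $v_1$ contributes $\kappa^0$, not the $\kappa^{-2}$ it would contribute if $M$ were on shell; since vertices never supply negative powers of $\kappa$, collecting the usual factors ($d^4k\sim\kappa^8$, $1/k^2\sim\kappa^{-4}$, and at worst $\kappa^{-2}$ from the extra propagator at $v_2$) gives $\kappa^2$ for the $L$-loop, so $L$ is not soft sensitive. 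The only remaining way to chase $a=0$ is to also rescale some of $\Sigma$'s loop momenta so that $M$ goes on shell, but that is a genuine soft or collinear rescaling of internal lines of $\Sigma$, which since $\Sigma$ has no IR-sensitive lines contributes $\kappa^{\ge1}$ from the $\Sigma$ part and again defeats the threshold. The collinear case, $k^\mu$ collinear to some lightlike $n^\mu$, runs the same way: by Eq.~\eqref{kappa0_Eq8} an $n$-collinear sensitivity of $L$ requires $M$ (hence $p^\mu$) collinear to $n^\mu$, which either is forbidden by momentum conservation --- the situation of the \emph{Collinear Lemma} (Lemma~\ref{lem:Collinear}) --- or again forces a non-trivial collinear rescaling of $\Sigma$'s internal lines and hence a $\kappa^{\ge1}$ suppression; and if $p^\mu$ is not collinear to $n^\mu$ the all-collinear vertex structure at $v_1$ is broken, so one picks up the customary $\kappa$ from the numerator as in Eqs.~\eqref{CollVert}--\eqref{numsupp2}. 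In every case the $L$-loop scales like $\kappa$ to a strictly positive power, so $L$ is neither soft nor collinear sensitive.

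The step I expect to be the real obstacle is making the ``simultaneous rescaling'' argument airtight: I need that rescaling $k^\mu$ together with a subset of $\Sigma$'s momenta cannot manufacture a $\kappa^0$ through cancellations between the two pieces --- equivalently, that the exponent of the combined rescaling is bounded below by the exponent of the $\Sigma$-only rescaling plus the exponent of the $k$-only rescaling at generic $\Sigma$ momenta. The natural way to organize this is to observe that the combined rescaling is again one of the rescalings of Eq.~\eqref{resc} for the enlarged diagram, and that putting the $L$-loop on top of the already-rescaled $\Sigma$ is ``adding a loop'' in exactly the sense used inside the proof of Lemma~\ref{lem:kappa0}; the one genuinely new thing to verify is that the propagators and vertices introduced at $v_1$ and $v_2$ still obey the bookkeeping of Eqs.~\eqref{kappa0_Eq1}--\eqref{kappa0_Eq10} when the line they attach to is itself being rescaled, after which the $a\ge0$ conclusion of the Log Lemma applies with the roles of $L$ and the remainder of $\Sigma$ interchanged.
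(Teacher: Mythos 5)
Your proposal is correct and follows essentially the same route as the paper: the soft case rests on the observation that a soft-sensitive $L$ would require the attachment line of the IR-insensitive subdiagram to go on shell, which costs a compensating positive power of $\kappa$ from that subdiagram, and the collinear case is handled by the \emph{Collinear Lemma} (Lemma~\ref{lem:Collinear}). The paper's proof is simply a terser version of this argument (and it leaves the "simultaneous rescaling" bookkeeping you worry about at the same heuristic level), so there is no substantive difference.
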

\begin{proof}
Since no line in the subdiagram is IR sensitive, in any soft or collinear limit the subdiagram scales like $\kappa^a$ for some $a>0$. First, consider whether the line $L$ can have a soft sensitivity. When $L$ becomes soft, it produces a loop that scales like $\kappa^0$ at most. However, this only happens if the lines it connects to are on-shell (or it produces an on-shell line elsewhere in the subdiagram). By assumption, one of these lines is an internal line from  the original subdiagram, so there is a corresponding $\kappa^a$ suppression from the rest of the subdiagram. Thus, overall the subdiagram is still soft insensitive and so is the line $L$. That $L$ cannot be collinear sensitive follows directly from the \emph{Collinear Lemma} (Lemma~\ref{lem:Collinear}). Thus $L$ is IR insensitive and the Lemma is proven. 
\end{proof}

\subsection{IR insensitivity of the hard amplitude}
\label{sec:colorlemmas}
Two immediate consequences of the above lemmas  completely characterize the hard amplitude:
\begin{lemma}  \ltag{Hard-Blue Lemma}
\label{lem:hardblue}
Any all-blue 1PI subdiagram containing the hard-scattering vertex is IR insensitive.
\end{lemma}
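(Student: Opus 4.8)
The plan is to establish IR insensitivity — meaning the subdiagram scales as $\kappa^a$ with $a>0$ under every soft and collinear rescaling of Section~\ref{sec:prelim} — by induction on the number of loops in the all-blue 1PI subdiagram containing the hard-scattering vertex, bootstrapping from the base case and then invoking the \emph{Zombie Lemma} (Lemma~\ref{lem:zombie}) to absorb each added loop. First I would set up the base case: a tree-level (zero-loop) 1PI subdiagram attached to the hard vertex has no loop momenta, so the only possibly-singular momenta are external collinear momenta becoming proportional; but 1PI-ness together with the all-collinear vertex suppression of \Eq{CollVert} (each all-collinear $3$-point vertex contributing a $p_\mu\Pi^{\mu\nu}(q)\sim\kappa$ factor, as in \Eq{numsupp2}) forces a strict $\kappa$-suppression. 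Actually the cleanest base case is to note that the bare hard vertex itself is just a constant (or a polynomial in external momenta) and hence trivially $\kappa^0$ with no IR sensitivity, since it has no propagators at all.

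Next I would run the inductive step. Suppose every all-blue 1PI subdiagram containing the hard vertex with $n-1$ or fewer loops is IR insensitive, and consider one with $n$ loops. Pick a loop and identify the line(s) of that loop; since the subdiagram is 1PI, removing any single internal line leaves it connected, and one can always find an internal line $L$ of the $n$-loop subdiagram whose removal yields an $(n-1)$-loop subdiagram that still contains the hard vertex and is still all-blue (1PI-ness guarantees no line is a bridge, so no such removal disconnects the hard vertex from the rest). By the induction hypothesis that reduced subdiagram has no IR-sensitive lines. Now add $L$ back: at least one end of $L$ attaches to an internal line of the reduced subdiagram (again by 1PI-ness and the fact that $L$ was not a bridge, both endpoints are at genuinely internal vertices), so the \emph{Zombie Lemma} applies directly and $L$ is IR insensitive. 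Since $L$ was the only line whose status could have changed, the full $n$-loop subdiagram remains all-blue-and-IR-insensitive, completing the induction. Separately I would check that the ``blue'' hypothesis is what kills the one loophole in the Zombie Lemma's applicability: blue lines are by construction soft-insensitive (Step~1, Lemma~\ref{SoftInsens}), so the only IR sensitivity one could worry about is collinear, and that is precisely what the Zombie Lemma / Collinear Lemma (Lemma~\ref{lem:Collinear}) rules out once an internal line of an IR-insensitive subdiagram is involved.

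The main obstacle I anticipate is the combinatorial bookkeeping in the inductive step: making precise that one can always peel off a loop by deleting a single internal line $L$ whose deletion (i) keeps the subdiagram 1PI (or at least keeps the hard vertex inside a single still-all-blue IR-insensitive component, which is all we need), and (ii) leaves $L$ with at least one endpoint on an internal line, so the \emph{Zombie Lemma} is literally applicable. For a general 1PI graph this requires a small graph-theoretic argument — essentially that a 2-edge-connected graph with a cycle has an edge lying on a cycle whose removal preserves connectivity — and one must be slightly careful about self-energy insertions and about loops formed entirely by propagator dressings on a single line. I expect this to be routine but worth stating carefully; the physics content (the $\kappa$-power-counting) is already entirely contained in Lemmas~\ref{lem:kappa0}, \ref{lem:Collinear} and \ref{lem:zombie}, so the remaining work is purely organizational.
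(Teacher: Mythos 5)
There is a genuine gap: your induction is missing the ingredient that actually seeds the IR insensitivity, namely that a 1PI subdiagram containing the hard vertex must have momenta from two different collinear sectors flowing through it, so that it contains a line $L$ joining lines which cannot become collinear simultaneously; by Lemma~\ref{lem:Collinear} (momentum-conservation branch) plus blueness, that cross-sector line is IR insensitive and provides the $\kappa^{a>0}$ suppression that everything else leans on. Your base case does not supply this. A bare hard vertex or a tree-level chain of collinear propagators satisfies ``no IR-sensitive lines'' only vacuously: it contributes no positive power of $\kappa$ in the dangerous limits, so the compensating suppression invoked in the proof of Lemma~\ref{lem:zombie} simply is not there. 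Indeed, if the Zombie Lemma could be applied with a purely tree-level base, it would ``prove'' that a gluon connecting two points of the same collinear leg (an ordinary self-energy insertion) is IR insensitive, which is false; what forbids that configuration inside the lemma's hypotheses is 1PI-ness together with the cross-sector seed, not the Zombie Lemma alone.

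The inductive step has the same problem in a different guise. Deleting an internal line from an $n$-loop 1PI subdiagram generally does not leave a 1PI subdiagram, so your induction hypothesis does not apply to the remainder, and your fallback (``keeps the hard vertex inside a still-IR-insensitive component'') is circular, since IR insensitivity of a non-1PI remainder is exactly what cannot be assumed (and is in general false). Your proposed selection criterion --- an edge on a cycle whose removal preserves \emph{connectivity} --- is too weak: take the 1PI subdiagram consisting of the hard vertex, a cross-sector gluon between legs $\ccO$ and $\ccT$, and a same-sector bubble on leg $\ccO$ inside the loop. Removing the cross-sector gluon preserves connectivity, but the remainder is not 1PI and its bubble is genuinely collinear sensitive, so the induction hypothesis fails for that choice; the ``good'' choice (peel the bubble, keep the cross-sector loop) can only be identified by the physics, i.e.\ by keeping a cross-sector loop through the hard vertex until the end. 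Once you add that requirement, you have essentially reconstructed the paper's argument, which runs the construction in the opposite direction: it first exhibits the cross-sector line $L$, shows it is IR insensitive via Lemma~\ref{lem:Collinear} and blueness, and then builds the subdiagram up line by line with Lemma~\ref{lem:zombie}, using 1PI-ness to exclude same-sector attachments that do not touch the already-insensitive network and treating any new cross-sector line as a fresh seed. Your purely graph-theoretic peeling, as stated, does not close.
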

\begin{proof}
Any 1PI subdiagram that contains the hard-scattering vertex must have momenta from two different collinear sectors piping through it. Consequently, there must be 
a line $L$ that connects between two lines that cannot simultaneously become collinear by momentum conservation.
The \emph{Collinear Lemma} (Lemma~\ref{lem:Collinear})  then implies that $L$ is not collinear sensitive. Since $L$ is blue (by hypothesis), it is soft insensitive as well, and hence IR-insensitive. Now, starting with the 1-loop graph containing $L$, we can build up the rest of the 1PI subdiagram by adding new lines (inserting vacuum loops in the middle of $L$ is allowed). Whenever a new line   connects to $L$, or to the network of lines previously connected to $L$, it is IR-insensitive by the \emph{Zombie Lemma} (Lemma \ref{lem:zombie}). Alternatively, a new line
 might connect to external lines of the subdiagram. If it connects two in the same sector, the graph cannot be 1PI. If it connects two in different sectors, the new line is IR-insensitive for the same reason $L$ is, and we can replace $L$ by this new line to continue our argument. Thus every line in the 1PI subdiagram is IR-insensitive, as was to be shown.
\end{proof}

\begin{lemma}  \ltag{Hard-Red Lemma}
\label{lem:nored}
Red lines cannot connect to internal lines of an all-blue 1PI subdiagram containing the hard-scattering vertex.
\end{lemma}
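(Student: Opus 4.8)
The plan is to proceed by contradiction, leveraging the structural result just established in the Hard-Blue Lemma together with the power-counting machinery of the \emph{Log Lemma}, \emph{Collinear Lemma}, and \emph{Zombie Lemma}. Suppose some red line $L_{\rm red}$ attaches to an internal line of an all-blue 1PI subdiagram $B$ containing the hard-scattering vertex. By the Hard-Blue Lemma, $B$ is entirely IR insensitive, so in any soft or collinear limit $B$ scales like $\kappa^a$ with $a>0$. A red line, by definition, is one whose simultaneous soft limit (together with whatever set $\Omega$ defines its coloring) engenders a soft sensitivity: the colored graph $G_\Omega$ scales like $\kappa^0$ when the lines in $\Omega$ are rescaled $\ell_i^\mu \to \kappa^2 \ell_i^\mu$. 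I would first observe that when $L_{\rm red}$ is taken soft, the only way it can contribute a $\kappa^0$ (non-suppressed) factor is if the lines it attaches to are forced on-shell --- this is exactly the content of \Eq{kappa0_Eq1}--\eqref{kappa0_Eq5} in the proof of the \emph{Log Lemma}: a soft loop gives an enhancement only when both of its endpoints lie on on-shell lines.

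The key step is then the following. Since one endpoint of $L_{\rm red}$ attaches to an \emph{internal} line of $B$, forcing that line on-shell as $L_{\rm red}\to$ soft means we are probing a soft/collinear region of the subdiagram $B$ itself. But $B$ is IR insensitive, so by the very definition of IR insensitivity that region contributes a genuine $\kappa^a$ suppression with $a>0$ --- $B$ does not blow up there. Therefore the net power count of the configuration (soft loop $\sim \kappa^0$ at best) $\times$ ($B$ restricted to the relevant region $\sim \kappa^a$, $a>0$) is $\kappa^{a}$ with $a>0$, hence soft insensitive. This contradicts the assumption that $L_{\rm red}$ is red. One must also dispose of the subtlety that $L_{\rm red}$ might be soft only as part of a larger set $\Omega$ of red lines going soft together: in that case I would note that at least the endpoint lying on $B$ still forces an internal line of $B$ on-shell (or else the vertex momentum is off-shell and there is no enhancement at all from that end), and $B$'s IR insensitivity again supplies the suppression, regardless of how the rest of $\Omega$ scales. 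The \emph{Collinear Lemma} handles the companion worry that $L_{\rm red}$ could instead be collinear sensitive rather than soft sensitive --- but red lines are by definition soft sensitive, so this is not actually needed for $L_{\rm red}$ itself; it is, however, the mechanism (via the \emph{Zombie Lemma}) that kept $B$ all-blue and IR insensitive in the first place, which is what we are using.

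The main obstacle, I expect, is making precise the claim ``$B$ restricted to the region where $L_{\rm red}$'s endpoint line is on-shell still supplies a $\kappa^{>0}$ suppression.'' One has to be careful that taking an internal line of $B$ on-shell is indeed one of the soft/collinear limits under which the Hard-Blue Lemma guarantees suppression --- i.e., that it is covered by the inductive power-counting of the \emph{Log Lemma} applied to $B$ as a subdiagram, with the on-shell line treated via the composite-vertex device of \Eq{kappa0_Eq1}. A secondary subtlety is bookkeeping the soft subtractions built into the definition of the colored graph $G_\Omega$: since all the maximal and sub-maximal soft limits of $G$ have been subtracted off in constructing the blue subdiagram $B$ (cf. \Eq{SoftSubtract}), one should check that the subtraction terms themselves cannot reintroduce a soft sensitivity along a line attaching to $B$ --- but this is precisely what the \emph{Soft-insensitivity Lemma} (Lemma~\ref{SoftInsens}) already guarantees, so I would invoke it directly rather than re-derive it. Once these points are in place, the contradiction closes and the lemma follows.
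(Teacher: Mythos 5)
Your proposal is correct and follows essentially the same route as the paper: the paper's proof is simply \emph{Hard-Blue Lemma} (the subdiagram is IR insensitive) plus the \emph{Zombie Lemma} (any line attaching to an internal line of an IR-insensitive subdiagram is itself IR insensitive) plus the definition of red lines as soft sensitive. Your contradiction argument re-derives the Zombie Lemma's power counting inline (soft enhancement requires on-shell endpoints, which the IR-insensitive blob compensates with $\kappa^{a>0}$), whereas the paper just cites that lemma directly, so the content is the same.
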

\begin{proof}
Any all-blue 1PI subdiagram containing the hard vertex is IR-insensitive by the \emph{Hard-Blue Lemma} (Lemma~\ref{lem:hardblue}). Any line connecting to an internal line of this subdiagram must also be IR-insensitive, by the \emph{Zombie Lemma} (Lemma \ref{lem:zombie}). Since red lines are soft sensitive, by definition, these lines cannot be red. 
\end{proof}

These two lemmas explain why some colored graphs are absent in physical gauges.
For example, as discussed in Section~\ref{sec:algex2}, the diagrams 
\be
\fd{1.5cm}{EikEx2_2.pdf} \qquad\text{and}\qquad \fd{1.5cm}{EikEx2_3.pdf}
\label{1PInsEx1}
\ee
are IR (in particular, soft) insensitive in generic-lightcone gauge and therefore, absent from the colored-graph decomposition. The diagrams
\be
\fd{1.5cm}{EikEx2_4.pdf} \;, \qquad
\fd{1.5cm}{EikEx2_1.pdf}  \qquad\text{and}\qquad  \fd{1.5cm}{EikEx2_5.pdf}
\label{1PInsEx2}
\ee
are present because the first two are IR divergent and the third is the IR-finite
``last'' graph in the decomposition. Note that  the second diagram in \Eq{1PInsEx2} does not satisfy the hypothesis of the \emph{Hard-Blue Lemma} (Lemma~\ref{lem:hardblue}) because without the red line, it is not a 1PI graph containing the hard vertex.

\subsection{Power-suppressed colored graphs}

\label{sec:powersupress}
So far, we have only characterized where the IR sensitivities are. Some diagrams, despite being IR sensitive contribute only at subleading power and can be dropped from a leading-power factorization theorem.  We have already seen an example of subleading diagrams in Section~\ref{sec:1loopEx2}. There, in particular in \Eq{Wils2} and \Eq{loglambda}, we found that for $q \parallel p_\ccO$, 
\be
 \fd{1.4cm}{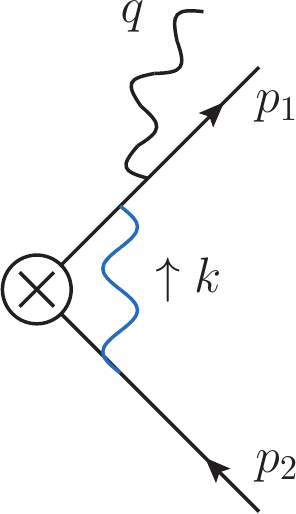} \sim \frac{1}{\lambda}
\quad\And\quad
 \fd{1.4cm}{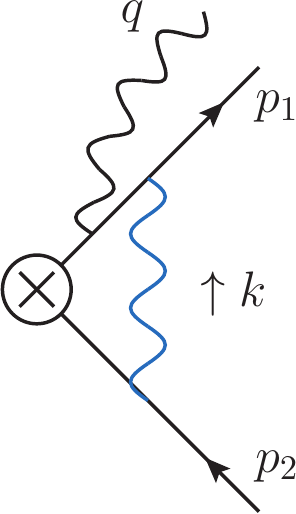} \sim \ln \lambda
\ee
In this example, the soft-insensitive loop in the first graph is IR-finite, so the $\lambda^{-1}$ comes from the tree-level splitting on external leg. In the second graph, the loop is tangled with the emission. At $\lambda=0$, the graph would be divergent, but for $\lambda>0$ it is not. Thus the graph scales like $\ln\lambda \ll \lambda^{-1}$.   The second graph is therefore subleading compared to the first and can be dropped. 
In a sense, the IR-insensitive loop {\it eats} the enhancement of the real emission. This is to be contrasted with IR-sensitive loops which do not eat emissions. For example, 
\be
\fd{1.4cm}{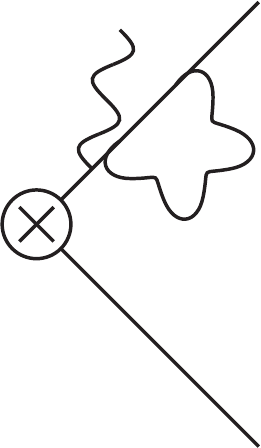} \;\sim\; \frac{\ln 0}{\lambda}
\;,\qquad
\fd{1.4cm}{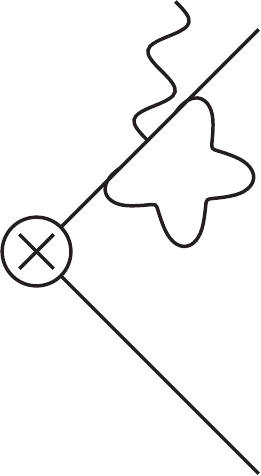} \;\sim\; \frac{\ln \lambda}{\lambda}
\quad\And\quad
\fd{1.4cm}{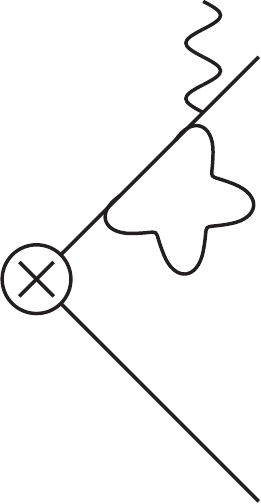} \;\sim\; \frac{\ln \lambda}{\lambda}
\ee
In each case, the graphs are divergent without the emission. In particular, the loop in the second graph cannot eat the emission.

The generalization of this example is embodied in the following lemma:
\begin{lemma} \ltag{Loop-emission Lemma}
\label{lem:EatPower}
Any diagram with an IR-insensitive 1PI subdiagram that has a real emission attached to an internal leg is power suppressed compared to a corresponding diagram where the emission comes off of an external leg. 
\end{lemma}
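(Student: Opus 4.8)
The plan is to generalize the explicit one-loop comparison of $G^{\lr{1}{2},a}$ with $G^{\lr{1}{2},b}$ carried out in Section~\ref{sec:1loopEx2} (Eqs.~\eqref{Wils2} and~\eqref{loglambda}). Write $G$ for the diagram in which the emitted particle, of momentum $q^\mu$ lying in collinear sector $\ccj$, attaches to an internal line $L$ of the IR-insensitive 1PI subdiagram $H$, and write $\bar G$ for the corresponding diagram in which $q^\mu$ is instead emitted off the external leg $p_\ccj^\mu$ of that sector. The two differ only in the local structure near the emission vertex — in both, the net momentum $p_\ccj^\mu+q^\mu$ flows into $H$, and all other emissions and loops are identical — so it suffices to compare the power of $\lambda$ supplied by the emission vertex together with its adjacent propagators, and to show that $G/\bar G=\cO(\lambda)$.

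First I would record the power counting for the external-leg emission in $\bar G$. There the emitted line and the exactly on-shell leg $p_\ccj^\mu$ meet at a vertex next to a propagator $1/(p_\ccj+q)^2=1/(2\,p_\ccj\cdot q)$; because $p_\ccj^2=q^2=0$ identically this propagator is unprotected and scales like $\lambda^{-2}$ in the limit $q^\mu\parallel p_\ccj^\mu$, while the accompanying vertex--wavefunction numerator is down by one power of $\lambda$ (transversality of the emitted gluon, exactly as in the reduction of $p\cdot\Pi(k)$ in Eq.~\eqref{numsupp2}, or the analogous tree splitting in \tree). The net enhancement $\sim\lambda^{-1}$ is the familiar leading-power behavior of a real-emission graph, and since $q^\mu$ belongs to sector $\ccj$ this is a self-collinear emission, so it is not further suppressed.

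Next I would power count the internal-leg emission in $G$. Routing momenta so that the two halves of the split line $L$ carry $\ell^\mu$ and $\ell^\mu-q^\mu$, the emission vertex stands next to a new propagator $1/(\ell-q)^2=1/(\ell^2-2\,\ell\cdot q)$. For this propagator to mimic the $\lambda^{-2}$ of the external case, the virtual line carrying $\ell^\mu$ — a piece of the internal line $L$ of $H$ — would have to be driven onto a pinched on-shell configuration collinear to $p_\ccj^\mu$ (the only other option, $\ell^\mu$ soft, is excluded because $L$ is soft insensitive by hypothesis). But $L$ is an internal line of the IR-insensitive subdiagram $H$, so by the \emph{Collinear Lemma} (Lemma~\ref{lem:Collinear}) it is collinear insensitive: such a configuration is either forbidden by momentum conservation or comes with a compensating factor of $\kappa$ from the physical-gauge numerators at its endpoints, cf.\ Eqs.~\eqref{numsupp2} and~\eqref{CollVert} — and inserting the (now all-collinear) emission vertex removes neither obstruction. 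Hence the new propagator is at worst logarithmically enhanced, and by the \emph{Log Lemma} (Lemma~\ref{lem:kappa0}) so is the integration over $\ell^\mu$; the emission vertex therefore contributes at most $\ln\lambda$, with no power of $\lambda^{-1}$. The same argument, with the collinear rescalings replaced by the soft one, disposes of a soft emission attached inside $H$: soft insensitivity of $H$'s internal lines precludes any eikonal $\lambda_s^{-1}$ enhancement.

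Combining the two counts, the emission supplies $\sim\lambda^{-1}$ in $\bar G$ but at most $\ln\lambda$ in $G$, every other factor being common, so $G/\bar G=\cO(\lambda\ln\lambda)=\cO(\lambda)$, which is the claim. The step I expect to be the main obstacle is the internal-leg bookkeeping: one must verify that \emph{no} propagator adjacent to, or downstream of, the emission vertex inside $H$ can be pushed onto a pinched on-shell/collinear configuration without triggering a momentum-conservation obstruction or the very physical-gauge $\kappa$-suppression that underlies the IR insensitivity of $H$ — i.e.\ that grafting on the real emission cannot revive a collinear enhancement that the \emph{Collinear Lemma} has already killed. Making this airtight for an arbitrary internal line of $H$, including the composite-vertex situations encountered in the proof of the \emph{Log Lemma}, is the technical heart of the argument; the rest is the power counting sketched above.
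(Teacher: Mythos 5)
Your proposal is correct and follows essentially the same route as the paper: both arguments trade on the fact that the IR-insensitive 1PI subdiagram scales like $\kappa^a$ with $a>0$, so the region of loop momentum where the new internal propagator $\sim(\lambda^2+\kappa^2)^{-1}$ would reproduce the external-leg $\lambda^{-2}$ enhancement is measure-suppressed, leaving at most a $\ln\lambda$ instead of a $\lambda^{-1}$. The paper just makes your ``technical heart'' quantitative in one stroke by keeping both scales in the propagator and vertex, $\tfrac{\kappa+\lambda}{\kappa^2+\lambda^2}$, and evaluating $\int d\kappa\,\kappa^{a-1}\bigl(\tfrac{\lambda+\kappa}{\lambda^2+\kappa^2}\bigr)^m$ for $m$ internal emissions at once, rather than arguing line-by-line via the \emph{Collinear Lemma}.
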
 

\begin{proof}
An IR-insensitive subdiagram that is 1PI has at least one overall power of suppression when approaching the soft and collinear limits. That is, it scales like $\kappa^a$ for some $a>0$. Suppose some line in the loop has momenta $q+k$ in it, where $q$ is the external momenta and $k$ is the loop momenta. Adding an external collinear emission connecting inside the loop gives an additional propagator with momentum $p+q+k$ with $p$ the new external momenta. Since $(p+q)^2\sim \lambda^2$, when $k$ goes collinear to $q$, this propagator scales like 
 \be
 \frac{1}{(p+q)^2 + 2 (p+q) \cdot k + \kappa^2} \sim \frac{1}{\lambda^2 + \kappa^2}
 \ee
In physical gauges, the vertex contracted with the  polarization gives $(p+2q+2k)\cdot \epsilon \sim \lambda + \kappa$ when all of these momenta are collinear. The net effect is
therefore $\dfrac{\kappa + \lambda}{\kappa^2 + \lambda^2}$. 
So if of $n$ emissions, $m$ are inside the loop the diagram scales like
\be
\frac{1}{\lambda^{n-m}} 
\int d \kappa\, \kappa^{a-1} \left( \dfrac{\lambda + \kappa}{\lambda^2+ \kappa^2} \right)^m \;\sim\;
\begin{cases}
 \dfrac{\ln \lambda}{\lambda^{n-a}},\quad m \ge a\\[4mm]
\dfrac{1}{\lambda^{n- m} }, \quad m < a
\end{cases}
\ll\; \frac{1}{\lambda^n}
\ee
Thus the diagrams with any number of collinear emissions coming from within the loop are power suppressed compared to the diagram with $m=0$, where
all the emissions are outside the loop.

Soft emissions are similar. Adding a soft emission to an IR-insensitive subdiagram gives $(\lambda^2 + \kappa^2)^{-1}$
for the propagator, as before but now $(p+k+q)\cdot \epsilon \sim 1$ since although $k$ and $q$ are soft, $p$ is not. Thus each new
 emission from within the loop gives $(\lambda^2 + \kappa^2)^{-1}$ compared to $\lambda^{-2}$ from outside the loop, and becomes suppressed upon integration as above.

Thus, for either soft or collinear emissions, emissions coming out of an IR-finite loop (or an IR-finite, 1PI subdiagram) are power suppressed and can be dropped at leading power.
\end{proof}

A final lemma finishes the required ingredients for the advertised reduced diagram picture:

\begin{lemma} \ltag{Self-Collinear Lemma}
\label{lem:OnlySelfColl}
Graphs where a collinear gluon is emitted from a leg to which it cannot be collinear near an IR sensitivity are power suppressed compared to graphs where the gluon can be collinear to the leg it is emitted from
near an IR sensitivity.
\end{lemma}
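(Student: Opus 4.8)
The plan is to promote to all orders the explicit one-loop analysis of Section~\ref{sec:1loopEx2}, where the graphs with the photon emitted off the ``wrong'' leg, Eq.~\eqref{RedSoft3}, were found to be power suppressed relative to $G^{\lr{1}{2},a}$, and its tree-level counterpart in \tree. Fix an IR-sensitive region in which some set of lines becomes $n_\ccj$-collinear, and suppose a (real or virtual) gluon of momentum $q^\mu$, with $q\parallel n_\ccj$ in that region, attaches through a three- or four-point vertex to a line carrying momentum $p^\mu$ with $p\notparallel n_\ccj$ there. I compare this graph, call it $G_{\text{wrong}}$, with the sibling graph $G_{\text{right}}$ in which the \emph{same} gluon line is instead reattached to a line that \emph{is} $n_\ccj$-collinear in the region, with all other vertices and propagators left untouched. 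The lemma follows once I show that the insertion contributes one extra power of $\lambda$ to $G_{\text{wrong}}$ relative to $G_{\text{right}}$.

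For the power counting of the insertion: in $G_{\text{right}}$ the emission off an $n_\ccj$-collinear line $p'^\mu$ produces a propagator $1/(p'\pm q)^2$; since $p'\parallel q$ in the region, $(p'\pm q)^2\sim 2p'\cdt q\sim\lambda^2 Q^2$, an enhancement of $\lambda^{-2}$. The vertex contracted with the gluon polarization (for a real emission), or with the numerator $\Pi^{\mu\nu}(q)$ of the gluon propagator (for a virtual gluon), is suppressed by exactly one power of $\lambda$, which is the statement $p'\cdt\epsq(q)\sim\lambda$ for $p'\parallel q$, equivalently $p'_\mu\Pi^{\mu\nu}(q)\sim\lambda$ as in Eqs.~\eqref{numsupp1} and~\eqref{numsupp2}. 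The net scaling of the inserted structure is therefore $\lambda^{-1}$. In $G_{\text{wrong}}$ the emission off $p^\mu$ produces $1/(p\pm q)^2$ with $(p\pm q)^2\sim 2p\cdt q\sim Q^2$, no enhancement since $p\notparallel q$, while the vertex numerator $p\cdt\epsq(q)$ (or $p_\mu\Pi^{\mu\nu}(q)$) is now generic, $\sim\lambda^0$. Hence the insertion scales like $\lambda^0$ and $G_{\text{wrong}}$ is down by one power of $\lambda$.

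The remaining task is to check that nothing else changes to leading power between the two graphs. The line carrying $p\pm q$ in $G_{\text{wrong}}$ is off-shell by $\sim Q^2$, hence a hard line that seeds no IR sensitivity; it is effectively absorbed into the hard subgraph, and the rest of the diagram---hard blob, soft subdiagram, the other jets---is common to the two graphs and power counts identically. The four-point-vertex variant is handled exactly as in the $G^{\lr{1}{2},c}$ analysis of Section~\ref{sec:1loopEx2}: a four-point emission adds one fewer propagator but removes one factor of the momentum-proportional all-collinear vertex of Eq.~\eqref{CollVert}, so it still produces no enhancement when $p\notparallel q$ and the same relative-$\lambda$ conclusion holds. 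One subtlety is the case where the emitting line $p^\mu$ is itself soft rather than hard or $n_\cci$-collinear: then $p^\mu$ belongs to the soft subdiagram, the emission is really ``from the soft blob,'' and is governed by the eikonal/soft-Wilson-line structure and the \emph{4-point Lemma} (Lemma~\ref{lem:NoSC4pts}) rather than by a self-collinear versus non-self-collinear comparison, so it is not of the type covered here.

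The main obstacle I anticipate is making the ``corresponding diagram'' comparison fully rigorous: one must exhibit, for each $G_{\text{wrong}}$, a well-defined sibling $G_{\text{right}}$ with the emission moved onto an $n_\ccj$-collinear line, and argue that no cancellation among a set of $\cO(\lambda^0)$ graphs can conspire to reach $\lambda^{-1}$---which it cannot, since each is individually $\cO(\lambda^0)$, exactly as in Eq.~\eqref{loglambda}. When no $n_\ccj$-collinear line is available to receive the emission, so that there is no sibling, one argues directly: the reduced-diagram structure already established by Lemmas~\ref{lem:kappa0}--\ref{lem:EatPower} guarantees that the leading-power contributions in the $n_\ccj$ direction scale like $\lambda^{-m}$ with $m$ the number of $n_\ccj$-collinear emissions, all self-collinear, so a graph with a non-self-collinear $n_\ccj$-collinear emission---scaling as $\lambda^0$ at the insertion---is simply subleading and may be dropped. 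Combined with the \emph{Loop-emission Lemma} (Lemma~\ref{lem:EatPower}), this forces all leading-power collinear radiation in the $n_\ccj$ direction to be emitted from the $n_\ccj$-collinear external leg, which is exactly the jet structure of the reduced diagram in Eq.~\eqref{pinches1}.
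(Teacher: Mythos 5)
Your core comparison is the right mechanism and is essentially the one the paper leans on: a self-collinear attachment gives an enhanced propagator $1/(p'+q)^2\sim\lambda^{-2}Q^{-2}$ times a physical-gauge vertex suppression $p'\cdot\epsilon(q)\sim\lambda$ (Eqs.~\eqref{numsupp1}--\eqref{numsupp2}), i.e.\ a net $\lambda^{-1}$, while a wrong-leg attachment gives neither, and your use of the \emph{Loop-emission Lemma} (Lemma~\ref{lem:EatPower}) to absorb the resulting off-shell line into the hard subgraph parallels the paper's localization step. The genuine gap is your explicit exclusion of the case in which the emitting line is soft near the IR sensitivity. That case is squarely inside the lemma's scope (a soft line is a leg to which a collinear gluon cannot be collinear), and it is needed downstream: the reduced diagram of Eq.~\eqref{pinches2} requires that collinear emissions attach only to their own jet blobs and not to the soft subdiagram. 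Your deferral does not cover it: the \emph{4-point Lemma} (Lemma~\ref{lem:NoSC4pts}) concerns soft-sensitive gluons attaching to soft-insensitive lines through four-point vertices, not a hard collinear emission attaching through a three-point vertex to a soft line, and the soft Wilson lines only reproduce soft radiation, so they say nothing about the suppression of a collinear emission from the soft function. The paper closes this case by first using Lemmas~\ref{lem:EatPower} and~\ref{lem:hardblue} to conclude that the emission attaches to an IR-sensitive subdiagram whose lines, near the sensitivity, are all soft or collinear to a single direction, and then appealing to the tree-level counting of \tree, which covers both the wrong-direction and the soft attachment (for a soft emitter $k$ the adjacent propagator and the physical-gauge vertex numerator, $\sim 2k\cdot\epsilon(q)$, combine to give no inverse power of $\lambda$, so the insertion is again down by one power relative to the self-collinear one).

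A secondary problem is your fallback for the ``no sibling'' case: you invoke the reduced-diagram structure with all collinear emissions self-collinear, but that is precisely the content of this lemma, and the reduced diagram of Eq.~\eqref{pinches1} is itself built using it, so as written the step is circular. The paper's route avoids needing a companion graph at all: once the attachment is localized to a single-sector IR-sensitive subdiagram, the conclusion is a statement about the scaling of the insertion alone ($\lambda^{-1}$ for self-collinear versus $\lambda^{0}$ otherwise), so no sibling need be exhibited. If you keep the reattachment construction for intuition, replace the fallback by this direct scaling statement, and extend the insertion counting to soft emitters as above; with those two repairs your argument coincides with the paper's proof.
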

\begin{proof}
This lemma was proven for tree-level graphs in~\tree, using that self-collinear emissions have an enhanced propagator compared to non-self-collinear ones.  The subdiagram to which a collinear emission is connected must be IR sensitive, by the previous Lemma (Lemma~\ref{lem:EatPower}), and therefore, cannot connect to 1PI subdiagram containing the hard vertex with only blue lines, by the \emph{Hard-Blue Lemma} (Lemma~\ref{lem:hardblue}). Thus the subdiagram to which the emission is connected can only contain external momenta associated with a single collinear sector before the emission is added. Thus, near an IR sensitivity all of the propagators in the subdiagram are either soft or collinear to the same direction and the lemma follows from the same reason it did at tree level.
\end{proof}

That completes the lemmas. As a reminder, all of these lemmas hold in physical gauges, as defined at the beginning of this section,
and are generally violated in Feynman or other covariant gauges.

\subsection{General reduced diagram}
With these lemmas we have all of the rules required to {\it reduce} the most general graphs that contribute to $N$-jet-like scattering in a physical gauge. We first expand the various loop momenta and soft external momenta in their soft limit to write a diagram as a sum of colored diagrams with soft-sensitive virtual particles and soft-insensitive ones. The lemmas guide the coloring, by
indicating where the soft sensitivities can be, they indicate which red or blue lines can have associated collinear sensitivities, and which colored diagrams are power suppressed (even if IR sensitive) compared to other diagrams with the same external states
at the same order in perturbation theory.

To draw the physical-gauge reduced diagram, first note that the \emph{Hard-Blue Lemma} (Lemma~\ref{lem:hardblue})
 tells us that each diagram has an IR-insensitive core, given by the largest-possible 1PI subdiagram containing the hard vertex
 which has only blue lines. 
 By the \emph{Loop-emission Lemma} (Lemma~\ref{lem:EatPower}), no real emissions can come out of this core. Thus the hard core connects
to the rest of the diagram only through a single line in each sector. 

Now let us temporarily ignore red lines. Then there are only collinear singularities. By the \emph{Collinear Lemma} (Lemma~\ref{lem:Collinear}), it is impossible for any IR-sensitive graph to involve external momenta from two different collinear sectors. Thus,
 outside of the IR-insensitive core, the only collinear-sensitive subdiagrams are self-energy-type corrections to each sector. No blue lines go between
 sectors, or they would remove the IR sensitivity, by Lemma~\ref{lem:hardblue}, and should have been included in the core. 
Moreover, all collinear emissions come from self-collinear sectors, by the
 \emph{Self-Collinear Lemma} (Lemma~\ref{lem:OnlySelfColl}).
 Now add  the red lines back in. These lines can connect anywhere, except to the IR-insensitive core by Lemma~\ref{lem:nored}.

We have therefore shown that  any colored diagram can be drawn as
\be
  \addtolength{\fboxsep}{5mm}
\boxed{
\bra{X_\ccO\cdots X_\rN;X_\scs} \cO \ket{0}
 \;\overset{\substack{\text{physical} \\[0.5mm] \text{gauges} }}\LPeq\; 
\sum_\text{diagrams}
\fd{6cm}{AllLoopPinch.pdf}
}
\label{pinches2}
\ee
This reduced diagram has all the properties claimed in Section~\ref{sec:proofoutline}.
We call the sum over soft-insensitive (blue) 1PI subdiagrams involving the hard vertex the {\bf hard amplitude}
 and the sum of all soft-insensitive (blue) corrections to each external leg the {\bf jet amplitude}.
All of the soft-sensitive (red) lines are in the {\bf soft amplitude}, which is not necessarily connected.
 Note that these are amplitudes, in contrast to the common use of hard jet and soft functions to refer to squares of the amplitudes. 
This reduced diagram displays hard factorization. We have not yet shown how the jet and soft amplitudes can be disentangled which requires soft-collinear factorization.

 In generic lightcone gauge, where there are no ghosts, every line in or exiting $S$ is soft sensitive and is colored red. 
Because all the lines entering $S$ are soft-sensitive, no momenta within $S$ can be dropped with respect to any other momenta. Thus, there is no expansion done
by the coloring algorithm applied to $S$ and the loops within $S$ are given by the full-QCD Feynman rules. The lines leaving $S$ connecting to the $J_\ccj$ blobs
have been expanded, and have eikonal interactions with the $J_\ccj$ blob. 
As we will see in the next section, in factorization gauge, there are ghosts in the $S$ blob. Ghosts are always IR-insensitive, thus they should be colored blue. Since the
ghosts are blue without any expansion, the $S$ blob still contains all the unmodified loops of full QCD. 
In summary, in any physical gauge, the $S$ amplitude connects to the rest of the diagram
through soft-sensitive (red) lines with eikonal interactions and all the internal loops of $S$ are the same as in full QCD.

Before moving on to soft-collinear factorization, we pause to discuss the physically rich structure of the reduced diagram in \Eq{pinches2}. The hard factorization displayed here is a consequence of the geometrical property that the jet and soft subdiagrams attach to the hard subdiagram by a single line. Moreover, near the IR sensitivities in the loops, this line is almost on-shell and carries the net momentum of the jet. The hard subdiagram is therefore a completely independent process that depends only on a single net momentum and the overall quantum numbers for each collinear sector. Since the hard subdiagram has a smooth $\lambda\to0$ limit, it is completely insensitive to corrections of order $\lambda$; namely, it is completely insensitive to the distribution of collinear momenta among the external states $\bra{X_\ccO\cdots X_\rN;X_\scs}$.

The IR-finiteness of the hard amplitude arises because, in physical gauges, there are additional suppression factors from numerators in regions where the virtual particles go on-shell.  Since the hard amplitude is IR-insensitive, all the dynamics it encapsulates takes place at short-distance. Only distances of order
  $(\Delta x)_H = (P_\cci \cdot P_\ccj)^{-1/2}$ are relevant. Since the hard diagram communicates with the rest of the process only through the single lines which are off-shell by of order $\lambda$, these
interactions take place at distances $(\Delta x)_J\sim\lambda^{-1} (\Delta x)_H$ away from the hard core.
The subsequent non-soft (i.e. collinear) interactions take place around $(\Delta x)_J$, but in different directions. These collinear particles can then only communicate with each other through the exchange of long-wavelength modes, at distances of order  $(\Delta x)_S = \lambda^{-2}(\Delta x)_H$. The single particle in each sector coming out of the hard vertex corresponds to the single partons in hard matrix elements which can be calculated first and then either showered through a Monte Carlo event generator or convolved against analytic jet and soft functions in an inclusive calculation.

It is important to note that the intuitive picture drawn in \Eq{pinches2} is only valid in physical gauges, such as generic-lightcone gauge. In Feynman gauge or non-generic-lightcone gauges with enhanced polarization vectors \Eq{pinches2} is totally destroyed and the factorization becomes completely opaque~\cite{Collins:1989gx}. Although this seems like an esoteric point, these unphysical gauges are often used in discussions of factorization, such as in the original formulation of SCET~\cite{Bauer:2000ew,Bauer:2000yr}. For more discussion of this point see \tree.

\section{Step 3: Factorization Gauge}
\label{sec:factorizationgauge}
We saw in the previous section that generic-lightcone gauge limits the types of
diagrams which can contribute at leading power. Let us temporarily imagine restricting the region of integration of the loop momenta so that the soft-sensitive lines are forced to be soft 
and the soft-insensitive lines are forced to be collinear to some direction (instead of integrating them over  $\mathbb{R}^{1,3}$ like we should).
Then each reduced diagram would just be some integrals over soft and collinear particles with the same topologies as discussed in~\tree, and it seems like the same proof of soft-collinear decoupling would apply nearly unchanged.
However, \tree~made heavy use of the freedom to choose different reference vectors for different external particles. In particular,  a different reference vector $r_\ccj^\mu$ is chosen for each distinct collinear sector as well as another, $r_\scs^\mu$, for the soft sector. For this to work at loop level, we need to be able to choose the reference vector for a lightcone-gauge propagator to depend on the direction that the virtual gluon is going. We call a gauge with this flexibility {\bf factorization gauge}. Factorization gauge is critical to our proof and will be useful even when the virtual phase space is unrestricted over $\mathbb{R}^{1,3}$.

This section introduces factorization gauge. In factorization gauge, ghosts do not completely decouple, as they do in lightcone gauge. However, we will show that ghosts do not give rise to additional IR sensitivities. 
The next section will use factorization gauge to rigorously prove soft-collinear factorization, following essentially the same procedure as in~\tree.

\subsection{Definition}
We would like to be able to choose a different lightcone-gauge reference vector for each sector in the reduced diagram, which is the loop-level equivalent of choosing different reference vectors for the polarizations of each sector which was done in \tree. That is, we would like to choose a gauge such that the numerator of the gluon propagator is given by:
\be
\Pi^{\mu\nu}(k) = -g^{\mu\nu} + \frac{r^\mu(k) k^\nu + r^\nu(k) k^\mu}{r(k)\cdot k}
\label{pimnk}
\ee
with
\be
r^\mu (k) = \left\{ \begin{array}{cc}
r_\scs^\mu \;, \quad &  k  \text{ soft} \\
r_j^\mu  \;,  \quad &  k \parallel p_\ccj \\
r_h^\mu \;, \quad & \text{otherwise}
\end{array} \right.
\label{fgdef}
\ee
We assume $r_\scs^\mu$, $r_\ccj^\mu$ and $r_h^\mu$ to be lightlike and take $r_\scs^\mu$ and $r_\ccj^\mu$ as the reference vectors for polarizations of soft and collinear external gluons. 
Given that for loop momenta $k$ being soft or collinear is equivalent to $-k$ being the same, we will further define
\be
r^\mu(k) = r^\mu(-k)
\ee
so we only need to specify $r^\mu$ for positive-energy momenta. In practice, we will only use two different reference vectors: $r_\ccj^\mu = r_\ccc^\mu$ for all $\ccj$ and $r_h^\mu=r_\scs^\mu$. Although our arguments will only use the freedom to choose $r_\ccc$ and $r_\scs$ separately, we define factorization gauge with the full $N+2$ different reference-vector choices since this is consistent with our freedom to choose the reference vectors for the external gluons separately.
\footnote{Lightcone gauges with different (constant) reference vectors for different sectors have appeared in the SCET literature \cite{GarciaEchevarria:2011md}.}

To be concrete, we can make  Eq.~\eqref{fgdef} precise by chopping up phase space. For example, we can draw a Euclidean ball of size $\lambda^2 \,Q$ around $k=0$ for the soft region, draw cones of angle $\lambda$ around each jet region, and let everything else count as hard.  
The precise partitioning will not matter for the proof of factorization. 

Note that both soft-sensitive and soft-insensitive gluons have unrestricted momenta. For example,
 soft-sensitive (red) lines can be collinear or hard in which case their propagator has $r_\ccj$ or $r_h$. Factorization gauge does not assign a different reference vector to different lines in the reduced diagram (which would not be gauge-invariant). 
The assignment of reference vector is based only on the gluons' momentum, which is a legitimate gauge choice.

To implement this gauge choice into the Lagrangian, we can use the following non-local gauge fixing term:
\be
\cL_\text{g-f}(x) 
	= -\frac{1}{2\xi} \Big(r^\mu(i\partial) \,A_\mu^\sua(x) \Big)^2
\ee
 and then take $\xi \to \infty$. This gives a Faddeev-Popov determinant of
 \be
 \det\Big(\frac1g r(i\partial)\cdot D^{\adj}\Big) = \int \cD c\, \cD \bar c\, \exp\bigg(-i\int d^4x\, \bar c^\sua\, \big( r(i\partial)\cdot D^{\sua \sub}\big) \, c^\sub \bigg)
 \ee
 Therefore, the ghosts couple to gluons via
 \be
 \fd{3cm}{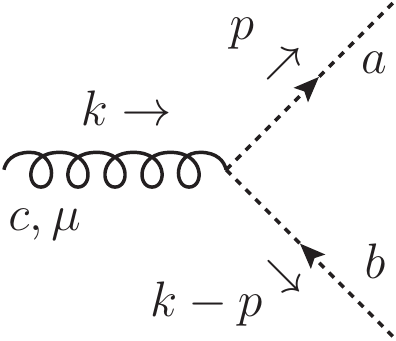} \quad\propto\; g\, f^{\sua \sub \suc}\, r^\mu (p)
 \ee
Thus, the vertex Feynman rule depends on $r^\mu(p)$ with $p^\mu$ the momentum of the ghost. 

The gluon propagator is
\be
\fd{2.5cm}{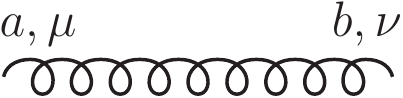} \quad = \quad \frac{i\,\delta^{\sua \sub}}{k^2\pie} \, \Pi^{\mu\nu}(k)
\ee
with $\Pi^{\mu\nu}(k)$ given in Eq.~\eqref{pimnk} which satisfies (for lightlike $r^\mu$)
\be
r_\mu(k)\, \Pi^{\mu\nu}(k)=0
\label{rpzero}
\ee
Recall that in lightcone gauge (where $r^\mu$ is constant), although the ghost-gluon vertex is still proportional to $r^\mu$, any graph
where a ghost couples to a virtual gluon is zero, due to Eq.~\eqref{rpzero}. If $r^\mu$ is also the reference vector of the external polarizations,
then the ghosts completely decouple diagram-by-diagram (for a different choice of external reference vector, individual diagrams with ghosts may not vanish
but their sum must due to the Ward identity, which guarantees reference-vector independence).
In factorization gauge, when a gluon of momentum $k$ couples to a ghost of momentum $p$, where $r(k) \neq r(p)$, the vertex will not be orthogonal to the gluon propagator or polarization.
Thus,  ghosts do not completely decouple in factorization gauge. Nevertheless, ghosts play a very small role in factorization, as we now show.

 \subsection{Ghosts Decoupling}
\label{sec:ghosts}
Although ghosts do not completely decouple, we will now show that ghosts cannot give rise to IR sensitivities. In particular, this means that ghost lines can never be red and can only contribute 
IR-insensitive loops internal to the hard, jet and soft blobs of \Eq{pinches2}. 

The fact that ghost loops do not give rise to IR sensitivities can be anticipated using unitarity. Independent of the gauge choice, we are always free to choose different reference vectors for the polarizations of external gluons in different IR sectors (as was extensively used in \tree). By unitarity, these on-shell soft and collinear gluons should be in one-to-one correspondence with cuts of loops near IR singularities. We then expect that in a gauge consistent with choosing different reference vectors for different IR sectors (i.e. factorization gauge) ghosts should not exist in IR-sensitive loops, since the ghosts cannot exist as external particles.

Ghosts cannot be part of IR-sensitive loops because near the IR-sensitive regions of integration, factorization gauge looks like a regular lightcone gauge in which ghosts decouple. That is, because the sum of soft momenta is soft and the sum of collinear momenta (to a single direction) is collinear, the all-soft and all-collinear ghost-ghost-gluon vertices vanish when contracted with the gluon propagator or external polarization exactly as they do in lightcone gauge. Therefore, 
ghosts will only modify the internal structure of the hard, jet and soft blobs by adding to them IR-insensitive loops.
  
What other types of vertices can give rise to IR sensitivities? Momentum conservation rules out the possibility of vertices with off-shell and two collinear momenta or off-shell and a soft and collinear momentum. The \emph{Collinear Lemma} (Lemma ~\ref{lem:Collinear}) says that a vertex with an off-shell momentum and two-hard on-shell momenta that are not collinear to each other cannot give rise to an IR sensitivity. So, we only need to consider ghost loops with singularities where the vertices in the loop have mixed-on-shell momenta.
There are then two
possibilities
\begin{enumerate}
\item Collinear ghost/soft ghost/collinear gluon, such as in
\be
\fd{5cm}{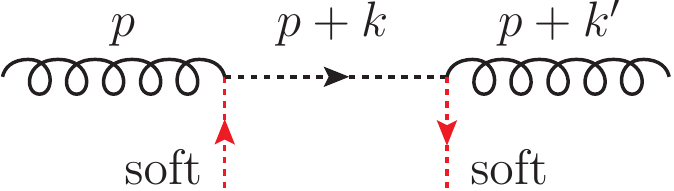}
\label{firsttype}
\ee
 where $p$ is a collinear and $k$ and $k'$ are soft. Or
\item Collinear ghost/collinear ghost/soft gluon,
\be
\fd{3.2cm}{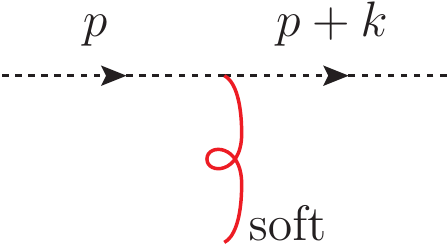}
\label{secondtype}
\ee
 with $p$ collinear and $k$  soft.
\end{enumerate}
 
In situations of the first type, one of the vertices is proportional to $r^\mu(k'-k)$ and the other vertex to $r^\mu(p+k)$. The $r^\mu(k'-k)$ is not orthogonal to the collinear-gluon propagator, $\Pi^{\mu\nu}(p)$, because $k$ and $k'$ are soft, so this vertex will not vanish. However, these non-vanishing vertices are always accompanied with the other vertex which is proportional to $r^\mu(p+k)$ which is equal to $r^\mu(p)$ since $p \parallel p+k$, and $r^\mu(p)$ is orthogonal to $\Pi^{\mu\nu}(p)$. Hence graphs with segments like in Eq.~\eqref{firsttype} always vanish near the singularity. A vertex of the second type, Eq.~\eqref{secondtype} does not automatically vanish on its own, since $r^\mu(p) \neq r^\mu(k)$. However, since there are no external ghosts, a ghost with a collinear momentum can only give rise to an IR sensitivity if it came from a gluon with collinear momentum. Thus there must be a vertex of the first type somewhere in the graph making the graph vanish in the IR sensitive region of the ghost.

That being said, we are not arguing that the soft gluon in \Eq{secondtype} cannot give rise to a soft sensitivity irrespective of the ghost momentum; we are only showing that the ghost lines themselves cannot give rise to IR sensitivities when they go on-shell. For example, we could have the following soft-sensitive graphs:
\be
G_1 \;=\; \fd{2.2cm}{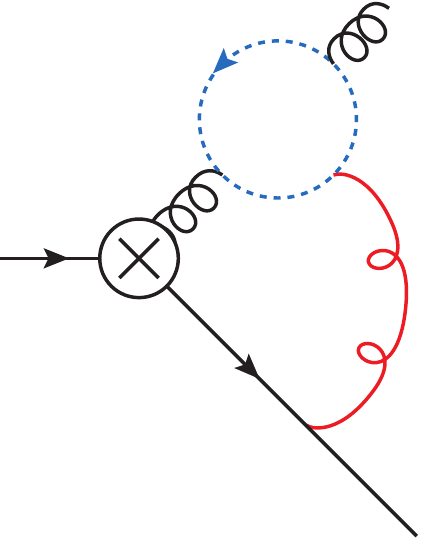}
\quad,\qquad\qquad
G_2 \;=\; \fd{2.3cm}{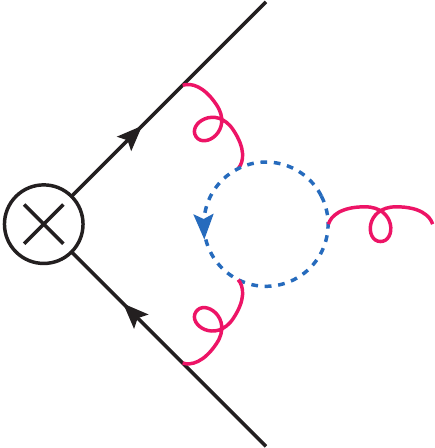}
\ee
In both cases, the integrand vanishes when the red-gluon(s) go soft and the ghost goes soft or collinear. However, when the ghost is off-shell, the red gluon(s) can go soft giving rise to a soft sensitivity of the same form as from the corresponding graphs where the ghost loop is contracted to a point.

The most important point that we use from this section is that ghost lines cannot be soft sensitive (red). Since we can treat ghosts as blue lines, any 1PI-blue subdiagrams that contain the hard vertex are IR insensitive by the \emph{Hard-Blue Lemma} (Lemma~\ref{lem:hardblue}), irrespective of whether or not they contain ghosts. Furthermore, the \emph{Loop-emission Lemma} (Lemma~\ref{lem:EatPower}) tells us that such subdiagrams do not have external emissions connecting to them. Hence, the reduced-diagram picture in \Eq{pinches2} is unchanged in factorization gauge, except for the fact that now the hard, jet and soft amplitudes may contain IR-insensitive ghost loops.

\section{Step 4: Soft-collinear factorization}
\label{sec:Step4}
The all-orders proof of soft-collinear factorization can now be built upon the skeleton of the tree-level proof from \tree. This is made possible by factorization gauge, in particular, our ability to choose a different reference vector for (real and virtual) soft momenta, $r_\scs$, and for (real and virtual) collinear momenta, $r_\ccj$. 
We will choose all of the $r_\ccj$'s to be a particular generic direction $r_\ccc$ not collinear to any of the collinear sectors; we call this the generic-$r_\ccc$ choice. For the soft reference vector we will go back and forth between choosing  $r_\scs$ in a particular collinear direction and $r_\scs$ generic, building up elements of soft-collinear factorization as we go.  We take $r_h = r_\scs$ for simplicity.

Before getting started, it is worth noting  how coloring works in matrix elements involving Wilson lines. One should color these diagrams just as with diagrams involving only local fields. Since emissions from Wilson lines already have eikonal vertices, they are exactly equal to their leading expansion in the soft-limit. 
Thus, in matrix elements involving only Wilson lines, such as $\bra{0} Y^\dg_\ccj W_\ccj  \ket{0}$, all the lines  are red. These lines interact with each other through an $S$ blob just like in \Eq{pinches2}. 
In matrix elements involving Wilson lines and fields, on the other hand, such as $\bra{0} \phi^\star W_\ccj  \ket{0}$, there can be both blue and red lines. As discussed in the previous section, in factorization gauge, the $S$ blob can also have blue lines if there are ghosts just like in the non-Wilson line matrix elements. 

 Although we use scalar QED notation,  operators in QCD look similar, with extra gauge and spin indices floating around. As far as hard-soft-collinear
factorization is concerned, the differences between scalar QED and QCD
are almost entirely notational. Thus we postpone the presentation of QCD matrix elements until Section~\ref{sec:QCD}.

\subsection{Soft and collinear factorization separately}
\label{sec:Cfact}

To begin, consider diagrams which only have red lines connecting to bare collinear sectors and call them $G_\text{pure red}$. 
Recall that  diagrams with red lines are derived from full theory diagrams by expanding to leading order around the limit where the momenta in all the red lines are small. This expansion is the same as the eikonal expansion. Equivalently we can expand by taking all the non-soft lines infinitely hard. This infinite-hard limit removes the dynamics from the non-soft lines, making them appear is classical sources which can be represented with Wilson lines. Thus, the sum of graphs of the form $G_\text{pure red}$ give matrix elements of Wilson lines:
\be
\sum G_\text{pure red} \equiv \fd{5.5cm}{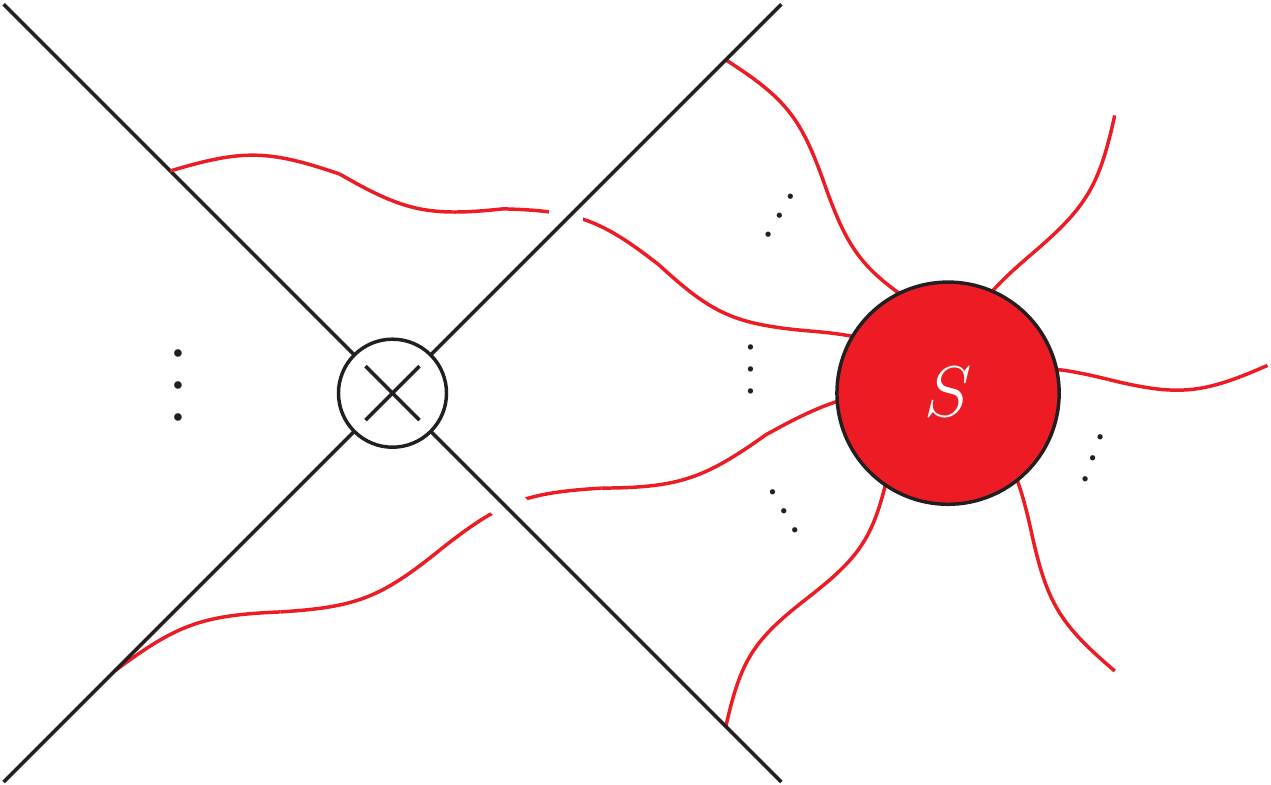} \;=\; \bra{X_\scs} Y_\ccO^\dg \cdots Y_\rN \ket{0}
\label{OnlySoftPinch}
\ee
where the sum over all possible diagrams of this topology is implicit. This equality holds in any gauge.

It is not hard to prove Eq.~\eqref{OnlySoftPinch} directly. The Wilson lines $Y_\ccj^\dg$ exactly give the eikonal Feynman rules, so doing the contraction-combinatorics just like in \tree, we see that the sum of the red lines connecting to the collinear ones is the same as the if the red lines connected to the soft Wilson lines. Since the $S$ blob gives all-possible QCD interactions (including ghosts in factorization gauge), we exactly get the matrix element of Wilson lines in \Eq{OnlySoftPinch} to all-loop order.

For  \Eq{OnlySoftPinch} to work the symmetry factors in the original uncolored loops must turn into the symmetry factors of the red loops. This is not hard to check. As discussed in Section~\ref{sec:algex3}, for every symmetry of an uncolored graph that is broken by the coloring, there are exactly as many different-but-equivalent soft sensitivities. So the symmetry factors work out correctly.

Pure collinear factorization is harder to discuss using colored diagrams. While diagrams with the maximal number of red lines are reproduced from a simple gauge-invariant Wilson line structure, diagrams with the maximal number of blue lines do not have any special simplifying property. Indeed, the Feynman rules for blue lines are a mess since they are given by differences between full QCD Feynman rules and eikonal Feynman rules. Moreover, graphs with all red lines are just as collinear sensitive as graphs with all blue lines.

Instead, it is perhaps useful to consider the following rather trivial diagrammatic identity, forgetting about the coloring altogether:
\be
\fd{3.5cm}{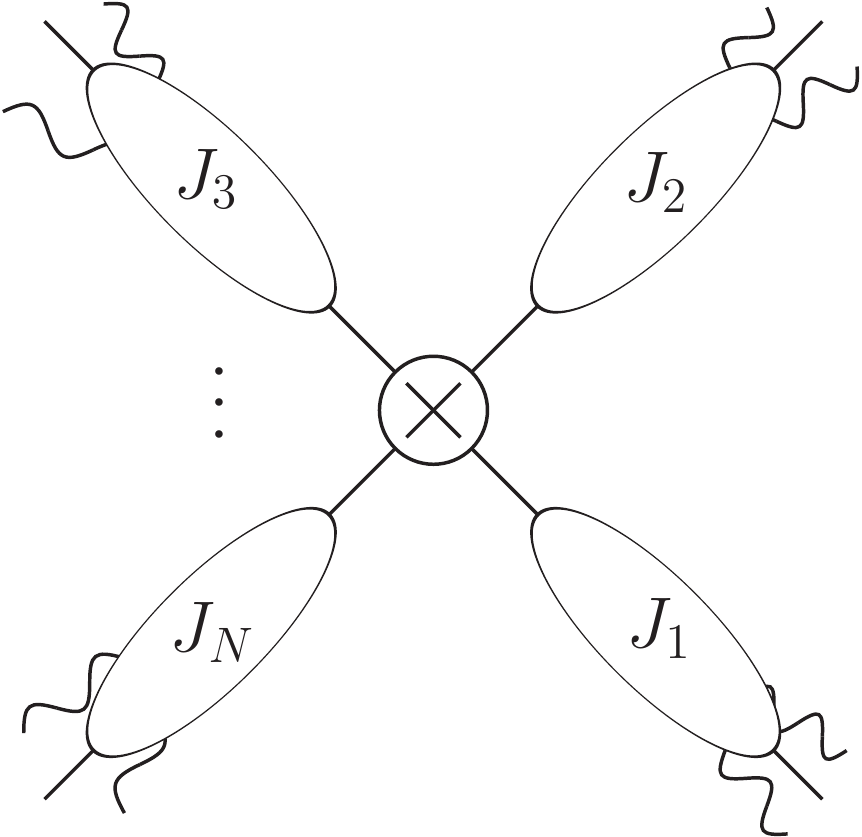} 
\;=\;
\bra{X_\ccO} \phi^\star  \ket{0} \cdots \bra{X_\rN} \phi \ket{0}
\label{SelfEnergyFact}
\ee
where, again, a sum over diagrams of the topology shown is implicit. In this equation, the right hand side is simply the sum over all graphs in scalar QCD with only self-energy corrections to each collinear sector. We saw such a structure emerge from the reduced diagram picture. Recall from the \emph{Hard-Blue Lemma} (Lemma~\ref{lem:hardblue}) that when a gluon connects between two different collinear sectors, there is no collinear sensitivity associated with it. Thus the diagrams on the right give the maximally collinear sensitive contributions to an amplitude at each order in perturbation theory in physical gauges.

\subsection{Soft-collinear factorization with a single collinear sector}
\label{sec:SingleCollSect}

We have seen that the sum of all graphs with the only red lines connecting to naked collinear sectors is reproduced by a matrix element
 of Wilson lines, as in Eq.~\eqref{OnlySoftPinch}, and that the self-energy type corrections to a single collinear sector are given by matrix elements of fields, as in Eq.~\eqref{SelfEnergyFact}. 
To prove soft-collinear factorization, the next step, as in \tree, is to factorize amplitudes containing both soft sensitivity and collinear sensitivity in one direction.

Let us define $\GJS$ as the sum of all colored diagrams that, when the red lines are removed, have collinear sensitivity to the $\ccj$ direction and no collinear sensitivity to any other direction. These are diagrams with any type of red or blue self-energy corrections to the $\ccj$-leg, any number of blue lines in the hard vertex, and any number of red lines connecting the $\ccj$-sector to other sectors.  
These diagrams all have the form
\be
\GJS \:=\:\fd{5.5cm}{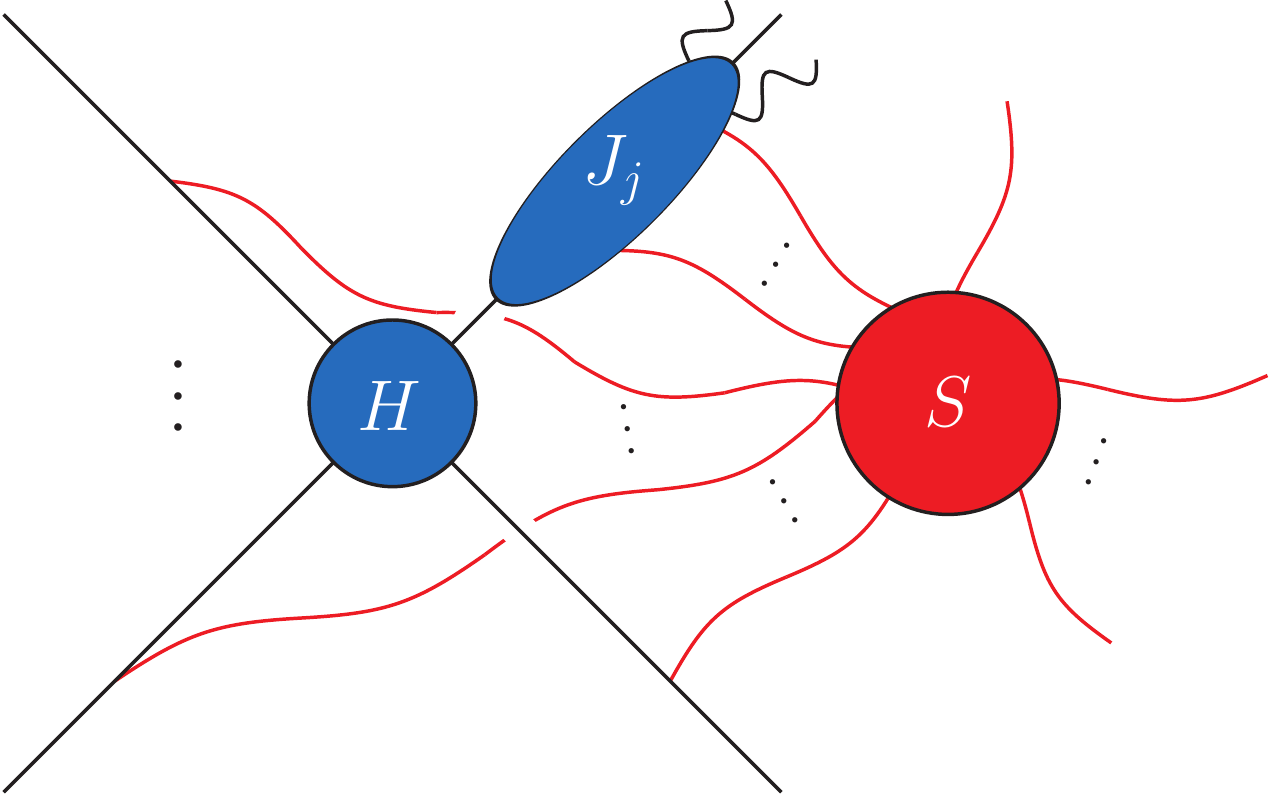}
\label{JjS1}
\ee
That $\GJS$ is a sum of such diagrams is left implicit. 
The $J_\ccj$ blob means all possible soft-insensitive loops (only blue lines) consistent with the external emissions in $\bra{X_\ccj}$ and the $S$ blob means all-possible graphs with only red lines (soft-sensitive lines or soft external lines) coming out. Note that the restriction that $J_\ccj$ have only blue lines is only a convention. It does not restrict the relevant subdiagrams, since any red self-energy contributions are simply absorbed into $S$. The $S$ blob does not have to be 1PI, planar or even connected. 

It is not hard to write down an operator definition of $\GJS$. As long is $r_\ccc$ is generic
\be
\bra{X_\ccj ;X_\scs} \, Y_\ccO^\dg \cdots \phi^\star \cdots  Y_\rN  \ket{0} 
	\overset{\substack{\text{any}~r_\scs\vspace{.5mm}\\ \text{gen.}~r_\ccc}}\LPeq
\GJS
\label{GJSdef}
\ee
There is an implicit choice of $H$ in this equation. 
The $Y_\cci$ Wilson lines for $\cci\ne \ccj$ provide the eikonal interactions between the red lines and the $\cci\ne \ccj$ collinear sectors. The
$\phi_j^\star$ allows for any possible self-energy type graphs in the $\ccj$ sector. Although the left-hand side is
gauge-invariant, in unphysical gauges (such as Feynman gauge or factorization gauge with a non-generic $r_\ccc \parallel p_\ccj$),  there will be collinear-sensitive diagrams with gluons going between different Wilson lines, or between a Wilson line and the $\ccj$ sector. The \emph{Hard-Blue Lemma} (Lemma~\ref{lem:hardblue}), which guarantees that such lines are only soft-sensitive, critically uses that a physical gauge
was chosen in the collinear sensitive region.

Now we will show that in factorization gauge with $r_h=r_\scs = p_\ccj$ there are no soft-sensitive graphs in $\GJS$
 with lines
connecting the $S$ blob to the $J_\ccj$ blob. This is the loop-level version of the tree-level result that when $r_\scs = p_\ccj$ any graph with soft external lines connecting to the $p_\ccj$-collinear sector is power suppressed.  
At tree level, the decoupling happens because the eikonal vertex gives a factor of $p_\ccj \cdot \epsilon(r_\scs)$ which is power suppressed when $r_\scs \LPeq p_\ccj$.  At loop level, we need to show that all the relevant graphs have a similar structure and are therefore similarly power suppressed. 

Although there is no restriction that red lines have soft momenta -- in general, red lines are integrated over all of $\mathbb{R}^{1,3}$ -- there is a restriction that red lines do have to be soft sensitive. Their soft sensitivity is inherent in the coloring, as discussed in Section~\ref{sec:coloring}. Thus, consider the soft-sensitive region of a subdiagram with red gluon emerging from the jet blob. It looks like
\be
S^\mu(k;\{p_\ccj\}, r_\scs,r_\ccc,r_h) \;=\;
\fd{4.5cm}{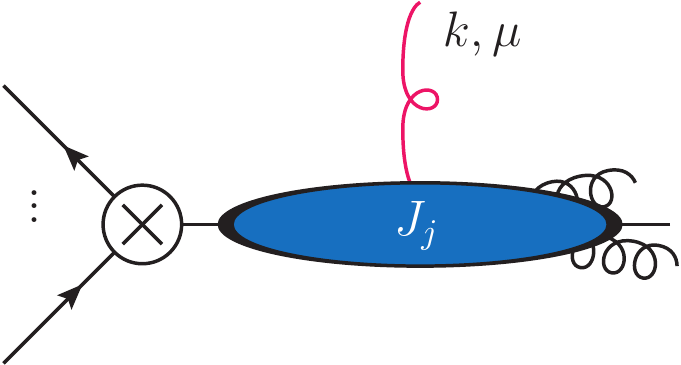}\;
\ee
where all the indices are suppressed except the Lorentz index on the soft line. Here, $S^\mu$ is a function of the momentum $k$, the external-collinear momentum $\{p_\ccj\}$ and the reference vectors associated with our gauge choice; that is, we imagine having done all of the loops in the collinear blob, $J_\ccj$. We now state a simple Lemma pertaining to  which Lorentz structures can carry the $\mu$ index in $S^\mu$:

\begin{lemma} \ltag{Soft-Attachment Lemma}
\label{lem:SoftAttach}
When $r_\scs=r_h$, the soft sensitivity can only come from the term in $S^\mu(k;\{p_\ccj\}, r_\scs,r_\ccj,r_h)$ proportional to $p_\ccj^\mu$.
\end{lemma}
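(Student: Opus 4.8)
The plan is to parametrize the soft-sensitive part of $S^\mu$ in terms of the only covariant building blocks available to it, and then to eliminate every structure except the one proportional to $p_\ccj^\mu$, using the gauge-fixing identities of factorization gauge together with the eikonal form of soft attachments established earlier. First I would record that $S^\mu$ is the jet subdiagram $J_\ccj$ with a single red (soft-sensitive) gluon of momentum $k$ and free Lorentz index $\mu$ emerging from it, all internal lines of $J_\ccj$ being blue. By the \emph{Hard-Blue Lemma} (Lemma~\ref{lem:hardblue}) and the \emph{Collinear Lemma} (Lemma~\ref{lem:Collinear}), near an IR sensitivity every internal line of $J_\ccj$ is collinear to the single direction $n_\ccj$, so every momentum inside $J_\ccj$ equals $n_\ccj^\mu$ (hence $p_\ccj^\mu$) up to $\lambda$-suppressed transverse pieces. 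The only vectors $S^\mu$ can be built from are therefore $p_\ccj^\mu$, $k^\mu$, and the reference vectors $r_\scs^\mu$, $r_\ccc^\mu$, $r_h^\mu$; with $r_\scs=r_h$ there are only three. Thus I would write $S^\mu = A\,p_\ccj^\mu + B\,k^\mu + C\,r_\scs^\mu + D\,r_\ccc^\mu + (\text{power-suppressed})$, where $A,B,C,D$ are Lorentz scalars, and any parity-odd structures built with $\epsilon^{\mu\cdots}$ and the same vectors are handled identically.

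The $k^\mu$ and $r_\scs^\mu$ terms are disposed of immediately. When the red line is sewn back in, it is contracted with the gluon propagator numerator $\Pi_{\mu\nu}(k)$ (or, for a real soft emission, a polarization $\epsilon_\mu(k;r_\scs)$; or, when connecting to the soft blob, with propagators of genuinely soft lines), whose reference vector in the soft region that carries the soft sensitivity is $r_\scs^\mu$ — the hypothesis $r_\scs=r_h$ ensures this is so throughout the soft and hard regions of $k$. By \Eq{rpzero}, $r_{\scs,\mu}\Pi^{\mu\nu}(k)=0$ exactly, so the $C\,r_\scs^\mu$ (equivalently $r_h^\mu$) term contributes nothing. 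By \Eq{numsupp1}, $k_\mu\Pi^{\mu\nu}(k)=\frac{k^2}{r_\scs\cdot k}\,r_\scs^\nu$, which is one power of $\kappa$ softer under $k\to\kappa^2 k$ — this is precisely the manipulation \Eq{softid} used in the one-loop examples — and since the \emph{Log Lemma} (Lemma~\ref{lem:kappa0}) bounds the whole graph below by $\kappa^0$, the coefficient $B$ cannot be enhanced enough to compensate; so the $B\,k^\mu$ term is not soft sensitive either.

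The main obstacle is eliminating the $D\,r_\ccc^\mu$ term, since $r_{\ccc,\mu}\Pi^{\mu\nu}(k)$ does not vanish for soft $k$. Here I would use that, by the coloring construction of Section~\ref{sec:coloring}, $S^\mu$ is already the leading term in the eikonal expansion of the red gluon, so the soft line attaches to $J_\ccj$ through eikonal vertices; by \Eq{SoftVert} and \Eq{CollVert}, valid for on-shell lines of any spin (gluons included), such a vertex contracts the free index $\mu$ onto the momentum $p^\mu$ of the $J_\ccj$ line it attaches to, and soft sensitivity of $k$ forces that line (or composite vertex) to be nearly on-shell, hence collinear to $n_\ccj$, so $p^\mu\propto n_\ccj^\mu\propto p_\ccj^\mu$ up to power corrections. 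Thus the free index is always carried by a collinear momentum, never by $r_\ccc$: the reference $r_\ccc$ enters $J_\ccj$ only inside numerators of internal collinear gluon propagators and collinear three- and four-gluon vertices, all of which route their loose index into the interior of the blue jet, or — for an external collinear gluon — into a transverse polarization $\epsilon(\ell;r_\ccc)$ with $\ell\cdot\epsilon=0$. Hence $D\equiv 0$ at leading power, and the soft-sensitive part of $S^\mu$ is proportional to $p_\ccj^\mu$, as claimed.

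The one place that needs care is when the soft line attaches to an \emph{internal} collinear gluon through the three-gluon vertex: I would check that the non-eikonal pieces of that vertex's soft limit, proportional to the collinear momentum dotted into $\Pi(\ell)$ or $\Pi(\ell+k)$, are suppressed by $\ell^2\sim(\ell+k)^2\sim\kappa^2$ relative to the eikonal $2\ell^\mu g^{\rho\sigma}$ piece, so that they neither generate an $r_\ccc^\mu$ structure nor survive at leading power. With that checked, the decomposition collapses to $S^\mu \LPFeq A\,p_\ccj^\mu$ in the soft-sensitive region, which is exactly the statement of the lemma and the input needed for the subsequent step of setting $r_\scs=p_\ccj$ to decouple soft gluons from the $\ccj$ sector.
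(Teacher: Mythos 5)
Your proposal is correct and follows essentially the same route as the paper: decompose $S^\mu$ over the available vectors, kill the $r_\scs^\mu$ term by $r_{\scs\,\mu}\Pi^{\mu\nu}(k)=0$, the $k^\mu$ term by $k_\mu\Pi^{\mu\nu}(k)\propto k^2$, and the $r_\ccc^\mu$ term by the eikonal ($\propto p_\ccj^\mu$) form of the soft attachment near the collinear sensitivity, with the residual non-eikonal pieces $\kappa$-suppressed and hence power-suppressed after the collinear region is integrated. The only cosmetic difference is that the paper phrases the elimination of the $r_\ccj^\mu$ structure explicitly as ``$\kappa$-suppressed near the collinear sensitivity, hence $\lambda$-suppressed after integration,'' which is exactly the check you defer to your final paragraph.
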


\begin{proof}
The first step is to show that $S^\mu(k;\{p_\ccj\}, r_\scs,r_\ccj,r_h)$ has no term proportional to $r_\ccj^\mu$ at leading power. The only way to get an $r_\ccj^\mu$ term in $S^\mu$ is from the soft line connecting to a line that goes collinear to the $\ccj$-jet direction. However then, the leading power soft vertex is eikonal, namely, proportional to $p_\ccj^\mu$ instead of $r_\ccj^\mu$ as discussed in \Eq{SoftVert} (as discussed in Section~\ref{sec:ghosts}, the soft gluon cannot connect to a collinear ghost). So any terms proportional to $r_\ccj^\mu$ are $\kappa$ suppressed near the collinear sensitivity. Then, when the collinear region is integrated over, the $\kappa$-suppressed integrals give a finite value proportional to the volume of the collinear region, namely, $\lambda$ to some positive power. Thus $r_\ccj^\mu$ terms are power suppressed in loops and trees alike.

Now, since the $r_\ccj^\mu$ term is power suppressed, we are left with $r_\scs^\mu = r_h^\mu$, $k^\mu$ and $p_\ccj^\mu$. However, when the red line is contracted with a soft propagator or a soft external polarization, any term proportional to $r_\scs^\mu$ vanishes exactly and any term proportional to $k^\mu$ will be suppressed in $\kappa$. So these terms cannot contribute to a soft sensitivity in the red line.  This proves the lemma.
\end{proof}

Therefore, if we make the non-generic choice: $r_\scs = r_h = p_\ccj$, which we call {\bf collinear ${\mathbf r_\scs}$}, there will be no soft sensitivities connecting to the $\ccj$-collinear sector. We state this as a lemma:

\begin{lemma} \ltag{Collinear-$\mathbf{r_\scs}$ Lemma}
\label{lem:Collrs}
There are no soft-sensitive (red) lines connecting to the $\ccj$-collinear sector in factorization gauge in collinear-$r_\scs$ ($r_\scs^\mu=r_h^\mu=p_\ccj^\mu$).
\end{lemma}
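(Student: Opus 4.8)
The plan is to obtain this lemma as a short corollary of the \emph{Soft-Attachment Lemma} (Lemma~\ref{lem:SoftAttach}) together with the defining transversality of factorization gauge, \Eq{rpzero}. The logical skeleton is: any red line connecting the $S$ blob to the $J_\ccj$ blob in a diagram of the form \eqref{JjS1} is by construction soft sensitive, so it suffices to show that once we set $r_\scs^\mu=r_h^\mu=p_\ccj^\mu$ no line attaching to the $\ccj$-collinear sector can in fact carry a soft sensitivity. Such a line could then never have been colored red, and the assertion follows.

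First I would isolate, exactly as in Lemma~\ref{lem:SoftAttach}, the jet-side attachment $S^\mu(k;\{p_\ccj\},r_\scs,r_\ccj,r_h)$ of a putative soft-sensitive red line of momentum $k$ emerging from $J_\ccj$, with all loops internal to $J_\ccj$ already performed. By the \emph{Soft-Attachment Lemma}, since $r_\scs=r_h$, every Lorentz structure in $S^\mu$ other than the coefficient of $p_\ccj^\mu$ is irrelevant for soft sensitivity: the $r_\ccj^\mu$ term is power suppressed in $\lambda$ near the collinear region, the $k^\mu$ term is $\kappa$ suppressed near the soft region, and the $r_\scs^\mu$ term vanishes on contraction with a soft gluon line. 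Hence the soft sensitivity, if any, lives entirely in the piece of $S^\mu$ proportional to $p_\ccj^\mu$ (more precisely, to the physical not-soft on-shell momenta of the $\ccj$ sector, each of which is $\propto p_\ccj^\mu$ up to $\cO(\lambda)$ corrections, as in \Eq{SoftVert}).

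Next I would impose the collinear-$r_\scs$ choice $r_\scs^\mu=r_h^\mu=p_\ccj^\mu$ and follow the red line downstream. Being soft sensitive, in the relevant region of integration this line is a soft gluon, so it is contracted either with the numerator $\Pi^{\mu\nu}(k)$ of an internal soft-gluon propagator --- which in the soft region carries reference vector $r^\mu(k)=r_\scs^\mu=p_\ccj^\mu$ --- or, if it is one of the external gluons in $\bra{X_\scs}$, with a polarization $\epsilon^\mu(k;r_\scs)=\epsilon^\mu(k;p_\ccj)$. In both cases the only surviving piece of $S^\mu$, proportional to $p_\ccj^\mu=r_\scs^\mu$, contracts to zero by \Eq{rpzero}, $r_{\scs,\mu}\Pi^{\mu\nu}(k)=0$, and by the reference-vector transversality $r_{\scs,\mu}\epsilon^\mu(k;r_\scs)=0$, the $\cO(\lambda)$ remainders arising only because the physical $\ccj$-sector momenta are lightlike up to corrections of order $\lambda$ --- this is exactly the tree-level decoupling mechanism of \tree. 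Since IR sensitivities are at most logarithmic (the \emph{Log Lemma}, Lemma~\ref{lem:kappa0}), an $\cO(\lambda)$ suppression kills the soft sensitivity at leading power. So the red line is not soft sensitive, a contradiction. Running the same argument on the simultaneous soft limit of the entire red subdiagram touching $J_\ccj$ --- where each amputated Lorentz index is likewise forced onto $p_\ccj^\mu=r_\scs^\mu$ --- shows that no red line attaching to the $\ccj$ sector survives, which is the claim, understood in the weaker IR-equivalence sense $\LPFeq$ used throughout this section.

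I expect the main obstacle to be one of care rather than of substance: one must verify that the Lorentz decomposition of $S^\mu$ provided by Lemma~\ref{lem:SoftAttach} remains exhaustive when several red lines emerge from the same jet --- so that no extra structure (for instance a $g^{\mu_i\mu_j}$ built purely out of soft lines exchanged between the $S$ blob and $J_\ccj$) can smuggle a soft sensitivity past the $r_\scs$-contraction --- and that the partitioning of $\mathbb{R}^{1,3}$ defining factorization gauge genuinely assigns reference vector $r_\scs$ to the red line in its soft-sensitive region, with no boundary region evading the argument. Both points are controlled in precisely the same way as in the corresponding steps of \tree, and neither affects the leading-power conclusion.
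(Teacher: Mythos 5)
Your single-line argument is exactly the mechanism the paper uses as its first step: by the \emph{Soft-Attachment Lemma} (Lemma~\ref{lem:SoftAttach}), with $r_\scs=r_h$ only the piece of $S^\mu$ proportional to $p_\ccj^\mu$ can carry the soft sensitivity, and with the further choice $r_\scs=p_\ccj$ that piece dies against the transversality \Eq{rpzero} of the soft-region propagator numerator or of the external soft polarization vectors. The genuine gap is in how you pass from one attached line to many. You reduce the multi-line case to ``the simultaneous soft limit of the entire red subdiagram touching $J_\ccj$,'' but the soft sensitivities of a colored graph are not exhausted by that single limit: a proper subset of the red lines can go soft while the remaining red lines sit in collinear or hard regions of their (unrestricted, all of $\mathbb{R}^{1,3}$) integrations, and in such configurations the internal jet line to which a given soft gluon attaches need not be nearly on-shell and collinear to $p_\ccj$, so the eikonal form \Eq{SoftVert} --- and hence the Lorentz decomposition underlying Lemma~\ref{lem:SoftAttach} --- cannot simply be applied to every attachment at once. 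You identify this yourself as ``the main obstacle'' but then defer it to \tree, which is a tree-level statement and contains none of the loop-level machinery needed to close it.

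The paper closes precisely this gap with an iterative argument your proposal lacks. One works inward toward the hard vertex: the outermost attachment on the $\ccj$ leg depends only on the external $\ccj$-sector momenta and not on the momenta of the other potentially soft lines, so the single-line result applies to it verbatim and it is soft insensitive, hence blue. One must then track the consequences of that reclassification: if this now-blue line connects to a different collinear sector, the \emph{Hard-Blue Lemma} (Lemma~\ref{lem:hardblue}) renders the remaining structure IR insensitive, or, if external soft emissions are entangled with it, the whole graph is power suppressed by the \emph{Loop-emission Lemma} (Lemma~\ref{lem:EatPower}); if instead it connects back to the $\ccj$ sector, it is absorbed into the blue jet blob and the argument restarts with the next-outermost line. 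Your write-up never performs this reclassify-and-iterate step and never invokes Lemmas~\ref{lem:hardblue} and~\ref{lem:EatPower}, which is exactly where the content of this lemma beyond Lemma~\ref{lem:SoftAttach} resides; as it stands, the multi-line case is asserted rather than proven.
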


\begin{proof}
The result is easy to see for a single soft line by Lemma~\ref{lem:SoftAttach}, since when $r_\scs=p_\ccj$ any soft propagator or external polarization vector will be orthogonal to $p_\ccj^\mu$. Now suppose we have many lines connecting to the $\ccj$-collinear sector. Working our way inwards towards the hard vertex, the outermost line must be soft-insensitive  by the argument for a single line, since it does not depend on the momentum of the other potentially-soft lines. If the outermost-red line connects to a different collinear sector, then by the \emph{Hard-Blue Lemma} (Lemma~\ref{lem:hardblue}) the rest of the lines must be blue and IR insensitive or, if any of the other lines are external-soft emissions, the whole graph is power-suppressed by \emph{Loop-emission Lemma} (Lemma~\ref{lem:EatPower}). So the lemma is proved in this case. On the other hand, if the outermost line connects back to the $\ccj$-collinear sector, because it is soft insensitive, it will just contribute to the blue-collinear blob and 
we can start the argument over again starting from the 
next-outermost line. 
In this way, we see that no soft-sensitive (red) lines can connect to the $\ccj$-collinear sector in collinear-$r_\scs$.
\end{proof}

For the rest of this paper we will take all of the collinear-reference vectors, $\{ r_\ccj \}$, to be the same generic direction, $r_\ccc$, that is not collinear to any of the collinear sectors. Furthermore, we will always take $r_h=r_\scs$. Neither of these choices is necessary, but they simplify the discussion. We have shown that if one chooses $r_\scs =p_\ccj$ there are actually no red lines connecting to the $J_\ccj$ blob in \Eq{JjS1}. This means that no expansion was done to the integrals in the the $J_\ccj$ blob and therefore the $J_\ccj$ blob is exactly the same as in the full theory. 
 Thus the set of relevant colored graphs contributing to $\GJS$ is somewhat different in generic-lightcone gauge from factorization gauge with $r_\scs=p_\ccj$:

\be
\GJS =
\underbrace{
\fd{5cm}{jcollPinch3.pdf}
}_{ \text{most physical gauges} }
,\hspace{8mm}
\GJS=
\underbrace{
\fd{5cm}{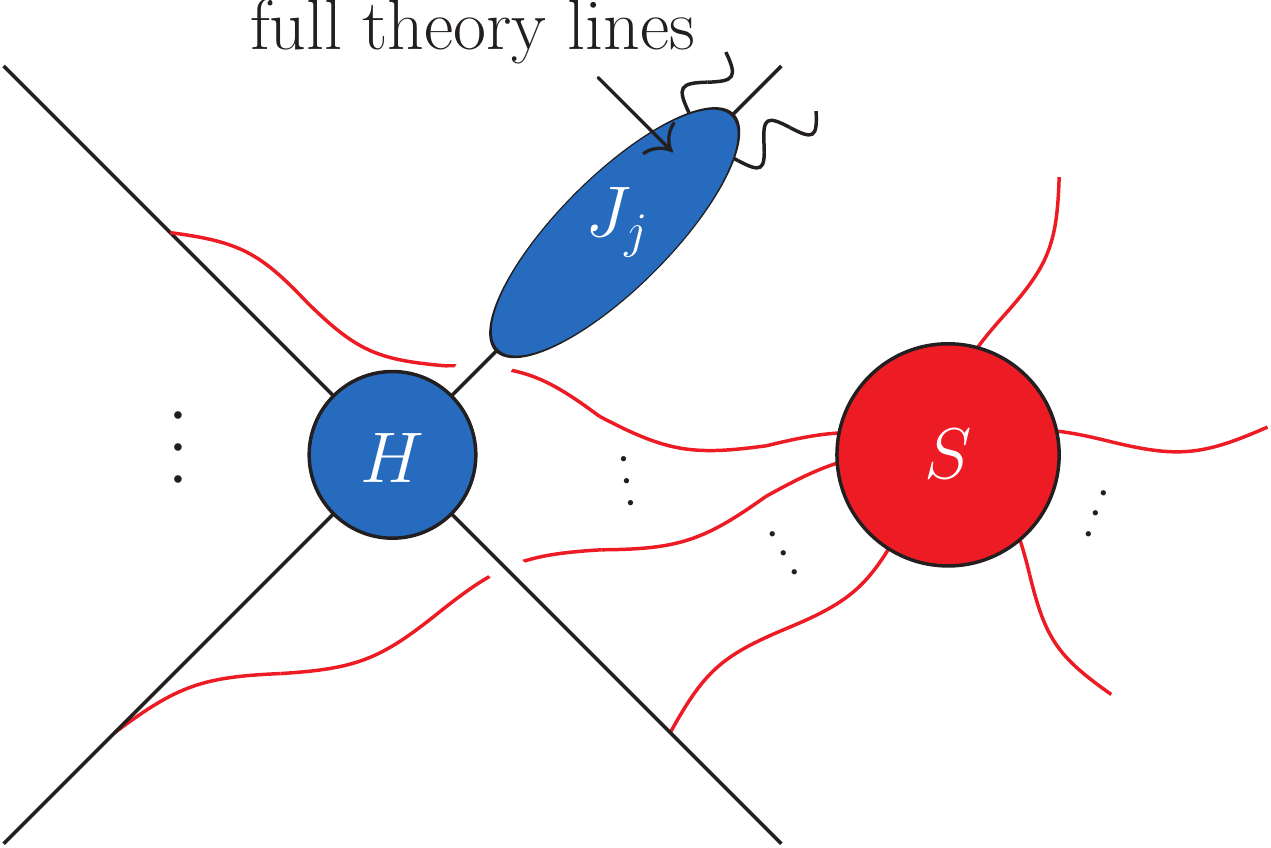}
}_{ \text{factorization gauge with}~r_\scs = p_\ccj\ne r_\ccc }
\label{JjS2}
\ee
In most physical gauges, there are blue self-energy bubbles in the $J_\ccj$ blob, red self-energy bubbles attaching to the $J_\ccj$ blob, as well as red lines leaving this blob and connecting to the other legs and to external-soft emissions. However, in factorization gauge with  $r_\scs^\mu=p_\ccj^\mu$, the $J_\ccj$ blob is unmodified from full QCD and no red lines connect to it. The $H$, $J$ and $S$ blobs are all different in the two cases.

Now, since there are no soft-sensitive lines connecting to the $J_\cci$ blob when $r_\scs=p_\ccj$, the amplitude from summing all the relevant graphs is closely related to the amplitude from a product of Wilson lines, as in Section~\ref{sec:Cfact}. More precisely,
\be
\fd{4.5cm}{jcollPinch4RsPj.pdf}
	\overset{\substack{r_\scs\, = \,p_\ccj \vspace{.5mm}\\ \text{gen.}~r_\ccc}}\LPeq
	C(\{n_\cci\cdot P_\ccj\}) \times
\fd{3cm}{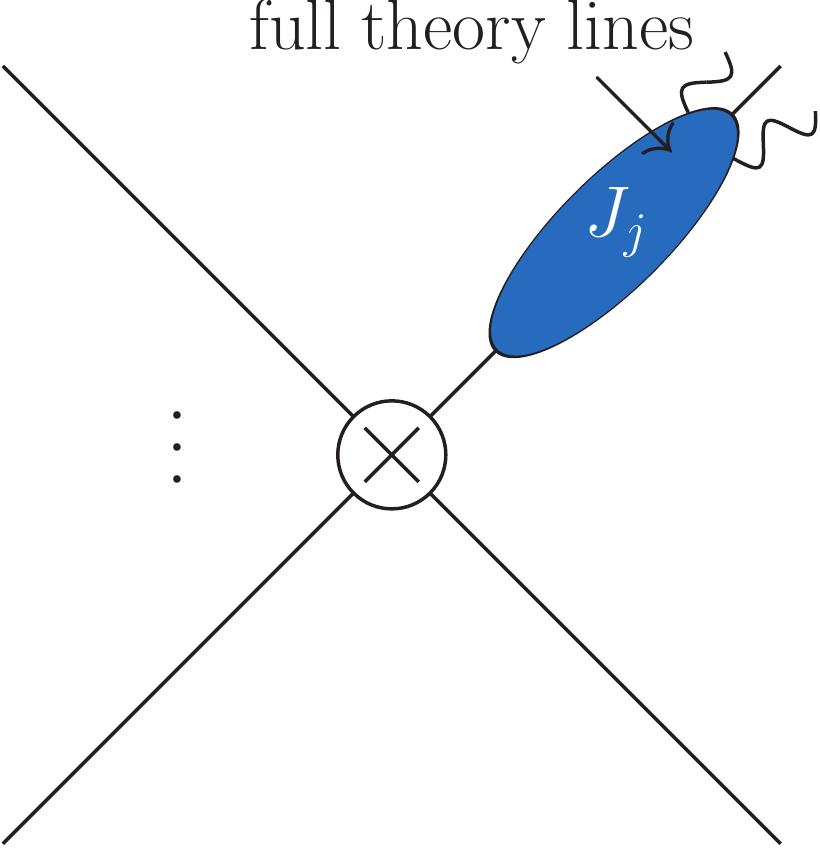} 
			\times \bra{X_\scs} Y_\ccO^\dg\cdots Y_{\ccj-\ccO} Y_{\ccj+\ccO} \cdots Y_\rN \ket{0}
\label{AllButOne} 
\ee
where $P_\ccj^\mu$ is the net collinear momentum in the $\ccj$ sector, $n_\cci^\mu$ is the lightlike direction of the $i$ sector and $\cC(\{n_\cci\cdot P_\ccj\})$ is an IR-finite function of $n_\cci\cdot P_\ccj$ for $i\neq j$. 

A subtle point is that $\cC(\{n_\cci\cdot P_\ccj\})$ does not have to equal the sum of the graphs in the hard amplitude $H(P_\ccj,k_i)$ evaluated at $k_i^\mu=0$ for all the soft loop momenta. 
To see where the difference comes from, 
recall that the $H$ blob is IR insensitive, so it is finite when any of the momentum from the red lines goes soft. Thus, we can write
\be
\int_{\{k_i\}}\!\! H(P_\ccj,k_i) \, \Eik(k_i,n_\cci)
=
H(P_\ccj,0) \int_{\{k_i\}}\!\!  \Eik(k_i,n_\cci) + 
\int_{\{k_i\}}\!\!  \big(H(P_\ccj,k_i) - H(P_\ccj,0)\big) \, \Eik(k_i,n_\cci)
	\label{HardSoftProc}
\ee
This allows us to extract the loops over the soft-sensitive red lines, $\Eik(k_i,n_\cci)$, from the soft-insensitive loops, $H(P_\ccj,k_i)$. Since the soft-sensitive loops are at most logarithmically divergent by
the \emph{Log  Lemma} (Lemma~\ref{lem:kappa0}), the second term is finite because $H(P_\ccj,k_i) - H(P_\ccj,0)$ vanishes when the $k_i \to 0$. Thus, we can pull out an overall IR-insensitive power series, $\cC(\{n_\cci\cdot P_\ccj\})$, times the pure-eikonal loops which are identically given by the matrix element of Wilson lines shown in \Eq{AllButOne}. Now, the second term on the right-hand side of \Eq{HardSoftProc} 
could either be power-suppressed (for example, if the $k_i\to0$ limit in question is tangled with a soft emission by Lemma~\ref{lem:EatPower}), or it could be some IR-finite integral 
multiplying  a lower-order IR-sensitive contribution from the soft Wilson-line matrix element. Thus, $\cC(\{n_\cci\cdot P_\ccj\})$ is not equal to $H(P_\ccj,0)$ in general. Instead, it is some IR insensitive power series in the perturbative coupling that starts at $1$. Despite the difference, $\cC(\{n_\cci\cdot P_\ccj\})$, like $H(P_\ccj,0)$, only depends on the net momenta in each collinear sector. The difference is from the subtraction
terms on the right-hand side of \Eq{HardSoftProc} which is subleading power when tangled with external emissions, by Lemma~\ref{lem:EatPower}.

Now, combining \Eq{GJSdef}, \Eq{JjS2} and \Eq{AllButOne}, and that, since the  $J_\ccj$ blob contains no red lines it is simply all the corrections to the $\ccj$-sector in full QCD, we have
\be
\bra{ X_\ccj ;X_\scs} \, Y_\ccO^\dg \cdots \phi^\star \cdots  Y_\rN  \ket{0} 
	\overset{\substack{r_\scs\, = \,p_\ccj \vspace{.5mm}\\ \text{gen.}~r_\ccc}}\LPeq
\cC(\{n_\cci\cdot P_\ccj\})  \bra{X_\ccj} \phi^\star \ket{0} \bra{X_\scs} Y_\ccO^\dg\cdots Y_{j-1} Y_{j+1} \cdots Y_\rN \ket{0}
\label{AllButOne_ME} 
\ee
In other words, $r_\scs =p_\ccj$ lets us disentangle a field from the product of Wilson lines.

\subsection{Bootstrapping in $Y_\ccj^\dg$ and $W_\ccj$}

 At this point, following \tree, we want to insert  $Y_\ccj^\dg$ into  $ Y_\ccO^\dg\cdots Y_{j-1} Y_{j+1} \cdots Y_\rN$ in  \Eq{AllButOne_ME} to make it gauge invariant. Recall that at tree-level choosing $r_\scs = p_\ccj$ for the external soft particles forces $Y_\ccj^\dg$ to contribute only power-suppressed terms. When loops are involved, it is not quite that simple, since the red lines are not restricted to be soft. Indeed,
self-contractions in $Y_\ccj^\dg$ (self-energy graphs on the $\ccj$-leg) are collinear sensitive, since in the collinear-sensitive region the gluon propagator has the collinear reference vector $r_\ccc$ instead of $r_\scs$. Thus it is true at tree-level but not at loop-level that inserting $Y_\ccj^\dg$ only gives a power-suppressed modification in collinear-$r_\scs$. 

When $r_\scs=p_\ccj$, contractions of $Y_\ccj^\dg$ with the other $Y_\cci$'s are soft insensitive by the \emph{Collinear-$r_\scs$ Lemma} (Lemma~\ref{lem:Collrs}) and must be blue. Then, by the \emph{Hard-Blue Lemma} (Lemma~\ref{lem:hardblue}), we know that any contractions of $Y_\ccj^\dg$ with the other $Y_\cci$'s are IR insensitive in physical gauges, as are any 1PI subdiagrams containing such contractions. 
So when $r_\scs=p_\ccj$, the {\it only} new IR sensitivities that  arise from adding in the $Y_\ccj^\dg$ are the collinear sensitivities in new self-energy type corrections to the $p_\ccj$ sector, namely from purely self-contractions of the $Y_\ccj^\dg$ operator. The sum of the purely self-contractions of $Y_\ccj^\dg$ is trivially given by $\bra{0} Y_\ccj^\dg \ket{0}$. Therefore, if we not only add the $Y_\ccj^\dg$ into the product of $Y_\ccO^\dg\cdots Y_{\ccj-\ccO} Y_{\ccj+\ccO} \cdots Y_\rN$ but also divide by $\bra{0} Y_\ccj^\dg \ket{0}$, the new 
collinear-sensitive contributions from $Y_\ccj^\dg$ will be completely removed, and this addition does not change the IR sensitivities.

The net effect of adding $Y_\ccj^\dg$ to the product of Wilson lines and dividing by $\bra{0} Y_\ccj^\dg \ket{0}$ is not nothing. There are graphs from this modification with gluons going between $Y_\ccj^\dg$ and one of the other legs. These contributions are soft insensitive (in factorization gauge with $r_\scs=r_p$) and collinear insensitive (since they connect different legs, by Lemma~\ref{lem:hardblue}), thus they are IR insensitive. Using the same procedure as outlined in \Eq{HardSoftProc}, we can absorb the IR-insensitive difference into a modification of the Wilson coefficient, which means that \Eq{AllButOne_ME} becomes
\be
\bra{ X_\ccj ;X_\scs} \, Y_\ccO^\dg \cdots \phi^\star \cdots  Y_\rN  \ket{0} 
	\;\overset{\substack{r_\scs\, = \,p_\ccj \vspace{.5mm}\\ \text{gen.}~r_\ccc}}\LPeq\;
\cC'(\{n_\cci\cdot P_\ccj\})
\, \bra{X_\ccj} \phi^\star \ket{0} \,
\frac{ \bra{X_\scs} Y_\ccO^\dg\cdots Y_\rN \ket{0} }{ \bra{0} Y_\ccj^\dg \ket{0} }
\label{SOTteenEq}
\ee
for some new IR-insensitive function $\cC'(\{n_\cci\cdot P_\ccj\})$.

This is the second time we find two objects with the same leading-power IR sensitivities differing by an IR-insensitive set of loops. Rather than modifying the Wilson coefficient, $\cC(\{n_\cci\cdot P_\ccj\})$, in each step for the IR-insensitive part, let us introduce the symbol $\LPFeq$ to mean that the IR-sensitivities on both sides agree at leading power. For example, with this notation, \Eq{SOTteenEq} becomes:

\be
\bra{ X_\ccj ;X_\scs} \, Y_\ccO^\dg \cdots \phi^\star \cdots  Y_\rN  \ket{0} 
	\;\overset{\substack{r_\scs\, = \,p_\ccj \vspace{.5mm}\\ \text{gen.}~r_\ccc}}{~\LPFeq}\;
	 \bra{X_\ccj} \phi^\star \ket{0}
\,
\frac{ \bra{X_\scs} Y_\ccO^\dg\cdots Y_\rN \ket{0} }{ \bra{0} Y_\ccj^\dg \ket{0} }
\label{SOTteenEqIR}
\ee
An $\LPFeq$ equivalence implies that a $\LPeq$ equivalence holds if some IR finite Wilson coefficient, $\cC(\{P_\cci\cdot P_\ccj\})$, is multiplied on one side. That is  
\be
A ~\LPFeq B \quad \Longleftrightarrow \quad \frac{A}{B} \LPeq \cC(\Sij)
\ee
for some IR-insensitive function $\cC(\Sij)$, where $\Sij =(P_\cci + P_\ccj)^2$.

Next, we show that collinear Wilson lines can be added without changing the IR structure.
Recall that collinear Wilson lines $W_\ccj$ have the same definition as soft Wilson lines $Y_\ccj$, but while the $Y_\ccj$ point along the jet direction $p_\ccj$, the $W_\ccj$ lines point in some direction
 $t_\ccj$ which is only restricted not to be collinear to $p_\ccj$.
 In lightcone gauge, if we choose $t_\ccj=r$, then $W_\ccj$ simply decouples since the gluons all have $t^\ccj_\mu \, \Pi^{\mu\nu}(k)=0$ for any
$k$ and $W_\ccj=1$ effectively. In factorization gauge with $r_\scs = p_\ccj$ and $t_\ccj$ and $r_\ccc$ generic, the Wilson lines do not decouple completely. However, it is still true that
\be
\frac{\bra{X_\ccj} \phi^\star W_\ccj \ket{0} }{ \bra{0} Y_\ccj^\dg W_\ccj \ket{0} }
\;
	\overset{\substack{r_\scs\, = \,p_\ccj \vspace{.5mm}\\ \text{gen. }r_\ccc}}{~\LPFeq}
\;
\frac{\bra{X_\ccj} \phi^\star \ket{0} }{ \bra{0} Y_\ccj^\dg \ket{0} } 
\label{Wbootrt}
\ee
This is true for exactly the same reason that we could bootstrap $Y_\ccj^\dg$ into \Eq{SOTteenEq}: when $r_\scs=p_\ccj$, any lines connecting to $\phi^\star$ and $Y_\ccj^\dg$ are blue by Lemma~\ref{lem:Collrs}. This means, by Lemma~\ref{lem:hardblue}, that the only new IR sensitivities introduced on the left-hand side of \Eq{Wbootrt} are those coming from purely self contractions of $W_\ccj$ which cancel in the ratio, proving Eq.~\eqref{Wbootrt}.

Now, since no red lines can connect to $\phi^\star$ or to $Y_\ccj^\dg$ when $r_\scs =p_\ccj$, the right-hand side of \Eq{Wbootrt} must be soft insensitive. This implies
that the left-hand side is soft insensitive too. Since the left-hand side is gauge invariant, it is soft insensitive in any gauge. In other words,
 $\bra{X_\ccj} \phi^\star W_\ccj \ket{0} \big/ \bra{0} Y_\ccj^\dg W_\ccj \ket{0}$ contains only blue lines. Moreover, these all-blue-line graphs cannot come from 
 $\bra{0} Y_\ccj^\dg W_\ccj \ket{0}$  or $\bra{X_\ccj} W_\ccj \ket{0}$ since these matrix elements, involving Wilson lines only, always have red lines attaching to the Wilson lines (with an arbitrary $S$ blob connecting them). Thus the blue lines
can come from $\bra{X_\ccj} \phi^\star \ket{0}$ or from contractions between $W_\ccj$ and $\phi^\star$. However, blue contractions between $W_\ccj$ and $\phi^\star$ are IR insensitive by the \emph{Hard-Blue Lemma} (Lemma~\ref{lem:hardblue}). Therefore, we have
\be
\frac{\bra{X_\ccj} \phi^\star W_\ccj \ket{0} }{ \bra{0} Y_\ccj^\dg W_\ccj \ket{0} }
\;\overset{\text{gen. }r_\ccc}{~\LPFeq}\;
\bra{X_\ccj} \phi^\star \ket{0} \Big|_\text{blue only} 
\; = \; \fd{3cm}{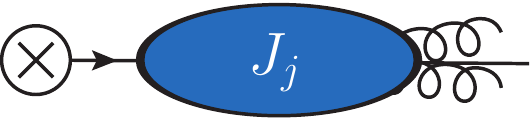} 
= \text{soft insensitive}
\label{blueonly}
\ee
where the $J_\ccj$ blob has only blue lines. We use this result below to strip the red lines off of a general matrix element.

Let us pause briefly to give an interpretation of $ \bra{0} Y_\ccj^\dg W_\ccj \ket{0}$. Note that $\bra{X_\ccj} \phi^\star W_\ccj \ket{0} $ has both collinear
and soft sensitivities, but $\bra{X_\ccj} \phi^\star W_\ccj \ket{0}/\bra{0} Y_\ccj^\dg W_\ccj \ket{0}$ has only blue lines so it is soft-insensitive. Thus  $ \bra{0} Y_\ccj^\dg W_\ccj \ket{0}$
is subtracting off the contribution which is both soft and collinear sensitive. 
Dividing by it implements the subtraction procedure known as the zero-bin subtraction in SCET. We will discuss this further in Section~\ref{sec:SCET} where we contrast our matrix-element definition with that used in the SCET literature.

Returning to Eq.~\eqref{Wbootrt}, if we combine it with Eq.~\eqref{SOTteenEqIR}, we find
\be
\bra{X_\ccj ;X_\scs} \, Y_\ccO^\dg \cdots \phi^\star \cdots  Y_\rN  \ket{0}
\;\LPFeq\;
\frac{\bra{X_\ccj} \phi^\star W_\ccj \ket{0} }{ \bra{0} Y_\ccj^\dg W_\ccj \ket{0} }
\times\bra{X_\scs} Y_\ccO^\dg\cdots Y_\rN \ket{0} 
\label{pulloneout1}
\ee
Although we only showed this IR-equivalence in collinear $r_\scs$ ($r_\scs =r_h= p_\ccj$, generic $r_\ccc$) since both sides of this equation are gauge invariant, it must hold for any choice of $r_\scs$ or $r_\ccc$ and more generally in any gauge (including Feynman gauge). Thus, Eq.~\eqref{pulloneout1} is not restricted to a particular gauge.

Note that Eq.~\eqref{pulloneout1} holds for any number of soft Wilson lines. As a special case, when there are two sectors:
\be
\bra{X_\ccj ;X_\scs} \phi^\star \, Y_\cci \ket{0}
\;\LPFeq\;
\frac{\bra{X_\ccj} \phi^\star W_\ccj \ket{0} }{ \bra{0} Y_\ccj^\dg W_\ccj \ket{0} }
\times\bra{X_\scs} Y_\ccj^\dg \, Y_\cci \ket{0} 
\label{pulloneout2}
\ee
which holds for any $i$ and $\ccj$.

\subsection{Sprig of thyme}
Eq.~\eqref{pulloneout1} (or more simply, \Eq{pulloneout2}) establishes soft-collinear factorization for a single non-minimal collinear sector. When multiple sectors are non-minimal, we clearly cannot choose $r_\scs= p_\ccj$ for all $\ccj$
simultaneously to repeat the above derivation. However, since Eq.~\eqref{pulloneout1} is gauge-independent, this is not necessary, as we will see. 

When $r_\scs^\mu$ is not collinear to $p_\ccj^\mu$,  Eq.~\eqref{pulloneout1} still holds, since it is gauge invariant. For generic choices of $r_\scs$,
there are soft-sensitive diagrams with red lines connecting to the $J_\ccj$ blob contributing to Eq.~\eqref{pulloneout1}. Although there is no diagram-by-diagram correspondence in Eq.~\eqref{pulloneout1}, the sum of diagrams
with a $J_\ccj$ blob and a fixed number $n$ of red lines attaching to it do correspond. In fact, as we will now show,
\begin{align}
\sum_{\text{perms on~$\ccj$}} \;
\fd{4.6cm}{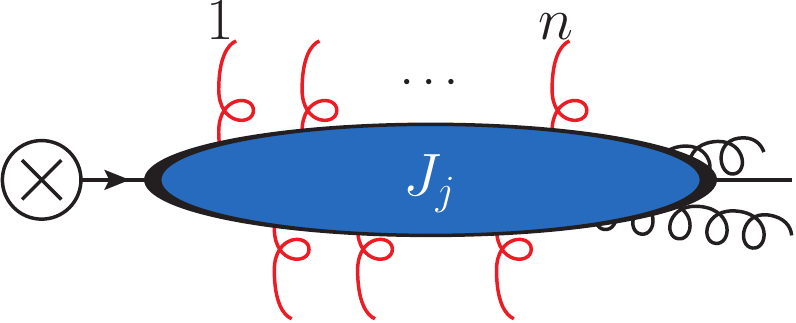} 
&\;\overset{\text{gen. }r_\ccc}{~\LPFeq}\;\;
\frac{\bra{X_\ccj} \phi^\star W_\ccj \ket{0} }{ \bra{0} Y_\ccj^\dg W_\ccj \ket{0} }
	\times
Y_\ccj^\dg
\,  \Big|_{n~\scred}
\label{SOT}
\end{align}
On the left-hand side, the usual $J_\ccj$ blob is defined to have only blue (soft-insensitive) lines and to have all such lines summed over and their integrals evaluated. 
We are considering diagrams which have $n$ generically off-shell red lines attaching to this $J_\ccj$ blob. In a full diagram the red lines can be closed into a loop, contracted with polarizations for external soft particles, or connect to a $J$ blob in another sector (not shown); we simply slice them close
to their attachment to the $J_\ccj$ blob and treat them as off-shell. The $\sum_{\text{perms on~$\ccj$}}$ means the sum over permutations of all possible ways of connecting the red lines to $J_\ccj$ blob on the left-hand side.
The right side has these same red lines now connecting to a $Y_\ccj^\dg$ Wilson line; the $Y_\ccj^\dg$ on the right-hand side is meant to be taken at the same order as the number of red lines on the left, as indicated by the $|_n$.  

\Eq{SOT} is the loop-level equivalent of the tree-level Eq. (94) in \tree. It shows that red lines can be stripped off of arbitrarily complicated jet amplitudes, like
leaves off a sprig of thyme, independent of where those red lines connect in the rest of the diagram.

\begin{proof}[Proof of \Eq{SOT} \ltag{Sprig-of-Thyme}] 
We will prove \Eq{SOT} by induction on the number of red lines $n$ leaving the $J_\ccj$ blob. The key, as in \tree, is to cancel all diagrams which contribute to both sides of \Eq{SOT} but have fewer than $n$ red lines attaching to the $J_\ccj$ blob using \Eq{pulloneout2} and the induction hypothesis.
The remaining diagrams will have all $n$ red lines connecting to the $J_\ccj$ blob so that \Eq{SOT} follows from \Eq{pulloneout2}.To avoid the notational quagmire of an algebraic induction proof as was done in \tree, in this paper we take a diagrammatic approach.

To begin  note that both sides of \Eq{pulloneout2} can be decomposed into colored diagrams. 
We will thus consider all of the blue diagrams in \Eq{pulloneout2} with a fixed number of red lines emerging from the $J_\ccj$ blob. 

{\bf n=0:}
 With no red lines coming out of $Y_\ccj^\dg$, this Wilson line is simply 1 and \Eq{SOT} follows from \Eq{blueonly} exactly:
\be
\fd{3cm}{SingleCollSect.pdf} 
\;\overset{\text{gen. }r_\ccc}{~\LPFeq}\;\;
\frac{\bra{X_\ccj} \phi^\star W_\ccj \ket{0} }{ \bra{0} Y_\ccj^\dg W_\ccj \ket{0} }
\label{n=1Step4}
\ee

{\bf n=1:}
Consider \Eq{pulloneout2} with one red end attached anywhere. Since there is only one red end attached, the red line must be part of $\bra{X_\scs} = \bra{k}$. 
Then the left-hand side of \Eq{pulloneout2}, at this order, is given by
\be
\bra{X_\ccj ;k} \phi^\star\, Y_\cci  \ket{0} \Big|_{1~\scred}
\;\overset{\text{gen. }r_\ccc}{~\LPFeq}\;\;
\sum_{\text{perms on~$\ccj$}}\;\fd{2.8cm}{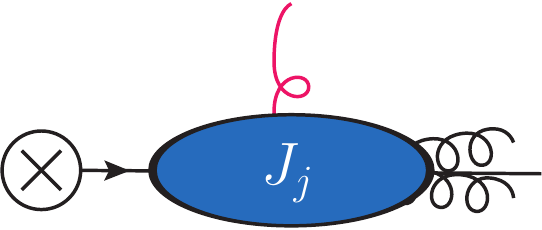} \;\;+\;\; 
\fd{3.3cm}{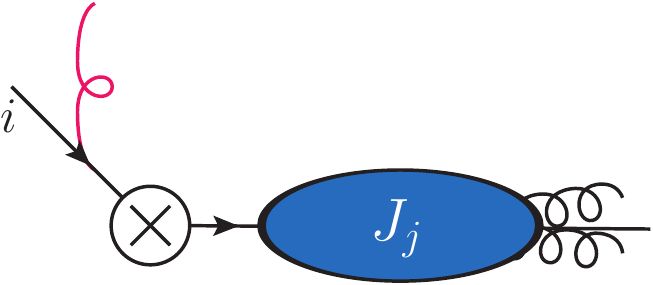} 
\label{n=1Step1}
\ee
On the right-hand side of \Eq{pulloneout2} the red line can only come from one of the Wilson lines in $\bra{k} Y_\ccj^\dg Y_\cci \ket{0}$ (since the other factor is all blue), so
\be
\left(
\frac{\bra{X_\ccj} \phi^\star W_\ccj \ket{0} }{ \bra{0} Y_\ccj^\dg W_\ccj \ket{0} }
\bra{X_\scs} Y_\ccj^\dg Y_\cci \ket{0} \right) \Big|_{1~\scred}
=
\frac{\bra{X_\ccj} \phi^\star W_\ccj \ket{0} }{ \bra{0} Y_\ccj^\dg W_\ccj \ket{0} }
\bigg( \bra{k} Y^\dg_j \ket{0}\Big|_{1~\scred}+  \bra{k} Y_\cci \ket{0}\Big|_{1~\scred}
\bigg)
\label{n=1Step2}
\ee
By \Eq{pulloneout2},  Eqs~\eqref{n=1Step1} and \eqref{n=1Step2} are equal. By \Eq{n=1Step4}, the second term on the right-hand sides of  Eqs~\eqref{n=1Step1} and \eqref{n=1Step2} are separately
equal. This leaves
\be
\fd{3.3cm}{IndSOT2.pdf} 
\;\overset{\text{gen. }r_\ccc}{~\LPFeq}\;\;
\frac{\bra{X_\ccj} \phi_j^\star W_\ccj \ket{0} }{ \bra{0} Y_\ccj^\dg W_\ccj \ket{0} }  \times \bra{k} Y_\ccj^\dg \ket{0}\Big|_{1~\scred}
\label{SOTstep1}
\ee
We  can now strip off the polarization vector (the contraction with the external state) because the vertex Feynman rule is the same for a red line in a loop connecting to another sector or for a real-emission, as discussed in \Eq{SoftVert} and also in the \emph{Soft-Attachment Lemma} (Lemma~\ref{lem:SoftAttach}). Thus, \Eq{SOTstep1} establishes \Eq{SOT} for $n=1$.

{\bf n=2:} At $n=2$, if the red lines are all external, \Eq{pulloneout2},  gives
\begin{align}
\sum_{\text{perms on~$\ccj$}}\; \fd{2.8cm}{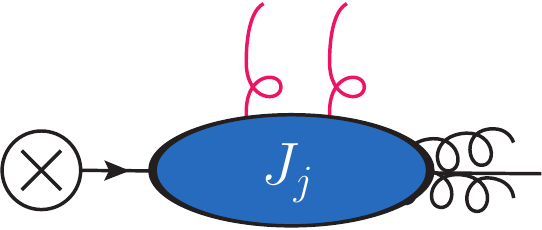}
&\;+\;
\sum_{\text{perms on~$\ccj$}}\; \fd{3.3cm}{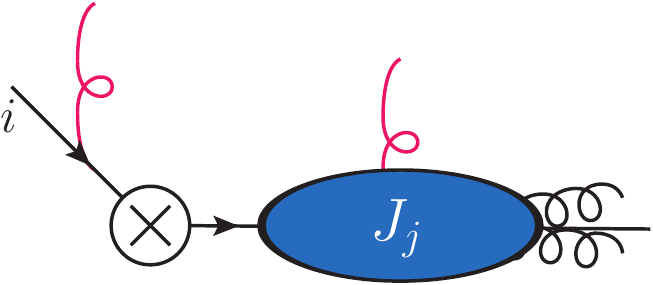}
\;+\;
\sum_{\text{perms on~$i$}}\; \fd{3.3cm}{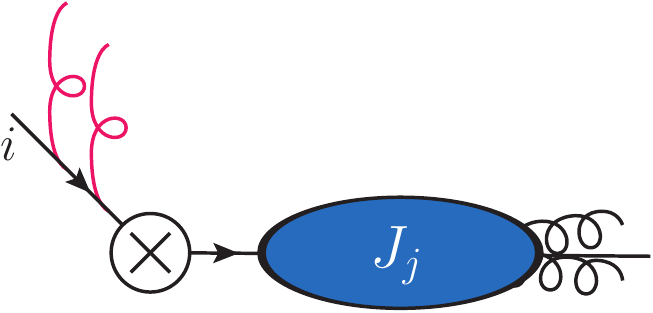}
\notag\\&
\;\overset{\text{gen. }r_\ccc}{~\LPFeq}\;\;
\frac{\bra{X_\ccj} \phi_j^\star W_\ccj \ket{0} }{ \bra{0} Y_\ccj^\dg W_\ccj \ket{0} } 
\left( \bra{k_1,k_2} Y_\ccj^\dg Y_\cci \ket{0}\right)\Big|_{2~\scred}
\end{align}
Using \Eq{n=1Step4}, the $\sum_{\text{perms~on~$\cci$}}$ terms cancel term-by-term with the $\cO(g^2)$ contractions of the external states with the $Y_\cci$ Wilson line. The middle term cancels with the $\cO(g)$ contractions of the external states with the $Y_\cci$ and $Y_\ccj^\dg$ operators using the previous induction step, \Eq{SOTstep1}. We are left with
\be
\sum_{\text{perms on~$\ccj$}}\fd{2.8cm}{IndSOT4.pdf}
\;\overset{\text{gen. }r_\ccc}{~\LPFeq}\;\;
\frac{\bra{X_\ccj} \phi^\star W_\ccj \ket{0} }{ \bra{0} Y_\ccj^\dg W_\ccj \ket{0} }
\times \bra{k_1,k_2} Y_\ccj^\dg \ket{0}\Big|_{2~\scred}
\label{SOTstep2}
\ee
This and the previous case are almost identical to the tree-level proof since there are as many external emissions as orders, $n$. That is, there are no red loops and we simply cancel off emissions off of the $i\neq j$ sector term-by-term using the previous induction hypotheses.

If the red lines are in a loop, then all cases where the red lines do not both come off the $\ccj$ line still cancel by the previous induction steps (which already have the polarization vectors stripped off).
Thus, after canceling these terms off in \Eq{pulloneout2}, we are left with:
\be 
 \sum_{\text{perms on~$\ccj$}} \; \fd{2.8cm}{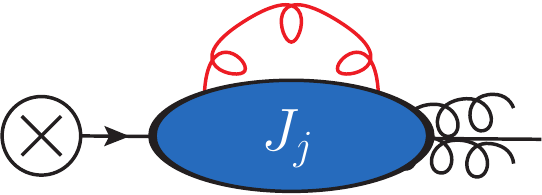}
\;\overset{\text{gen. }r_\ccc}{~\LPFeq}\;\;
\frac{\bra{X_\ccj} \phi^\star W_\ccj \ket{0} }{ \bra{0} Y_\ccj^\dg W_\ccj \ket{0} } 
\contraction[.7ex]{\bra{0}}{\!\!Y}{}{\!\!Y} 		
		\bra{0} Y_\ccj^\dg \ket{0} \Big|_{2~\scred}
\label{SOTn=2Check}
\ee
The indicated contraction is superfluous, since $\bra{0} Y_\ccj^\dg \ket{0}$ only has red lines and we are restricting it to only 2 red vertices. The combination of \Eq{SOTstep2} and \Eq{SOTn=2Check} mean that \Eq{SOT} holds for $n=2$.

{\bf Arbitrary n:} It should now be clear how the induction step works: at every step, all of the diagrams in \Eq{pulloneout2} cancel except those with all of the red lines on the $\ccj$th sector. That is, using all of the previous induction steps, this cancellation occurs between all of the contractions of the Wilson lines except those that only involve $Y_\ccj^\dg$. After canceling the terms off, we are left with the result for any $n$. Hence, \Eq{SOT} is proved.

\end{proof}

\subsection{Final steps}
\Eq{SOT} implies that we can strip red lines off sector-by-sector of the general reduced-diagram in \Eq{pinches2}: 
\be
\fd{6cm}{AllLoopPinch.pdf}  
	 \;\overset{\substack{\text{any~$r_\scs$} \\ \text{gen.~$r_\ccc$} }}{\LPFeq}\; 
\fd{6cm}{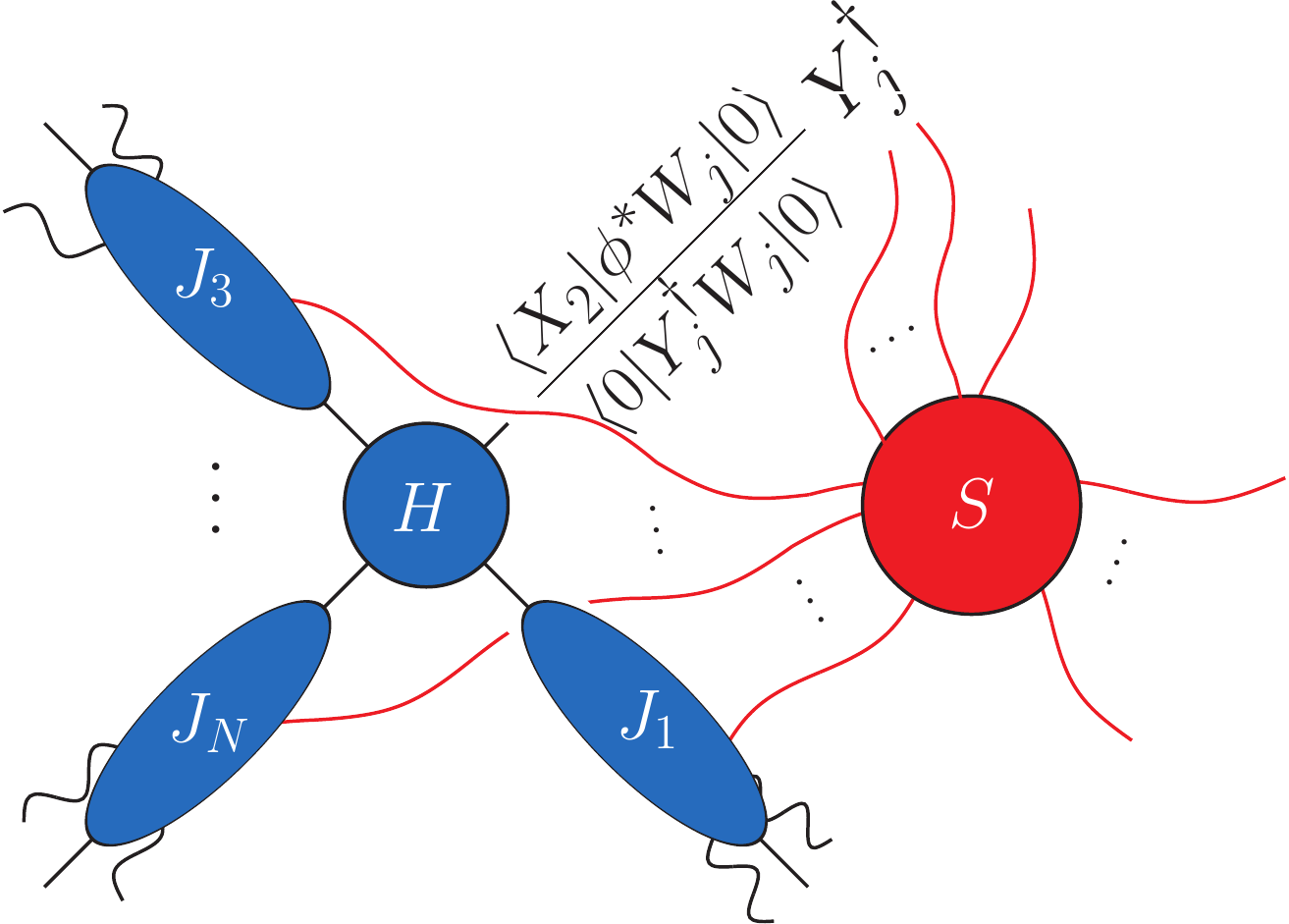}  
\ee
Once the red lines are stripped off of every collinear sector, they connect from the soft Wilson lines, through the $S$ blob, to the external emissions. The $S$ blob gives all possible interactions with the full QCD Lagrangian Feynman rules, so the red lines are exactly described by the matrix element $\bra{X_\scs} Y_\ccO^\dg \cdots Y_\rN \ket{0}$
in QCD. Thus,
\begin{align}
\bra{X_\ccO\cdots X_\rN;X_\scs} \cO \ket{0} \;&
 \;\overset{\substack{\text{any~$r_\scs$} \\ \text{gen.~$r_\ccc$} }}{\LPeq}\; 
	\fd{6cm}{AllLoopPinch.pdf}  
	\hspace{1cm} 
\notag\\[4mm]  \;&
 \;\overset{\substack{\text{any~$r_\scs$} \\ \text{gen.~$r_\ccc$} }}{
~\LPFeq}
~~
\underbrace{
\frac{\bra{X_\ccO} \phi^\star W_\ccO \ket{0} }{ \bra{0} Y_\ccO^\dg W_\ccO \ket{0} } 
}_{\text{blue only, $J_1$}}
\cdots
\underbrace{
\frac{\bra{X_\rN} W_\rN^\dg \phi \ket{0} }{ \bra{0} W_\rN^\dg Y_\rN \ket{0} } 
}_{\text{blue only, $J_N$}}
\times 
\underbrace{
\bra{X_\scs} Y_\ccO^\dg \cdots Y_\rN \ket{0}
 }_{\text{red only, $S$}}
\label{SoftCollFact}
\end{align}
The braces describe which parts of the reduced diagram the indicated quantities reproduce, in physical gauges. Since both sides are gauge invariant, this factorization formula holds in any gauge, even covariant ones.

This completes the proof of hard-soft-collinear factorization. To clean things up, we can drop the  $\LPFeq$ sign in favor of the leading-power equality, $\LPeq$, by adding in the Wilson coefficient. At every stage that we have dropped IR-insensitive loops, they have not contained external emissions by Lemma~\ref{lem:EatPower}, so the Wilson coefficient is still independent of the states, $\bra{X_\ccj}$ and $\bra{X_\scs}$, and only depends on the net momentum in each collinear sector (using the procedure of \Eq{HardSoftProc}). Therefore, we have our final factorization formula:
\be
\boxed{
\bra{X_\ccO\cdots X_\rN;X_\scs} \cO \ket{0} 
	\;\LPeq\; 
C(\Sij)\, \frac{\bra{X_\ccO} \phi^\star W_\ccO \ket{0}}{\bra{0} Y_\ccO^\dg W_\ccO \ket{0}} \,\cdots\, \frac{\bra{X_\rN} W_\rN^\dg\phi \ket{0}}{\bra{0} W_\rN^\dg Y_\rN \ket{0}}\, \bra{X_\scs} Y_\ccO^\dg \cdots Y_\rN \ket{0}
}
\label{finalfact}
\ee

\section{General scattering amplitudes}

\label{sec:GenHardScat}
So far, we have discussed factorization for matrix elements of local operators. None of the arguments given to derive the structure of the reduced diagram in \Eq{pinches2}
actually require the scattering to be mediated by a single operator.
In calculating a general scattering matrix element, any line that cannot go on-shell cannot be IR sensitive. Thus off-shell lines can be included in
the hard amplitude of the reduced diagram and absorbed into the Wilson coefficient. 

For example, we have already shown that matrix elements for the operator $|\phi|^2$ between the vacuum and final states $\bra{X_\ccTH X_\ccF; X_\scs}$ factorize as
\be
\bra{X_\ccTH X_\ccF;X_\scs} \phi^\star \phi \ket{0}
 \LPeq 
\cC_{| \phi |^2}(S_{\ccTH\ccF}) 
\; \frac{\bra{X_\ccTH} \phi^\star W_\ccTH \ket{0}}{ \bra{0} Y_\ccTH^\dg W_\ccTH \ket{0} } 
\; \frac{\bra{X_\ccF} W_\ccF^\dg \phi \ket{0}}{ \bra{0} W_\ccF^\dg Y_\ccF\ket{0} } 
\; \bra{X_\scs} Y_\ccTH^\dg Y_\ccF \ket{0}
\ee
where $S_{\ccTH\ccF}=(P_\ccTH+P_\ccF)^2$ and $\cC_{| \phi |^2}(S_{\ccTH\ccF})=1$ at tree level. Let us compare this to $\gamma\gamma \to \phi\phi^\star$ in scalar QED. At tree level, three diagrams contribute:
\be
\fd{1.8cm}{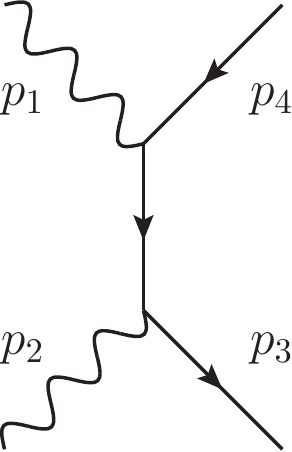}
 \quad+\quad 
\fd{1.8cm}{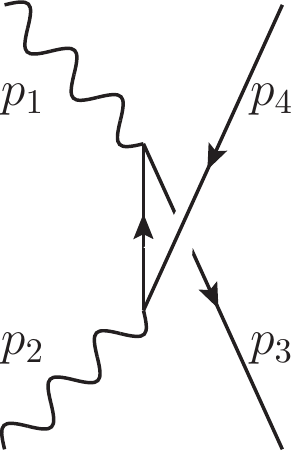}
 \quad+\quad 
\fd{1.8cm}{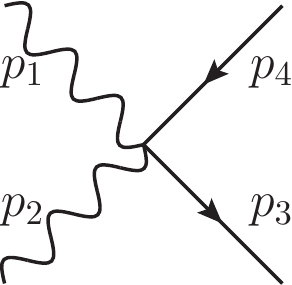}
\label{somehardproc}
\ee
Due to the off-shell lines, this  amplitude cannot be written exactly as the matrix element of a local operator. On the other hand, since
the lines are off-shell, we can still factorize the amplitude for $\gamma \gamma \to \bra{X_\ccTH X_\ccF; X_\scs}$ as   
\be
\braket{X_\ccTH X_\ccF;X_\scs |
\epsilon^\mu(p_\ccO); \epsilon^\nu(p_\ccT)}
 \LPeq
\epsilon_\mu^\ccO \epsilon_\nu^\ccT \, \cC_{\gamma\gamma\phi\phi^\star}^{\mu\nu}(S_{\cci\ccj}) 
\; \frac{\bra{X_\ccTH } \phi^\star W_\ccTH  \ket{0}}{ \bra{0} Y_\ccTH^\dg W_\ccTH\ket{0} } 
\; \frac{\bra{X_\ccF} W_\ccF^\dg \phi \ket{0}}{ \bra{0} W_\ccF^\dg Y_\ccF\ket{0} } 
\; \bra{X_\scs} Y_\ccTH^\dg Y_\ccF \ket{0}
\ee
with
\be
\cC^{\mu\nu}_{\gamma\gamma\phi\phi^\star } (s_{\cci\ccj}) =  
e^2\left[\frac{(2p_\ccF^\mu - p_\ccO^\mu)(p_\ccT^\nu - 2 p_\ccTH^\nu)}{(p_\ccO - p_\ccF)^2}
 +  \frac{(2p_\ccTH^\mu - p_\ccO^\mu)(p_\ccT^\nu - 2 p_\ccF^\nu)}{(p_\ccTH - p_\ccO)^2} 
+2 i g^{\mu\nu}  \right]
\ee
at tree level.

At higher orders, the Wilson coefficients $\cC_{| \phi |^2}$ and $\cC^{\mu\nu}_{\gamma\gamma\phi\phi^\star }$ will get different radiative corrections, 
but the jet and soft sectors of the factorized processes are identical.  The all-orders definitions of the Wilson coefficients are
\be
\quad\; \cC_{| \phi |^2}(Q) 
= \frac{\bra{\phi,p_\ccTH;\phi^\star,p_\ccF} \phi^\star\phi \ket{0}}
{
\dfrac{\bra{\phi,p_\ccTH} \phi^\star W_\ccTH \ket{0}}{ \bra{0} Y_\ccTH^\dg W_\ccTH\ket{0} } 
\, \dfrac{\bra{\phi^\star,p_\ccF} W_\ccF^\dg \phi \ket{0}}{ \bra{0} W_\ccF^\dg Y_\ccF\ket{0} } 
\, \bra{0} Y_\ccTH^\dg Y_\ccF \ket{0}
}
\ee
and
\be
\cC_{\gamma\gamma\phi\phi^\star}(Q) 
= \frac{\braket{\phi,p_\ccTH;\phi^\star,p_\ccF |
\epsilon^\mu(p_\ccO); \epsilon^\nu(p_\ccT)}}
{
\dfrac{\bra{\phi,p_\ccTH} \phi^\star W_\ccTH \ket{0}}{ \bra{0} Y_\ccTH^\dg W_\ccTH\ket{0} } 
\, \dfrac{\bra{\phi^\star,p_\ccF} W_\ccF^\dg \phi \ket{0}}{ \bra{0} W_\ccF^\dg Y_\ccF\ket{0} } 
\, \bra{0} Y_\ccTH^\dg Y_\ccF \ket{0}
}
\ee
In either case, the Wilson coefficient only depends on the type of scattering and not on distribution of soft and collinear radiation in 
the external states $\bra{X_\ccTH X_\ccF;X_\scs}$.
Thus, we see that the factorization arguments given in this paper apply to any type of scattering process in any gauge theory as long as the external states 
contain only soft and collinear degrees of freedom.

Factorization holds with identical arguments when there are collinear particles
in the initial state, with the only change that the Wilson lines become incoming (see \tree). The situation where particles in the initial state are collinear
to particles in the final state are explicitly {\it excluded} from our formulation. In particular, general hadron-hadron scattering is not described if there
are spectator partons with significant energy. The formula does apply to the special case of threshold hadron-hadron scattering, where the partonic center-of-mass
is close to the machine energy so the spectator partons are necessarily soft. Expanding around this limit has proved useful in both total-cross-section calculations~\cite{Ahrens:2008nc,Ahrens:2010zv} and jet
shape calculations at hadron colliders~\cite{Kidonakis:1998bk,Laenen:1998qw,Becher:2006mr,Becher:2009th,Kelley:2010qs,Chien:2012ur}.

\section{QCD}
\label{sec:QCD}

All of the arguments in the proof of hard-soft-collinear factorization are completely general. They apply to any renormalizable Abelian or non-Abelian gauge theory with any
matter content. The change in going from scalar QED  to QCD  essentially amounts to pinpointing where the color indices go. 
We will use $\suh_\cci$
for fundamental color indices and $\sua,\sub,\cdots$ for adjoint indices,
with $\cci$ and $\ccj$ still denoting jet directions.

\subsection{Jet amplitudes}

To add in the color contractions, we trace back through the soft-collinear factorization discussion, 
replacing scalars with quarks.
Eq.~\eqref{blueonly} becomes
 \be
\fd{3cm}{SingleCollSect.pdf} =
\bra{X_\ccj} \bar{\psi} \ket{0}^\suh \Big|_\text{blue only} 
\label{blue2}
\ee
Here the $\suh$ color index comes from the net color of the state $\bra{X_\ccj}$ that exits the jet blob on the left. Now, recall that in factorization gauge with $r_\scs=p_\ccj$ no soft sensitive lines can attach to the $\ccj$-collinear sector, which led to Eqs. \eqref{Wbootrt} and \eqref{blueonly}. In QCD these equations become
 \be
\fd{3cm}{SingleCollSect.pdf} 
	\overset{\substack{r_\scs\, = \,p_\ccj \vspace{.5mm}\\ \text{gen. }r_\ccc}}{~\LPFeq}
 \bra{X_\ccj} \bar{\psi} \ket{0}^{\suhp} \left[\frac{1}{ \bra{0} Y_\ccj^\dg  \ket{0} }\right]^{\suhp \suh}
 \overset{\text{gen. }r_\ccc}{~\LPFeq}
 \bra{X_\ccj} \bar{\psi}W_\ccj \ket{0}^{\suhp} \left[\frac{  1}{ \bra{0} Y_\ccj^\dg W_\ccj \ket{0} }\right]^{\suhp \suh}
\ee
One can think of $W_\ccj$ as bringing color $\suhp$ in from infinity to the origin along the $t_\ccj$ direction. 
Now the vacuum is gauge invariant, so 
\be
\bra{0} Y_\ccj^\dg W_\ccj\ket{0}^{\suh \suhp} = \frac{1}{N_c} \tr \bra{0} Y_\ccj^\dg W_\ccj \ket{0} \; \delta^{\suh \suhp}
\ee
and therefore
\be
\fd{4cm}{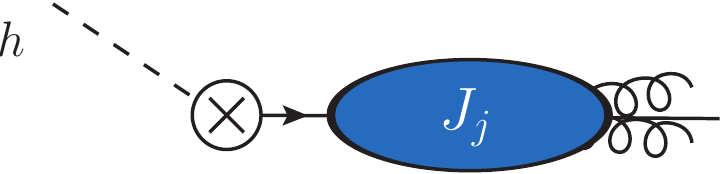} 
 \;\overset{\text{gen. }r_\ccc}{~\LPFeq}\;
N_c \frac{  \bra{X_\ccj} \bar{\psi}W_\ccj \ket{0}^{\suh}}{\tr \bra{0} Y_\ccj^\dg W_\ccj \ket{0} }
\ee

Similarly, the sprig-of-thyme equation, Eq. \eqref{SOT} for a quark jet becomes
\be
\sum_{ \text{perms on \ccj} } \;
\fd{4.3cm}{AllLoopThyme.pdf} 
\;\overset{\text{gen-}r_\ccc\;}{\;\LPFeq}\;
 \frac{  \bra{X_\ccj} \bar{\psi}W_\ccj \ket{0}^{\suhp}}{\tr \bra{0} Y_\ccj^\dg W_\ccj \ket{0} }
\times \big(Y_\ccj^\dg\big)^{\suhp \suh} \Big|_{n}
\label{SOTquark}
\ee
With the $N_c$ factor implicitly absorbed into the Wilson coefficient (by the definition of $\LPFeq$).
Pulling $n$ gluons out of the soft Wilson line gives a series of $\suTT^\sua$ matrices which multiply through to convert $\suhp$ to $\suh$.
The color indices on the soft Wilson line represent a matrix which transforms the color coming out of the hard process due to the soft radiation. It
is, of course, highly nontrivial that the color within the jet is manipulated only by $\bar\psi$ and $W_\ccj$ and the color of the soft radiation is
manipulated only by $Y_\ccj$, with the two not interacting. It is also true, since the soft radiation only senses the net color charge of the collinear radiation. This follows
from our proof because in $r_\scs= p_\ccj$ the soft radiation comes from everywhere else in the event (which has the opposite color charge as the jet). All of the manipulations
we did to prove soft-collinear factorization used only gauge invariance and that in the soft limit, gluon emissions are reproduced by the matrix element of a path-ordered Wilson line 
(a fact both well-known and proven in \tree).

The sprig-of-thyme for gluon jets is similar, but involves adjoint Wilson lines, $\cY_\ccj$ and $\cW_\ccj$ defined in \Eq{Yadjdef}. The equivalent of Eq.~\eqref{SOTquark} with adjoint vector fields is
\be
\sum_{ \text{perms on \ccj} } \;
\fd{4.3cm}{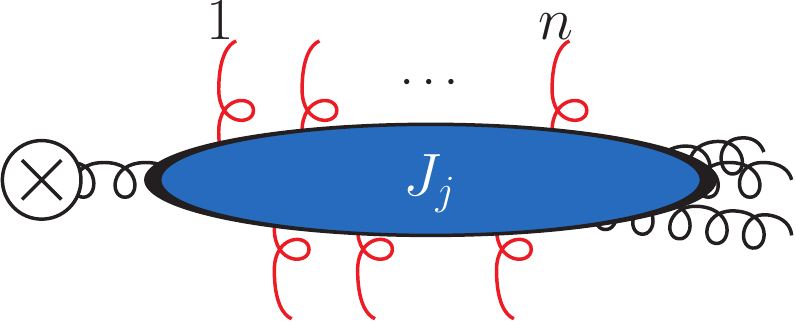} 
\;\overset{\text{gen-}r_\ccc\;}{\;\LPFeq}\;
 \frac{\bra{X_\ccj} A^\mu \cW_\ccj \ket{0}^{\sub}}
{ \tr  \bra{0} \cY_\ccj^\dg \cW_\ccj \ket{0}  }
\times \big(\cY_\ccj^\dg \big)^{\sub \sua} \,\Big|_{n}
\label{SOTglue1}
\ee
where $\tr\, \delta^{\sua \sub} = d(\adj) = N_c^2-1$ is again dropped.
Note that adjoint Wilson lines are not themselves Hermetian, despite the fact that the adjoint representation is real. Conjugating a path-ordered Wilson lines
reverses the order of the matrices. Thus, the correct relation between an adjoint Wilson line and its conjugate is $(\cY^\dg)^{\sua \sub} = \cY^{\sub \sua}$.

Although $A_\mu^\suc \cW_\ccj^{\suc\sub}$ is the obvious adjoint-version of $\bar{\psi} W_\ccj$, it is somewhat jarring to see  an operator
with a raw gauge field instead of covariant derivatives. Of course, since any matrix element of a color-singlet operator will satisfy the Ward identity, any factorized expression
containing  $A_\mu^\sua$ will also satisfy the Ward identity. It is nevertheless sometimes useful to rewrite the gluon jet function in terms of covariant derivatives.

If the original operator has $A_\mu^\sua$ in a covariant derivative in the fundamental representation, such as $\cO = \bar{\psi} \slashed{A} \psi$, then $A_\mu^\sua$ will come accompanied
by a $\suTT^\sua$. Thus there will be a $\suTT^\sua_{\suh \suhp}$ contracted with the $\sua$ index in \eqref{SOTglue1}, with $\suh$ and $\suhp$ contracted elsewhere in the factorized
expression. Now, use $Y_\ccj^\dg  \suTT^\sua\, Y_\ccj =\cY_\ccj^{\sua\sub}\, \suTT^\sub$, as in  \Eq{YYrel}, $(\cY^\dg)^{\sua\sub} = \cY^{\sub\sua}$, and $\tr [\suTT^\sua \suTT^\sub] = T_F \delta^{\sua\sub}$,
we find
\begin{align}
 \cW_\ccj^{\sua\sub}\big(\cY_\ccj^\dg\big)^{\sub\suc}  T_{\suh \suhp}^\suc 
&= \cW_\ccj^{\sua\sub} \big( Y_\ccj \suTT^\sub Y_\ccj^\dg \big)_{\suh\suhp}\nn
\\
&= T_F^{-1} \tr[ \suTT^\sua \suTT^\suc] \cW_\ccj^{cb}  \big( Y_\ccj \suTT^\sub Y_\ccj^\dg \big)_{\suh\suhp}
\notag\\&= T_F^{-1} \tr[  \suTT^\sua W_\ccj \suTT^\sub W_\ccj^\dg ]  \big( Y_\ccj \suTT^\sub Y_\ccj^\dg \big)_{\suh\suhp}
\label{ConvertRep}
\end{align}
Since the Ward identity must be satisfied in any process we consider, replacing $A_\mu \to \partial_\mu$ gives zero. Thus,
we can replace $i g_s A_\mu^\sua\tr[  \suTT^\sua W_\ccj \suTT^\sub W_\ccj^\dg ] \to \tr[ W_\ccj^\dg  D_\mu W_\ccj \suTT^\sub ]$. Therefore,
 converting the denominator with similar manipulations to those in \Eq{ConvertRep} and absorbing $i g_s$ into the Wilson coefficient, we can write
\be
\sum_{ \text{perms on \ccj} } 
\fd{4.3cm}{AllLoopThyme_glu.pdf} 
\times \suTT^\sua_{\suh \suhp}
\overset{\text{gen-}r_\ccc\;}{\;\LPFeq} 
\frac{\tr \bra{X_\ccj} W_\ccj^\dg D_\mu W_\ccj \suTT^\sub \ket{0} }
{ \tr \bra{0} W_\ccj^\dg (Y_\ccj \suTT^\sua Y_\ccj^\dg) W_\ccj \suTT^\sua \ket{0}  }
\big( Y_\ccj \suTT^\sub Y_\ccj^\dg \big)_{\suh \suhp} \,\Big|_{n}
\label{SOTglue2}
\ee
Jet amplitudes in this form are occasionally useful since they manifest gauge invariance and only have Wilson lines in the fundamental representation.

\subsection{Example factorization formulas}
To write down the factorization formula in QCD for some process, we simply combine
copies of Eqs. \eqref{SOTquark} and  \eqref{SOTglue2} for each quark or gluon jet direction
and contract the loose soft-Wilson lines  with the soft-sector final state.
For example, a vector boson decaying to 3 jets can be mediated by a  hard-scattering
operator of the form
\be
\cO = \bar\psi \slashed{D} \psi 
\ee
The associated factorization formula is, in gluon-jet notation
\begin{multline}
\bra{X_\ccO X_\ccT X_\ccTH;X_\scs} \bar\psi \slashed{D} \psi \ket{0}
\\[2mm]
\LPeq
\cC(\Sij) \; \gamma^\mu_{\spincol \alpha \beta}
\;
\dfrac
{\bra{X_\ccO} \bar\psi\, W_\ccO\ket{0}^{ {\spincol \alpha}\, {\suh_\ccO}} }
{ \tr\bra{0} Y_\ccO^\dg W_\ccO \ket{0}}
\;
\dfrac{\bra{X_\ccj} A^\mu \cW_\ccj \ket{0}^{\sua}}
{ \tr  \bra{0} \cY_\ccj^\dg \cW_\ccj \ket{0}  }
\;
\dfrac
{\bra{X_\ccTH} W_\ccTH^\dg\,\psi \ket{0}^{ {\spincol \beta} \, {\suh_\ccTH}}}
{\tr\bra{0} W_\ccTH^\dg Y_\ccTH \ket{0}}
\bra{X_\scs} Y_\ccO^\dg \cY_\ccT^{\dg \sua \sub} {\suncol T}^\sub \, Y_\ccTH \ket{0}^{\suncol {h_\ccO   h_\ccTH}} \hspace{5mm}
 \end{multline}
or, representing the gluons with covariant derivatives,
 \begin{multline}
\bra{X_\ccO X_\ccT X_\ccTH;X_\scs} \bar\psi \slashed{D} \psi \ket{0}
\\[1mm] \hspace{-.5cm}\LPeq
\cC(\Sij) \; \gamma^\mu_{\spincol \alpha \beta}
\;
\dfrac
{\bra{X_\ccO} \bar\psi\, W_\ccO\ket{0}^{ {\spincol \alpha}\, {\suncol h_\ccO}} }
{ \tr\bra{0} Y_\ccO^\dg W_\ccO \ket{0}}
\;
\dfrac
{\tr \bra{X_\ccT} W_\ccT^\dg D_\mu W_\ccT {\suncol T^a} \ket{0} }
{ \tr \bra{0} W_\ccT^\dg (Y_\ccT {\suncol T^b} Y_\ccT^\dg) W_\ccT {\suncol T^b} \ket{0}  }
\;
\dfrac
{\bra{X_\ccTH} W_\ccTH^\dg\,\psi \ket{0}^{ {\spincol \beta} \, {\suncol h_\ccTH}}}
{\tr\bra{0} W_\ccTH^\dg Y_\ccTH \ket{0}}
\\[3mm]
 \times
\bra{X_\scs} Y_\ccO^\dg \,Y_\ccT {\suncol T^a} Y_\ccT^\dg \, Y_\ccTH \ket{0}^{\suncol {h_\ccO h_\ccTH}} \hspace{5mm}
\end{multline}
where ${\spincol \alpha}$ and ${\spincol \beta}$ are  Dirac spin indices, ${\suncol a}$ and ${\suncol b}$ are adjoint color indices and ${\suncol h}_\cci$ are fundamental color indices. 
To reduce clutter, the $N_c$ and $N_c^2-1$ factors from the traces have been absorbed into the Wilson coefficient; to put them back one only needs to divide each zero bin by the dimension of the representation of that sector.

There may be multiple operators contributing to a single hard process. For example, in $u d \to u d$ scattering, there are two relevant hard operators~\cite{Kelley:2010fn}:
\be
\cO_1 = (\bar{u} {\suncol T^a} \gamma^\mu u)( \bar{d} {\suncol T^a} \gamma^\mu d) , 
\qquad
\cO_2 =( \bar{u} \gamma^\mu u)( \bar{d} \gamma^\mu  d), 
\ee
where the parentheses indicate color contractions. 
For $u d \to u d$ at tree level in QCD, only a single-gluon exchange is relevant and so $\cO_2$ is not. At 1-loop and beyond, both operators are important to correctly reproduce the hard scattering. 
As in this paper we have avoided configurations where incoming and outgoing partons can be collinear, the factorization formula has only
been shown to hold in threshold kinematical regimes where there is no phase space for hard initial state radiation to end up in the final state~\cite{Kidonakis:1998bk,Laenen:1998qw,Becher:2006mr,Becher:2009th,Kelley:2010qs,Chien:2012ur}.
Alternatively, one could think of the factorization formula in this case mediating a decay, like
$h \to \bar{u} u \bar{d} d$ rather than a scattering process. Factorization for 4-parton scattering
was also studied in~\cite{Sen:1982bt}.

To study $u d \to u d$ near threshold is helpful to have somewhat more general notation.
Labeling the hard partons as $\ccO$, $\ccT$, $\ccTH$, and $\ccF$, the relevant operators are
\begin{equation}
 \mathcal{O}_{\suI\gGG \gGG' } =
 ( \bar{q}_\ccF {\suncol T_I} \gamma_{\mu} \gG q_\ccT ) 
 ( \bar{q}_\ccTH {\suncol T_I} \gamma^{\mu} \gG'q_\ccO )  \,.
\end{equation}
Here, $\suI$ indexes the color structure ($\suTT_\suO= \suTT^\sua$ or $\suTT_\suT= {\suncol \mathbf{1}}$),
and $\gG$ and $\gG'$ index the helicity (e.g. $\gG =\gG'= {\spincol P_L} = {\spincol P_+} =  \frac{1}{2}\left( 1 -\gamma_5\right)$).
Helicity and flavor is preserved in QCD, so the helicity of the $u$ fixes the helicity of the $\bar{u}$.
There are thus eight relevant operators, since $\suI = 1,2$, $\gG = \gpm$ and $\gG' = \gpm$. 
Each set of helicities has a separate factorization, but the color structures can mix.

So the matrix element for a 4 quark-jet decay factorizes as
\begin{multline}
\cM_{\gpm \gpm} 
= \cM_{\gpm \gpm}( p_\ccO  + p_\ccT \to X_\ccTH+  X_\ccF + X_\scs)
 \\[2mm]
 \LPeq \sum_{\suI} \cC^{\suI}_{ \gpm \gpm}(S_{\cci\ccj})
\dfrac
{\bra{X_\ccF} \bar\psi_{\ccF} W_\ccF\ket{0}^{\gpm {\suh_\ccF} } }
{ \tr\bra{0} Y_\ccF^\dg W_\ccF \ket{0} }
\;
\dfrac
{\bra{p_\ccT} {\overline W}^\dg_\ccT \psi_{\ccT}\ket{0}^{\gpm {\suh_\ccT} } }
{ \tr\bra{0}  {\overline W}^\dg_\ccT {\overline Y}_\ccT \ket{0} }
\;
\dfrac
{\bra{X_\ccTH} \bar\psi_{\ccTH} W_\ccTH\ket{0}^{\gpm {\suh_\ccTH} } }
{ \tr\bra{0} Y_\ccTH^\dg W_\ccTH \ket{0} }
\;
\dfrac
{\bra{p_\ccO} {\overline W}^\dg_\ccO \psi_{\ccO}\ket{0}^{\gpm {\suh_\ccO} } }
{ \tr\bra{0}  {\overline W}^\dg_\ccO {\overline Y}_\ccO \ket{0} }\\[2mm]
\times
\bra{X_\scs} ( Y_\ccF^\dg {\suncol T_I} {\overline Y}_\ccT )^{ {\suh_\ccF} {\suh_\ccT} }( Y_\ccTH^\dg {\suncol T_I} {\overline Y}_\ccO )^{ {\suh_\ccTH} {\suh_\ccO} } \ket{0}
\hspace{15mm}
\end{multline}
where ${\overline W_i}$ and ${\overline Y_\cci}$ are incoming Wilson lines (see \tree). 
Note that we only write explicitly the color and spin indices of the partons which emerge from the hard scattering. There are many implicit color and spin indices in the states $\bra{X_\ccj}$ and $\bra{X_\scs}$. These colors and spins are important when computing scattering amplitudes, but are usually summed over in computing resummed distributions. 

\subsection{QCD factorization formula}
In summary, a general factorization formula in QCD can be written as
\begin{equation}
  \addtolength{\fboxsep}{3mm}
   \boxed{
   \begin{gathered}
\cM_{ \{ \gpm \} } \LPeq \sum_{\suI} \cC_{\suI, \{\gpm\} } (S_{\cci\ccj} ) 
\hspace{95mm}
\\
\hspace{5mm}
\times \cdots
\dfrac
{\bra{X_\cci} \bar\psi_{ \cci } W_\cci \ket{0}^{\gpm {\suh_\cci}}  }
{ \tr\bra{0} Y_\cci^\dg W_\cci \ket{0} }
\cdots
\dfrac{\bra{X_\ccj} A^\mu \cW_\ccj \ket{0}^{\gpm \sua_\ccj}}
{ \tr  \bra{0} \cY_\ccj^\dg \cW_\ccj \ket{0}  }
\cdots
\dfrac
{\bra{X_{\cck}} W_\cck^\dg\psi_{ \cck } \ket{0}^{\gpm {\suh_\cck}}}
{ \tr\bra{0} W_\cck^\dg Y_\cck \ket{0} }
\cdots\\[2mm]
\hspace{10mm}
\times 
\bra{X_\scs}\cdots  (Y_\cci^\dg {\suncol T_I^i})^{{\suh_\cci}{\suncol l_\cci}}
 \cdots 
( \cY^\dg_\ccj {\suncol T}_\suI^{\suj} )^{ {\suncol l}_{ \ccol j-1} \sua_\ccj {\suncol l}_{ \ccol j+1} }
 \cdots
 ({\suncol T^k_I}  Y_\cck)^{{\suncol l_\cck}{\suh_\cck}}
\cdots \ket{0} 
   \end{gathered}
   }
\label{genfactQCD}
\end{equation}
where the $\gpm$ indexes the helicities.
The ${\suncol l}_\cci$ indices are contracted within the soft Wilson line matrix element, while the $\suh_\cci$ and $\sua_\cci$ indices contract with the colors of the jets.

\section{Splitting functions and soft currents}
\label{sec:applic}

One application of factorization is that it can provide gauge-invariant and regulator-independent definitions of the collinear-sensitive or soft-sensitive
parts of scattering amplitudes. Such definitions may be useful in  perturbative QCD calculations if they help simplify or clarify the structure of the infrared divergences.
We therefore consider the soft and collinear limits of our formulas separately, deriving definitions of
splitting functions and soft currents and thereby proving their universality.

\subsection{Splitting Functions}
\label{sec:SplitFunc}

Suppose we have a state $\bra{X^0} = \bra{X_\ccO^0 \cdots X_\ccN^0; X_\scs^0}$ containing soft and collinear particles and a matrix element $\cM_0$ for producing that state. We want to know how $\cM_0$ is modified into $\cM$ 
by the addition of extra collinear particles to the $\ccj$-collinear sector, turning $\bra{X_\ccj^0}$ into $\bra{X_\ccj}$,
while leaving the net momenta in the $\ccj$ sector unmodified at leading power $P_\ccj^\mu \LPeq P_\ccj^{0\,\mu}$. 
 Let us write the modified
matrix element formally as some operator acting on the original matrix element
\be
\cM = \Sp \cdot \cM_0 
\ee
The distribution of the soft radiation in $\bra{X_\scs^0}$ is completely independent of the splitting. The only
modification from the addition of collinear particles to $\bra{X_\ccj^0}$ is in the matrix element associated with the $\ccj$-collinear sector. 

The factorization formulas for $\cM_0$ and $\cM$ are  almost identical. The relevant parts of the factorization formulas are:
 \begin{align}
 \cM_0 \LPeq
\dfrac
{\bra{X^0_\ccj} \bar\psi \, W_\ccj \ket{0}^{\gpm {\suh_\ccj}}  }
{ \tr\bra{0} Y^\dg_\ccj W_\ccj \ket{0} } \cdot 
   \cM_{{\black {\text{rest}}}}^{\suh_\ccj}
\;,
\qquad
 \cM \LPeq
\dfrac
{\bra{X_\ccj} \bar\psi \, W_\ccj \ket{0}^{\gpm {\suh_\ccj}}  }
{ \tr\bra{0} Y^\dg_\ccj W_\ccj \ket{0} }\cdot 
   \cM_{{\black {\text{rest}}}}^{\suh_\ccj}
\;
 \end{align}
Now, the spin of each collinear sector, that is, the helicity of the nearly-on-shell particle coming out of the hard vertex,
in $\cM$ must be the same as in $\cM_0$ for the two to be related. 
So let us fix this helicity $\gpm$ and drop the spin indices. 
Then we can write
\be
\Sp (X_\ccj, X_\ccj^0)^{ {\suh \suhp}} = 
\dfrac
{\bra{X_\ccj} \bar\psi \, W_\ccj \ket{0}^{\suh} }
{\bra{X^0_\ccj} \bar\psi \, W_\ccj \ket{0}^{\suh'} }
\label{QuarkSplit}
\ee
The notation here indicates that the splitting functions are operators in color space. Note that the zero-bin subtractions from the denominator of the general factorization formula have dropped out. These denominators are $1$ in dimensional regularization, but here we see that they play no role with any regulator. As we will see, this is also true for soft currents.

To  convert  Eq.~\eqref{QuarkSplit} into something more practical, let us work out a simple example,
following Section 9.1 of \tree.
We take
$\bra{X^0}$ to have a single right-handed antiquark in it with momentum $P^\mu$ and color ${\suh}$: 
 $\bra{X^0} = \bra{\bar{u}_{\suh}(P)}$. In terms of spinor helicities,  this state is $[ P$ and at tree level and
\be
  \cM_0^{ {\suh} {\spincol R}} \LPeq
   [P
   \cM_{{\black {\text{rest}}}}^{\suh} ]
\ee
 We take  take $\bra{X}$ to have a right-handed antiquark of momentum $p^\mu \LPeq z P^\mu$ and a single gluon with  momentum $q^\mu \LPeq (1-z) P^\mu$ with color $\sua$ and helicity ${\gpm}$. If the gluon helicity is ${\spincol -}$, the modified amplitude is (see \tree)
\begin{align}
\cM^{ {\suh \sua}{\spincol R - }}
&= g_s\, \frac{ \sqrt{2}}{[qp]} \frac{z}{\sqrt{1-z}} \,\,
[P \suTT^\sua_{\suh \suhp}
   \cM_{{\black {\text{rest}}}}^{\suhp} ] 
\end{align}
Thus the tree-level splitting function for a ${\spincol -}$ helicity gluon is
\be
\Sp_{{\spincol R-} }
(p,q) =  g_s\, \frac{ \sqrt{2}}{[qp]} \frac{z}{\sqrt{1-z}}  
\TT_\ccj
\ee
For a  ${\spincol +}$ helicity  gluon, the tree-level splitting function is also extractable from \tree:
\be
\Sp_{{\spincol R+} }^{{\suh \suhp}}(p,q) =  g_s  \, \frac{ \sqrt{2}}{\l pq\r}\bigg( \frac{z}{\sqrt{1-z}} + \sqrt{1-z} \bigg)
\suTT^{{\sua}}_{\suh \suhp}
\ee
These splitting functions can be calculated to higher order using Eq.~\eqref{QuarkSplit}. 

The gluon splitting functions are similar
\be
\Sp^{\mathbf g} (X_\ccj, X_\ccj^0)^{ {\suncol a b}} = 
\dfrac
{\bra{X_\ccj} W_{\ccj}^\dg A_\mu W_\ccj \ket{0}^{\suncol a} }
{\bra{X^0_\ccj} W_{\ccj}^\dg A_\mu W_\ccj \ket{0}^{\suncol b} }
\label{GluonSplit}
\ee
The universality of \Eq{QuarkSplit} and \Eq{GluonSplit} to all orders for any process is proven by our factorization theorem.

\subsection{Soft currents}
\label{sec:SoftCurr}
The equivalent of splitting functions for soft radiation are often called soft currents~\cite{Bassetto:1984ik}. Extracting their matrix-element definition from the general
factorization formula proceeds in the same way as for collinear splittings. 

Suppose we have a state $\bra{X^0} = \bra{X_\ccO^0 \cdots X_\ccN^0;X_\scs^0}$ containing soft and collinear particles and a matrix element $\cM_0$ for producing that state. We want to know how $\cM_0$ is modified into $\cM$ 
by the addition of extra soft particles $\bra{X_\scs}$. The modified
matrix element can be formally written as
\be
\cM = \JJ \cdot \cM_0 
\ee
where $\JJ$ is an operator acting in color space. Isolating the part of the factorization formula involving soft radiation, it follows that
\be
\JJ
= \frac{ 
\bra{ X_\scs} 
 Y_\ccO^{\dg} \cdots \TT_\suI \cdots  Y_\ccN \ket{0} }
{ \bra{X_\scs^0} Y_\ccO^{\dg} \cdots \TT_\suI \cdots Y_\ccN\ket{0} }
\ee
Here $\suI$ indexes the color structures of the relevant operators.

 $\JJ$ has implicit indices which also act on the color of the particles in $\bra{X_\ccO \cdots X_\ccN}$.
It is standard to write $\JJ$ as a function of color-charge operators $\TT_\ccj^\sua$ which act in color space as the $\mathrm{SU}(3)$ generator in the representation
of net color flowing in direction $\ccj$. This representation is of course the same as the representation of the $Y_\ccj$ Wilson line. When using color-charge operators,
one never needs to perform a color sum, and so there is, trivially, no dependence of $\JJ$ on the color structure $\suI$. That the matrix element for soft emission
{\it only} depends on the net color in each collinear sector, and not how that color is distributed, is a nontrivial consequence of factorization. It was proven to
1-loop by direct computation in~\cite{Catani:2000pi}, and now we have show that it holds to all orders in $g_s$, for an arbitrarily complicated collinear sector
and any number of hard particles. 

In the simplest case, $\bra{X_\scs^0} = \bra{0}$ and  $\bra{X_\scs}$ has only one gluon, with momentum $q$, polarization $\epsilon^\mu(q)$ and color $\sua$. Then $\JJ = \epsilon_\mu \JJ_\sua^\mu$.  
At tree level, $\JJ$ is:
\be
\JJ^{\mu (0) } = g_s  \sum_{\ccj=1}^m \TT_\ccj
 \frac{p_\ccj^\mu}{p_\ccj \cdot q} 
\ee
where $\TT_\ccj$ is the color-charge operator in the $\ccj$ direction. 
To be more concrete, if there is only a quark and anti-quark jet, then
\be
\JJ^\mu  = J^\mu_{\sua \suh \suhp}
= \frac{ 
\bra{ \epsilon^{ { \spincol \mu}} (p); \sua} 
 Y_\ccO^{\dg} Y_\ccT \ket{0}^{ \suh \suhp}}
{ C_A \tr \bra{0} Y_\ccO^{\dg} Y_\ccT\ket{0} }
=g_s \suTT^{\sua}_{\suh\suhp} \left( \frac{p_\ccO^\mu}{p_\ccO \cdot q} - \frac{p_\ccT^\mu}{p_\ccT \cdot q}\right) + \cdots
\ee
The $\suh$ and $\suhp$ color indices act on the jets, ${\bra{X_\ccO} \bar\psi \, W_\ccO \ket{0}^\suh }{\bra{X_\ccO} W_\ccT^\dg \psi \ket{0}^\suhp }$.

In dimensional regularization in $4-2\varepsilon$ dimensions, with outgoing particles only, the 1-loop current is~\cite{Catani:2000pi}:
\begin{multline}
\JJ^{\mu (1) } =-\frac{1}{16\pi^2} \frac{1}{\varepsilon^2} \frac{\Gamma^3(1-\varepsilon) \Gamma^2(1+\varepsilon)} { \Gamma(1-2\varepsilon)} 
\\
\times i f_{ \sua\sub \suc} \sum_{\cci\ne \ccj} \TT^\sub_\cci \TT^\suc_\ccj 
\left( \frac{p_\cci^\mu}{p_\cci \cdot q} - \frac{p_\ccj^\mu}{p_\ccj \cdot q} \right)
\left[ \frac{- 4 \pi p_\cci \cdot p_\ccj}{2 (p_\cci \cdot q) (p_\ccj \cdot q)} \right]^\varepsilon
\end{multline}
In calculating this current, Catani and Grazzini were able to prove that it it is independent of the momenta and color-flow of the process at 1-loop. 
As noted above, our proof generalizes this observation to all orders. Of course, the factorization formula does not help in actually calculating the soft
current in dimensional regularization. The current for one soft gluon emission at 2 loops can be found in~\cite{Duhr:2013msa,Li:2013lsa}.

Another familiar result that can be deduced from our all-orders definition of the soft current is that of Abelian exponentiation. Namely, that in an Abelian gauge theory, the soft current is exact at tree level. This follows simply from the fact that in an Abelian theory, the contraction of a Wilson line with the external state can be pulled out of the rest of the matrix element. Since the Wilson lines are exponentials, pulling out a contraction leaves behind the same Wilson line (just like taking a derivative), so that, to all orders in perturbation theory:
\be
J^\mu_\text{Abelian}
= \frac{ 
\bra{ q} 
 Y_\ccO^{\dg} \cdots Y_\ccN \ket{0} }
{ \bra{0} Y_\ccO^{\dg} \cdots Y_\ccN\ket{0} }
= 
\frac{ \bra{0} Y_\ccO^{\dg} \cdots Y_\ccN \ket{0} } { \bra{0} Y_\ccO^{\dg} \cdots Y_\ccN\ket{0} }
\sum_{\ccj=\ccO}^\ccN \bra{q} Y_\ccj \ket{0}_\text{tree}
= \sum_{\ccj=\ccO}^\ccN Q_\ccj \frac{n_\ccj^\mu}{n_\ccj\cdot q}
\ee
where $Q_\ccj$ is the QED charge: $Q_\ccj = e$ if it comes from a $Y_\ccj^\dg$ and $Q_\ccj = - e$ if it comes from a $Y_\ccj$. Gauge invariance implies that $\sum_{\ccj=\ccO}^\ccN Q_\ccj = 0$.

\section{Effective Field Theory}
\label{sec:SCET}
In this paper, our emphasis has been on factorization in QCD at the amplitude level.
In our view, working at the amplitude level, rather than at the amplitude-squared level as is often done, makes some elements of factorization more transparent.
It also elucidates some aspects of Soft-Collinear Effective Theory (SCET).

Consider \Eq{genfactQCD}, which we have proven to leading power in $\lambda$. Let us assign particles in each collinear sector $\bra{X_\ccj}$ the quantum number  $\ccj\in\{\ccO,\ldots,{\ccol{ N}}\}$ and each particle in  $\bra{X_\scs}$ the quantum number $\scs$. Let us also write an effective Lagrangian
that is $N+1$ copies of the QCD Lagrangian
\be
\cL_\text{eff} = \cL_{\softcol \text{soft}} + \sum_{\ccj=1}^{N} \cL_\ccj
\ee
with fields in each sector only creating and annihilating states with the appropriate quantum numbers. 
Then we can combine the numerator matrix elements in \Eq{genfactQCD} into a single matrix element in a trivial way. 

For example,  with two collinear sectors, the factorization formula becomes
\be
\bra{X_\ccO X_\ccT;X_\scs} \bar\psi \gamma^\mu \psi \ket{0} \LPeq \cC_2
\bra{X_\ccO X_\ccT;X_\scs}
\frac{\bar\psi_\ccO \, W_\ccO}{\tr\bra{0} Y_\ccO^\dg W_\ccO \ket{0}/N_c} 
Y_\ccO^\dg \gamma^\mu Y_\ccT
\frac{ W_\ccT^\dg \, \psi_\ccT }{\tr\bra{0} W_\ccT^\dg Y_\ccT \ket{0}/N_c}
\ket{0}_{\cL_\text{eff}}
\label{SCETEq1}
\ee
if computed with an effective Lagrangian
\be
\cL_\text{eff} = \cL_{\softcol \text{soft}} + \cL_\ccO  + \cL_\ccT
\ee
The Wilson coefficient $\cC_2$ depends only on the net momenta $P^\mu_1$ and $P_2^\mu$ in each sector, not on the detailed distribution of momenta in  $\bra{X_\ccO X_\ccT;X_\scs}$.
Since
 $\cC_2$
 depends on the hard-scattering operator and not the states, it is a legitimate Wilson coefficient from matching onto an effective field theory.

It is possible to clean up the effective field theory operator a little. Let us define
 \be
\IRZ_\cci \equiv \frac{1}{N_c} \tr\bra{0} W^{\dg}_\cci Y_\cci \ket{0} 
\ee
For other color representations, $\IRZ_\cci$ is defined similarly with the Wilson lines in the appropriate representation and  $N_c$ replaced by dimension of the representation.
The $\IRZ_\cci$ factors are both UV and IR divergent. They are, however, independent of $\lambda$ and any momenta in the process. That is, for given UV and IR regulators, they are power series in 
$\alpha_s$. Thus, they can play the role of a kind of field-strength renormalization for jets.
Indeed, it is natural to define jet fields as
\be
\chi_\cci \equiv  \frac{1}{\IRZ_\cci} W_\cci^\dg \, \psi_\cci
\ee
These composite fields are gauge invariant (up to a global rotation associated with the net color charge of the jet) and are soft insensitive and collinear sensitive only in the $\cci$ direction. In terms of the jet fields, \Eq{SCETEq1} becomes simply
\be
\bar\psi \gamma^\mu \psi \,\LPeq\, \cC_2 \,
\big(\bar\chi_\ccO  Y_\ccO^\dg\big)	 \gamma^\mu \big(Y_\ccT \chi_\ccT\big)
\label{ValidMatchingEq}
\ee
which is a valid leading-power matching equation in an effective theory describing dijet-like states because the Wilson coefficient is IR-insensitive and independent of which external states are used to compute it. Of course, this matching must be done within the r\'egime of validity of the effective theory, which in this case is justified by the factorization theorem that is proved for $N$-jet-like \emph{final} states.\footnote{See \cite{Bauer:2010cc} for an interesting discussion of how this matching equation can break down when certain initial states are used to perform the matching.}

The effective theory that naturally arises from our factorization formula is very different from the traditional formulation of SCET. Consequently, had we started from the traditional formulation of SCET and derived a factorization formula, it would look very different from the one we have proven. In particular, the Lagrangian and Feynman rules would not be those of full QCD and would not give rise to an all-orders full-QCD definition of the soft current and splitting functions.

Transitioning to the effective field theory language is particularly useful when discussing subleading power corrections in $\lambda$. Recent progress has been made toward describing collider-physics observables at subleading power using the formulation of SCET discussed in this section \cite{Freedman:2013vya}.

In \tree,  the tree-level version of this formulation of SCET (without the vacuum-matrix element denominators) was shown to be equivalent to that discussed by Freedman and Luke \cite{Freedman:2011kj}. However, with the all-loop factorization theorem in hand we naturally see arise an all-orders matrix-element definition of the zero-bin subtraction (similar to what was shown in \cite{Idilbi:2007yi,Idilbi:2007ff}). In Freedman and Luke's approach to SCET, the zero-bin is subtracted off using an ad-hoc procedure applied on an integral-by-integral basis that essentially comes from mimicking the procedure of the traditional approach to SCET~\cite{Manohar:2006nz}. In the traditional approach, the zero-bin subtraction arises naturally from the SCET Lagrangian. It instructs us to apply  a soft subtraction to every single collinear line in each Feynman diagram. This is arguably  a more complicated algorithm than dividing by a single gauge-invariant color-coherent vacuum matrix element, as in our factorization formula.

Before moving on, we point out that our factorization formula is derived with fixed external states that come designated as  soft or collinear.  This was the goal of our paper. For particles which power-count as soft or collinear, the factorization theorem holds if they are put in either sector.
 However, to perform phase space integrals in the factorized expression without chopping up phase space, it would be convenient not to  place a hard cutoff between sectors. To achieve this, in the language of Section~\ref{sec:coloring}, the algorithm in Section~\ref{sec:genalg} would need to be modified to color external-collinear particles blue or red. Then when calculating cross sections, we would be able to integrate the collinear states over their entire phase space, including the soft region. Our expectation is that this would be a simple step using the tools at our disposal, and would give a zero-bin of the form of the eikonal-cross-section subtraction used in the QCD literature (see the discussion in \cite{Collins:1989bt,Lee:2006nr}). However, this is outside the goal of our paper and we leave it for future work.

Another feature of our approach to factorization is that we did not have to choose the power counting of the soft emissions to be the same as that of the collinear emissions. For example, we could have used a separate $\lambda_s$ and $\lambda_c$:
\be
k^\mu \text{ soft } \iff k^\mu \sim \lambda_s^2 Q 
\quad\And\quad
q^\mu \parallel p^\mu \iff q\cdot p \sim \lambda_c^2 Q^2 ,\;\; q^0,p^0 \sim Q 
\ee
The factorization theorem holds at leading power  in both $\lambda_s$ and $\lambda_c$.
In fact, one could even take a different $\lambda_c$ in each sector.  Taking $\lambda_s = \lambda_c = \lambda$ and transitioning to an effective theory implies the factorization theorem that is appropriate to what is referred to as $\text{SCET}_{\text I}$ in the literature. If we take instead $\lambda_s^2 = \lambda_c = \lambda$ the factorization theorem still holds. This power-counting is equivalent $k_\text{soft} \sim (\lambda,\lambda,\lambda)$ and $q_\text{coll} \sim (\lambda^2,1,\lambda)$ in lightcone coordinates, which in the SCET literature is considered to be a different effective field theory, known as $\text{SCET}_{\text{II}}$.  The traditional derivation of $\text{SCET}_{\text{II}}$ involves rather involved intermediary matching through $\text{SCET}_{\text{I}}$ \cite{Bauer:2002nz}. 
The factorization theorem presented in this paper is general enough to unify these two SCETs into a 
single framework.

\section{Conclusions}
\enlargethispage{10pt}
In this paper we have formulated and proven to all orders in perturbation theory a precise statement of factorization for scattering amplitudes in QCD, given in \Eq{genfactQCD}.
This formula applies to states with $N$ well-separated jets with any number collinear particles in each jet, $\bra{X_\ccj}$ for $\ccj=\ccO,\ldots, {\ccol N}$, and any amount of soft radiation in any direction, $\bra{X_\scs}$.
Suppressing color and spin indices, the formula for quark jets reads:
\be
\braket{X_\ccO\cdots X_\rN;X_\scs | i } \,\LPeq\, \cC(P_\cci) \,
\frac{ \bra{X_\ccO} \bar\psi W_\ccO \ket{0} }{ \tr \bra{0} Y_\ccO^\dg W_\ccO \ket{0}} \cdots
\frac{ \bra{X_\rN} W_\rN^\dg \psi \ket{0} }{ \tr \bra{0} W_\rN^\dg Y_\rN \ket{0}}
\bra{X_\scs} Y_\ccO^\dg \cdots Y_\rN \ket{0}
\label{ConcResult}
\ee
where $\ket{i}$ is, say, some uncolored initial state and $\cC(P_\cci)$ is an IR-finite function  depending only on the net momenta in each sector $P_\cci^\mu$.
The symbol $\LPeq$ means equality at leading power in $\lambda$, a physical power counting parameter that constrains only the external momenta in the amplitude. The factorization formula actually holds  to leading power in different power counting parameters $\lambda_\scs$ and $\lambda_\ccc^\ccj$ in each sector. It also holds if there are collinear particles in the initial state, as long as no initial state and final state particles are collinear to each other. 

The proof of \Eq{ConcResult} was broken into two steps, which essentially correspond
to hard factorization and soft-collinear factorization. The first step was to determine the structure of the possible graphs that contribute to each type of infrared sensitivity (soft or $\ccj$-collinear) in the matrix element. The structure of the diagrams relevant at leading power are encoded in the reduced diagram (see \Eq{pinches2}), which represents hard factorization in physical gauges. This reduced diagram is similar to reduced diagrams used in the literature to represent the pinch
surface. Indeed, our derivation of hard factorization exploits essentially the same observations
as these traditional approaches. However, the reduced diagrams traditionally used in the literature  are usually defined only for momenta which are exactly $k^\mu =(0,0,0,0)$ or exact proportional to one of the external momenta. In contrast, our reduced diagram represents a precise set of Feynman integrals, defined for all values of external and loop momenta with rules that describe how they are to be calculated. This generalization of the reduced diagram  allows for a clean transition to an amplitude-level factorization formula.

The second step in the proof is to factorize the soft-sensitive from the collinear-sensitive contributions to  matrix elements. This step builds upon the reduced diagram picture and coloring rules which established hard factorization. The all-orders proof of soft-collinear factorization uses the same logic as was used in \tree~for the tree-level proof. In particular, the use of different reference-vector choices used in \tree~is critical also at loop-level. For loops, the reference-vector flexibility must be generalized  to momentum-dependent lightcone-gauge reference-vector choices. We call a gauge with this flexibility {\it factorization gauge}. Within factorization gauge, different choices for the reference vector in the soft region slosh the soft sensitivities around among different colored diagrams within the reduced diagram structure. This  lets us see how  soft sensitivities factorize from collinear sensitivities for any value of the soft and collinear power-counting parameters, $\lambda_\scs$ and 
$\lambda_\ccc$. Once appropriate Wilson lines are added, the final factorization formula is gauge-invariant and applies even in covariant gauges like Feynman gauge. 

\enlargethispage{10pt}

There are many practical applications of factorization, from the universality of splitting functions and soft currents in QCD~\cite{Altarelli:1977zs,Altarelli:1979ub,Bern:1995ix,Kosower:1999xi,Catani:2000pi}, to regulating infrared divergences in fixed-order calculations~\cite{Catani:1999ss,Bern:1998sc,GehrmannDeRidder:2007jk,Currie:2013vh,DelDuca:1999ha}, to the computation of resummed distributions in jet substructure\cite{Feige:2012vc, Dasgupta:2012hg, 
Chien:2012ur,Jouttenus:2013hs,Dasgupta:2013ihk}.
For example, having  gauge-invariant and regulator independent definitions for objects which contain universal soft or collinear singularities may be useful as the basis of subtractions for fixed-order calculations in QCD.
 In many cases, assuming factorization is enough for phenomenological purposes. Having a rigorous proof of factorization of course puts many approximations on firmer footing. But it may also point the way to understanding subtleties of where factorization may break down, such as in the context of forward scattering~\cite{Collins:1988ig,Catani:2011st, Bauer:2008qu, Bauer:2010cc} or non-global logarithms~
 \cite{  Dasgupta:2001sh,  Banfi:2002hw,  Kelley:2011tj,  KhelifaKerfa:2011zu,  Walsh:2011fz,  Kelley:2011aa,  Hornig:2011iu, Kelley:2011ng, Kelley:2012kj,Schwartz:2014wha}.
In both of these cases, our expectation is not that the factorization theorem proven in this paper will immediately resolve the confusions. Instead, we envisage that the physical picture on which the factorization is based,  with an intuitive reduced diagram picture and matrix-element zero-bin subtractions, should be a practical scaffold on which to build a more sophisticated and nuanced picture of factorization in scattering amplitudes.

\section{Acknowledgements}
\label{sec:ackn}
We thank S. Catani, G. Sterman, J. Collins and K. Yan for detailed feedback and H. Georgi, M. Luke and D. Neill for helpful discussions.
The authors are supported in part by grant DE-SC003916 from the Department of Energy.
I.F. is also supported in part by the Natural Sciences and Engineering Research Council of Canada and the Center for the Fundamental Laws of Nature at Harvard University. We  would also like to thank the Erwin Schr\"odinger Institute and the organizers of the ``Jets and Quantum Fields for LHC and Future Colliders'' workshop for their hospitality and the engaging environment. All diagrams in this paper were drawn using Jaxodraw \cite{Vermaseren:1994je,Binosi:2003yf}.

\bibliography{onshellLOOPSv3}
\bibliographystyle{utphys}

\end{document}